\documentclass[journal,twoside,web]{ieeecolor}
\usepackage{generic}
\usepackage{cite}
\usepackage{amsmath,amssymb,amsfonts}

\usepackage{amsthm}
\usepackage{graphicx,booktabs}
\usepackage{tikz}
\definecolor{nblue}{rgb}{0,0.263,0.576}\definecolor{mblue}{rgb}{0.075,0.541,0.855}
\definecolor{OIblue}{HTML}{0072B2}\definecolor{OIred}{HTML}{D55E00}
\definecolor{subsectioncolor}{rgb}{0,0,0}
\usepackage{textcomp}
\makeatletter
\let\NAT@parse\undefined   % 拆掉 ieeecolor 骗 hyperref 的假 natbib
\makeatother
\usepackage{hyperref}
\hypersetup{
  colorlinks=false,
  pdfborder={0 0 1},
  linkbordercolor={1 0 0},
  citebordercolor={0 1 0},
  urlbordercolor={0 1 1}
}
\let\labelindent\relax      % 让 ieeecolor 的 \labelindent 让位给 enumitem
\usepackage{enumitem}

\newcommand{\pot}{\mathrm{pot}}\newcommand{\harm}{\mathrm{harm}}\newcommand{\pos}{\mathrm{pos}}
\newcommand{\dv}{\operatorname{div}}\newcommand{\E}{\mathbb{E}}\newcommand{\R}{\mathbb{R}}
\newcommand{\calV}{\mathcal{V}}\newcommand{\calG}{\mathcal{G}}\newcommand{\calA}{\mathcal{A}}
\newcommand{\calN}{\mathcal{N}}\newcommand{\calD}{\mathcal{D}}
\newcommand{\Var}{\operatorname{Var}}\newcommand{\Cov}{\operatorname{Cov}}\newcommand{\Bin}{\mathrm{Bin}}
\newcommand{\numop}{\mathsf{N}}\newcommand{\one}{\mathbf{1}}\newcommand{\sa}{\mathsf{a}}
\newcommand{\Epot}{\mathcal{E}_{\pot}}\newcommand{\Eharm}{\mathcal{E}_{\harm}}
\newcommand{\fB}{\mathfrak{B}}
\newtheorem{theorem}{Theorem}\newtheorem{lemma}{Lemma}
\newtheorem{proposition}{Proposition}\newtheorem{corollary}{Corollary}
\theoremstyle{definition}\newtheorem{example}{Example}
\theoremstyle{remark}\newtheorem{remark}{Remark}
\usepackage{float,array,wrapfig}
\theoremstyle{plain}
\newtheorem{lemmaS}{Lemma}
\newtheorem{propS}[lemmaS]{Proposition}\newtheorem{thmS}[lemmaS]{Theorem}\newtheorem{corS}[lemmaS]{Corollary}
\newtheorem*{mainthm}{Theorem}\newtheorem*{mainlem}{Lemma}
\theoremstyle{definition}\newtheorem*{definition*}{Definition}
\theoremstyle{remark}\newtheorem{remarkS}[lemmaS]{Remark}
\newcommand{\dk}{\varkappa}\newcommand{\calB}{\mathcal{B}}
\newcommand{\ff}[2]{#1^{\underline{#2}}}\newcommand{\sgn}{\operatorname{sgn}}

\makeatletter
\def\section{\@startsection{section}{1}{\z@}{3.0ex plus 1.5ex minus 1.5ex}%
{0.7ex plus 1ex minus 0ex}{\centering\fontencoding{T1}\fontfamily{phv}\fontseries{m}\fontshape{n}%
\fontsize{10}{12}\selectfont\scshape\noindent\color{nblue}}}%
\def\subsubsection{\@startsection{subsubsection}{3}{\parindent}{0ex plus 0.1ex minus 0.1ex}%
{0ex}{\normalfont\fontsize{9}{11}\selectfont\everymath={\sf}\sf\itshape\color{nblue}}}%
\makeatother

\begin{document}
\title{Priority Coordination Games: Hodge Decomposition and a Sharp Design Limit}

\author{Zhihao Lin, Jianglin Lan, Anh-Tu~Nguyen,~\IEEEmembership{Senior Member,~IEEE,} and Yoshinobu Kawahara% <-this % stops a space
\thanks{This work was supported in part by the Japan Society for the Promotion of Science (JSPS)
through the JSPS Postdoctoral Fellowship for Research in Japan (Fellow ID P26385), in part by
JSPS KAKENHI Grant Number JP26H02498, and in part by the Japan Science and Technology Agency (JST)
Moonshot R\&D Program under Grant No. JPMJMS263K.
(Corresponding author: Yoshinobu Kawahara.)}
\thanks{Zhihao Lin is with the Graduate School of Information Science and Technology, The University of Osaka,
Suita, Osaka 565-0871, Japan (e-mail: zhihao.lin@ist.osaka-u.ac.jp).}%
\thanks{Jianglin Lan is with the James Watt School of Engineering, University of Glasgow, Glasgow G12 8QQ, United Kingdom (e-mail: Jianglin.Lan@glasgow.ac.uk).}%
\thanks{Anh-Tu Nguyen is with the LAMIH UMR CNRS 8201 Laboratory, Universit\'e Polytechnique Hauts-de-France, 59313 Valenciennes, France, and also with INSA Hauts-de-France, 59313 Valenciennes, France (e-mail: tnguyen@uphf.fr).}%
\thanks{Yoshinobu Kawahara is with the Graduate School of Information Science and Technology, The University of Osaka, Suita, Osaka 565-0871, Japan, and also with the RIKEN Center for Advanced Intelligence Project, Tokyo 103-0027,
Japan (e-mail: kawahara@ist.osaka-u.ac.jp).}
}

\maketitle

\begin{abstract}
In decentralised priority coordination, agents announce priority levels and a shared resource serves them in
decreasing order, as at an unsignalised intersection; the levels form the decision layer of a hierarchical controller.
Such interactions are routinely replaced by a potential game, i.e.\ by a common objective, for analysis and
design. This paper determines what that surrogate misses, using the Hodge decomposition of the incentives into a
potential component, which a common objective can represent, and a harmonic component, which it cannot. For
the linear payoff, both components are obtained in closed form on every conflict graph and for every deterministic
tie-breaking protocol: in common units, the harmonic energy is the number of conflicts and the potential energy adds
the number of adjacent pairs of conflicts. Consequently, for every rationality parameter, the best common-objective model
of the agents' choice log-odds, weighted uniformly over unilateral moves, has a relative squared error of at least
$1/(d_{\max}+1)$, where $d_{\max}$ is the largest number of conflicts of one agent; for an eight-vehicle intersection
it is exactly one fifth, for any number of priority levels. Invisible to strict-improvement dynamics, the missed component is, under
low-rationality log-linear learning with uniform revision and to leading order, the stationary probability current,
and its energy sets the entropy-production rate. Payoff design cannot remove it: on
the complete conflict graph of $N$ agents, under a total-order protocol and with at least three priority levels, every
nonconstant rank-based payoff leaves a relative error of at least $1/N$, with equality exactly for affine payoffs.
\end{abstract}
\begin{IEEEkeywords}
Multi-agent coordination, potential games, utility design, log-linear learning, Hodge decomposition, priority protocols.
\end{IEEEkeywords}

%======================================================================
\section{Introduction}\label{sec:intro}
\IEEEPARstart{M}{any} multi-agent systems resolve conflicts by priority. Vehicles at an unsignalised intersection must
decide who crosses first \cite{dresner2008,levin2017,xu2018,zanardi2021,li2023,TAVV6081,11130528}; robots meeting at
a corridor junction and transmitters contending for a channel face the same problem. In \emph{decentralised priority
coordination}, each of $N$ agents announces one of $K$ priority levels, and the shared resource serves the agents in decreasing order of
level, ties being broken by a fixed protocol. Every agent prefers to be served early, so the agents play a game.

Priority selection is a control problem, not only a static game. At an intersection, priority levels fix the order in
which vehicles enter the conflict zone, and trajectory controllers then realise that order \cite{levin2017,xu2018}.
Read this way, the priority levels form the decision layer of a hierarchical controller, the conflict graph encodes
the coupling of the agents through shared resources, and the learning rule governs how decisions are updated over
time. We study this decision layer and assume that the execution-layer controllers execute the service order it produces.
This is an abstraction, since an ordering of the agents restricts what the execution layer could do; here we take it
as given and analyse the incentives inside it.

A standard way to analyse, predict and design such decision layers is to model them as potential games
\cite{monderer1996,marden2009}. In a potential game, the change in any agent's payoff caused by its own deviation
equals the change of one common function, the potential. This brings two guarantees. First, the potential is a
Lyapunov function for asynchronous best-response dynamics with inertia, so these dynamics converge
\cite{monderer1996}. Second, under log-linear learning the stationary distribution is explicit: it is the Gibbs
distribution of the potential \cite{blume1993,marden2012}. For this reason, utility design for self-interested agents
has become a standard problem in control \cite{marden2014efficiency,paccagnan2020utilityI,paccagnan2022utilityII},
and for interacting vehicles, potential-game models are used for prediction and planning \cite{liu2023}, a game that is
not potential being projected onto the nearest potential game and the residual discarded \cite{liu2025}. Whether
such a surrogate is adequate depends on how much it misses, and that quantity is rarely computed.

\begin{figure}[t]\centering
\begin{tikzpicture}[x=1.42cm,y=1.42cm,>=stealth,
  st/.style={circle,draw=black,text=black,inner sep=0.8pt,minimum size=15.5pt,font=\footnotesize},
  lab/.style={font=\footnotesize,fill=white,inner sep=1pt},
  every edge/.style={draw,->,thick}]
 % (a) game flow
 \begin{scope}
  \node[st] (a00) at (0,0) {00}; \node[st] (a10) at (1,0) {10};
  \node[st] (a01) at (0,1) {01}; \node[st] (a11) at (1,1) {11};
  \path (a00) edge node[lab,below]{$0$} (a10)
        (a01) edge node[lab,above]{$1$} (a11)
        (a00) edge node[lab,left]{$1$} (a01)
        (a10) edge node[lab,right]{$0$} (a11);
  \node[font=\footnotesize] at (0.5,-0.58) {(a) game flow $u$};
 \end{scope}
 \node at (1.6,0.5) {\large$=$};
 % (b) potential part
 \begin{scope}[shift={(2.2,0)},every edge/.append style={OIblue}]
  \node[st] (b00) at (0,0) {00}; \node[st] (b10) at (1,0) {10};
  \node[st] (b01) at (0,1) {01}; \node[st] (b11) at (1,1) {11};
  \path (b00) edge node[lab,below]{$\frac12$} (b10)
        (b01) edge node[lab,above]{$\frac12$} (b11)
        (b00) edge node[lab,left]{$\frac12$} (b01)
        (b10) edge node[lab,right]{$\frac12$} (b11);
  \node[font=\footnotesize] at (0.5,-0.58) {(b) potential $\delta\phi^\star$};
 \end{scope}
 \node at (3.8,0.5) {\large$+$};
 % (c) harmonic part
 \begin{scope}[shift={(4.4,0)},every edge/.append style={OIred}]
  \node[st] (c00) at (0,0) {00}; \node[st] (c10) at (1,0) {10};
  \node[st] (c01) at (0,1) {01}; \node[st] (c11) at (1,1) {11};
  \path (c00) edge node[lab,left]{$\frac12$} (c01)
        (c01) edge node[lab,above]{$\frac12$} (c11)
        (c11) edge node[lab,right]{$\frac12$} (c10)
        (c10) edge node[lab,below]{$\frac12$} (c00);
  \draw[<-,OIred!60] (0.5,0.28) arc[start angle=-90,end angle=200,radius=0.24];
  \node[font=\footnotesize] at (0.5,-0.58) {(c) harmonic $h$};
 \end{scope}
\end{tikzpicture}
\vspace{-3mm}
\caption{The smallest priority game: two conflicting agents, levels $\{0,1\}$, agent~$0$ served first on a tie.
Vertices are profiles $s_0s_1$; arrows are unilateral raises, labelled with the mover's payoff gain.
(a)~The game flow; (b)~its potential part (blue), the common incentive to raise one's level; (c)~its harmonic part (red),
a circulation recording who goes first.}\label{fig:toy}
\end{figure}

Priority coordination is not a potential game as soon as two agents are in conflict
(Propositions~\ref{prop:pairwise}(d) and~\ref{prop:energies}). The reason is elementary: when two conflicting agents
choose the same level, only one of them can go first. Fig.~\ref{fig:toy}(a) shows the smallest case. Agent~$1$ gains
by raising its level when both are low, agent~$0$ then gains by raising its level too, and nobody gains by lowering.
In a potential game, the gains collected around any closed cycle of profiles add up to zero \cite{monderer1996}; here
they add up to $2$. This paper asks three questions about the part of priority coordination that a potential surrogate
misses: \emph{How large is it? What does it do to the agents' behaviour? Can payoff design remove it?} Classical
utility design does not address the last question: there, utilities are designed mainly to improve welfare or the
price of anarchy \cite{marden2014efficiency,paccagnan2020utilityI}, whereas in priority coordination the total number
of waits is the same in every profile and for every payoff, because on every conflict exactly one agent waits. What a
payoff can change is how far the game is from a potential game, and we determine a limit on this distance that no
payoff can overcome.

Our tool is the Hodge decomposition of finite games \cite{candogan2011}. It splits the incentives of every game,
viewed as a flow on the graph of profiles, into two orthogonal components (Fig.~\ref{fig:toy}(b)--(c)). The
\emph{potential} component is the part that a common objective can represent; the \emph{harmonic} component is the
remainder, which circulates. The squared norms of the two components, the \emph{potential} and \emph{harmonic
energies}, measure how much of the incentives each carries, and the harmonic energy is exactly the smallest squared
error of any potential surrogate. The decomposition can always be computed numerically; our aim is to obtain it in
closed form and to determine what it implies.

Our answers to the three questions are as follows.

\emph{Size.} The harmonic component of a pairwise game is the \emph{antisymmetric} part of its interactions
(Proposition~\ref{prop:pairwise}). For the linear payoff $f(r)=-r$, nothing in an interaction is reciprocal, because
exactly one of two conflicting agents goes first, so both energies are counts on the conflict graph
(Proposition~\ref{prop:energies}). Since the harmonic energy is the smallest squared error of any potential surrogate,
the best common-objective model of the agents' choice log-odds misses at least $1/(d_{\max}+1)$ of the incentive
energy for every rationality $\beta>0$ (Corollary~\ref{cor:surrogate}); the floor is set by the conflicts per agent, not by
the number of agents. For the eight-vehicle intersection of Fig.~\ref{fig:intersection}, with $16$ conflicts and $48$
adjacent pairs, it is one fifth.

\emph{Behaviour.} Under log-linear learning at low rationality, the missed component is what makes learning
irreversible: the stationary law is still Gibbs in the Hodge potential up to second-order terms, but the
leading-order stationary current is proportional to the harmonic flow and the leading-order entropy-production rate
to the harmonic energy (Proposition~\ref{prop:learning}). It is, in contrast, invisible to ordinal analysis: paths of
strictly improving moves always terminate (Proposition~\ref{prop:fip}), and, under the index protocol, all
pure-strategy Nash equilibria induce the same service order (Proposition~\ref{prop:ne}); in Fig.~\ref{fig:toy} both
equilibria, $10$ and $11$, serve agent~$0$ first. Discarding the harmonic component is therefore harmless for
best-response reasoning and costly for any model that uses the magnitudes of the incentives.

\emph{Design.} The harmonic component cannot be designed away. On the complete conflict graph, under a total-order
protocol and with at least three priority levels, no payoff that depends only on the rank brings the harmonic share,
and hence the surrogate error, below $1/N$, and affine payoffs attain this value (Theorem~\ref{thm:finiteK}); for
continuous priorities the full spectrum of the problem is explicit (Theorem~\ref{thm:spectrum}). This is a limit on
potential-game approximation, not a recommendation of the affine payoff.

The main contributions of this paper are the following.
\begin{enumerate}
\item \emph{The error of a potential surrogate, in closed form} (Proposition~\ref{prop:energies},
Corollary~\ref{cor:surrogate}): for the linear payoff, on every conflict graph, for every $K\ge2$ and every
deterministic protocol. Prior work assumes a bound on this distance \cite{candogan2013,candogan2013geb} or discards it
\cite{liu2025}.
\item \emph{Learning identities and ordinal invisibility} (Propositions~\ref{prop:learning}, \ref{prop:fip}
and~\ref{prop:ne}): exact leading-order identities for arbitrary payoffs as the rationality tends to zero, where the
potential share was so far an empirical predictor \cite{bichler2025}; termination and a unique equilibrium service
order for every strictly decreasing payoff.
\item \emph{A sharp design limit} (Theorems~\ref{thm:spectrum} and~\ref{thm:finiteK}): on the complete conflict
graph, under a total-order protocol, for every $K\ge3$ and every nonconstant payoff-from-rank function; it can fail
for $K=2$ and, at finite $K$, off the complete graph (Section~\ref{sec:beyond}). It is an impossibility result on the
distance to a potential game, not a welfare optimisation \cite{marden2014efficiency,paccagnan2020utilityI}; the
computer-assisted steps of its proof are exact certificates.
\end{enumerate}

\subsection{Related Work}
\emph{Decompositions of games.} The potential--harmonic decomposition is due to Candogan \emph{et
al.}~\cite{candogan2011}; related decompositions and characterisations appear in
\cite{sandholm2010,hwang2011,arefizadeh2024,abdou2022,gilles2026}, and, in the control literature, through
the semi-tensor product \cite{cheng2014finite,cheng2015networked,cheng2016,hao2022survey,zhang2017}. For graphical
games the decomposition acts edge by edge \cite{cheng2016,arditti2020sep};
Proposition~\ref{prop:pairwise} specialises these results to centred interaction kernels, and we claim no novelty
beyond this form. Symmetrised interactions give approximate potentials of network games with continuous
actions \cite{guo2024,rokade2025}.

\emph{Potential games in control and learning.} Potential-game design is a standard tool of cooperative control
\cite{marden2009,marden2012}. The potential share has been proposed as a predictor of learning behaviour \cite{bichler2025}; near-potential
games inherit approximate guarantees \cite{candogan2013,candogan2013geb}; harmonic structure makes no-regret dynamics
recurrent \cite{legacci2024,legacci2024noregret}; bounded-rationality log-linear learning has been analysed for
symmetric coordination games \cite{boundedLLL2026}; and currents and entropy production driven by non-reciprocal
interactions appear in statistical physics and evolutionary games \cite{maes2010,szabo2015,szabo2016}. We are not aware of closed-form energies for
rank-order games, of identities linking the discarded residual to learning, or of sharp bounds on the potential share
over a family of payoffs.

\emph{Rank-order games and the affine case.} Rank-order games are studied computationally in
\cite{brandt2009ranking}. In weighted congestion games, affine costs are singled out by an exact
potential \cite{harks2011}; here affine payoffs minimise, rather than remove, the harmonic share. The spectral analysis
uses the Hoeffding decomposition of functions of ranks \cite{helmers1982} and discrete Chebyshev polynomials
\cite{koekoek2010}.

\subsection{Organisation and Notation}
\emph{Organisation.} Section~\ref{sec:setup} states the model and the three problems; Sections~\ref{sec:pairwise}--\ref{sec:finiteK} answer them in turn, and Section~\ref{sec:numerics} reports the numerical study. Short proofs are given in full in the text. The proofs of Theorems~\ref{thm:spectrum} and~\ref{thm:finiteK}
are developed in the Appendix, with lengthy algebraic reductions and exact computer-assisted certificates deferred to
the supplementary material (SM), which also contains further results. For these,
the text also explains why the result holds.

\emph{Notation.} $[N]=\{0,\dots,N-1\}$; $\#A$ is the number of elements of a finite set $A$; $\one$ is the all-ones
vector; $x^{\underline k}=x(x-1)\cdots(x-k+1)$ is the falling factorial; $\Delta f(r)=f(r+1)-f(r)$ is the forward
difference and $\Delta^m$ its $m$-th iterate; $\Bin(n,p)$ is the binomial law; $M\succeq0$ means that $M$ is positive
semidefinite. For a finite set $S$, functions on $S$ carry the uniform probability measure, $\|g\|^2=\E[g^2]$, and
$\Var$, $\Cov$ refer to it. Flows (functions on edges) carry the counting inner product $\langle
x,y\rangle=\sum_e x(e)y(e)$; the factor $K^N$ between the two conventions is always written out. $A_i$ averages over
the $i$-th coordinate, $g_T$ is the Hoeffding component of $g$ on $T\subseteq[N]$ (Section~\ref{sec:setup}), and
$[u_i]_T$ is that of the payoff $u_i$.

%======================================================================
\section{Model and Problem Statement}\label{sec:setup}
This section defines pairwise games, for which the Hodge decomposition has a simple form, and the priority games of
interest, which are pairwise when the payoff is affine; it then states the three problems and the lemma from which the
closed forms are derived.

\subsection{Pairwise Games on an Interconnection Graph}
Let $C=(\calV,E)$ be a finite undirected graph on the agent set $\calV=[N]$, the \emph{interconnection} (or
\emph{conflict}) \emph{graph}; $\calN(i)$ is the neighbourhood of $i$ and $d_i=|\calN(i)|$ its degree. Each agent
chooses a \emph{priority level} (\emph{level} for short) $s_i\in\{0,\dots,K-1\}$, $K\ge2$; the profile is $s\in S:=\{0,\dots,K-1\}^N$. A \emph{pairwise game}
on $C$ has payoffs
\begin{equation}\label{eq:pairwise}
u_i(s)=\gamma_i(s_i)+\sum_{j\in\calN(i)}\psi_{ij}(s_i,s_j),
\end{equation}
with arbitrary own terms $\gamma_i$ and arbitrary kernels $\psi_{ij}$, one for each \emph{ordered} pair of neighbours.
Terms that do not depend on $s_i$ are strategically irrelevant and are omitted. Decomposing each kernel by the
two-variable ANOVA under the uniform measure gives
\begin{equation}\label{eq:anova}
\psi_{ij}(x,y)=\overline{\psi}_{ij}+a_{ij}(x)+b_{ij}(y)+\omega_{ij}(x,y),
\end{equation}
where $a_{ij}$, $b_{ij}$ are centred and the \emph{interaction} $\omega_{ij}$ is centred in each argument. The
\emph{own effect} of agent $i$ is
\begin{equation}\label{eq:own}
\sa_i(x):=\gamma_i(x)-\bar\gamma_i+\sum_{j\in\calN(i)}a_{ij}(x),
\end{equation}
where $\bar\gamma_i$ is the mean of $\gamma_i$. Each interaction splits into its symmetric and antisymmetric parts under the exchange of the two agents,
\begin{equation}\label{eq:pm}
\omega^{\pm}_{ij}(x,y):=\tfrac12(\omega_{ij}(x,y)\pm\omega_{ji}(y,x)),
\end{equation}
so that $\|\omega^{\pm}_{ij}\|$ depends only on the edge $\{i,j\}$.

\subsection{Priority (Rank-Order) Games}
Agents are served in decreasing order of level, ties being broken by a fixed \emph{protocol}. Since only neighbours
compete, we model a deterministic protocol as an orientation of the conflict edges: we write $j\to i$ if, on a tie,
$j$ is served before $i$. Throughout the paper, ``any deterministic protocol'' means any such orientation. Unless stated
otherwise we use the \emph{index protocol}, $j\to i$ iff $j<i$, i.e.\ the acyclic orientation induced by a total order
of the agents; after relabelling, every total order is of this form. The \emph{conflict position} of agent $i$ is the
number of its neighbours served before it,
\begin{equation}\label{eq:pos}
\!\!\pos_C(i,s)=\#\{j\in\calN(i): s_j>s_i \ \text{or} \ (s_j=s_i,\ j\to i)\}, \!\!
\end{equation}
and $u_i(s)=f(\pos_C(i,s))$ for a \emph{payoff-from-rank} function
$f$, which is strictly decreasing unless stated otherwise. The \emph{linear} game
$\calG(C,K)$ has $f(r)=-r$; it is the pairwise game with $\gamma_i\equiv0$ and
\begin{equation}\label{eq:rankkernel}
\psi_{ij}(x,y)=-\one[y>x]-\one[y=x]\,\one[j\to i].
\end{equation}
For general $f$ we write $\calG_f(C,K)$, which is pairwise iff $f$ is affine on $\{0,\dots,d_{\max}\}$, with $d_{\max}$ being the
maximum degree of $C$ (an agent of degree $d$ only ever sees ranks $0,\dots,d$). Since every conflict edge
contributes one unit to the position of whichever endpoint is served second,
\begin{equation}\label{eq:cons}
\sum_{i}\pos_C(i,s)=|E|,\qquad \forall s\in S.
\end{equation}
Thus the total number of waits, i.e.\ of conflicting agents served before each agent, is the same in every profile.
The aggregate payoff $\sum_if(\pos_C(i,s))$ is also constant when $f$ is affine, but for a nonlinear $f$ it need not
be. An orientation with directed cycles still
defines $\pos_C$ and the game, but only as a system of local comparisons (three mutually tied agents on a directed
triangle admit no global service order); statements about pure-strategy Nash equilibria (Proposition~\ref{prop:ne}) are made for the
index protocol, and, for the linear payoff, the energies are the same for every orientation.

\begin{example}[Intersection]\label{ex:intersection}
At a two-lane four-arm intersection with eight vehicles, four turning on the inner lanes and four going straight on
the outer ones, the conflict graph has an edge for every pairwise crossing of the nominal paths
(Fig.~\ref{fig:intersection}): it is $4$-regular with $16$ edges, the turning vehicles forming a $4$-cycle, the
straight ones another, and each turning vehicle crossing two straight lanes. A vehicle's delay is, to first order,
proportional to $\pos_C$, the number of conflicting vehicles it lets through. This graph is our running example.
\end{example}

\begin{figure}[t]\centering
\includegraphics[width=\columnwidth]{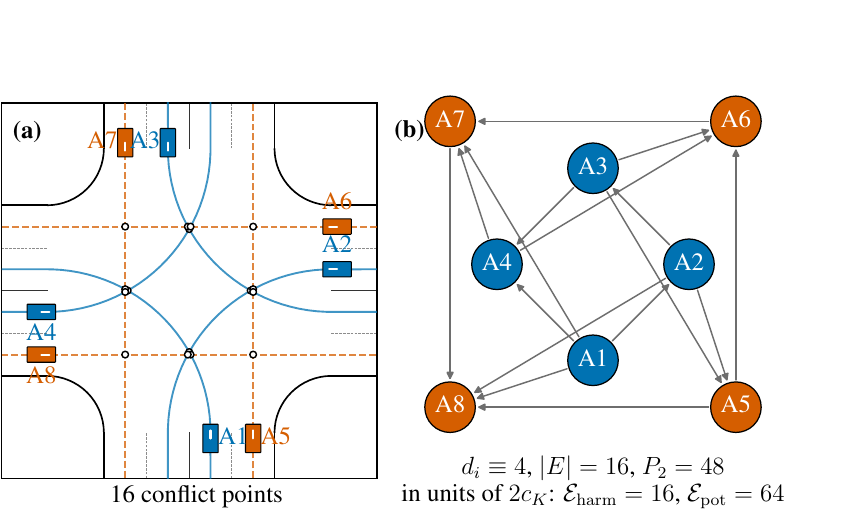}
\vspace{-2em}
\caption{Running example. (a)~Eight vehicles at a two-lane intersection: A1--A4 turn left from the inner lanes
(solid), A5--A8 go straight on the outer lanes (dashed); circles mark the $16$ pairwise crossings of the nominal paths.
(b)~The resulting conflict graph; arrows show the index protocol (on a tie, tail first).}\label{fig:intersection}
\end{figure}

\subsection{Flows and Hodge Decomposition}
We use the flow representation of \cite{candogan2011}. The \emph{game graph} has vertex set $S$ and an edge
$e=(i,s\to s')$ for every agent $i$, profile $s$, and level $s'_i>s_i$ (with $s'_{-i}=s_{-i}$). The game's \emph{flow}
is $u(i,s\to s')=u_i(s')-u_i(s)$; the gradient of $\phi:S\to\R$ is $(\delta\phi)(i,s\to s')=\phi(s')-\phi(s)$; and
$(\dv x)(s)=\sum_{e\,\text{into}\,s}x(e)-\sum_{e\,\text{out of}\,s}x(e)$, so that
$\langle\delta\phi,x\rangle=\sum_s\phi(s)(\dv x)(s)$. Every flow decomposes uniquely and orthogonally as
\begin{equation}\label{eq:hodge}
u=\delta\phi^\star+h,\qquad \dv h=0,
\end{equation}
with $\delta\phi^\star$ the projection of $u$ onto $\operatorname{im}\delta$; the game is an exact potential game iff
$h=0$. We call
\begin{equation}\label{eq:energies}
\Epot:=\|\delta\phi^\star\|^2,\qquad \Eharm:=\|h\|^2
\end{equation}
the \emph{potential} and \emph{harmonic energies}, respectively; ``energy'' means this squared norm, not physical energy. The \emph{potential share} $\rho_\pot:=\Epot/(\Epot+\Eharm)$ and
$\rho_\harm:=1-\rho_\pot$ are normalised summaries (the discussion after Corollary~\ref{cor:rigid} states when they can be compared across designs).

\subsection{Problem Statement}\label{subsec:problem}
Once a conflict graph $C$, a number $K$ of levels, a protocol and a payoff-from-rank function $f$ are fixed, the
game $\calG_f(C,K)$ and the pair $(\Epot,\Eharm)$ of \eqref{eq:energies} are determined. We
address three problems. (i)~\emph{Closed form}: determine $\Epot$ and $\Eharm$ as functions of $C$, $K$ and $f$
(Proposition~\ref{prop:energies} for the linear game, for every protocol). (ii)~\emph{Meaning}: determine what the two
energies imply for the behaviour of the agents, both for the convergence of improvement dynamics
(Propositions~\ref{prop:fip}--\ref{prop:ne}) and for the stationary behaviour of log-linear learning
(Proposition~\ref{prop:learning}). (iii)~\emph{Design}: under a total-order protocol on the complete conflict graph,
determine the smallest achievable harmonic share,
\begin{equation}\label{eq:problem}
\rho_\harm^{\min}(N,K):=\inf_{f\ \text{nonconstant}}\rho_\harm\big(\calG_f(K_N,K)\big).
\end{equation}
The infimum is over all nonconstant $f$, not only decreasing ones (Theorems~\ref{thm:spectrum}
and~\ref{thm:finiteK}).
Lemma~\ref{lem:numop} below reduces~(i) and~(iii), and the learning part of~(ii), to computations with the
Hoeffding components of a single function.

\subsection{The Number-Operator Lemma}

Let $A_i$ denote averaging over the $i$-th coordinate. Every $g:S\to\R$ has the Hoeffding (ANOVA) decomposition
$g=\sum_{T\subseteq[N]}g_T$, $g_T=\prod_{j\in T}(I-A_j)\prod_{j\notin T}A_j\,g$; the components are mutually orthogonal,
$g_T$ depends on $s_T$ only and is centred in each of its variables. Put $g_m=\sum_{|T|=m}g_T$. The \emph{number
operator} $\numop:=\sum_i(I-A_i)$ multiplies $g_T$ by $|T|$. Lemma~\ref{lem:numop} yields all closed forms: since the
game graph is a product of complete graphs, its Laplacian multiplies an $m$-agent Hoeffding component by $m$, so the
Hodge potential divides each component by its order. Its content is implicit in \cite[Sec.~4.1]{candogan2011}, and \eqref{eq:phistar}
follows from the inverse of $\numop$ in \cite{zhang2017} and is stated in \cite{gilles2026}
(Section~\ref{sec:intro}); we state it in the graded form in which it is used.

\begin{lemma}[Number-operator lemma]\label{lem:numop}
For any payoffs $(u_i)$ on $S$, let $g:=\sum_i(u_i-A_iu_i)$. Then, up to an additive constant,
\begin{equation}\label{eq:phistar}
\phi^\star=\numop^{-1}g=\sum_{T\neq\emptyset}\frac{g_T}{|T|},
\end{equation}
and
\begin{equation}\label{eq:twoenergies}
\begin{aligned}
\Epot&=K^{N+1}\sum_{m\ge1}\frac{\|g_m\|^2}{m},\\
\Epot+\Eharm&=K^{N+1}\sum_i\|u_i-A_iu_i\|^2 .
\end{aligned}
\end{equation}
\end{lemma}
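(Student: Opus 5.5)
The plan is to write the projection onto $\operatorname{im}\delta$ as a normal equation and to observe that both sides of that equation are $K$ times the number operator. By \eqref{eq:hodge}, $\phi^\star$ is characterised by $\langle\delta\psi,u-\delta\phi^\star\rangle=0$ for every $\psi$. The adjoint identity $\langle\delta\psi,x\rangle=\sum_s\psi(s)(\dv x)(s)$ turns this into $\dv\delta\phi^\star=\dv u$ pointwise on $S$.

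\textbf{Step 1: compute $\dv$ on a flow of the form $u$.} At a profile $s$, agent $i$'s edges into $s$ come from the profiles $(t,s_{-i})$ with $t<s_i$ and carry $u_i(s)-u_i(t,s_{-i})$. Its edges out of $s$ go to the profiles with $t>s_i$ and carry $u_i(t,s_{-i})-u_i(s)$. Subtracting the outgoing sum from the incoming sum gives, for every $t\neq s_i$, the same sign pattern $u_i(s)-u_i(t,s_{-i})$. Hence
\[
(\dv u)(s)=\sum_i\sum_{t}\bigl(u_i(s)-u_i(t,s_{-i})\bigr)=K\sum_i(u_i-A_iu_i)(s)=K\,g(s).
\]
The same computation with every $u_i$ replaced by $\phi$ gives $\dv\delta\phi=K\,\numop\phi$.

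\textbf{Step 2: invert the number operator.} The normal equation becomes $\numop\phi^\star=g$. Each $u_i-A_iu_i$ has mean zero, so $g$ has no $\emptyset$-component. Since $\numop$ multiplies $g_T$ by $|T|$ and is invertible on the span of the components with $T\neq\emptyset$, we get $\phi^\star=\sum_{T\neq\emptyset}g_T/|T|$. This is unique up to the kernel of $\numop$, which consists of the constants, and this proves \eqref{eq:phistar}.

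\textbf{Step 3: the two energies.} By orthogonality, $\Epot=\langle\delta\phi^\star,u\rangle$. Using Step 1,
\[
\langle\delta\phi^\star,u\rangle=\sum_s\phi^\star(s)\,K g(s)=K^{N+1}\,\E[\phi^\star g]=K^{N+1}\sum_{m\ge1}\frac{\|g_m\|^2}{m},
\]
where the last equality uses the orthogonality of the Hoeffding components.

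For the total energy, $\Epot+\Eharm=\|u\|^2$ again by orthogonality. Fix $i$ and $s_{-i}$, and write $v(x)=u_i(x,s_{-i})$. Agent $i$'s edges on this fibre contribute
\[
\sum_{x<y}(v(y)-v(x))^2=\tfrac12\sum_{x,y}(v(x)-v(y))^2=K\sum_x\bigl(v(x)-\bar v\bigr)^2,
\]
which equals $K^2$ times the fibre variance. Summing over the $K^{N-1}$ choices of $s_{-i}$ gives $K^{N+1}\|u_i-A_iu_i\|^2$, and summing over $i$ yields \eqref{eq:twoenergies}.

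The only delicate point is the sign bookkeeping in Step 1. The game graph orients edges upward, so one must check that incoming and outgoing edges both contribute $u_i(s)-u_i(t,s_{-i})$, making the divergence orientation-free. Everything else is routine once the identity $\dv u=Kg$ is in hand.
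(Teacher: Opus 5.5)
Your proposal is correct and follows essentially the same route as the paper: the normal equation $\dv\delta\phi^\star=\dv u$ with $\dv u=Kg$ and $\dv\delta=K\numop$, inversion of $\numop$ on the Hoeffding components, and the fibrewise identity $\sum_{x<y}(v(y)-v(x))^2=K^2\Var(v)$ for the total energy. The differences are cosmetic. You obtain $\dv\delta=K\numop$ from the same per-profile computation, where the paper uses the Laplacian of the product of complete graphs. You also write $\Epot=\langle\delta\phi^\star,u\rangle$ where the paper writes $\langle\delta\phi^\star,\delta\phi^\star\rangle$, and these coincide by orthogonality.
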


\begin{proof}
The game graph is the Cartesian product of \(N\) copies of the complete graph on \(K\) vertices, whose Laplacian is \(K(I-A)\); hence \(\dv\delta=K\sum_i(I-A_i)=K\numop\). At a profile \(s\),
\begin{align*}
(\dv u)(s)&=\sum_i\sum_t[u_i(s)-u_i(t,s_{-i})]\\
&=K\sum_i(u_i-A_i u_i)(s)=Kg(s).
\end{align*}
Thus the normal equation \(\dv\delta\phi^\star=\dv u\) reduces to \(\numop\phi^\star=g\).
Since each \(A_i\) preserves the uniform mean, \(g\) has zero mean. Moreover, for the orthogonal decomposition \(g=\sum_T g_T\), one has \(\numop g_T=|T|g_T\). Therefore $\phi^\star=\numop^{-1}g$ up to an additive constant, which is \eqref{eq:phistar}.
By adjointness and the normal equation,
\begin{align*}
\Epot&=\langle\delta\phi^\star,\delta\phi^\star\rangle_{\rm edge}
=\sum_s\phi^\star(s)(\dv u)(s)\\
&=K^{N+1}\langle\numop^{-1}g,g\rangle.
\end{align*}
Writing \(g_m=\sum_{|T|=m}g_T\) and using orthogonality gives \(\langle\numop^{-1}g,g\rangle=\sum_{m\ge1}\|g_m\|^2/m\), hence the first identity in \eqref{eq:twoenergies}.
For the second, fix \(i\) and \(s_{-i}\). Since \(\sum_{x<y}(f(y)-f(x))^2=K^2\Var(f)\),
\[
\sum_{x<y}\bigl(u_i(y,s_{-i})-u_i(x,s_{-i})\bigr)^2
=K^2\Var_{s_i}u_i(\cdot,s_{-i}).
\]
Summing over \(s_{-i}\) gives \(\|u\|_{\rm edge}^2=K^{N+1}\sum_i\|u_i-A_i u_i\|^2\). Since the potential and harmonic components are orthogonal, \(\Epot+\Eharm=\|u\|_{\rm edge}^2\), proving the second identity.
\end{proof}

%======================================================================
\section{Closed-Form Energies of Priority Coordination}\label{sec:pairwise}
This section solves the first problem described in Section~\ref{subsec:problem}: computing the potential and harmonic
components of priority coordination in closed form. The main result, Proposition~\ref{prop:energies}, shows that for
the linear payoff the harmonic energy counts the conflicts and the potential energy adds the adjacent pairs of
conflicts; Section~\ref{sec:reading} then reads the two components for a single move.

\subsection{Symmetric and Antisymmetric Interactions}\label{sec:pairwise-gen}
Consider two neighbours $i$ and $j$. Their interaction $\omega_{ij}$ splits, by \eqref{eq:pm}, into a symmetric part
$\omega^+_{ij}$, which affects both agents in the same way, and an antisymmetric part $\omega^-_{ij}$, in which what one
agent gains the other loses. Intuitively, only the symmetric part can be represented by a common objective that changes in the same way for both agents, whereas under the antisymmetric part their interests are opposed.
Proposition~\ref{prop:pairwise} makes this exact and gives the two energies.

\begin{proposition}[Pairwise games, kernel form]\label{prop:pairwise}
Let $\calG$ be the pairwise game \eqref{eq:pairwise} on a graph $C$ with at least one edge, $K\ge2$. Then:
\begin{enumerate}
\item[(a)] the Hodge potential is, up to a constant,
\begin{equation}\label{eq:polyphi}
\phi^\star(s)=\sum_i\sa_i(s_i)+\sum_{\{i,j\}\in E}\omega^{+}_{ij}(s_i,s_j).
\end{equation}
\item[(b)] the harmonic flow $h$ is the flow of the payoffs $h_i(s)=\sum_{j\in\calN(i)}\omega^{-}_{ij}(s_i,s_j)$.
\item[(c)] the energies are
\begin{equation}\label{eq:polyenergies}
\begin{aligned}
\Epot&=K^{N+1}\Big(\sum_i\|\sa_i\|^2+2\!\!\sum_{\{i,j\}\in E}\!\!\|\omega^{+}_{ij}\|^2\Big),\\
\Eharm&=2K^{N+1}\!\!\sum_{\{i,j\}\in E}\!\!\|\omega^{-}_{ij}\|^2.
\end{aligned}
\end{equation}
\item[(d)] $\calG$ is an exact potential game iff every interaction is symmetric,
$\omega_{ij}(x,y)=\omega_{ji}(y,x)$, for every edge.
\end{enumerate}
\end{proposition}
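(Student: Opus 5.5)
The plan is to apply Lemma~\ref{lem:numop}. First compute $g=\sum_i(u_i-A_iu_i)$ for the pairwise game \eqref{eq:pairwise}, then read off its Hoeffding components of orders one and two.

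Fix $i$ and apply $I-A_i$ term by term to \eqref{eq:pairwise} after substituting the ANOVA \eqref{eq:anova}. The constant $\overline{\psi}_{ij}$ and the term $b_{ij}(s_j)$ do not depend on $s_i$, so they are annihilated. The term $\bar\gamma_i$ is also removed. The terms $\gamma_i(s_i)-\bar\gamma_i$ and $a_{ij}(s_i)$ survive unchanged because they are centred in $s_i$. Since $\omega_{ij}$ is centred in its first argument, $A_i\omega_{ij}(\cdot,s_j)=0$, so $\omega_{ij}(s_i,s_j)$ also survives. Hence $u_i-A_iu_i=\sa_i(s_i)+\sum_{j\in\calN(i)}\omega_{ij}(s_i,s_j)$.

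Summing over $i$ and grouping the two ordered pairs of each edge gives
\[
g=\sum_i\sa_i(s_i)+\sum_{\{i,j\}\in E}\bigl(\omega_{ij}(s_i,s_j)+\omega_{ji}(s_j,s_i)\bigr)=\sum_i\sa_i(s_i)+2\sum_{\{i,j\}\in E}\omega^+_{ij}(s_i,s_j).
\]
Each $\sa_i$ is centred, so it is the Hoeffding component $g_{\{i\}}$. Each $\omega^+_{ij}$ is centred in both arguments, so $2\omega^+_{ij}$ is $g_{\{i,j\}}$. Applying \eqref{eq:phistar}, which divides order-one components by $1$ and order-two components by $2$, yields (a).

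For (b), I check directly that $u-\delta\phi^\star$ is the flow of the stated $h_i$. For each $i$, subtract from $u_i$ the terms of $\phi^\star$ that depend on $s_i$. Only the $s_i$-dependence matters for the flow. The part of $u_i$ that depends on $s_i$ is $\sa_i(s_i)+\sum_j\omega_{ij}(s_i,s_j)$, and the part of $\phi^\star$ that depends on $s_i$ is $\sa_i(s_i)+\sum_j\omega^+_{ij}(s_i,s_j)$. The difference is $\sum_j(\omega_{ij}-\omega^+_{ij})=\sum_j\omega^-_{ij}$. Uniqueness of the decomposition \eqref{eq:hodge} then identifies this flow with $h$, so no separate verification that $\dv h=0$ is needed.

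For (c), the first identity in \eqref{eq:twoenergies} gives $\Epot=K^{N+1}\bigl(\|g_1\|^2+\tfrac12\|g_2\|^2\bigr)$. By orthogonality across distinct $T$, $\|g_1\|^2=\sum_i\|\sa_i\|^2$ and $\|g_2\|^2=4\sum_E\|\omega^+_{ij}\|^2$, which yields the stated $\Epot$.

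For the total energy, $u_i-A_iu_i=\sa_i+\sum_j\omega_{ij}$ is a sum of orthogonal components: $\sa_i$ lives on $T=\{i\}$, and each $\omega_{ij}$ lives on $T=\{i,j\}$, with distinct $j$ giving distinct $T$. So $\|u_i-A_iu_i\|^2=\|\sa_i\|^2+\sum_j\|\omega_{ij}\|^2$. Summing over $i$ and grouping each edge, the key identity is $\|\omega_{ij}\|^2+\|\omega_{ji}\|^2=2\|\omega^+_{ij}\|^2+2\|\omega^-_{ij}\|^2$. This is the parallelogram law applied to $\omega_{ij}(x,y)$ and $\omega_{ji}(y,x)$; the swap of arguments preserves the norm under the uniform product measure. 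Subtracting $\Epot$ from the total energy leaves $\Eharm=2K^{N+1}\sum_E\|\omega^-_{ij}\|^2$.

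For (d), the game is exact potential iff $\Eharm=0$, which by (c) holds iff every $\omega^-_{ij}=0$, which is the stated symmetry condition.

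The main point needing care is the bookkeeping in (b) and (c). One must confirm that the flow determined by the $h_i$ is well defined: payoff parts independent of $s_i$ drop out of agent $i$'s edges. One must also keep the edge-versus-ordered-pair factor of $2$ and the orthogonality of the components $g_{\{i,j\}}$ across different edges straight. Beyond that, the argument is a direct application of Lemma~\ref{lem:numop}.
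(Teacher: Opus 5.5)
Your proposal is correct and follows the paper's proof essentially step for step. You compute $u_i-A_iu_i=\sa_i+\sum_{j}\omega_{ij}$, identify $g_{\{i\}}=\sa_i$ and $g_{\{i,j\}}=2\omega^{+}_{ij}$, read off (a) and $\Epot$ from Lemma~\ref{lem:numop}, obtain (b) by subtracting the $s_i$-dependent part of $\phi^\star$, and get $\Eharm$ by subtraction using $\|\omega_{ij}\|^2+\|\omega_{ji}\|^2=2\|\omega^{+}_{ij}\|^2+2\|\omega^{-}_{ij}\|^2$; the appeal to uniqueness in (b) is harmless but unnecessary, since once $\phi^\star$ is identified, $h$ is by definition $u-\delta\phi^\star$.
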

\begin{proof}
\emph{Step 1.} $I-A_i$ annihilates every term of \eqref{eq:pairwise}--\eqref{eq:anova} not depending on $s_i$ and
fixes $\sa_i(s_i)$ and $\omega_{ij}(s_i,s_j)$, which are centred in $s_i$. Hence
$v_i:=u_i-A_iu_i=\sa_i(s_i)+\sum_{j\in\calN(i)}\omega_{ij}(s_i,s_j)$.
\emph{Step 2.} In $g=\sum_iv_i$, the term $\sa_i$ is a Hoeffding component on $T=\{i\}$, and $\omega_{ij}$ is a
Hoeffding component on $T=\{i,j\}$. Collecting by index set gives
\[
g_{\{i\}}=\sa_i,\quad g_{\{i,j\}}=\omega_{ij}(s_i,s_j)+\omega_{ji}(s_j,s_i)=2\omega^{+}_{ij}(s_i,s_j)
\]
for $\{i,j\}\in E$, and $g_T=0$ otherwise.
\emph{Step 3.} Lemma~\ref{lem:numop} gives $\phi^\star=\sum_Tg_T/|T|$, which is \eqref{eq:polyphi}. The payoffs
$\hat u_i:=\sa_i+\sum_j\omega^{+}_{ij}$ differ from $\phi^\star$ by a function of $s_{-i}$, so their flow is
$\delta\phi^\star$; subtracting from $v_i$ gives (b).
\emph{Step 4.} By \eqref{eq:twoenergies} and orthogonality,
$\Epot/K^{N+1}=\sum_i\|\sa_i\|^2+\sum_E\|2\omega^{+}_{ij}\|^2/2$ and
$(\Epot+\Eharm)/K^{N+1}=\sum_i\|\sa_i\|^2+\sum_E(\|\omega_{ij}\|^2+\|\omega_{ji}\|^2)$. Since the uniform measure is
exchange-invariant, $\|\omega_{ij}\|^2+\|\omega_{ji}\|^2=2\|\omega^{+}_{ij}\|^2+2\|\omega^{-}_{ij}\|^2$, and (c)
follows by subtraction; (d) is immediate from (c).
\end{proof}

What an agent gains from its own level, and what two neighbours gain \emph{reciprocally}, admits a common objective;
what one agent gains exactly at the other's expense does not. The proposition is not new in substance
(\cite[Sec.~4.3]{candogan2011} for two players; \cite{cheng2016,arditti2020sep} on graphs; \cite{gilles2026,sandholm2010}
for (a) and (d)); we include the short proof because the constants in~(c) are used throughout.

\subsection{Priority Games: Every Interaction is Antisymmetric}\label{sec:prio-energies}
In a priority game, the interaction between two conflicting agents is strictly competitive: exactly one of them is
served first, so whatever one gains the other loses. For the linear payoff, which makes the game pairwise,
Proposition~\ref{prop:pairwise}(d) therefore shows that the game is never an exact potential game once there is a
conflict. The result in Proposition~\ref{prop:energies} makes this quantitative: both energies are determined by the
degrees of the conflict graph.

\begin{proposition}[Energies of priority coordination]\label{prop:energies}
For the linear rank-order game $\calG(C,K)$, every interaction is purely antisymmetric, $\omega^{+}_{ij}\equiv0$, and, up to an additive constant, the Hodge potential is $\phi^\star(s)=K^{-1}\sum_i d_i s_i$. With $c_K:=K^{N-1}(K^2-1)/12$,
\begin{equation}\label{eq:main}
\Epot=c_K\sum_i d_i^2,\qquad \Eharm=c_K\sum_i d_i=2c_K|E|.
\end{equation}
Consequently, $\rho_\pot=\sum_i d_i^2/\sum_i d_i(d_i+1)$, independently of $K$, while $\rho_\harm\ge 1/(d_{\max}+1)$ on every non-empty graph of maximum degree $d_{\max}$. Thus the energy split is controlled by the degree structure rather than explicitly by $K$ or $N$. In particular, $\rho_\pot=d/(d+1)$ on a $d$-regular graph and $\rho_\pot=(N-1)/N$ on the complete graph. All these conclusions hold for every protocol, i.e.\ every orientation of the conflict edges, acyclic or not.
\end{proposition}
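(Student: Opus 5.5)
We plan to apply Proposition~\ref{prop:pairwise} directly to the kernel \eqref{eq:rankkernel}, since the linear game is pairwise with $\gamma_i\equiv0$. The first step is the ANOVA \eqref{eq:anova} of $\psi_{ij}(x,y)=-\one[y>x]-\one[y=x]\one[j\to i]$. The key observation is that for every edge exactly one endpoint waits. We can write this as
\[
\psi_{ij}(x,y)+\psi_{ji}(y,x)=-\bigl(\one[y>x]+\one[x>y]+\one[x=y](\one[j\to i]+\one[i\to j])\bigr)=-1,
\]
because exactly one of $j\to i$, $i\to j$ holds. This holds for any orientation, acyclic or not. Taking ANOVA components of this identity gives $\omega_{ij}(x,y)+\omega_{ji}(y,x)=0$, so $\omega^+_{ij}\equiv0$, and it also gives $a_{ij}(x)+b_{ji}(x)=0$.

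Next we compute the own effect $\sa_i$. The marginal of $\psi_{ij}$ in $x$ is $-\#\{y>x\}/K-\one[j\to i]/K=-(K-1-x)/K-\one[j\to i]/K$. Centring it removes the orientation term, so $a_{ij}(x)=(x-\tfrac{K-1}{2})/K$ independently of the protocol. Summing over neighbours gives $\sa_i(x)=d_i(x-\tfrac{K-1}{2})/K$. Proposition~\ref{prop:pairwise}(a) then yields $\phi^\star=K^{-1}\sum_i d_is_i$ up to a constant. For the potential energy, Proposition~\ref{prop:pairwise}(c) with $\omega^+=0$ gives $\Epot=K^{N+1}\sum_i d_i^2\Var(s_i)/K^2$. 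With $\Var(s_i)=(K^2-1)/12$ under the uniform law on $\{0,\dots,K-1\}$, this is $c_K\sum_i d_i^2$.

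For the harmonic energy we need $\|\omega^-_{ij}\|^2=\|\omega_{ij}\|^2=\|\psi_{ij}\|_{\rm c}^2-\|a_{ij}\|^2-\|b_{ij}\|^2$, where $\|\cdot\|_{\rm c}$ is the centred norm. This is the computational step, and it is where we expect the main work. Two shortcuts should make it routine.

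First, $b_{ij}=-a_{ji}$ by the identity above, and $a_{ji}$ has the same form as $a_{ij}$, so $\|b_{ij}\|^2=\|a_{ij}\|^2=(K^2-1)/(12K^2)$.

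Second, $\Var(\psi_{ij})$ is the variance of a Bernoulli variable, namely the indicator that $j$ precedes $i$. Its success probability is $\Pr[y>x]+\Pr[y=x]\one[j\to i]$, which equals $1/2\mp 1/(2K)+1/K\cdot\one[j\to i]$, i.e. $(K\pm1)/(2K)$ depending on the orientation. In both cases the variance is $(K^2-1)/(4K^2)$, which is orientation-free.

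Hence $\|\omega_{ij}\|^2=(K^2-1)/(4K^2)-2(K^2-1)/(12K^2)=(K^2-1)/(12K^2)$. Proposition~\ref{prop:pairwise}(c) then gives $\Eharm=2K^{N+1}|E|(K^2-1)/(12K^2)=2c_K|E|=c_K\sum_id_i$.

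The consequences follow directly. The share is $\rho_\pot=\sum d_i^2/\sum d_i(d_i+1)$. Since $d_i^2\le d_{\max}\,d_i$, we have $\sum_i d_i^2\le d_{\max}\sum_i d_i$, which gives $\rho_\harm=\sum d_i/\sum d_i(d_i+1)\ge 1/(d_{\max}+1)$. The $d$-regular case gives $\rho_\pot=d/(d+1)$, and the complete graph is the special case $d=N-1$. Protocol independence holds because every quantity used above (the identity $\psi_{ij}+\psi_{ji}=-1$, the centred marginals, and $\Var(\psi_{ij})$) was shown not to depend on the orientation.
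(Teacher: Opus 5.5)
Your proposal is correct and follows the paper's proof essentially step by step: the identity $\psi_{ij}(x,y)+\psi_{ji}(y,x)\equiv-1$ gives $\omega^+_{ij}\equiv0$, the centred marginals give $\sa_i(x)=d_i(x-\bar s)/K$, and the orientation-free Bernoulli variance $(K^2-1)/(4K^2)$ minus $\|a_{ij}\|^2+\|b_{ij}\|^2$ gives $\|\omega^-_{ij}\|^2=(K^2-1)/(12K^2)$; your observation $b_{ij}=-a_{ji}$ spells out the equality $\|b_{ij}\|=\|a_{ij}\|$, which the paper states without comment. One small slip: the success probability is $\tfrac12-\tfrac1{2K}+\tfrac1K\one[j\to i]$, with no $\mp$, although your final value $(K\pm1)/(2K)$ is right.
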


\begin{proof}
For any protocol, $\psi_{ij}(x,y)+\psi_{ji}(y,x)\equiv-1$, since exactly one of two conflicting agents is served first. Double centring removes the constant, so $\omega_{ij}(x,y)+\omega_{ji}(y,x)=0$ and hence $\omega^+_{ij}=0$.
Next, $\E_y\psi_{ij}(x,y)=-(K-1-x)/K$ if the protocol favours $i$ on the edge and $-(K-x)/K$ otherwise. Both equal $(x-\bar s)/K$ up to a constant, where $\bar s=(K-1)/2$. Hence $\sa_i(x)=d_i(x-\bar s)/K$, so Proposition~\ref{prop:pairwise}(a) gives $\phi^\star(s)=K^{-1}\sum_i d_i s_i$ up to an additive constant, and $\|\sa_i\|^2=d_i^2(K^2-1)/(12K^2)$.
For either orientation, $\Var\psi_{ij}=(K^2-1)/(4K^2)$ and $\|a_{ij}\|^2=\|b_{ij}\|^2=(K^2-1)/(12K^2)$. By \eqref{eq:anova} and orthogonality, $\|\omega_{ij}\|^2=(K^2-1)/(12K^2)$; since $\omega^+_{ij}=0$, also $\|\omega^-_{ij}\|^2=\|\omega_{ij}\|^2$. Substituting into \eqref{eq:polyenergies} gives \eqref{eq:main}.
Finally, $\rho_\pot=\sum_i d_i^2/\sum_i d_i(d_i+1)$ and $\rho_\harm=\sum_i d_i/\sum_i d_i(d_i+1)$. Since $d_i+1\le d_{\max}+1$, $\rho_\harm\ge1/(d_{\max}+1)$. The $d$-regular and complete-graph formulas follow by setting $d_i=d$ and $d_i=N-1$, respectively.
\end{proof}

\begin{remark}[Reading the two energies]\label{rem:quantum}
Let $E_2$ be the set of pairs of conflicts that share an agent, so that $|E_2|=\sum_i\binom{d_i}2$. Since
$\sum_id_i^2=2|E|+2|E_2|$, in units of $2c_K$,
\begin{equation}\label{eq:quanta}
\Eharm=|E|,\;\; \Epot=|E|+|E_2|,\;\; \rho_\pot=\frac{|E|+|E_2|}{2|E|+|E_2|}.
\end{equation}
Every conflict carries one unit of harmonic energy, whatever the rest of the graph; potential energy is generated, in
addition, by \emph{adjacent} conflicts, because what is potential is the shared incentive to escalate, which is
an own effect, and not the interaction. Removing the conflict $\{i,j\}$ eliminates one unit of harmonic energy and $d_i+d_j-1$
units of potential energy. For the intersection of Example~\ref{ex:intersection}, $|E|=16$ and $|E_2|=48$,
so $\Eharm:\Epot=16:64$ and one fifth of the incentive energy is harmonic.
\end{remark}

\emph{The error of a potential surrogate.} Let $\Phi:S\to\R$ be a common objective that replaces the agents
\cite{liu2023,liu2025}. Under log-linear learning with rationality $\beta>0$, a revising agent $i$ at
profile $s$ chooses between $s_i$ and $s'_i$ with log-odds $\beta(u_i(s')-u_i(s))=\beta u(e)$, where $e$ is the
corresponding edge of the game graph; the common-objective model predicts $\beta\,\delta\Phi(e)$ instead.

\begin{corollary}[Best common-objective model of choice odds]\label{cor:surrogate}
For every game, every $\beta>0$ and every $\Phi:S\to\R$, the squared prediction error of the log-odds, summed over
all $NK^N(K-1)/2$ edges of the game graph, satisfies
\begin{equation}\label{eq:surrogate}
\sum_e\beta^2\big(u(e)-\delta\Phi(e)\big)^2\ \ge\ \beta^2\Eharm ,
\end{equation}
with equality iff $\Phi-\phi^\star$ is constant. Hence, for every $\beta>0$, the relative squared error of the
best common-objective model is $\rho_\harm$. For the linear game $\calG(C,K)$ it is at least $1/(d_{\max}+1)$, on every
conflict graph with at least one conflict and for every deterministic protocol.
\end{corollary}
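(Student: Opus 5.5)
The plan is to reduce the corollary to the orthogonality in the Hodge decomposition \eqref{eq:hodge}. Since $\beta>0$ is a common factor, it suffices to show $\sum_e (u(e)-\delta\Phi(e))^2\ge\Eharm$, with equality iff $\Phi-\phi^\star$ is constant. I will write $u-\delta\Phi=(\delta\phi^\star-\delta\Phi)+h=\delta(\phi^\star-\Phi)+h$. Because $\dv h=0$, the adjointness identity $\langle\delta\psi,x\rangle=\sum_s\psi(s)(\dv x)(s)$, stated with the flows, gives $\langle\delta(\phi^\star-\Phi),h\rangle=0$. Pythagoras in the counting inner product then yields
\[
\|u-\delta\Phi\|^2=\|\delta(\phi^\star-\Phi)\|^2+\|h\|^2\ \ge\ \Eharm .
\]
Equality holds iff $\delta(\phi^\star-\Phi)=0$. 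Since the game graph, a Cartesian product of complete graphs, is connected, this happens iff $\phi^\star-\Phi$ is constant. The count $NK^N(K-1)/2$ of edges is immediate: each agent $i$ and each $s_{-i}$ contributes $\binom K2$ edges.

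For the relative statement, I will divide by the total energy of the true log-odds, $\sum_e\beta^2u(e)^2=\beta^2\|u\|^2=\beta^2(\Epot+\Eharm)$, again by orthogonality. The minimal relative error is then $\Eharm/(\Epot+\Eharm)=\rho_\harm$, independent of $\beta$. This requires $\|u\|\neq0$, which holds whenever the game is nontrivial and in particular for any graph with a conflict.

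Finally, for $\calG(C,K)$ I will invoke Proposition~\ref{prop:energies}, which gives $\rho_\harm=\sum_i d_i/\sum_i d_i(d_i+1)\ge1/(d_{\max}+1)$ for every orientation whenever $E\neq\emptyset$. No step is a real obstacle. The only point needing care is the equality case, where one must use connectivity of the game graph so that $\ker\delta$ consists of the constants.
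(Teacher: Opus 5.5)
Your proposal is correct and follows the paper's proof: write $u-\delta\Phi=h+\delta(\phi^\star-\Phi)$, use $h\perp\operatorname{im}\delta$ to get the Pythagorean identity, divide by $\Epot+\Eharm$, and quote Proposition~\ref{prop:energies} for the linear game. You also invoke the connectivity of the game graph for the equality case, a detail the paper leaves implicit.
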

\begin{proof}
By \eqref{eq:hodge}, $u-\delta\Phi=h+\delta(\phi^\star-\Phi)$ with $h\perp\operatorname{im}\delta$, so
$\|u-\delta\Phi\|^2=\Eharm+\|\delta(\phi^\star-\Phi)\|^2$. Dividing by $\sum_eu(e)^2=\Epot+\Eharm$ gives the relative
error; the bound for $\calG(C,K)$ is Proposition~\ref{prop:energies}.
\end{proof}
The error weights every unilateral move equally, as low-rationality learning does to first order
(Proposition~\ref{prop:learning}(a)); at high rationality play concentrates near the equilibria of
Proposition~\ref{prop:ne}, and an error weighted by the visited profiles can be smaller. The corollary measures what a
surrogate misses about the incentives, not how often it mispredicts along a trajectory.

\emph{Pricing and the limits of the two energies.} For the linear game, a price on the levels changes the
potential energy but not the harmonic one. This follows from a general property of pairwise games described in Corollary~\ref{cor:rigid}.

\begin{corollary}[Rigidity of the harmonic energy]\label{cor:rigid}
$\Eharm$ is unchanged by any modification of the own terms $\gamma_i$, the marginals $a_{ij},b_{ij}$, the constants $\overline{\psi}_{ij}$, or the symmetric parts $\omega^+_{ij}$. In particular, no separable pricing $\gamma_i(s_i)$ changes $\Eharm$. If $\Eharm>0$, pricing alone can move $\rho_\pot$ to any value in $[\rho_{\min},1)$, where $\rho_{\min}=\sum_E\|\omega^+_{ij}\|^2/\sum_E(\|\omega^+_{ij}\|^2+\|\omega^-_{ij}\|^2)$.
\end{corollary}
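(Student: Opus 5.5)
The plan is to read everything off the energy formulas of Proposition~\ref{prop:pairwise}(c). Write
\[
\Eharm=2K^{N+1}\sum_{\{i,j\}\in E}\|\omega^-_{ij}\|^2
\]
and note that $\omega^-_{ij}$ is built from $\omega_{ij}$ and $\omega_{ji}$ alone. So the first step is to check which data enter these interactions.

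\emph{Invariance of $\Eharm$.} By the ANOVA \eqref{eq:anova}, $\omega_{ij}$ is the doubly centred part of $\psi_{ij}$:
\[
\omega_{ij}=(I-A_x)(I-A_y)\psi_{ij}.
\]
Changing $\overline\psi_{ij}$, $a_{ij}$ or $b_{ij}$ adds a function killed by one of the two centrings, so $\omega_{ij}$ does not change. The own terms $\gamma_i$ enter only $\sa_i$ through \eqref{eq:own}. Hence modifying $\gamma_i$, $a_{ij}$, $b_{ij}$ or $\overline\psi_{ij}$ leaves every $\omega^-_{ij}$, and therefore $\Eharm$, unchanged.

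For the symmetric parts I would phrase the modification as replacing $\omega_{ij}$ by $\omega_{ij}+\sigma$ and $\omega_{ji}$ by $\omega_{ji}+\sigma^\top$, where $\sigma$ is doubly centred and $\sigma^\top(y,x)=\sigma(x,y)$. This changes $\omega^+_{ij}$ by $\sigma$ and leaves $\omega^-_{ij}$ fixed. The same observation covers arbitrary simultaneous changes of all these quantities, since \eqref{eq:polyenergies} expresses $\Eharm$ through the $\omega^-_{ij}$ only. Separable pricing is the special case of changing $\gamma_i$.

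\emph{Range of $\rho_\pot$ under pricing.} Pricing changes only the own effects $\sa_i$. Any centred function can be reached, because $\sa_i=\gamma_i-\bar\gamma_i+\sum_j a_{ij}$ and $\gamma_i$ is free. Put
\[
P:=\sum_E\|\omega^+_{ij}\|^2,\qquad H:=\sum_E\|\omega^-_{ij}\|^2>0,\qquad X:=\tfrac12\sum_i\|\sa_i\|^2 .
\]
By \eqref{eq:polyenergies},
\[
\rho_\pot=\frac{X+P}{X+P+H},
\]
where the common factor $2K^{N+1}$ cancels. Under pricing, $X$ ranges over all of $[0,\infty)$: take $X=0$ by choosing $\gamma_i=-\sum_j a_{ij}$, and get any larger value by scaling a nonzero centred own term. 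The map $X\mapsto (X+P)/(X+P+H)$ is continuous and strictly increasing, because its derivative is $H/(X+P+H)^2>0$. It equals $P/(P+H)=\rho_{\min}$ at $X=0$ and tends to $1$ without reaching it, since $H>0$. So the attainable set is exactly $[\rho_{\min},1)$.

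\emph{Main obstacle.} The only subtle point is interpretational. "Modifying $\omega^+_{ij}$" has to be read as a joint change of $\omega_{ij}$ and $\omega_{ji}$ that preserves $\omega^-_{ij}$. I would make this explicit as above. Everything else is immediate from Proposition~\ref{prop:pairwise}.
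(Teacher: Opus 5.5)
Your proposal is correct and follows the paper's proof. Both read the invariance directly off \eqref{eq:polyenergies}, kill the own effects with $\gamma_i=-\sum_j a_{ij}$ to reach $\rho_{\min}$, and scale a pricing term to approach $1$; your monotone map $X\mapsto (X+P)/(X+P+H)$ and your reading of ``modifying $\omega^+_{ij}$'' as a joint change of $\omega_{ij},\omega_{ji}$ that preserves $\omega^-_{ij}$ only make explicit what the paper leaves implicit.
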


\begin{proof}
By \eqref{eq:polyenergies}, $\Eharm$ depends only on $\omega^-$. Hence changes to $\gamma_i,a_{ij},b_{ij},\overline{\psi}_{ij}$, or $\omega^+_{ij}$ leave it unchanged. Pricing changes only $\sa_i$: choosing $\gamma_i=-\sum_j a_{ij}$ gives $\sa_i\equiv0$ and hence $\rho_\pot=\rho_{\min}$, while scaling any nonconstant pricing makes $\sum_i\|\sa_i\|^2\to\infty$ and thus $\rho_\pot\to1$. Since $\rho_\pot$ varies continuously with the scaling parameter, every value in $[\rho_{\min},1)$ is attained when $\Eharm>0$.
\end{proof}

Corollary~\ref{cor:rigid} reveals the reason why we treat the \emph{pair} $(\Epot,\Eharm)$, rather than the ratio, as the fundamental
quantity: $\rho_\pot$ is comparable across designs only at a fixed normalisation of the own effects. It also
separates pricing from transfers. A transfer is a payment $\tau_i:S\to\R$ added to each payoff, giving the game
with payoffs $u_i+\tau_i$. Transfers that make this game an exact potential game must cancel the
harmonic flow, so the energy of their flow is at least $\Eharm$; for a pairwise game this minimum is attained by the
budget-balanced side payment $\tau_i=-\sum_{j\in\calN(i)}\omega^{-}_{ij}(s_i,s_j)$ (Proposition~\ref{prop:pairwise}(b)),
which no pricing of the levels can achieve. Pricing, in turn, can change the behaviour
drastically while leaving $\Eharm$ fixed: with payoffs $-\pos_C(i,s)-c\,s_i$ under the index protocol, an arbitrarily
small price $c\in(0,1)$ destroys every pure-strategy Nash equilibrium on any graph with a conflict
(Proposition~\ref{prop:priced-SM} in Appendix~\ref{sec:priced} of the SM). Neither energy
predicts the existence of pure-strategy Nash equilibria, and we make no such claim. Nor
do they measure efficiency: by \eqref{eq:cons}, the total number of waits is constant across profiles.

\subsection{Reading a Single Move: Mean and Fluctuation}\label{sec:reading}
Proposition~\ref{prop:energies} gives the energies of the whole game. The results in Proposition~\ref{prop:hflow} explain the decomposition for
a single move: when an agent raises its level, it identifies which part of the gain is potential and which
part is harmonic.

\begin{proposition}[Mean and fluctuation]\label{prop:hflow}
For $\calG(C,K)$, along an edge $(i,s\to s')$ with jump $\ell=s'_i-s_i\ge1$, a neighbour $j$ is \emph{overtaken} iff
\begin{equation}\label{eq:overtake}
s_i<s_j<s'_i,\ \text{or}\ (s_j=s_i,\ j\to i),\ \text{or}\ (s_j=s'_i,\ i\to j), \!\!
\end{equation}
and the game flow is the number $W$ of overtaken neighbours. Then $h(i,s\to s')=W-d_i\ell/K$: the potential part of a
move is the \emph{expected} number of neighbours overtaken under uniformly random levels, and the harmonic part is the
deviation from it.
\end{proposition}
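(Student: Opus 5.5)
The plan is to compute the game flow directly from the definition of $\pos_C$, and then to subtract the gradient of the Hodge potential already identified in Proposition~\ref{prop:energies}. The statement splits into two identities: $u(i,s\to s')=W$, and $\delta\phi^\star(i,s\to s')=d_i\ell/K$. Since $h=u-\delta\phi^\star$ by \eqref{eq:hodge}, these two together give $h=W-d_i\ell/K$.

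\emph{Step 1 (the flow is $W$).} Along the edge only $s_i$ changes, so only the terms of $\pos_C(i,\cdot)$ indexed by neighbours $j\in\calN(i)$ can move. For a fixed $j$, the indicator ``$j$ served before $i$'' equals $\one[s_j>x]+\one[s_j=x]\one[j\to i]$ at level $x$. It is nonincreasing in $x$, because raising $i$'s level can only remove $j$ from ahead of $i$. Comparing $x=s_i$ with $x=s'_i$, I check case by case that the indicator drops from $1$ to $0$ exactly in the three cases of \eqref{eq:overtake}:
\begin{itemize}
\item if $s_i<s_j<s'_i$, then $j$ is strictly ahead before and strictly behind after;
\item if $s_j=s_i$ and $j\to i$, then $j$ wins the tie before and is strictly behind after;
\item if $s_j=s'_i$ and $i\to j$, then $j$ is strictly ahead before and loses the tie after.
\end{itemize}
In every other configuration the indicator is unchanged. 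Hence $\pos_C(i,s)-\pos_C(i,s')=W$, and with $f(r)=-r$ the flow $u_i(s')-u_i(s)$ equals $W$.

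\emph{Step 2 (the potential part is $d_i\ell/K$).} By Proposition~\ref{prop:energies}, $\phi^\star(s)=K^{-1}\sum_k d_k s_k$ up to a constant. Its gradient along the edge is therefore $d_i(s'_i-s_i)/K=d_i\ell/K$, and subtracting it from Step 1 gives $h=W-d_i\ell/K$.

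\emph{Step 3 (interpretation).} It remains to verify that $d_i\ell/K=\E[W]$ when the levels $s_j$ are i.i.d.\ uniform on $\{0,\dots,K-1\}$. For each neighbour $j$ the overtaking event has probability $(\ell-1)/K+1/K=\ell/K$, whatever the orientation of the edge. The strictly-between case contributes $\ell-1$ levels. Exactly one of the two tie cases applies, depending on whether $j\to i$ or $i\to j$, and it contributes one more level. Summing over the $d_i$ neighbours gives $\E W=d_i\ell/K$, which is the ``mean'' reading, and $h$ is then the fluctuation $W-\E W$.

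The only delicate point is the tie bookkeeping in Step 1, namely that the two tie conditions are not double-counted and cover every change of the indicator. This works for every orientation and does not need acyclicity, since each edge carries exactly one direction.
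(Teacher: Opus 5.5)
Your proof is correct and follows the paper's own argument: you identify the flow with $W$ from the definition of $\pos_C$, subtract $\delta\phi^\star=d_i\ell/K$ obtained from Proposition~\ref{prop:energies}, and count that exactly $\ell$ of the $K$ levels of each neighbour produce an overtake. Your case-by-case check of the tie indicators spells out the step that the paper covers only with ``by definition of the rank-order payoff''.
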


\begin{proof}
By definition of the rank-order payoff, the flow along $(i,s\to s')$ equals the number $W$ of neighbours satisfying \eqref{eq:overtake}, so $u=W$. By Proposition~\ref{prop:energies}, $\delta\phi^\star=d_i\ell/K$, hence $h=u-\delta\phi^\star=W-d_i\ell/K$.
For each neighbour $j$, exactly $\ell$ of the $K$ possible values of $s_j$ satisfy \eqref{eq:overtake}: the $\ell-1$ interior levels $s_i<s_j<s_i'$, together with exactly one endpoint selected by the orientation of $\{i,j\}$. Thus, under a uniform level for $s_j$, the overtaking probability is $\ell/K$. By linearity of expectation, $\E W=d_i\ell/K$, proving the claim.
\end{proof}

Proposition~\ref{prop:hflow} identifies the potential part of a move with the mean of $W$. In fact, the whole
distribution of $W$ is explicit. Given the move, the $d_i$ overtaking events are independent under the uniform
reference measure, each with probability $\ell/K$, so $W\sim\Bin(d_i,\ell/K)$. The potential part of the move is the
mean $d_i\ell/K$, and the harmonic part is the fluctuation around it, with variance
$d_i\frac{\ell}{K}\big(1-\frac{\ell}{K}\big)$. The squared mean grows like $d_i^2$ and the variance like $d_i$, which
is the scaling $\Epot\propto\sum_id_i^2$ and $\Eharm\propto\sum_id_i$ of Proposition~\ref{prop:energies}. The
constants also match. Summing over all moves weights each jump length $\ell$ by the number $K-\ell$ of level pairs 
at distance $\ell$, and
\[
\sum_{\ell=1}^{K-1}(K-\ell)\Big(\frac{\ell}{K}\Big)^2=\sum_{\ell=1}^{K-1}(K-\ell)\,\frac{\ell}{K}\Big(1-\frac{\ell}{K}\Big)=\frac{K^2-1}{12}.
\]
The two sums are not equal term by term; rather, the substitution $\ell\mapsto K-\ell$ maps one onto the other. This is why
the potential share of the linear game does not depend on $K$.

Thus the Hodge potential collects what an agent can expect from its own level against uniformly random
opponents, and the harmonic flow is what depends on who the opponents actually are (cf.\ the tournament reading in
\cite{strang2022}); its dependence on ties is computed in Theorem~\ref{thm:pattern-SM} of the SM. For the linear payoff, if ties are broken uniformly at random and payoffs are taken in expectation, $\Epot$ is
unchanged and $\Eharm$ is multiplied by $(K-2)/(K+1)$ (Proposition~\ref{prop:randtie-SM}). For $K\ge3$ a
deterministic protocol therefore amplifies the harmonic component without creating it, whereas for $K=2$ randomised
tie-breaking yields an exact potential game.

%======================================================================
\section{Dynamics and Learning}\label{sec:dyn}
This section addresses the second problem posed in Section~\ref{subsec:problem}. It first shows that a nonzero harmonic component need not induce cycles in improvement dynamics, and then identifies what the harmonic component does affect: the stationary behaviour of log-linear learning.  In Propositions~\ref{prop:fip} and~\ref{prop:ne}, $C$ is an arbitrary
conflict graph and $f$ an arbitrary strictly decreasing payoff-from-rank function.

\begin{proposition}[Finite improvement]\label{prop:fip}
For every conflict graph $C$, every protocol and every strictly decreasing $f$, both $V(s)=\sum_is_i$ and the Hodge
potential of the linear game, $\phi^\star(s)=\frac1K\sum_id_is_i$, are generalised ordinal potentials of
$\calG_f(C,K)$. Every improvement path has length at most $N(K-1)$.
\end{proposition}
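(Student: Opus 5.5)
The plan is to show that every strictly improving unilateral move is an upward move, i.e.\ it raises the mover's level. Both claims then follow at once.

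\emph{Step 1 (monotonicity of position in one's own level).} Fix $i$ and $s_{-i}$. I will show that $x\mapsto\pos_C(i,(x,s_{-i}))$ is nonincreasing in $x$. For each neighbour $j$, the indicator $\one[s_j>x \text{ or } (s_j=x,\ j\to i)]$ is nonincreasing in $x$: it equals $1$ for $x<s_j$, it equals $\one[j\to i]$ at $x=s_j$, and it equals $0$ for $x>s_j$. Summing over $j\in\calN(i)$ gives the claim. This is the argument behind Proposition~\ref{prop:hflow}, where the flow $W$ of a raise is a nonnegative count. It holds for every orientation, acyclic or not, because only the local comparison on each edge is used.

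\emph{Step 2 (improvements go up).} Since $f$ is strictly decreasing, lowering one's level from $x$ to $y<x$ weakly raises $\pos_C(i,\cdot)$. Hence it weakly lowers $u_i=f(\pos_C(i,\cdot))$, so a downward move is never a strict improvement. Every strict improvement $s\to s'$ by agent $i$ therefore has $s'_i>s_i$. Consequently $V(s')-V(s)=s'_i-s_i\ge1>0$. Likewise $\phi^\star(s')-\phi^\star(s)=d_i(s'_i-s_i)/K$, which is positive whenever $d_i\ge1$.

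\emph{Step 3 (degree-zero agents).} The one case to handle separately is an isolated agent with $d_i=0$. Its position is identically $0$, so it has no strict improvement at all, and the implication ``$u_i$ strictly increases $\Rightarrow$ the potential strictly increases'' holds vacuously. So both $V$ and $\phi^\star$ are generalised ordinal potentials.

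\emph{Step 4 (length bound).} Along any improvement path, $V$ increases by at least $1$ per step. Since $V$ takes values in $\{0,\dots,N(K-1)\}$, any path has at most $N(K-1)$ steps. I expect no real obstacle. The only points needing care are that the monotonicity in Step 1 must hold at ties for an arbitrary orientation, and that $\phi^\star$ degenerates for isolated agents; Step 3 disposes of the latter.
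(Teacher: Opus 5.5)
Your proof is correct and follows essentially the same route as the paper. You establish that $\pos_C(i,\cdot)$ is nonincreasing in $s_i$ (edge by edge, where the paper uses inclusion of predecessor sets), so every strict improvement is a raise by a non-isolated agent; hence $V$ and $\phi^\star$ both strictly increase, and the integer-valued $V\in\{0,\dots,N(K-1)\}$ gives the length bound.
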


\begin{proof}
Fix $s_{-i}$. If $s'_i>s_i$ and $j$ is served before $i$ at $s'$, then $s_j\ge s'_i>s_i$, so $j$ is also served before $i$ at $s$. Hence $\pos_C(i,s)$ is non-increasing in $s_i$, and since $f$ is strictly decreasing, every strict improvement must have $s'_i>s_i$. The mover cannot be isolated, so $d_i>0$; therefore both $V(s)=\sum_i s_i$ and $\phi^\star(s)=K^{-1}\sum_i d_i s_i$ strictly increase. Since each improvement raises the integer-valued $V$ by at least one and $0\le V\le N(K-1)$, every improvement path has length at most $N(K-1)$.
\end{proof}

Proposition~\ref{prop:fip} concerns \emph{strictly} improving moves. If a revising agent may switch between equally good levels,
play can cycle: for $N=K=2$ with agent~$0$ favoured, $(0,0)\to(0,1)\to(1,1)\to(1,0)\to(0,0)$ consists of best
responses, the last two being indifferent. Hence best-response dynamics converge with inertia (an agent
moves only when it strictly gains) and not without it. Thus $\phi^\star$ is a Lyapunov function for asynchronous
strict better-response dynamics of every $\calG_f(C,K)$, although $\calG(C,K)$ is not an exact potential game. Since every improving move $(i,s\to s')$ raises the level of $i$ by some $\ell\ge1$, $\delta\phi^\star=d_i\ell/K>0$ along it: the harmonic flow changes the size of the gain but cannot create an improving move against the potential.

\begin{proposition}[Pure-strategy Nash equilibria]\label{prop:ne}
Under the index protocol, let $\calB(s)=\{i:s_i<K-1\}$ be the set of agents below the top level. A profile $s$ is a pure-strategy Nash equilibrium of $\calG_f(C,K)$ iff every $i\in\calB(s)$ has a larger index than each of its
neighbours; in particular $\calB(s)$ is then an independent set of $C$. Consequently, in every pure-strategy Nash equilibrium and on
every conflict edge, the agent favoured by the protocol is served first: all pure-strategy Nash equilibria induce the
same service order. Mixed equilibria are not considered.
\end{proposition}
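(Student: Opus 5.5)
The plan is to characterise, for a fixed profile $s$, when agent $i$ has a strictly profitable deviation. The proof of Proposition~\ref{prop:fip} already shows that $\pos_C(i,s)$ is non-increasing in $s_i$. Since $f$ is strictly decreasing, only upward deviations can be profitable, so an agent at the top level $K-1$ never deviates. For $i\in\calB(s)$, it is enough to compare $s_i$ with the single deviation to $K-1$: because the position is monotone in the own level, some upward deviation is profitable iff raising to $K-1$ strictly lowers $\pos_C(i,\cdot)$. Hence $s$ is a Nash equilibrium iff $\pos_C(i,s)=\pos_C(i,(K-1,s_{-i}))$ for every $i\in\calB(s)$.

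Next I compute both positions under the index protocol. At level $K-1$, the neighbours served before $i$ are exactly those $j$ with $s_j=K-1$ and $j<i$. At level $s_i<K-1$, this set is contained in the set served before $i$. Any additional neighbour $j$ has either $s_j>s_i$ with $s_j<K-1$, or $s_j=s_i$ with $j<i$, or $s_j=K-1$ with $j>i$.

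So equality of positions holds iff no neighbour falls into any of these three cases. The main care is needed at this step. For the "if" direction, suppose every $i\in\calB(s)$ has a larger index than all its neighbours. Then the cases $s_j=K-1,\ j>i$ and $s_j=s_i,\ j<i$ are excluded, but a neighbour $j$ with $s_i<s_j<K-1$ seems possible. However, such a $j$ lies in $\calB(s)$, and the condition applied to it gives $j>i$. Applied to $i$ itself it gives $j<i$, which is a contradiction. So this direction closes by applying the hypothesis to both endpoints.

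For "only if", suppose equality holds for all $i\in\calB(s)$. The third case forces every neighbour of $i$ at level $K-1$ to have a smaller index. Any neighbour $j$ not at the top lies in $\calB(s)$. The first two cases say $s_j\le s_i$, and if $s_j=s_i$ then $j>i$ is impossible. By symmetry, $j$'s own condition gives $s_i\le s_j$, so $s_i=s_j$, and then one of $i<j$ or $j<i$ violates the second case for one of the two agents. So no two agents of $\calB(s)$ are adjacent, and every neighbour of $i$ is at the top level with a smaller index, which is the stated condition. Independence of $\calB(s)$ follows directly.

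For the service order, take a conflict edge $\{i,j\}$ with $i<j$. In an equilibrium, $i\notin\calB(s)$ because it has the larger-index neighbour $j$, so $s_i=K-1$. Either $s_j=K-1$, and the tie is resolved in favour of $i$, or $s_j<K-1$. In both cases $i$ is served first, so the order on every edge is the protocol's order, the same for all equilibria.
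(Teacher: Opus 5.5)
Your proof is correct and follows the same route as the paper. Monotonicity of $\pos_C(i,\cdot)$ reduces the equilibrium test to the single deviation to $K-1$, whose predecessor set $\{j\in\calN(i):s_j=K-1,\ j<i\}$ you compare with the current one; the two-agent argument gives independence of $\calB(s)$; and the service order follows edge by edge, exactly as in the paper.

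There are two small slips to fix:
\begin{itemize}
\item In the ``if'' direction, the case $s_j=s_i,\ j<i$ is not excluded by the hypothesis on $i$ alone. It is excluded by the same two-endpoint argument you give next: such a $j$ lies in $\calB(s)$, so $j>i$.
\item In the ``only if'' direction you mean ``if $s_j=s_i$ then $j<i$ is impossible''. Your final contradiction is nevertheless stated correctly.
\end{itemize}
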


\begin{proof}
By monotonicity of $\pos_C(i,\cdot)$ in $s_i$, $i$'s best deviation is to $K-1$, after which its predecessors are exactly $\{j\in\calN(i):s_j=K-1,\ j<i\}$. Hence $s$ is a Nash equilibrium iff every $i\in\calB(s)$ has no neighbour in $\calB(s)$ and no neighbour $j>i$, i.e.\ iff every such $i$ has larger index than all its neighbours. Thus $\calB(s)$ is independent.
Consequently every conflict edge has at least one endpoint at level $K-1$; if exactly one does, it is the protocol-favoured endpoint, while if both do, the index protocol itself determines the order. Hence all pure-strategy Nash equilibria induce the same service order.
\end{proof}

\subsection{Log-linear Learning: Stationary Law and Current}\label{sec:learning}
Best response ignores magnitudes. Log-linear learning is sensitive to them, and at low rationality it gives both
energies a dynamical meaning to leading order. At each step an agent $i$ is selected with probability $\alpha_i>0$, where $\sum_i\alpha_i=1$, and resamples
its level from $\Pr(s_i=x)\propto\exp(\beta u_i(x,s_{-i}))$. This is an irreducible Markov chain $P_\beta$ on $S$
with stationary law $\pi_\beta$; for a potential game with potential $\phi$ it is reversible and
$\pi_\beta\propto e^{\beta\phi}$ \cite{blume1993}. For a general game, let $\phi_\alpha$ and $h_\alpha:=u-\delta\phi_\alpha$
be the Hodge decomposition in the weighted norm $\|x\|_\alpha^2=\sum_e\alpha_{i(e)}x(e)^2$, where $i(e)$ is the agent
moving along $e$; for uniform selection, $\alpha_i=1/N$, these are $\phi^\star$ and $h$. Write
$\mathcal J_\beta(e)=\pi_\beta(s)P_\beta(s,s')-\pi_\beta(s')P_\beta(s',s)$ for the stationary probability current along
$e=(i,s\to s')$, and $\sigma_\beta=\sum_e\mathcal J_\beta(e)\log\frac{\pi_\beta(s)P_\beta(s,s')}{\pi_\beta(s')P_\beta(s',s)}\ge0$
for the entropy production rate, which vanishes iff the chain is reversible.

\begin{proposition}[Low-rationality learning]\label{prop:learning}
Let $N\ge2$ agents each have the $K\ge2$ actions $\{0,\dots,K-1\}$, let $(u_i)_{i\in[N]}$ be arbitrary payoffs on
$S=\{0,\dots,K-1\}^N$, and let log-linear learning be run with fixed selection probabilities $\alpha_i>0$,
$\sum_i\alpha_i=1$. Then, with $N$, $K$, $(u_i)$ and $(\alpha_i)$ fixed and $\beta\to0$,
\begin{enumerate}
\item[(a)] $\pi_\beta\propto e^{\beta\phi_\alpha}\,\big(1+O(\beta^2)\big)$: the occupation measure is Gibbs in the
Hodge potential;
\item[(b)] $\mathcal J_\beta(e)=\beta\,\alpha_i\,h_\alpha(e)/K^{N+1}+O(\beta^2)$ on every edge $e=(i,s\to s')$: the probability current
is the harmonic flow;
\item[(c)] $\sigma_\beta=\beta^2\,\|h_\alpha\|_\alpha^2/K^{N+1}+O(\beta^3)$.
\end{enumerate}
Under uniform selection, $\pi_\beta\propto e^{\beta\phi^\star}(1+O(\beta^2))$, $\mathcal J_\beta \!=\! \beta h/(NK^{N+1}) \!+ O(\beta^2)$ and
$\sigma_\beta \!\!=\!\! \beta^2 \Eharm/(NK^{N+1}) \!+ O(\beta^3)$.
\end{proposition}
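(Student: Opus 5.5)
We plan a first-order perturbation expansion of the chain $P_\beta$ around $\beta=0$. At $\beta=0$ each revision resamples the selected coordinate uniformly, so $P_0=\sum_i\alpha_iA_i$. This chain is reversible with uniform stationary law $\pi_0=K^{-N}$, and its spectral gap is $\min_i\alpha_i>0$: $I-P_0$ acts on the Hoeffding component $g_T$ as multiplication by $\sum_{i\in T}\alpha_i$. For an edge $e=(i,s\to s')$ we will expand
\[
P_\beta(s,s')=\tfrac{\alpha_i}{K}\bigl(1+\beta(u_i(s')-\bar u_i(s))+O(\beta^2)\bigr),
\]
where $\bar u_i=A_iu_i$. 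We then write $\pi_\beta=K^{-N}(1+\beta\psi+O(\beta^2))$ with $\psi$ centred. Smoothness in $\beta$ follows from the fact that $\pi_\beta$ is a rational function of the entries of $P_\beta$, for example by the Markov chain tree theorem, and its denominator is nonzero at $\beta=0$.

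\textbf{Step 1 (first-order stationarity equation).} Substitute the expansions into $\pi_\beta P_\beta=\pi_\beta$ and collect the order-$\beta$ terms. This yields $(I-P_0^{\top})\psi=\sum_i\alpha_i(u_i-A_iu_i)$, after using $\sum_{t}(u_i(t,s_{-i})-\bar u_i(s))=0$. Since $P_0$ is symmetric, the equation becomes $\sum_i\alpha_i(I-A_i)\psi=\sum_i\alpha_i(I-A_i)u_i$. This is exactly the weighted normal equation of Lemma~\ref{lem:numop}. Its divergence is $\dv_\alpha\delta=K\sum_i\alpha_i(I-A_i)$, and the weighted divergence of $u$ is $K\sum_i\alpha_i(u_i-A_iu_i)$, by the same computation as in that lemma. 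Hence $\psi=\phi_\alpha$ up to a constant.

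The first-order term only gives $\pi_\beta\propto e^{\beta\phi_\alpha}(1+O(\beta^2))$ once a second-order remainder is available. To get that remainder we will carry out the same perturbation argument at order $\beta^2$, which shows that the coefficient exists. The exponent $e^{\beta\phi_\alpha}$ is then simply a repackaging of $1+\beta\phi_\alpha+O(\beta^2)$. This proves (a). We should double-check the claim that the error is $O(\beta^2)$ multiplicatively and not $O(\beta)$. As stated, this follows from matching the first-order term.

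\textbf{Step 2 (current).} We will compute
\[
\pi_\beta(s)P_\beta(s,s')-\pi_\beta(s')P_\beta(s',s)=K^{-N}\frac{\alpha_i}{K}\bigl[\beta(\phi_\alpha(s)-\phi_\alpha(s'))+\beta(u_i(s')-u_i(s))\bigr]+O(\beta^2),
\]
using $\bar u_i(s)=\bar u_i(s')$ because $s_{-i}=s'_{-i}$. The bracket equals $\beta(u(e)-\delta\phi_\alpha(e))=\beta h_\alpha(e)$, which gives (b).

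\textbf{Step 3 (entropy production).} The log-ratio in $\sigma_\beta$ is $\log(1+x)$, where $x=\mathcal J_\beta(e)/(\pi_\beta(s')P_\beta(s',s))$. Step 2 gives $x=\beta h_\alpha(e)+O(\beta^2)$. Therefore
\[
\sigma_\beta=\sum_e\beta\,\tfrac{\alpha_i}{K^{N+1}}h_\alpha(e)\cdot\beta h_\alpha(e)+O(\beta^3)=\beta^2\|h_\alpha\|_\alpha^2/K^{N+1}+O(\beta^3).
\]
The $O(\beta^3)$ bound holds because both factors are $O(\beta)$ and each has an $O(\beta^2)$ correction.

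\textbf{Uniform selection.} Setting $\alpha_i=1/N$ makes the weighted norm a multiple of the counting norm, so $\phi_\alpha=\phi^\star$ and $h_\alpha=h$. Substituting gives the stated formulas.

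\textbf{Main obstacle.} The main obstacle is justifying the uniform error terms. We must ensure that the expansion of $\pi_\beta$ is analytic, with bounded second-order coefficient, which follows from the tree formula. We must also control the remainders in the edgewise quotients, which is possible because all transition probabilities are bounded below near $\beta=0$.
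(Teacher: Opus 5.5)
Your proposal is correct and follows the paper's proof essentially step for step. You expand $P_\beta=P_0+\beta P_1+O(\beta^2)$ with $P_0=\sum_i\alpha_iA_i$, reduce the order-$\beta$ stationarity equation to the weighted normal equation $\sum_i\alpha_i(I-A_i)\psi=\sum_i\alpha_i(I-A_i)u_i$ (so $\psi=\phi_\alpha$ up to a constant), and expand the current and the edgewise log-ratio to obtain (b) and (c); your appeal to the Markov chain tree theorem for analyticity of $\pi_\beta$ near $\beta=0$ just makes explicit the smoothness the paper asserts from irreducibility.
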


\begin{proof}
Since $\alpha_i>0$ and every logit choice has positive probability, the finite chain is irreducible for all $\beta$, so its stationary distribution is unique and smooth near $\beta=0$. Expanding
$P_\beta=P_0+\beta P_1+O(\beta^2)$ gives $P_0=\sum_i\alpha_iA_i$ and, for a transition changing coordinate $i$,
$P_1(s,s')=\frac{\alpha_i}{K}(u_i(s')-(A_i u_i)(s))$. Since $P_0$ is symmetric and doubly stochastic, $\pi_0=K^{-N}\one$ and $\pi_\beta=\pi_0+\beta\pi_1+O(\beta^2)$ with $\pi_1\perp\one$.

(a) The $O(\beta)$ stationarity equation is $\pi_1(I-P_0)=\pi_0P_1$. Here $I-P_0=\numop_\alpha:=\sum_i\alpha_i(I-A_i)$ and $\pi_0P_1=K^{-N}g_\alpha$, where $g_\alpha:=\sum_i\alpha_i(u_i-A_i u_i)$. The weighted Hodge normal equation is $\numop_\alpha\phi_\alpha=g_\alpha$; since $\numop_\alpha$ is invertible on $\one^\perp$,
$\pi_1=K^{-N}(\phi_\alpha-\bar\phi_\alpha)$. This is exactly the first-order expansion of the normalized Gibbs measure proportional to $e^{\beta\phi_\alpha}$, proving (a).

(b) Let $\mathcal J_\beta=\mathcal J_0+\beta\mathcal J_1+O(\beta^2)$. Since $P_0$ is symmetric, $\mathcal J_0=0$. For $e=(i,s\to s')$, using $P_0(s,s')=\alpha_i/K$ and $(A_i u_i)(s)=(A_i u_i)(s')$,
\[
\mathcal J_1(e)
\!\!=\!\!\frac{\alpha_i}{K^{N+1}}
\big[(u_i(s')-u_i(s))-(\phi_\alpha(s')-\phi_\alpha(s))\big]
\!\!=\!\!\frac{\alpha_i h_\alpha(e)}{K^{N+1}}.
\]

(c) Let $q_e=\pi_\beta(s)P_\beta(s,s')$ and $q'_e=\pi_\beta(s')P_\beta(s',s)$. Then $q_e-q'_e=\mathcal J_\beta(e)=O(\beta)$ and $q_e,q'_e=\alpha_i/K^{N+1}+O(\beta)$, so
$\log(q_e/q'_e)=K^{N+1}\mathcal J_\beta(e)/\alpha_i+O(\beta^2)$. Hence
\[
\mathcal J_\beta(e)\log\frac{q_e}{q'_e}
=\frac{\beta^2\alpha_i h_\alpha(e)^2}{K^{N+1}}+O(\beta^3).
\]
Summing over one orientation of each edge yields
$\sigma_\beta=\beta^2\|h_\alpha\|_\alpha^2/K^{N+1}+O(\beta^3)$.

For uniform selection, $\alpha_i=1/N$, so $\numop_\alpha=\numop/N$ and $g_\alpha=g/N$, hence $\phi_\alpha=\phi^\star$ and $h_\alpha=h$. Also $\|h_\alpha\|_\alpha^2=\Eharm/N$, giving the stated formulas.
\end{proof}

The harmonic component is thus invisible in the occupation measure to first order and appears as the current it
sustains; its energy determines, to leading order, the entropy production rate. For the linear priority game under
uniform revision, Proposition~\ref{prop:energies} gives $\sigma_\beta=\beta^2(K^2-1)|E|/(6NK^2)+O(\beta^3)$: every
conflict contributes the same amount to the leading-order entropy production, whatever the rest of the graph. With the
$L^2$ projection the first-order discrepancy from a Gibbs law vanishes identically, whereas near-potential games are
only guaranteed a stationary law within $O(\beta\delta)$ of that of a nearby potential game \cite{candogan2013geb}.
The selection probabilities $\alpha_i$ enter the stationary law at first order, through $\phi_\alpha$, only when
$\Eharm>0$.

\begin{remark}[Range of validity]\label{rem:regimes}
Proposition~\ref{prop:learning} describes the noise-dominated regime and asserts only the orders of the remainders.
For the intersection, the law~(c) is accurate to about $10\%$ for $\beta\max_id_i\lesssim1$
(Fig.~\ref{fig:learning}(d)), an observation for one example, not a uniform bound. As $\beta$ grows, logit responses
approach best responses, which see only the ordinal structure of Propositions~\ref{prop:fip}--\ref{prop:ne}; we do not
attempt a large-$\beta$ theorem.
\end{remark}

%======================================================================
\section{General Payoffs: The Hodge Spectrum}\label{sec:spectrum}
Sections~\ref{sec:pairwise}--\ref{sec:dyn} concern the linear payoff. This section and the next address the
third problem described in Section~\ref{subsec:problem}: whether the harmonic share can be reduced by choosing a different
payoff-from-rank function. We work on the complete conflict graph $C=K_N$, where agents compete with every other, and, for finitely many
levels, with the index protocol (every total order is equivalent after relabelling). This section
treats continuous priorities, where the answer can be computed completely; Section~\ref{sec:finiteK} returns to finitely
many levels.

The first step is to turn the design question into a matrix problem. A payoff-from-rank function is determined, up to
an irrelevant constant, by its increments $F_j=f(j-1)-f(j)$, $j=1,\dots,N-1$, i.e.\ by how much an agent loses when it
drops by one rank. The game flow is linear in $F$: along an edge $e$, $u_f(e)=\sum_jF_jI_j^{(e)}$ with
$I^{(e)}_j=\one\{\pos'<j\le\pos\}$, where $\pos$ and $\pos'$ are the positions of the moving agent at the two ends of
$e$ (a move upwards can only lower the position). Both energies are therefore quadratic forms in $F$. With the Gram
matrices $[M_{\mathrm{tot}}]_{jk}=\langle I_j,I_k\rangle$ and $[M_\pot]_{jk}=\langle PI_j,PI_k\rangle$, where $P$ is the
orthogonal projection onto $\operatorname{im}\delta$, the potential share is a Rayleigh quotient,
\begin{equation}\label{eq:rayleigh}
\rho_\pot(\calG_f)=\frac{F^\top M_\pot F}{F^\top M_{\mathrm{tot}}F},\qquad
Q_K:=\tfrac{N-1}{N}M_{\mathrm{tot}}-M_\pot .
\end{equation}

The bound $\rho_\pot\le(N-1)/N$ for all $F$ is thus equivalent to $Q_K\succeq0$, and the equality cases are the kernel
of $Q_K$. The linear payoff corresponds to $F=\one$.

\begin{lemma}[Exact eigenvector]\label{lem:B}
For every $N\ge2$ and $K\ge2$, $Q_K\one=0$: the Rayleigh quotient \eqref{eq:rayleigh} is stationary at the linear
payoff, with value $(N-1)/N$.
\end{lemma}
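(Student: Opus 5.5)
The identity $Q_K\one=0$ reads $\tfrac{N-1}{N}M_{\mathrm{tot}}\one=M_\pot\one$, and I will verify it coordinate by coordinate. The point is that $\one$ corresponds to the linear payoff, whose Hodge decomposition is already known in closed form. Since $\sum_kI_k^{(e)}=\pos-\pos'=W(e)$ is the number of overtaken neighbours, the sum $\sum_k I_k$ is exactly the flow $u$ of the linear game $\calG(K_N,K)$. Hence $[M_{\mathrm{tot}}\one]_j=\langle I_j,u\rangle$. Since $P$ is a self-adjoint projection, $[M_\pot\one]_j=\langle PI_j,Pu\rangle=\langle I_j,\delta\phi^\star\rangle$. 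By Proposition~\ref{prop:energies} with $d_i=N-1$, $\delta\phi^\star(e)=(N-1)\ell/K$ on an edge with jump $\ell$. The claim therefore reduces to
\[
\langle I_j,W\rangle=\tfrac{N}{K}\,\langle I_j,\ell\rangle,\qquad j=1,\dots,N-1,
\]
or equivalently $\langle I_j,Nh-u\rangle=0$ with $h=W-(N-1)\ell/K$ as in Proposition~\ref{prop:hflow}. Stationarity of the Rayleigh quotient at $F=\one$ and the value $(N-1)/N$ then follow, the value being that of Proposition~\ref{prop:energies}.

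To verify this identity I will sum over edges grouped by mover $i$, start level $x$ and jump $\ell$, and average over $s_{-i}$ uniformly. Given the move, the other $N-1$ agents fall independently into three classes:
\begin{itemize}
\item served before $i$ at both ends (count $A$),
\item overtaken (count $W$),
\item served after $i$ at both ends (count $B$).
\end{itemize}
The class probabilities depend only on $x$, $\ell$ and on whether the other agent's index is below or above $i$, which determines its tie behaviour. Since $\pos'=A$ and $\pos=A+W$, we have $I_j=\one\{A<j\le A+W\}$. Both sides therefore become expectations of functions of a sum of independent trinomials, namely $i$ agents of the lower-index type and $N-1-i$ of the higher type. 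I will write both sides via the generating function $\E[z^Aw^W]$, which factors as a product of $(p_a z+p_w w+p_b)$ terms. The factor $W$ on the left becomes $w\partial_w$, and $\one\{A<j\le A+W\}$ is extracted as a coefficient difference.

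The main obstacle is that this identity does not hold per move; it only holds after summing over $x$, $\ell$ and the mover index $i$. My first attempt will use a symmetry rather than brute force. The map $s\mapsto K-1-s$ combined with reversing the agent order preserves the uniform measure and the protocol type. It maps an edge with data $(A,W,B)$ to the reversed edge with data $(B,W,A)$, pairing the jump-$\ell$ moves of agent $i$ with those of agent $N-1-i$. Combined with the per-neighbour fact $\Pr(\text{overtake})=\ell/K$ from Proposition~\ref{prop:hflow}, this should reduce $\sum_x$ to a sum over positions of a single agent in a uniformly random ranking. In that setting $\pos$ is uniform-like and the $N$ in $N/K$ appears as $(N-1)+1$.

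If the symmetry argument stalls, I will fall back on direct summation. This means summing the generating function over $x$, using $\sum_\ell(K-\ell)(\cdot)$ as in Section~\ref{sec:reading}, and checking the resulting polynomial identity in $z,w$ coefficientwise.
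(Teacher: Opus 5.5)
Your reduction is correct, and it is the paper's own first step. $\sum_kI_k$ is the linear flow, $P$ is a self-adjoint projection, and Proposition~\ref{prop:energies} gives $\delta\phi^\star=(N-1)\ell/K$. Hence $Q_K\one=0$ is equivalent to $\langle I_j,W-N\ell/K\rangle=0$ for every $j$; the paper writes the same thing as $\langle Pu^{\rm lin}-\frac{N-1}N u^{\rm lin},u^f\rangle=0$ for all $f$.

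That identity, however, is the whole content of the lemma, and the proposal does not prove it. The symmetry you lead with ($s\mapsto K-1-s$ with the agent order reversed) preserves the jump $\ell$. It sends $(A,W,B)$ to $(B,W,A)$, hence $I_j$ to $I_{N-j}$. All it yields is that $M_{\rm tot}\one$ and $M_\pot\one$ are both centrosymmetric. By itself it cannot produce the factor $N/K$, and the claimed reduction to ``a single agent in a uniformly random ranking'' is not justified. The fallback is a plan without a mechanism. The sum over $x$ of the trinomial products has no closed form in sight unless one first sums over $i$ and spots a shift relation. Moreover, as you yourself note, the per-$\ell$ terms do not vanish, so there is no coefficientwise identity in $z,w$ to check; the cancellation happens only in the sum over $\ell$.

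Two ingredients close the gap along your own route.

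First, telescoping in the start level. Let $\alpha_x$ and $\beta_x$ be the trinomial factors of a lower- and a higher-index opponent at start level $x$ and jump $\ell$. Then $\alpha_x=\beta_{x-1}$ and $\alpha_x-\beta_x=(z-1)/K$. So the geometric sum over the mover $i$, followed by the sum over $x$, telescopes:
\[
\sum_{i,x}\E\big[z^Aw^W\big]=K\,\frac{(qz+pw)^N-(q+pw)^N}{z-1},\qquad p=\ell/K,\quad q=1-p .
\]
The coefficient of $z^aw^{N-k}$ is $\binom Nk q^kp^{N-k}$ for $0\le a<k$. Since $\#\{a<k:\ a<j\le a+N-k\}=\min(k,N-k,j,N-j)$, one gets
\[
\sum_{i,x}\E[I_j(W-Np)]=K\Cov(c_j(m),m),\qquad m\sim\Bin(N,p),\quad c_j(m)=\min(m,N-m,j,N-j).
\]

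Second, a reflection in the jump. Because $c_j(m)=c_j(N-m)$, this covariance changes sign under $p\mapsto1-p$, i.e.\ under $\ell\mapsto K-\ell$. The sum over $\ell=1,\dots,K-1$ therefore vanishes. The reflection you need is in the jump, not in the levels and indices.

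The paper takes a different route and removes the two-level data $(A,W,B)$ altogether. It writes flow inner products as conditional covariances, using $\sum_{x<y}(a_y-a_x)(b_y-b_x)=K^2\Cov(a,b)$. It then uses conservation, $\sum_if(\pos_i)=\sum_rf(r)$, to eliminate the $s_{-i}$-dependent part of $\E[\pos_i\mid s_{-i}]$, which gives \eqref{eq:A-B2}. Finally it verifies \eqref{eq:A-B2} with $f(r)=x^r$ by the same telescoping, $\mathsf A_a=\mathsf B_{a+1}$, after summing over $i$.

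Either route works. Your proposal contains neither the telescoping nor the sign reversal in $\ell$.
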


The proof, given in Appendix~\ref{app:B-kernel}, uses only \eqref{eq:cons} and the conditional
independence of opponents given $s_i$, and the dependence on $K$ cancels. To compute the whole spectrum, we pass to continuous priorities, where ties have probability zero.
Each agent chooses $s_i\in[0,1]$, the reference measure is the uniform law on $[0,1]^N$, and
$u_i(s)=f(\pos_i(s))$ with $\pos_i(s)=\#\{j\ne i:s_j>s_i\}$. A flow assigns to every agent $i$, every pair of
levels $t,t'$ and every $s_{-i}$ a value $x_i(t\to t';s_{-i})$ that is antisymmetric in $(t,t')$. The game flow is
$u_i(t',s_{-i})-u_i(t,s_{-i})$, and the gradient of $\phi:[0,1]^N\to\R$ is $\phi(t',s_{-i})-\phi(t,s_{-i})$. Flows carry
the norm
\begin{equation}\label{eq:contnorm}
\|x\|^2=\sum_i\E_{s_{-i}}\Big[\tfrac12\iint_{[0,1]^2}x_i(t\to t';s_{-i})^2\,\mathrm{d}t\,\mathrm{d}t'\Big],
\end{equation}
the continuous analogue, up to normalisation, of the counting norm of Section~\ref{sec:setup}, and $P$ denotes the orthogonal projection
onto the closure of the space of gradients. The potential and the total energy are $\|Pu\|^2$ and $\|u\|^2$, and
$\rho_\pot=\|Pu\|^2/\|u\|^2$. With the continuum form of Lemma~\ref{lem:numop} (Appendix~\ref{app:A-spec}, Step~1), the two
energies become
\begin{equation}\label{eq:contquad}
\begin{aligned}
\|Pu\|^2&=\sum_{m=1}^{N-1}\frac{(N-m)^2}{m}J_m(f),\\
\|u\|^2&=N\sum_{m=1}^{N-1}(N-m)J_m(f),
\end{aligned}
\end{equation}
where $J_m(f)=J_m(f,f)$ is the order-$m$ form
$J_m(f,g)=\binom Nm\iint\kappa(q,q')^m d_m[f](q)d_m[g](q')\,\mathrm{d}q\,\mathrm{d}q'$, with $\kappa(q,q')=\min(q,q')-qq'$ the covariance
of the Brownian bridge and $d_m[f](q)=\E[\Delta^mf(\Bin(N-1-m,q))]$; each $J_m$ is positive semidefinite.
Intuitively, $J_m(f)$ measures how much of the incentive structure of the payoff is carried by interactions that
involve $m$ agents at once. Comparing the equations in \eqref{eq:contquad}, the order $m$ contributes to the potential
energy with a relative weight that decreases with $m$, so interactions among many agents are relatively more
harmonic.

\begin{theorem}[Hodge spectrum]\label{thm:spectrum}
Let the discrete Chebyshev polynomials be $t_p(r)=\sum_{k=0}^p(-1)^k\binom pk\binom{p+k}k\frac{r^{\underline k}}{(N-1)^{\underline k}}$, $p=0,\dots,N-1$, i.e.\ the orthogonal polynomials of the uniform distribution on
$\{0,\dots,N-1\}$.
\begin{enumerate}[
    label=(\alph*),
    leftmargin=0pt,
    itemindent=1.5em,
    labelsep=0.5em
]
\item[(a)] \emph{Every order form is diagonal.} For all $m$ and all $p\ne p'$, $J_m(t_p,t_{p'})=0$, and
\begin{equation}\label{eq:Jdiag}
J_m(t_p,t_p)=\frac{\binom Nm\,m\,(p+m)!}{p(p+1)(2p+1)(p-m)!\,[(N-1)^{\underline m}]^2}
\end{equation}
for $m\le p$, while $J_m(t_p,t_p)=0$ for $m>p$. Consequently $t_1,\dots,t_{N-1}$ diagonalise $\sum_mw_mJ_m$ for
\emph{every} choice of weights, in particular both forms in \eqref{eq:contquad}.
\item[(b)] \emph{Eigenvalues.} The generalised eigenvalues of \eqref{eq:contquad} are
\begin{equation}\label{eq:lambda}
\lambda_p=\frac{(N-p)(N+p+1)}{N\,p(p+1)}\bigg[1-\frac{\binom Np}{\binom{N+p+1}p}\bigg],
\end{equation}
for $p=1,\dots,N-1$. In particular $\lambda_1=(N-1)/N$ and $\lambda_{p+1}<\lambda_p$.
\end{enumerate}
\end{theorem}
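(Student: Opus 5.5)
The plan is to compute everything in the falling-factorial basis, where both the forward difference and the binomial expectation act simply. I would use two standard identities: $\Delta^m r^{\underline k}=k^{\underline m}r^{\underline{k-m}}$, and $\E[\Bin(n,q)^{\underline j}]=n^{\underline j}q^j$. Substituting the defining sum of $t_p$ into $d_m[t_p](q)=\E[\Delta^m t_p(\Bin(N-1-m,q))]$ gives
\[
d_m[t_p](q)=\sum_{k=m}^{p}(-1)^k\binom pk\binom{p+k}k\frac{k^{\underline m}(N-1-m)^{\underline{k-m}}}{(N-1)^{\underline k}}\,q^{k-m}.
\]
The factor $(N-1-m)^{\underline{k-m}}/(N-1)^{\underline k}$ collapses to $1/(N-1)^{\underline m}$, so the dependence on $N$ factors out. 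What remains is $\frac{1}{(N-1)^{\underline m}}\frac{d^m}{dq^m}$ of the shifted Legendre polynomial $\tilde P_p(q)=\sum_k(-1)^k\binom pk\binom{p+k}kq^k$. Thus
\[
d_m[t_p]=\tilde P_p^{(m)}\big/(N-1)^{\underline m},
\]
which is $0$ for $m>p$. This already gives the vanishing part of (a), and it explains the factor $[(N-1)^{\underline m}]^2$ in \eqref{eq:Jdiag}.

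\emph{Diagonality (main obstacle).} Next I would show that
\[
\iint\kappa(q,q')^m\,\tilde P_p^{(m)}(q)\,\tilde P_{p'}^{(m)}(q')\,\mathrm dq\,\mathrm dq'=0 \quad\text{for } p\ne p'.
\]
I would start from the representation $\kappa(q,q')=\int_0^1(\one[x<q]-q)(\one[x<q']-q')\,\mathrm dx$. This turns $\kappa^m$ into an expectation over $m$ i.i.d.\ uniforms $x_1,\dots,x_m$ of a product of factors $\prod_k(\one[x_k<q]-q)$. The double integral then becomes $\E[A_p(x)A_{p'}(x)]$, where $A_p(x)=\int\tilde P_p^{(m)}(q)\prod_k(\one[x_k<q]-q)\,\mathrm dq$.

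Integrating by parts $m$ times should move the $m$ derivatives off $\tilde P_p$ and onto the product of bridge factors. The hope is to reduce $A_p$, after symmetrising over orderings of the $x_k$, to an expression built from $\tilde P_p$ evaluated at the order statistics. Orthogonality of Legendre polynomials on $[0,1]$ would then finish the argument.

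If that reduction becomes unwieldy, I would switch to a fallback. Both $\tilde P_p^{(m)}$ are Jacobi polynomials $P^{(m,m)}_{p-m}(2q-1)$, orthogonal for the weight $(q(1-q))^m$. It would then suffice to check that the integral operator with kernel $\kappa^m$ maps $\tilde P_p^{(m)}$ to $(q(1-q))^m\tilde P_p^{(m)}$ plus terms orthogonal to all lower-degree polynomials. For $m=1$ this is the Sturm--Liouville identity for the Brownian bridge. For general $m$ I would verify it by applying $\frac{d^{2m}}{dq^{2m}}$ and using Rodrigues' formula.

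The diagonal value \eqref{eq:Jdiag} would then follow from the same integration-by-parts computation together with the known Jacobi norm.

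\emph{Eigenvalues.} Once (a) holds, both quadratic forms in \eqref{eq:contquad} are diagonal in $t_1,\dots,t_{N-1}$. The generalised eigenvalue is therefore
\[
\lambda_p=\frac{\sum_{m=1}^{p}\frac{(N-m)^2}{m}J_m(t_p,t_p)}{N\sum_{m=1}^{p}(N-m)J_m(t_p,t_p)} .
\]
For $p=1$ only $m=1$ survives, so $\lambda_1=(N-1)^2/(N(N-1))=(N-1)/N$, consistent with Lemma~\ref{lem:B}. For general $p$, I would insert \eqref{eq:Jdiag} and write both sums as terminating hypergeometric series in $m$. The denominator should reduce by Chu--Vandermonde. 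The numerator, after partial fractions in $(N-m)^2/m$, should become a Vandermonde sum plus a tail, producing the term $\binom Np/\binom{N+p+1}p$. I would confirm these closed forms by Zeilberger/Gosper certificates, which matches the paper's use of exact certificates.

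Finally, strict monotonicity $\lambda_{p+1}<\lambda_p$ would be checked from \eqref{eq:lambda} directly. Writing $\lambda_p=\frac{(N-p)(N+p+1)}{Np(p+1)}(1-c_p)$ with $c_p=\binom Np/\binom{N+p+1}p$, I would show that the ratio $\lambda_{p+1}/\lambda_p$ is less than $1$ by an elementary inequality, using $c_{p+1}/c_p=\frac{(N-p)(p+1)}{(N+p+2)(p+1)}\cdot\frac{1}{1}$ after simplification.
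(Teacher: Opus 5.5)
Your reduction $d_m[t_p]=\tilde P_p^{(m)}/(N-1)^{\underline m}$ is correct and coincides with the paper's Step~2. Your plan for part~(b) is also sound. The potential sum becomes a Chu--Vandermonde sum once the $m=0$ term $N!$ is added, and subtracting that term back is exactly what produces $\binom Np/\binom{N+p+1}p$, so no partial fractions are needed.

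The genuine gap is part~(a), which is the heart of the theorem. Everything rests on the eigenrelation
\[
\mathsf K_m\tilde P^{(m)}_j=\frac{m}{(j+m)(j+m+1)}\,\mathsf w_m\tilde P^{(m)}_j,\qquad
(\mathsf K_mg)(q)=\int_0^1\kappa(q,q')^m g(q')\,\mathrm dq',\quad \mathsf w_m=[q(1-q)]^m,
\]
for the shifted Jacobi polynomial $\tilde P^{(m)}_j$ of degree $j=p-m$. Neither of your routes establishes it.

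In the first route, splitting $\prod_k(\one[x_k<q]-q)$ at the order statistics turns $A_p$ into a sum of functions of the different order statistics $x_{(1)},\dots,x_{(m)}$. Then $\E[A_pA_{p'}]$ is an integral against their joint law. For $m\ge2$, one-variable Legendre orthogonality does not apply to it directly, and the reduction you hope for is asserted, not derived.

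In the fallback, ``$\mathsf w_m\tilde P$ plus terms orthogonal to lower degrees'' would give the off-diagonal zeros. It would not give the diagonal value \eqref{eq:Jdiag}, which needs the exact relation and its constant. Moreover, writing $D=\mathrm d/\mathrm dq$, differentiating $2m$ times reduces $\mathsf K_m$ to a multiple of the identity only for $m=1$, where $D^2\mathsf K_1=-I$. For $m\ge2$, $D^{2m}\mathsf K_m$ is a differential operator of positive order; for $m=2$ it is $-2q(1-q)D^2-6(1-2q)D+8$. Its action on Jacobi polynomials is precisely what would have to be computed. Recovering $\mathsf K_m\tilde P^{(m)}_j$ from its $2m$-th derivative also requires the order-$m$ vanishing of both sides at $q=0$ and $q=1$. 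As written, this step is a placeholder for the proof.

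The missing idea is elementary. Splitting at $q'=q$ gives $(\mathsf K_mg)(q)=(1-q)^m\int_0^qs^mg(s)\,\mathrm ds+q^m\int_q^1(1-s)^mg(s)\,\mathrm ds$. A direct check then shows that $\mathsf K_m$ maps polynomials of degree at most $b$ to $\mathsf w_m$ times polynomials of degree at most $b$. Since $\mathsf K_m$ is self-adjoint, for $\deg S<j$ one has $\int S\,\mathsf K_m\tilde P^{(m)}_j=\int(\mathsf K_mS)\,\tilde P^{(m)}_j=\int\mathsf w_mR_S\,\tilde P^{(m)}_j=0$. Hence, writing $\mathsf K_m\tilde P^{(m)}_j=\mathsf w_mR$, the polynomial $R$ of degree at most $j$ is $\mathsf w_m$-orthogonal to all lower degrees and is therefore proportional to $\tilde P^{(m)}_j$. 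The constant follows by comparing the $q^m$ terms as $q\to0$, using Rodrigues' formula and $j$ integrations by parts. The Jacobi norm then yields \eqref{eq:Jdiag}.

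Equivalently, and as the natural extension of your $m=1$ Brownian-bridge remark: $\kappa^m=u(\min)v(\max)$ with $u=q^m$, $v=(1-q)^m$ is $m$ times the Green's function of the second-order operator $L_my=-\bigl((q(1-q))^{1-m}y'\bigr)'+m(m-1)(q(1-q))^{-m}y$. The Jacobi equation gives $L_m(\mathsf w_m\tilde P^{(m)}_j)=(j+m)(j+m+1)\tilde P^{(m)}_j$.

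Finally, for $\lambda_{p+1}<\lambda_p$ you still owe the inequality. Reduce it to $(1-\varepsilon_p)/\varepsilon_p>p(p+1)(N-p-1)/(N(N+1))$ with $\varepsilon_p=\binom Np/\binom{N+p+1}p$. Then the bound $\varepsilon_p^{-1}=\prod_{i<p}\bigl(1+\frac{p+1}{N-i}\bigr)\ge1+\frac{p(p+1)}N$ closes it.
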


The proof is given in Appendix~\ref{app:A-spec}; we explain here why the theorem holds.
Each order form $J_m$ pairs the smoothed differences $d_m[f]$ with the kernel $\kappa^m$, and two classical facts about
orthogonal polynomials make the two compatible. First, averaging $t_p$ over a binomial law turns it into the Legendre
polynomial of degree $p$, and taking $m$ differences then turns it into a Jacobi polynomial of degree $p-m$
\cite{koekoek2010}. Second, the integral operator with kernel $\kappa^m$ maps each of these Jacobi polynomials to a
multiple of itself times $[q(1-q)]^m$, and $[q(1-q)]^m$ is exactly the weight function for which these Jacobi
polynomials are orthogonal. Hence, $t_p$ and $t_{p'}$ are orthogonal in every order form, which is part~(a), and the
constant in \eqref{eq:Jdiag} is the norm of a Jacobi polynomial. Part~(b) follows by inserting \eqref{eq:Jdiag} into
\eqref{eq:contquad}, where the resulting finite sums can be evaluated in closed form.
A payoff of degree $p$ has Hoeffding components of order up to $p$, which is why $\lambda_p$ decreases. For fixed
$p$ and $N\to\infty$, $1-\lambda_p=\binom{p+1}2/N+O(N^{-2})$.

\begin{corollary}[Continuum optimality]\label{cor:Qinf}
For continuous priorities, $\rho_\pot(f)\le(N-1)/N$ for every nonconstant $f$, with equality iff $f$ is affine.
\end{corollary}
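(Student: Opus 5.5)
The plan is to read the corollary off Theorem~\ref{thm:spectrum} as a statement about a generalised Rayleigh quotient. A nonconstant $f$ on $\{0,\dots,N-1\}$ is, up to an additive constant (which changes neither energy), a nonzero combination $f=c_0+\sum_{p=1}^{N-1}c_pt_p$ of the discrete Chebyshev polynomials. The $t_p$ have exact degree $p$ and span all functions on $\{0,\dots,N-1\}$, so the Chebyshev coefficients are well defined. Adding a constant leaves every difference $\Delta^mf$ unchanged and hence every $J_m$ unchanged, so we may drop $c_0$; nonconstancy of $f$ then means $(c_1,\dots,c_{N-1})\neq0$.

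By Theorem~\ref{thm:spectrum}(a), every form $J_m$ is diagonal in the basis $t_1,\dots,t_{N-1}$. Both quadratic forms in \eqref{eq:contquad} are nonnegative combinations of the $J_m$, so they are diagonal too:
\[
\|Pu\|^2=\sum_{p}c_p^2\,a_p,\qquad \|u\|^2=\sum_p c_p^2\,b_p ,
\]
with $a_p=\sum_m\frac{(N-m)^2}{m}J_m(t_p,t_p)$ and $b_p=N\sum_m(N-m)J_m(t_p,t_p)$. By \eqref{eq:Jdiag}, $J_1(t_p,t_p)>0$ for every $p\ge1$ (for $m=1\le p$ all factors are positive), so $b_p>0$. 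The generalised eigenvalues are exactly $\lambda_p=a_p/b_p$, and
\[
\rho_\pot(f)=\frac{\sum_pc_p^2b_p\lambda_p}{\sum_pc_p^2b_p}
\]
is a convex combination of the $\lambda_p$ with weights $c_p^2b_p/\sum_q c_q^2b_q$.

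By Theorem~\ref{thm:spectrum}(b), $\lambda_1=(N-1)/N$ and $\lambda_p$ is strictly decreasing in $p$. Hence $\rho_\pot(f)\le\lambda_1=(N-1)/N$, with equality iff all weight sits on $p=1$, that is, $c_p=0$ for all $p\ge2$. Since $t_1$ is affine in $r$ and every affine function is $c_0+c_1t_1$, equality holds iff $f$ is affine; the affine case is nonconstant exactly when $c_1\neq0$.

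The only delicate points are bookkeeping rather than mathematics. One must confirm that the $t_p$ form a basis (degree $p$, $p=0,\dots,N-1$), that both energies vanish on constants, and that $b_p>0$ so the quotient is well defined. When $N=2$ there is only $p=1$, and the statement is trivially consistent: every nonconstant $f$ is affine. All the real work (diagonality and strict monotonicity of the $\lambda_p$) is supplied by Theorem~\ref{thm:spectrum}.
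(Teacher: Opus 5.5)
Your argument is correct, but it takes a different route from the paper's proof.

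The paper does not use Theorem~\ref{thm:spectrum} here. It subtracts the two forms of \eqref{eq:contquad} order by order:
\[
\tfrac{N-1}{N}\|u\|^2-\|Pu\|^2=\sum_{m\ge1}(N-m)\,\tfrac{N(m-1)}{m}\,J_m(f)\ \ge\ 0 .
\]
The inequality then follows from $J_m\succeq0$ alone. Equality means $J_m(f)=0$ for all $m\ge2$, hence $d_2[f]\equiv0$, hence $f$ is affine. For that last step one needs $\kappa^m$ to be strictly positive on polynomials; the paper asserts this briefly, and it follows from Lemma~\ref{lem:Km}.

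This route needs only the order representation and positivity of the kernel. It also explains the bound structurally: every interaction of order $m\ge2$ is under-weighted in the potential form, and only the one-agent part saturates $(N-1)/N$.

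Your route invokes the full spectral theorem: diagonality in the Chebyshev basis, plus the closed form and strict monotonicity of $\lambda_p$. This is heavier than necessary for the inequality. You only need $\lambda_p<\lambda_1$ for $p\ge2$, and the identity above applied to $f=t_p$ gives this from $J_2(t_p,t_p)>0$, without the closed form or the monotonicity proof.

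In return, your route has two advantages. It settles the equality case without a separate definiteness argument, since $J_1(t_p,t_p)>0$ and the $\lambda_p$ are explicit. It also yields the quantitative deficit at no extra cost: your convex-combination argument is exactly the one the paper uses afterwards for Corollary~\ref{cor:stability}.

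Using the theorem is not circular, since the proof of Theorem~\ref{thm:spectrum} does not rely on this corollary.
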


\begin{proof}
For nonconstant $f$, $\|u\|^2>0$. By \eqref{eq:contquad},
\[
\frac{N-1}{N}\|u\|^2-\|Pu\|^2
=\sum_{m\ge1}(N-m)J_m(f)\frac{N(m-1)}{m}\ge0,
\]
since $J_m(f)\ge0$. Hence $\rho_\pot(f)=\|Pu\|^2/\|u\|^2\le(N-1)/N$. Equality holds iff $J_m(f)=0$ for every $m\ge2$. By the representation of $J_m$ in terms of $d_m[f]$, this is equivalent to $d_m[f]\equiv0$ for all $m\ge2$, hence to $f$ being affine. Conversely, affine $f$ has $d_m[f]\equiv0$ for $m\ge2$, so equality holds.
\end{proof}

\begin{corollary}[Loss of potential share for non-affine payoffs]\label{cor:stability}
For continuous priorities and nonconstant $f$, let $R_f^2$ be the squared correlation between $f(r)$ and $r$ under the uniform law on ranks, i.e.\ the fraction of $\Var_r f$ explained by linear regression on $r$. Then
\begin{equation}\label{eq:stability}
\tfrac{2}{N+2}(1-R_f^2)
\le \lambda_1-\rho_\pot(f)
\le(\lambda_1-\lambda_{N-1})(1-R_f^2),
\end{equation}
with equality on the left iff $f$ is a polynomial of degree at most two.
\end{corollary}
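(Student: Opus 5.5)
The plan is to expand $f$ in the discrete Chebyshev basis of Theorem~\ref{thm:spectrum}, write $\lambda_1-\rho_\pot(f)$ as a weighted average of the gaps $\lambda_1-\lambda_p$, and identify the weights with the regression decomposition of $\Var_r f$. Write $f=c_0+\sum_{p=1}^{N-1}c_pt_p$. By Theorem~\ref{thm:spectrum}(a), both quadratic forms in \eqref{eq:contquad} are diagonal in $t_1,\dots,t_{N-1}$. Hence $\|Pu\|^2=\sum_p c_p^2A_p$ and $\|u\|^2=\sum_p c_p^2B_p$, where $A_p,B_p$ are the values of the two forms at $t_p$ and $\lambda_p=A_p/B_p$. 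With $w_p:=c_p^2B_p/\sum_q c_q^2B_q$ this gives
\[
\lambda_1-\rho_\pot(f)=\sum_{p\ge2}(\lambda_1-\lambda_p)\,w_p .
\]
Since $\lambda_p$ is strictly decreasing (Theorem~\ref{thm:spectrum}(b)), this quantity lies between $(\lambda_1-\lambda_2)(1-w_1)$ and $(\lambda_1-\lambda_{N-1})(1-w_1)$.

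Two facts remain. The first is $\lambda_1-\lambda_2=2/(N+2)$, which follows by direct substitution in \eqref{eq:lambda}: one finds $\lambda_2=(N-2)(N+1)/(N(N+2))$, and then $\lambda_1-\lambda_2=[(N-1)(N+2)-(N-2)(N+1)]/(N(N+2))=2/(N+2)$.

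The second, and the key step, is $1-w_1=1-R_f^2$. Since $t_p$ is orthogonal under the uniform law on ranks $\{0,\dots,N-1\}$ and $t_1$ is affine, $R_f^2=c_1^2\|t_1\|^2/\sum_pc_p^2\|t_p\|^2$. It therefore suffices to show $B_p\propto\|t_p\|^2$ with a constant independent of $p$. This follows if the total energy is a fixed multiple of $\Var_r f$ for every $f$.

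I would prove that directly from \eqref{eq:contnorm}. For fixed $s_{-i}$, the integral $\tfrac12\iint(f(\pos_i(t'))-f(\pos_i(t)))^2$ equals the variance of $f(\pos_i(t))$ for $t$ uniform, i.e. of $f$ under the law that puts mass $G_r$, the $r$-th spacing of the order statistics of $s_{-i}$, on rank $r$. Using the Dirichlet moments $\E G_r=1/N$ and $\E G_rG_{r'}=(1+\delta_{rr'})/(N(N+1))$, I expect
\[
\E_{s_{-i}}\Big[\textstyle\sum_rG_rf(r)^2-\big(\sum_rG_rf(r)\big)^2\Big]=\tfrac{N}{N+1}\Var_r f .
\]
Summing over agents then gives $\|u\|^2=\tfrac{N^2}{N+1}\Var_rf$. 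This step is the main point to verify carefully, in particular the indexing of spacings to ranks, but it needs only exchangeability of the spacings.

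With both facts, the two inequalities of \eqref{eq:stability} follow. For the equality case on the left, equality holds iff $w_p=0$ for all $p\ge3$, because $\lambda_1-\lambda_p>\lambda_1-\lambda_2$ for $p\ge3$. That is, $c_p=0$ for all $p\ge3$, which means $f$ is a polynomial of degree at most two in $r$. If $R_f^2=1$, both sides vanish, and this is consistent with the affine case of Corollary~\ref{cor:Qinf}.
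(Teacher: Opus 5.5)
Your proposal is correct and follows the paper's proof. You expand $f$ in the discrete Chebyshev basis, write $\lambda_1-\rho_\pot(f)=\sum_{p\ge2}(\lambda_1-\lambda_p)w_p$, and conclude from the strict monotonicity of $\lambda_p$ together with $\lambda_1-\lambda_2=2/(N+2)$. Your identification $w_1=R_f^2$ rests on the same Dirichlet-spacing identity $\|u\|^2=\frac{N^2}{N+1}\Var_r f$ that the paper records as Lemma~\ref{lem:dir} and uses implicitly when it calls $w_p$ the variance share of $t_p$.
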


\begin{proof}
By Theorem~\ref{thm:spectrum}, $\rho_\pot(f)=\sum_{p=1}^{N-1}\lambda_p w_p$, where $w_p\ge0$ is the variance share of the $t_p$ component and $\sum_p w_p=1$. Since $t_1$ spans the centred linear functions of rank, $w_1=R_f^2$. Hence
\[
\lambda_1-\rho_\pot(f)=\sum_{p\ge2}(\lambda_1-\lambda_p)w_p,\qquad \sum_{p\ge2}w_p=1-R_f^2.
\]
As $\lambda_p$ is strictly decreasing, the bounds follow from
$\lambda_1-\lambda_2=2/(N+2)$ and
$\lambda_1-\lambda_p\le\lambda_1-\lambda_{N-1}$. Equality in the lower bound holds iff $w_p=0$ for all $p\ge3$, i.e.\ iff $f\in\operatorname{span}\{t_0,t_1,t_2\}$, equivalently $f$ has degree at most two. For $N=2$, every $f$ on the two ranks is affine and the claim is immediate.
\end{proof}

Thus, the loss of potential share relative to the linear payoff is bounded on both sides by the payoff's deviation from linearity. For example, when $N=8$, the potential share decreases from $0.875$ for the linear payoff to $0.860$ for $f(r)=-r^2$, and further to $0.647$ for winner-take-all.

%======================================================================
\section{Finitely Many Priority Levels}\label{sec:finiteK}
Corollary~\ref{cor:Qinf} shows that, for continuous priorities, no payoff-from-rank function has a smaller harmonic
share than the affine one. In applications the number of priority levels is finite and often small, e.g.\ a few levels of urgency. Ties then occur with positive probability, and a deterministic protocol
resolves them in favour of some agents, which breaks the exchangeability that the continuum argument rests on. The
bound nevertheless remains valid for every $K\ge3$, and it can fail for $K=2$.

\begin{theorem}[Finite quantisation]\label{thm:finiteK}
For every $K\ge3$ and every $N\ge2$, $Q_K\succeq0$ with $\ker Q_K=\operatorname{span}(\one)$. Equivalently, among all
nonconstant payoff-from-rank functions on $K_N$ under a total-order protocol with $K\ge3$ priority levels, $\rho_\pot(f)\le(N-1)/N$ with equality iff $f$ is
affine. The hypothesis $K\ge3$ cannot be dropped: for $K=2$ and $N=7$ the strictly decreasing payoff with
increments $F=(9,8,6,4,2,1)$ has $F^\top Q_2F=-7237/480<0$, i.e.\ $\rho_\pot=53454277/62354880>6/7$.
\end{theorem}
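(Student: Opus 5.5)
The plan is to write $Q_K$ explicitly through the Hoeffding machinery of Lemma~\ref{lem:numop}, and then to show it is positive semidefinite by splitting it into a continuum-like part plus a correction that comes from ties. On $K_N$ under the index protocol the game is symmetric up to the tie-breaking order. First, for each increment $F_j$ the payoff $u_i$ is a function of $s_i$, of the multiset of higher opponents and of the tied opponents with smaller index. Conditioning on $s_i=x$, the number of opponents above $x$ and the number of tied opponents with smaller index are independent binomials. Combining this with the graded formula $\Epot=K^{N+1}\sum_m\|g_m\|^2/m$ and the total energy $K^{N+1}\sum_i\|u_i-A_iu_i\|^2$, I would express both Gram matrices as sums over orders $m$ of discrete analogues of the forms $J_m$. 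The Brownian-bridge kernel $\kappa(q,q')^m$ is replaced by a kernel on the grid of levels, and the smoothed differences $d_m[f]$ are replaced by differences averaged against the binomial laws of ``above'' and ``tied'' counts. As in the proof of Corollary~\ref{cor:Qinf}, this yields
\[
F^\top Q_KF=\sum_{m\ge2}\frac{N(m-1)}{m}(N-m)J^{(K)}_m(F)+R_K(F),
\]
where the first sum is nonnegative and $R_K$ collects the asymmetric tie terms, which vanish in the continuum limit.

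Second, I would use Lemma~\ref{lem:B}: since $Q_K\one=0$, it suffices to prove $Q_K\succ0$ on $\one^\perp$. Restricted to that subspace, $F$ corresponds to a non-affine $f$. I would try to show that $R_K$ is dominated by the $m=2$ term: for $K\ge3$ the probability of a tie, about $1/K$ per opponent pair, makes $R_K$ at most a fraction $c(K)<1$ of $\sum_m(\dots)J^{(K)}_m$. The natural route is to diagonalise approximately in the discrete Chebyshev basis $t_p$ of Theorem~\ref{thm:spectrum}. There $J^{(K)}_m$ should remain nearly diagonal with entries close to \eqref{eq:Jdiag}, and a Gershgorin or Schur-complement bound on the off-diagonal tie corrections would then give positivity. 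For large $K$ this is a perturbation of Theorem~\ref{thm:spectrum}, because the gap $\lambda_1-\lambda_2=2/(N+2)$ is bounded below for each $N$.

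The main obstacle is uniformity: I need positivity for all $N$ and for all $K\ge3$, including small $K$ such as $3$ or $4$, where the perturbation is not small. The $K=2$ counterexample shows that the margin really does vanish. My first attempt would be to write the entries of $Q_K$ as rational functions of $K$ for each fixed $N$, in closed form via hypergeometric sums over the binomial tie counts. I would then exhibit an explicit certificate, namely an $LDL^\top$ factorisation on $\one^\perp$ with entries that are manifestly positive polynomials in $K-3$. Such certificates would be computed exactly for $N$ up to some threshold $N_0$. For $N>N_0$, an asymptotic argument would apply: the dominant $m=2$ block has a gap of order $1/N$, and the tie corrections are $O(1/(KN))$ relative to it, with explicit constants.

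Finally, for the counterexample I would simply evaluate $F^\top Q_2F$ exactly in rational arithmetic, using the Gram formulas for $N=7$ and $K=2$ with $F=(9,8,6,4,2,1)$. This should give $-7237/480$, and substituting into \eqref{eq:rayleigh} gives the stated $\rho_\pot$. The equivalence with the payoff statement is immediate from the Rayleigh-quotient form \eqref{eq:rayleigh}, because $\ker Q_K=\operatorname{span}(\one)$ corresponds exactly to the affine payoffs.
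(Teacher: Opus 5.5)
There is a genuine gap at the step you identify as the main obstacle, and the mechanism you propose for it cannot close it.

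Your split $F^\top Q_KF=\sum_{m\ge2}\frac{N(m-1)}m(N-m)J^{(K)}_m(F)+R_K(F)$ only \emph{defines} $R_K$. In the continuum those weights come from $\|g_m\|^2=(N-m)^2J_m$ and $\|u\|^2=N\sum_m(N-m)J_m$. These use two facts: $[\bar u_i]_T$ does not depend on $i\notin T$, and $\pos_i$ is uniform on the ranks. Both fail under the index protocol at finite $K$, so $R_K$ carries the whole protocol asymmetry, not a small tie residue. (A smaller point: given $s_i=x$, the independent pieces are $\Bin(i,\frac{K-x}K)$ and $\Bin(N-1-i,\frac{K-1-x}K)$ over the two index groups. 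The number of higher opponents and the number of tied lower-index opponents are not independent.)

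Your decisive claim is that for large $N$ the tie corrections are $O(1/(KN))$ relative to a gap of order $1/N$. The paper's numerics point the other way:
\begin{itemize}
\item Your claim would make the critical number of levels decrease with $N$. The reported values $1.750,1.875,1.953,2.006$ for $N=4,\dots,7$ increase instead, and cross $2$ exactly where $Q_2$ becomes indefinite; the violation keeps growing up to $N=12$.
\item At $K=3$, the margin $\frac{N-1}N-\max\rho_\pot$, taken relative to its continuum value $2/(N+2)$, falls from about $0.74$ at $N=3$ to about $0.44$ at $N=12$ (Table~\ref{tab:gap-SM}).
\end{itemize}
Structurally, Proposition~\ref{prop:recip} gives $K^{-(N+1)}Q_K=Q_\infty+\sum_jK^{-2j}\mathsf C_j$ with $Q_\infty+\sum_j\mathsf C_j=0$, so the quantisation corrections are as large as $Q_\infty$ itself. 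At finite $K$ the order forms do not even commute. The paper also notes that the natural continuum limit of the $K=3$ bond form $\fB_{N,3}$ has spectral infimum $0$. So no near-diagonal Chebyshev-basis or Gershgorin perturbation argument is available. Finally, the heuristic that a tie probability of about $1/K$ yields a fraction $c(K)<1$ contains nothing that separates $K=3$ from $K=2$ uniformly in $N$, and that separation is the whole content of the theorem.

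What is missing is a non-perturbative inequality in which $K\ge3$ actually enters. The paper builds it in four steps:
\begin{enumerate}
\item Every order $m\ge3$, for every $K$, is handled by a sign argument. Complementary tie indicators make the cross kernels of $\langle[u_i]_T,[u_k]_T\rangle$ entrywise nonpositive, which gives $\|\sum_{i\in T}[u_i]_T\|^2\le2\sum_{i\in T}\|[u_i]_T\|^2$. This suffices because $\frac{N-1}N\ge\frac2m$.
\item The one- and two-body terms (with $Q^{(1)}_K\preceq0$) are rewritten as a bond form in $N-1$ vectors $\hat D_k(f)\in\R^K$. Its sum-of-squares identity leaves a single negative term $-\sum_kc_k\Var(D_k)$.
\item The sharp level inequality with constant $K-2$ pays for that term bond by bond. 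This is the only place $K\ge3$ is used, and its margin vanishes at $K=2$.
\item A Perron/$M$-matrix argument on the path of bonds, with $v_k=(k+1)(N-1-k)$ and $Lv=2\cdot\one$, reduces positivity to conditions that hold for all large $N$. For $K\ge5$ the condition is $(N-2)(K-4)\ge4$. At $K=4$ it uses the exact root of the level pencil ($N\ge10$). At $K=3$ it uses a shear plus a Perron certificate ($N\ge17$).
\end{enumerate}
In the per-bond condition $(N-2)\varrho(\mu_k)-2N\mu_k\nu_k+\frac12\ge0$, both competing terms are of order $N$. The balance between them is governed by $K$: compare $\frac K2$ with $2$ when $\varrho\ge\frac K2\mu\nu$.

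Your exact fixed-$N$ certificates correspond only to the final, finite part of this argument. Without a large-$N$ mechanism of this kind the proof does not close. The $K=2$ witness and the equivalence with the Rayleigh-quotient statement are fine.
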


The proof is provided in Appendix~\ref{app:B}, with the lengthy computations in the SM. We explain here its
structure and where the hypothesis $K\ge3$ enters. Lemma~\ref{lem:B} gives the kernel; positivity is established in
four stages, each carried out in the corresponding part of Appendix~\ref{app:B}.
\begin{enumerate}[label=\arabic*),leftmargin=*]
\item \emph{Splitting by interaction order} (Appendix~\ref{app:B-orders}). Decomposing the payoffs into
their Hoeffding components splits $Q_K$ into a sum of forms, one for each interaction order $m$, i.e.\ for each number
of agents whose levels jointly determine a component. Every form of order $m\ge3$ is positive semidefinite for every
payoff (Proposition~\ref{prop:A-orders}). The reason is a sign structure: of two conflicting agents exactly one is served
first, so the corresponding indicators are complementary and their cross terms are nonpositive. It remains to treat
the one- and two-agent terms.
\item \emph{Reduction to bond variables} (Appendix~\ref{app:B-bond}). The one- and two-agent terms depend
on the payoff only through $N-1$ vectors in $\R^K$, one for each pair of adjacent ranks, which we call \emph{bonds}. In
these variables the form is a sum of squares minus a single negative term (Proposition~\ref{prop:A-bond}).
\item \emph{A sharp inequality on the levels} (Appendix~\ref{app:B-level}). The negative term is dominated, bond by bond, by
an inequality on functions of the level whose constant is $K-2$ and cannot be improved (Lemma~\ref{lem:level} below). This is
the only place where $K\ge3$ is used.
\item \emph{Closing the argument} (Appendices~\ref{app:B-perron} and~\ref{app:B-assembly}). What remains is a matrix inequality on
the path of bonds, which a Perron-type argument reduces to one scalar condition per bond. This condition is proved
analytically for $K\ge5$, and with exact symbolic computation for $K=3,4$, in each case for all but finitely
many $N$; the finitely many remaining pairs $(K,N)$ are certified by exact rational matrix factorisations. No
floating-point computation enters the proof; the scripts are listed in Appendix~\ref{app:repro} of the SM.
\end{enumerate}
Since stage~3) is where the hypothesis $K\ge3$ enters, we state its inequality. For $D\in\R^K$, viewed as a function of
the level, let $G(a)=\sum_{x\le a}D(x)$ and $H(a)=\sum_{x<a}D(x)$ be its inclusive and exclusive cumulative sums, let
$\Var(g)=\frac1K\sum_{x=0}^{K-1}\big(g(x)-\bar g\big)^2$, with $\bar g=\frac1K\sum_xg(x)$, be the variance of $g\in\R^K$ under the uniform law on the levels (so $\Var(D)$, $\Var(G)$, $\Var(H)$ are defined), and let
$\calD(g)=\sum_{y=1}^{K-1}\frac{y(K-y)}K\big(g(y)-g(y-1)\big)^2$, a weighted Dirichlet form for which $\calD\ge2\Var$.

\begin{lemma}[Sharp level inequality]\label{lem:level}
For $K\ge3$, $\mu\in[0,1]$, $\nu=1-\mu$ and every $D\in\R^K$,
\begin{equation}\label{eq:level}
\mu\,(\calD-2\Var)(H)+\nu\,(\calD-2\Var)(G)\ \ge\ (K-2)\,\mu\nu\,\Var(D),
\end{equation}
and the constant $K-2$ cannot be improved.
\end{lemma}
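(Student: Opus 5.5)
The plan is to treat \eqref{eq:level} as a family of quadratic forms in $D$ and to remove the parameter $\mu$ first. Write $\mathcal A(D):=(\calD-2\Var)(H)$ and $\mathcal B(D):=(\calD-2\Var)(G)$. Both are nonnegative, by the stated bound $\calD\ge2\Var$ applied to $H$ and to $G$.

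\emph{Step 1: eliminate $\mu$.} For fixed $D$ with $\Var(D)>0$, the case $\mu\in\{0,1\}$ is trivial. For $\mu\in(0,1)$, divide by $\mu\nu$. By Cauchy--Schwarz,
\[
\min_{\mu\in(0,1)}\Big(\frac{\mathcal A}{\nu}+\frac{\mathcal B}{\mu}\Big)=\big(\sqrt{\mathcal A}+\sqrt{\mathcal B}\big)^2 .
\]
So the lemma is equivalent to the $\mu$-free statement
\[
\sqrt{\mathcal A(D)}+\sqrt{\mathcal B(D)}\ \ge\ \sqrt{(K-2)\Var(D)}\qquad\forall D\in\R^K .
\]
I will prove this pointwise in $D$.

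\emph{Step 2: express $\mathcal A$ and $\mathcal B$ through $D$.} Put $w_y=y(K-y)/K$. The increments are $H(y)-H(y-1)=D(y-1)$ and $G(y)-G(y-1)=D(y)$. Hence
\[
\calD(H)=\sum_{y=1}^{K-1}w_yD(y-1)^2,\qquad \calD(G)=\sum_{y=1}^{K-1}w_yD(y)^2 .
\]
Moreover $G=H+D$, and $G$ is the shift of $H$ except at the top level, where $G(K-1)=\sum_xD(x)$. Both are invariant under adding a constant to $H$, $G$, but not under adding a constant to $D$.

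I expect the useful combination to be $\mathcal A+\mathcal B+2\sqrt{\mathcal A\mathcal B}$. The cross term will be bounded below by a bilinear expression obtained from
\[
\sqrt{\mathcal A\mathcal B}\ \ge\ \Big|\langle\cdot,\cdot\rangle\Big|
\]
applied to the two Dirichlet-minus-variance forms, each written as a sum of squares. Concretely, $\calD-2\Var$ is the discrete analogue of the Poincar\'e deficit for the Legendre weight. Its eigenfunctions are the discrete Chebyshev polynomials, with eigenvalues $p(p+1)/K-2$ up to normalisation. Expanding $H$ and $G$ in that basis will show that $\mathcal A$ and $\mathcal B$ kill only the linear parts of $H$ and $G$.

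The key point is that $H$ and $G$ cannot both be nearly linear unless their difference $D$ is nearly constant. The residual variance of $D$ must therefore be charged to the nonlinear parts of $H$ or $G$, each costing at least the gap $6/K-2\cdot$(normalised), which is where the factor $K-2$ should come from.

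\emph{Step 3: reduce to an eigenvalue problem and certify it.} Equivalently, for each direction one needs $\lambda_{\min}$ of the pencil
\[
\big(t\,\mathcal A+t^{-1}\mathcal B,\ \Var\big)\ \ge\ K-2\quad\text{for the optimal }t.
\]
I plan to handle this by finding an explicit sum-of-squares decomposition of
\[
\mu\mathcal A+\nu\mathcal B-(K-2)\mu\nu\Var ,
\]
as a quadratic form in $(D(0),\dots,D(K-1))$ whose coefficients are polynomial in $\mu$. The natural ansatz is a telescoping sum over $y$ of squares of the form $\big(\sqrt{\nu}\,a_yD(y)-\sqrt{\mu}\,b_yD(y-1)\big)^2$, plus a remainder. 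I would first test it at $K=3$, where everything is a $3\times3$ matrix.

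\emph{Step 4: sharpness.} For sharpness, take $\mu=\nu=\tfrac12$ and a two-point $D$ supported at the middle levels, for example $D=e_1-e_{K-2}$, or, for $K=3$, $D=e_1$. I would check that the ratio $(\mu\mathcal A+\nu\mathcal B)/(\mu\nu\Var)$ tends to $K-2$ along a suitable sequence, or equals it at the minimiser of the pencil. The case $K=2$ gives constant $0$, consistent with the failure reported in Theorem~\ref{thm:finiteK}.

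The main obstacle is Step 3: finding the certificate uniformly in $K$. The weights $w_y$ are not constant, and the top boundary term $G(K-1)=\sum D$ breaks the shift symmetry between $H$ and $G$. If the telescoping ansatz fails, I would fall back on the reflection $x\mapsto K-1-x$, which swaps the roles of $H$ and $G$ up to sign, and symmetrise. This reduces the problem to the case $\mu=\tfrac12$ plus a convexity argument in $\mu$. That reduction is legitimate because the defect is convex in $\mu$: its $\mu$-dependence is linear plus $(K-2)\mu^2\Var$.
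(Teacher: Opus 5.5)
Your Step~1 is correct: for fixed $D$, $\inf_{0<\mu<1}(\mathcal A/\nu+\mathcal B/\mu)=(\sqrt{\mathcal A}+\sqrt{\mathcal B})^2$, so the lemma is equivalent to $\sqrt{\mathcal A}+\sqrt{\mathcal B}\ge\sqrt{(K-2)\Var(D)}$. The gap is that the inequality itself is never proved: Steps~2 and~3 give a heuristic and an unchecked ansatz.

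\textbf{Step~3 returns to the original problem.} With $t=\mu/\nu$, one has $\mathcal A/\nu+\mathcal B/\mu=(1+t)\mathcal A+(1+t^{-1})\mathcal B$, not $t\mathcal A+t^{-1}\mathcal B$. A certificate must hold for every $t>0$, which is just the original $\mu$-family.

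\textbf{The telescoping ansatz cannot carry the argument.} All three terms vanish at $D=\one$, because $J_0\one$ and $J_1\one$ are affine and $\calD-2\Var$ kills affine functions. So every square in a sum-of-squares certificate must vanish at $\one$. That forces $\sqrt\nu a_y=\sqrt\mu b_y$, so your squares are multiples of $(D(y)-D(y-1))^2$, i.e.\ diagonal in the step coordinates $c_m=D(m)-D(m-1)$. The form is not diagonal there, so the whole difficulty sits in the unspecified remainder. Incidentally, relative to $\Var$ the form $\calD-2\Var$ has eigenvalue $m(m+1)-2$ (not $p(p+1)/K-2$) on the degree-$m$ Chebyshev component, and the gap heuristic is never made quantitative.

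\textbf{The missing idea is an exact pair representation.} Expanding both variances over pairs of levels, the diagonal terms cancel:
\[
(\calD-2\Var)(G)=\frac{2}{K^2}\sum_{1\le y<y'\le K-1}y(K-y')\big(D(y)-D(y')\big)^2 ,
\]
and likewise for $H$ with weight $(y+1)(K-1-y')$ on $0\le y<y'\le K-2$. Use the identity $\mu(y+1)(K-1-y')+\nu y(K-y')=(y+\mu)(K-\mu-y')-\mu\nu$ and $\Var(D)=K^{-2}\sum_{y<y'}(D(y)-D(y'))^2$. The lemma becomes
$\sum_{y<y'}[(y+\mu)(K-\mu-y')-\tfrac K2\mu\nu](D(y)-D(y'))^2\ge0$.
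The coefficient at $(0,K-1)$ is $\mu\nu(1-\tfrac K2)<0$, so no termwise argument works.

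In step coordinates the form is $c^\top[(Kg)\circ\Phi_\mu]c$, where:
\begin{itemize}
\item $g(m,m')=\min(m,m')(K-\max(m,m'))/K$ is the path Green's function;
\item $\Phi_\mu=\widetilde M_\mu-\tfrac{K\mu\nu}2\one\one^\top$, with $\widetilde M_\mu(m,m')=\tilde a_{\min(m,m')}\tilde b_{\max(m,m')}$, $\tilde a_m=\tfrac{m-1+2\mu}2$ and $\tilde b_m=\tfrac{K-m-1+2\nu}2$.
\end{itemize}
$\widetilde M_\mu^{-1}$ is a tridiagonal Robin Laplacian with $\one^\top\widetilde M_\mu^{-1}\one=2/(K\mu\nu)$. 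Hence $\Phi_\mu\succeq0$, and the Schur product theorem concludes.

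\textbf{The sharpness step and the fallback both fail, for one reason.} At every fixed $\mu\in(0,1)$ the inequality has strict slack: $\Phi_\mu$ has positive diagonal and $Kg\succ0$, so the Hadamard product is positive definite. For example, at $K=3$, $\mu=\tfrac12$, the minimum of $(\mu\mathcal A+\nu\mathcal B)/(\mu\nu\Var(D))$ is $\tfrac43$, attained at linear $D$. Your $D=e_1$ gives $4$, against $K-2=1$.

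The constant is reached only as $\mu\to0$ (or $\mu\to1$), at $D=e_0$. There $G=J_1e_0\equiv1$, so $\mathcal B(e_0)=0$, while $\mathcal A(e_0)=(K-1)(K-2)/K^2=(K-2)\Var(e_0)$. In your $\mu$-free form this is exact equality.

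For the fallback, convexity in $\mu$ points the wrong way: a convex function nonnegative at $0,\tfrac12,1$ can be negative in between. The reflection only exchanges $(D,\mu)$ with (reflected $D$, $1-\mu$). Concretely, at $K=3$ the inequality with $\tfrac43$ in place of $K-2$ holds at $\mu=\tfrac12$. It has the same symmetry and convexity, yet is false for $0<\mu<\tfrac14$ at $D=e_0$. So no reduction to $\mu=\tfrac12$ of that kind can be valid.
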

The proof (Appendix~\ref{app:B-level}; details in Lemma~\ref{lem:Lstar} of the SM) writes the difference of the two sides as a Schur
product of the Green's function of a path with a positive semidefinite matrix.

\begin{remark}[why $K\ge3$]\label{rem:K2}
For $K=2$ the right-hand side of \eqref{eq:level} vanishes, the negative term of stage~2) is no longer compensated,
and the statement is indeed false: exact computation gives $Q_2\succeq0$ for $N\le6$, while at $N=7$ both the payoff
of Theorem~\ref{thm:finiteK} and its reversal beat the linear one. Numerically the
violation persists for $7\le N\le12$ (Table~\ref{tab:gap-SM} in the SM).
\end{remark}

Proposition~\ref{prop:recip} describes how finite quantisation perturbs the continuum picture. Let a
protocol be given by a total order $\prec$ of the agents, let $\prec^{\mathrm{rev}}$ be the reversed order, and let
$\mathsf m^{\prec}(K)$ denote either of the two energy forms of $\calG_f(C,K)$ divided by $K^{N+1}$.

\begin{proposition}[Reciprocity and expansion]\label{prop:recip}
{\rm(a)} $\mathsf m^{\prec}(K)$ is a polynomial in $1/K$ of degree at most $2N-1$, whose constant term is the
corresponding form for continuous priorities. {\rm(b)} As polynomials in $x=1/K$,
$\mathsf m^{\prec}(-x)=\mathsf m^{\prec^{\mathrm{rev}}}(x)$: changing the sign of $K$ reverses the protocol.
{\rm(c)} If an automorphism of $C$ reverses the protocol on every edge, as $i\mapsto N-1-i$ does on $K_N$, then
$\mathsf m^{\prec}(K)$ is a polynomial in $K^{-2}$ of degree at most $N-1$ vanishing at $K=1$. In particular
\begin{equation}\label{eq:Qexpansion}
K^{-(N+1)}Q_K=Q_\infty+\sum_{j=1}^{N-1}K^{-2j}\,\mathsf C_j,~ Q_\infty+\sum_j\mathsf C_j=0,
\end{equation}
where $Q_\infty:=\lim_{K\to\infty}K^{-(N+1)}Q_K$ is the continuum form of Section~\ref{sec:spectrum}, i.e.\ the form
$\frac{N-1}N\|u\|^2-\|Pu\|^2$ of \eqref{eq:contquad} written in the increments $F$, diagonal in the Chebyshev basis, and 
$\mathsf C_j$ do not depend on $K$.
\end{proposition}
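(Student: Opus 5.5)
The approach is to write the energy forms as explicit finite sums over level configurations and read off their dependence on $K$. By Lemma~\ref{lem:numop}, both $K^{-(N+1)}(\Epot+\Eharm)=\sum_i\|u_i-A_iu_i\|^2$ and $K^{-(N+1)}\Epot=\sum_m\|g_m\|^2/m$ are averages, under the uniform law on $S$, of quadratic expressions in the payoffs. Since $u_i=f(\pos_C(i,s))$ is linear in the increments $F$, each entry of either form is an average of products of at most two functions of the form $f(\pos)$, each evaluated at a profile that shares coordinates with the others. The Hoeffding projections $A_j$ and $I-A_j$ add averages over at most $N$ independent copies of each coordinate, so every entry is a finite sum of terms $\Pr[\text{order pattern of a family of i.i.d.\ uniform levels}]$. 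Such a family involves at most $2N$ variables: the original profile together with the resampled copies.

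\textbf{Step (a).} Among these variables, consider the probability of a given weak order (a pattern of strict inequalities and ties). With $r$ distinct values, this probability equals $\binom{K}{r}/K^{n}$ for $n$ variables. In terms of $x=1/K$ this is $x^{n-r}\prod_{k<r}(1-kx)/r!$, a polynomial in $x$, and since $n\le 2N$ its degree is at most $n-1\le 2N-1$. Its constant term is nonzero only when $r=n$, that is, when there are no ties, and it then equals $1/n!$, the probability of that strict pattern under continuous uniform levels. Summing over patterns shows that the constant term is the continuum form. One bookkeeping point needs care: the continuum norm \eqref{eq:contnorm}, with its factor $\tfrac12\iint$, must be checked to match the discrete edge count divided by $K^{N+1}$, where $\sum_{x<y}=\tfrac12\sum_{x\ne y}$ and the diagonal contributes nothing.

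\textbf{Step (b).} Where a protocol enters, it does so only through tie-breaking: $\one[s_j=s_i,\ j\to i]$. I will use combinatorial reciprocity for order polynomials (Stanley). The count $\binom{K}{r}$ of ways to choose the values, viewed as a polynomial in $K$, satisfies $\binom{-K}{r}=(-1)^r\binom{K+r-1}{r}$. Equivalently, the count of weakly order-preserving maps at $-K$ equals, up to sign, the count of strictly order-preserving maps at $K$. Reversing the protocol exchanges which ties are counted as favouring $i$, so it exchanges ``$\ge$'' with ``$>$'' relations. This makes the reciprocity identity match $\mathsf m^{\prec}(-x)$ with $\mathsf m^{\prec^{\rm rev}}(x)$. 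I expect this step to be the main obstacle. Products of indicators mix strict and weak comparisons, and the signs $(-1)^n$ from $K^{-n}$ have to cancel consistently across each pattern. I would try it first on a single agent's position function, $\Pr[\pos=r\mid s_i]$, which is a polynomial in $K$ and $s_i$, and then lift it to the full forms.

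\textbf{Step (c).} An automorphism $\sigma$ of $C$ that reverses the protocol on every edge leaves the energies invariant under relabelling, so $\mathsf m^{\prec}=\mathsf m^{\prec^{\rm rev}}$. By (b) the polynomial is then even in $x$, hence a polynomial in $K^{-2}$ of degree at most $N-1$. For $K=1$ the game graph has no edges, so every form vanishes. Applying this to $Q_K$ gives \eqref{eq:Qexpansion}: the constant term is $Q_\infty$ by (a), and setting $K=1$ gives $Q_\infty+\sum_j\mathsf C_j=0$.
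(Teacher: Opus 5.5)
Parts (a) and (c) are sound. In (a) you decompose by the weak order a point induces, each contributing $\binom{K}{r}K^{-n}$, whereas the paper decomposes by strict orders with lexicographic tie-breaking and counts descents. Both give a polynomial in $1/K$ of degree at most $n-1\le 2N-1$ whose constant term is the continuum value. Part (c) follows the paper's argument once (b) is available. Note there that your counting formula is valid at every integer $K\ge1$, including $K=1$.

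The gap is (b), which is the substance of the proposition; you leave it as a heuristic, and your decomposition does not deliver it directly. In your expansion $F^{\prec}(K)=K^{-n}\sum_{w\in\mathcal W^{\prec}}\binom{K}{r(w)}$, the admissible set $\mathcal W^{\prec}$ depends on the protocol, because whether a tie with a neighbour counts is decided by $\prec$. Substituting $K\mapsto -K$ term by term gives $K^{-n}\sum_{w}(-1)^{n-r(w)}\binom{K+r(w)-1}{r(w)}$, which is not the expansion for $\prec^{\mathrm{rev}}$. The identity appears only after an alternating regrouping over coarsenings. Already for $n=2$ and the event $\{x\ge y\}$ you need $\binom{-K}{2}+\binom{-K}{1}=\binom{K}{2}$, and you have not supplied this inclusion--exclusion in general.

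The missing idea is a decomposition in which reciprocity holds cell by cell and the index set is protocol-free. Fix a total order $\prec$ on all $n$ coordinates: agents by the protocol, each original coordinate immediately before its resampled twin. Sort each point by ``larger level first, ties broken by $\prec$''. Positions only compare coordinates of different agents, so the set $\mathcal O$ of admissible strict orders $\sigma$ does not depend on the protocol. The cell of $\sigma$ is a chain whose adjacent inequalities are strict exactly at its descents, so it has $\binom{K+n-1-\operatorname{des}_\prec(\sigma)}{n}$ points. Reversing the whole coordinate order, twins included, gives $n-1-\operatorname{des}_\prec(\sigma)=\operatorname{des}_{\prec^{\mathrm{rev}}}(\sigma)$; reversing only the agents would fail whenever two twins are adjacent. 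Then $\binom{-K+n-1-d}{n}=(-1)^n\binom{K+d}{n}$, together with $(-K)^{-n}=(-1)^nK^{-n}$, yields $F^{\prec}(-K)=F^{\prec^{\mathrm{rev}}}(K)$ cell by cell. Summing with the $K$-independent coefficients then gives (b). Without this, or an equivalent $P$-partition argument, (b) is unproved, and with it (c) and \eqref{eq:Qexpansion}, since the evenness in $1/K$ rests on (b).
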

The proof (Section~\ref{sec:recip} of the SM) counts the lattice points that induce a given
order of the coordinates; part~(b) is the reciprocity theorem for $P$-partitions \cite{stanley2012}, since changing the
sign of the dilation exchanges weak and strict inequalities, i.e.\ who wins a tie. Hence finite quantisation perturbs
the two forms of Theorem~\ref{thm:spectrum} by a terminating expansion in $K^{-2}$, and the generalised eigenvalues
equal $\lambda_p+O(K^{-2})$. Without an automorphism that reverses the protocol, odd powers of $1/K$ do occur.

The expansion also locates the failure at $K=2$: extended to real $K$ via \eqref{eq:Qexpansion}, the critical value at
which the smallest nontrivial eigenvalue vanishes is $1.750$, $1.875$, $1.953$, $2.006$ for $N=4,\dots,7$.

Theorem~\ref{thm:finiteK} turns Corollary~\ref{cor:surrogate} into a limit over all payoffs.

\begin{corollary}[Design limit for potential surrogates]\label{cor:logodds}
Under the settings of Theorem~\ref{thm:finiteK}, let $K\ge3$ and $\beta>0$. For every nonconstant $f$ and every
$\Phi:S\to\R$,
\begin{equation}\label{eq:logodds}
\sum_e\beta^2\big(u(e)-\delta\Phi(e)\big)^2\ \ge\ \frac1N\sum_e\beta^2u(e)^2 ,
\end{equation}
with equality iff $f$ is affine and $\Phi-\phi^\star$ is constant.
\end{corollary}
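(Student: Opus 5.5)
The plan is to combine the orthogonal splitting used in the proof of Corollary~\ref{cor:surrogate} with the bound of Theorem~\ref{thm:finiteK}.

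\emph{Step 1: reduce to a fixed game.} Fix a nonconstant $f$ and let $u$ be the flow of $\calG_f(K_N,K)$ under the total-order protocol, with Hodge decomposition $u=\delta\phi^\star+h$ as in \eqref{eq:hodge}. For any $\Phi$, write $u-\delta\Phi=h+\delta(\phi^\star-\Phi)$. Since $h\perp\operatorname{im}\delta$, this gives
\[
\sum_e\beta^2\big(u(e)-\delta\Phi(e)\big)^2=\beta^2\Eharm+\beta^2\|\delta(\phi^\star-\Phi)\|^2 .
\]
So the left side of \eqref{eq:logodds} is at least $\beta^2\Eharm$, with equality iff $\delta(\phi^\star-\Phi)=0$. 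Because the game graph is connected (a product of complete graphs), this holds iff $\Phi-\phi^\star$ is constant.

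\emph{Step 2: bound the harmonic share.} By orthogonality, $\sum_eu(e)^2=\Epot+\Eharm$, so $\Eharm=\rho_\harm\sum_eu(e)^2$. Theorem~\ref{thm:finiteK} gives $\rho_\pot\le(N-1)/N$ for every nonconstant $f$ when $K\ge3$, hence $\rho_\harm\ge1/N$, with equality iff $f$ is affine. One check is needed here: $\sum_eu(e)^2>0$, so that the shares are defined. This holds because a nonconstant $f$ has some increment $F_j\ne0$, and on $K_N$ every rank $0,\dots,N-1$ is realised. Some single move then changes the mover's position across $j$, which gives a nonzero flow. Combining the two steps yields \eqref{eq:logodds}.

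\emph{Step 3: the equality case.} Equality in \eqref{eq:logodds} requires equality in both inequalities: $\Phi-\phi^\star$ constant from Step~1, and $\rho_\harm=1/N$, i.e.\ $f$ affine, from Step~2. Conversely, if both hold, both inequalities are equalities.

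There is no real obstacle, since all the difficulty sits in Theorem~\ref{thm:finiteK}. The only points to watch are the connectedness argument for uniqueness of $\Phi$ up to a constant and the nondegeneracy $\sum_eu(e)^2>0$.
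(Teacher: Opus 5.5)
Your proposal is correct and is exactly the paper's argument: the Pythagorean splitting of Corollary~\ref{cor:surrogate} combined with $\Eharm\ge\frac1N(\Epot+\Eharm)$ from Theorem~\ref{thm:finiteK}, with equality forced in both inequalities.

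One small tightening is needed in your nondegeneracy check. A move that crosses rank $j$ over several ranks has flow $\sum_k F_k$ over those ranks, and such a sum can cancel. You should instead pick a move that lowers the mover's position by exactly one, from $j$ to $j-1$. For example, take agent $0$ at level $0$, one opponent at level $1$, $j-1$ opponents at level $2$ (this uses $K\ge3$) and the rest at level $0$, and raise agent $0$ to level $1$; the flow is then $F_j\neq0$.
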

\begin{proof}
Combine \eqref{eq:surrogate} with $\Eharm\ge\frac1N(\Epot+\Eharm)$, which is Theorem~\ref{thm:finiteK}.
\end{proof}

\begin{figure}[t]\centering
\includegraphics[width=\columnwidth]{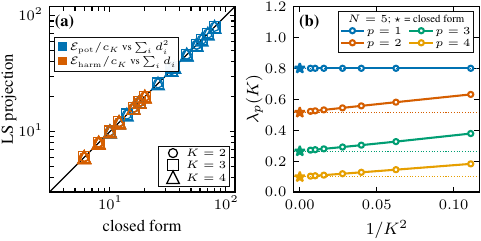}
\vspace{-8mm}
\caption{Closed forms against independent computation. (a)~Energies of Proposition~\ref{prop:energies} against a
least-squares Hodge projection. (b)~Finite-$K$ spectrum against Theorem~\ref{thm:spectrum}.}\label{fig:verify}
\end{figure}

\begin{remark}[Interpretation of Theorem~\ref{thm:finiteK}]\label{rem:meaning}
Within the family $\{\calG_f\}$ there are no own terms to rescale, so the normalisation issue of
Corollary~\ref{cor:rigid} does not arise and the shares of different payoffs are comparable. Theorem~\ref{thm:finiteK}
is therefore a limit on reward design. By Proposition~\ref{prop:learning}, under uniform revision,
$\sigma_\beta\ge\beta^2(\Epot+\Eharm)/(N^2K^{N+1})+O(\beta^3)$: relative to the total incentive energy, no payoff
from rank makes low-rationality learning less irreversible, to leading order, than the linear payoff.
Corollary~\ref{cor:logodds} gives the same limit at every rationality, without an expansion in $\beta$. The limit concerns the \emph{magnitudes} of the incentives, not their
direction, which an ordinal potential captures (Proposition~\ref{prop:fip}).
\end{remark}

\begin{figure*}[t]
\centering
\includegraphics[width=\textwidth]{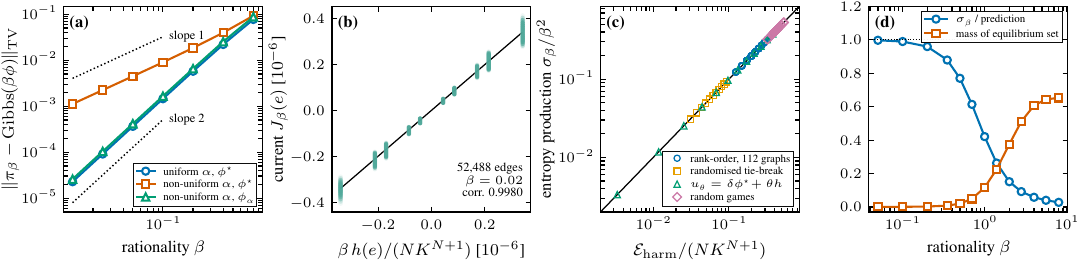}
\vspace{-5mm}
\caption{Log-linear learning (Proposition~\ref{prop:learning}), computed from the transition matrix. (a)~Occupation, on $C_5$ ($K=3$): distance between the stationary law and the Gibbs law of a Hodge potential.
(b)~Circulation: stationary current against the harmonic flow on every edge of the game graph of the intersection.
(c)~Irreversibility: entropy production against harmonic energy for four families of games; the line is the
prediction, without fitted parameters; the random games are not pairwise. (d)~Range of validity on the intersection:
the low-$\beta$ law gives way to the ordinal regime, in which mass shifts towards the set of pure-strategy Nash equilibria of
Proposition~\ref{prop:ne} without being absorbed by it.}\label{fig:learning}
\end{figure*}

\section{Numerical Study}\label{sec:numerics}
The experiments check the closed forms against computations that do not use them and show what the results mean on
concrete systems: the eight-vehicle intersection of Example~\ref{ex:intersection} with $K=3$ ($6561$ profiles,
$52\,488$ edges), families of small conflict graphs, games that are not priority games, and alternative layouts.
All quantities are computed by enumerating the profile space in double precision. The Hodge components are obtained
by a sparse least-squares projection of the flow onto $\operatorname{im}\delta$, which uses neither
Lemma~\ref{lem:numop} nor any closed form; stationary laws, currents and entropy production are computed from the
transition matrix of log-linear learning, not by simulation. None of these computations enters any proof.

\subsection{Verification of the Closed Forms}
Figure~\ref{fig:verify} checks Proposition~\ref{prop:energies} and Theorem~\ref{thm:spectrum}. Panel~(a) shows that the
least-squares energies of $\calG(C,K)$ on $24$ conflict graphs with four to six vertices and $K\in\{2,3,4\}$, divided by
$c_K$, equal $\sum_id_i^2$ and $\sum_id_i$ to within $4\times10^{-15}$, and the markers of one graph coincide across
$K$, as the independence of the potential share from $K$ requires. Panel~(b) shows the generalised eigenvalues of
$(M_\pot,M_{\mathrm{tot}})$ for $N=5$ and $K\le12$: they tend to the values \eqref{eq:lambda} at rate $O(K^{-2})$, as
Proposition~\ref{prop:recip} predicts, and $(N-1)/N$ is an eigenvalue for every $K$ (Lemma~\ref{lem:B}), the largest one in this example.

\begin{table}[t]
\centering
\caption{Four layouts for eight agents, $K=3$. Energies in units of $2c_K$; $\sigma_\beta/\beta^2$ measured at
$\beta\max_id_i=0.02$.}\label{tab:layouts}
\small
\setlength{\tabcolsep}{3pt}
\begin{tabular}{@{}lccccc@{}}\toprule
Layout & $|E|$ & $|E_2|$ & $\Epot$ & $\rho_\pot$ & $\sigma_\beta/\beta^2$\\\midrule
Fully channelised (matching) & $4$ & $0$ & $4$ & $1/2$ & $0.0741$\\
Single merge cycle $C_8$ & $8$ & $8$ & $16$ & $2/3$ & $0.1481$\\
One arbiter (star) & $7$ & $21$ & $28$ & $4/5$ & $0.1296$\\
Unprotected crossing (Example~\ref{ex:intersection}) & $16$ & $48$ & $64$ & $4/5$ & $0.2963$\\\bottomrule
\end{tabular}
\end{table}

\subsection{Log-linear Learning}
Figure~\ref{fig:learning} tests the three parts of Proposition~\ref{prop:learning}, one per panel, and delimits their
range of validity. Panel~(a) shows, for the rank-order
game on $C_5$ ($K=3$), that the stationary law is Gibbs in the Hodge potential to second order. It also shows that under
non-uniform revision rates ($\alpha_i\propto i+1$), this is true of the weighted potential $\phi_\alpha$ and false of
$\phi^\star$: the revision rates enter at first order. Panel~(b) compares, on all edges of the game graph of the
intersection ($\beta=0.02$), the stationary current with $\beta h/(NK^{N+1})$, where $h$ is evaluated from
Proposition~\ref{prop:hflow}; the harmonic flow takes finitely many values, hence the clusters. Panel~(c) is the main check. It plots
$\sigma_\beta/\beta^2$ against $\Eharm/(NK^{N+1})$, with no fitted parameter, for four families of different origin:
rank-order games on all $112$ non-isomorphic connected graphs on six vertices; the same games with randomised tie-breaking, whose
harmonic energy is smaller by the factor $(K-2)/(K+1)$ (Section~\ref{sec:reading}); the interpolation
$u_\vartheta=\delta\phi^\star+\vartheta h$ on $C_5$, $\vartheta\in[0.15,1.6]$; and thirty games on four agents with i.i.d.\
standard normal payoffs. The last family is not pairwise, yet obeys the same law, as Proposition~\ref{prop:learning} asserts for arbitrary
payoffs. For each game $\beta\max_e|u(e)|=0.02$; the largest relative deviation from the line over all $284$ points
is $1.2\times10^{-3}$.
Panel~(d) shows the two regimes of Remark~\ref{rem:regimes} on the intersection: the ratio of $\sigma_\beta$ to its
prediction is $0.99$ at $\beta=0.1$, $0.88$ at $\beta=0.35$, $0.42$ at $\beta=1$, and $0.03$ at $\beta=8$, while the
stationary mass of the three pure-strategy Nash equilibria of Proposition~\ref{prop:ne} rises from $10^{-3}$ and remains near $0.65$ up to
$\beta=8$.

\subsection{Reading a Layout}
This experiment shows how the closed forms can be used for design, before any simulation. Table~\ref{tab:layouts}
compares four ways of routing eight agents: separate channels, so that the conflicts form a perfect matching; a single
merge cycle; one agent that conflicts with all others; and the unprotected crossing of Example~\ref{ex:intersection}. By \eqref{eq:quanta} and
Proposition~\ref{prop:learning}, the leading-order entropy production counts the conflicts, and the measured
$\sigma_\beta/\beta^2$ agrees with the prediction $(K^2-1)|E|/(6NK^2)$ to five digits. This has two consequences. First, each removed conflict lowers the leading-order entropy production by the same
amount, whatever happens to the ratio: the fully channelised layout has the smallest potential share, $1/2$, and is the closest to a reversible
system. Second, the ratio does not rank layouts: the star and the unprotected crossing share $\rho_\pot=4/5$, yet the
crossing has $16/7$ times the entropy production. A hub concentrates conflicts on one agent, which raises $|E_2|$ and hence
$\Epot$ at almost no harmonic cost; whether that is desirable is for the designer to decide.

\subsection{Beyond the Complete Graph}\label{sec:beyond}
This experiment probes the hypotheses of Theorem~\ref{thm:finiteK}. The theorem is stated for $K_N$, which has the protocol-reversal symmetry of Proposition~\ref{prop:recip}(c). Off $K_N$ the affine payoff can be slightly beaten at finite $K$. Under the index protocol on $K_{3,3}$ the best payoff, found by maximising the Rayleigh quotient
\eqref{eq:rayleigh} over $F$, reaches $\rho_\pot=0.7635$ for $K=3$ and $0.7504$ for $K=7$, against $3/4$ for the
linear payoff; the prism and the octahedron (linear values $3/4$ and $4/5$) are likewise beaten; on the cycles we tested the linear payoff was optimal. The excess decreases with $K$; we do not claim a single cause. Randomised tie-breaking on $K_N$ behaves likewise
($N=5$, $K=3$: $0.955$ at the optimum against $0.941$ for the linear payoff).

\section{Conclusion}\label{sec:conclusion}
This paper has measured what a potential-game surrogate misses in priority coordination. For the linear payoff the
missed energy is the number of conflicts, on every conflict graph and for every deterministic protocol, so the best
common-objective model of the agents' choice odds errs by at least $1/(d_{\max}+1)$ for every rationality $\beta>0$, a floor
set by the conflicts per agent rather than by the size of the system. The
missed part is invisible to strict better-response paths and to the equilibrium service order, but it is, to leading
order, the probability current of low-rationality learning, and on the complete conflict graph no rank-based payoff
brings it below $1/N$. Since the counts are available from the layout before any simulation, layouts should be
compared by the pair of energies rather than by their ratio: the star and the unprotected crossing of
Table~\ref{tab:layouts} have the same potential share, yet the crossing has $16/7$ times the entropy production.

At finite $K$ the extremality of affine payoffs can fail off the complete graph and under randomised tie-breaking
(Section~\ref{sec:beyond}); whether it holds on every regular graph for continuous priorities, for which $N$ it fails
at $K=2$, and whether Theorem~\ref{thm:finiteK} extends to orientations with directed cycles remain open. The analysis
has taken the ordering abstraction and an ideal execution layer as given; coupling the decision layer with the
continuous dynamics of the agents is the subject of ongoing work.

%======================================================================
\appendices

\section{Proof of Theorem~\ref{thm:spectrum}}\label{app:A}\label{app:A-spec}

Write $n=N-1$, $\mathsf w_m(q)=[q(1-q)]^m$, and $p^{(m)}_j(q)=P^{(m,m)}_j(1-2q)$, which are orthogonal on $[0,1]$ with weight $\mathsf w_m$. The required integral identities are collected in Appendix~\ref{sec:cont} of the SM.

\emph{Step 1: the two energies.} By the continuum analogue of Lemma~\ref{lem:numop}, $\|Pu\|^2=\sum_m\|g_m\|^2/m$ and $\|u\|^2=\sum_i\|u_i-A_i u_i\|^2$. Conditional on $s_{-i}$, $A_i u_i=\int_0^1f(B_t)\,\mathrm{d}t$, where $B_t=\#\{j\ne i:s_j>t\}$ is a sum of independent Bernoulli$(1-t)$ variables. Its Hoeffding components are products of centred indicators with covariance kernel $\kappa$. Lemmas~\ref{lem:dir} and~\ref{lem:bern} of the SM, together with the conservation law, therefore give $\|g_m\|^2=(N-m)^2J_m(f)$ and the corresponding total form, yielding \eqref{eq:contquad}.

\emph{Step 2: from discrete Chebyshev to Jacobi polynomials.} Let $\mathsf b_f(q)=\E f(\Bin(n,q))$ be the Bernstein polynomial of $f$. Since $\E[\Bin(n,q)^{\underline r}]=n^{\underline r}q^r$, $\mathsf b_{t_p}(q)=P_p(1-2q)$, where $P_p$ is the Legendre polynomial of degree $p$. The Bernstein derivative identity gives $d_m[f]=\mathsf b_f^{(m)}/n^{\underline m}$, and the $m$-th derivative of $P_p$ is a multiple of $P^{(m,m)}_{p-m}$; hence
\begin{equation}\label{eq:A-dm}
d_m[t_p]=c_{p,m}p^{(m)}_{p-m},\qquad
c_{p,m}=\frac{(-1)^m(p+m)!}{p!\,n^{\underline m}},
\end{equation}
for $m\le p$, and $d_m[t_p]=0$ for $m>p$.

\emph{Step 3: the kernel $\kappa^m$.} Let $(\mathsf K_mg)(q)=\int_0^1\kappa(q,q')^mg(q')\,\mathrm{d}q'$. Splitting at $q'=q$ shows that $\mathsf K_m$ maps polynomials of degree at most $j$ to $\mathsf w_m$ times polynomials of degree at most $j$. Thus $\mathsf K_m p^{(m)}_j=\mathsf w_m q_j$ with $\deg q_j\le j$. By self-adjointness, $q_j$ is $\mathsf w_m$-orthogonal to all polynomials of degree $<j$, so $q_j$ is proportional to $p^{(m)}_j$. The proportionality constant, obtained from the Rodrigues formula at $q\to0$ (Lemma~\ref{lem:Km} of the SM), is
\begin{equation}\label{eq:A-Keig}
\mathsf K_m p^{(m)}_j=
\frac{m}{(j+m)(j+m+1)}\,\mathsf w_m p^{(m)}_j .
\end{equation}

\emph{Step 4: part (a).} For $m\le\min(p,p')$, \eqref{eq:A-dm}--\eqref{eq:A-Keig} give
\begin{align*}
 & J_m(t_p,t_{p'}) \\
=& \binom Nm c_{p,m}c_{p',m}\frac{m}{p'(p'+1)}
\int_0^1\mathsf w_m\,p^{(m)}_{p-m}p^{(m)}_{p'-m}\,\mathrm{d}q.
\end{align*}
This vanishes for $p\ne p'$ by Jacobi orthogonality. For $p=p'$,
$\int_0^1\mathsf w_m(p^{(m)}_{p-m})^2\mathrm{d}q=(p!)^2/[(2p+1)(p+m)!(p-m)!]$, giving \eqref{eq:Jdiag}; for $m>p$, it vanishes by Step~2.

\emph{Step 5: part (b).} Substituting \eqref{eq:Jdiag} into \eqref{eq:contquad}, the potential form of $t_p$ is a positive common factor times
$\sum_{m=1}^p (N-m)!(p+m)!/[m!(p-m)!]$. The Chu--Vandermonde identity gives
\[
\sum_{m=0}^p\frac{(N-m)!(p+m)!}{m!(p-m)!}
=\frac{(N+p+1)!(N-p)!}{(N+1)!},
\]
so subtracting the $m=0$ term and dividing by the total form of $t_p$, which is $\frac N{N+1}\sum_kt_p(k)^2$ by Step~1, with
$\sum_{k=0}^{N-1}t_p(k)^2=(N+p)!(N-1-p)!/[(2p+1)(N-1)!^2]$ \cite{koekoek2010}, yields \eqref{eq:lambda}. Direct substitution gives $\lambda_1=(N-1)/N$.

Finally, write $N\lambda_p=D_p(1-\varepsilon_p)$, where $D_p=N(N+1)/[p(p+1)]-1$ and $\varepsilon_p=\binom Np/\binom{N+p+1}p$. The inequality $\lambda_{p+1}<\lambda_p$ is equivalent to
\[
\frac{1-\varepsilon_p}{\varepsilon_p}>
\frac{p(p+1)(N-p-1)}{N(N+1)}.
\]
Since
$\varepsilon_p^{-1}=\prod_{i=0}^{p-1}(1+\frac{p+1}{N-i})
\ge1+\frac{p(p+1)}N$,
the left-hand side is at least $p(p+1)/N$, which is strictly larger than the right-hand side. This proves the claim.
 \hfill$\square$

\section{Proof of Theorem~\ref{thm:finiteK}}\label{app:B}
This appendix contains the proof arguments for Theorem~\ref{thm:finiteK}, following the four stages of
Section~\ref{sec:finiteK}. Lengthy algebraic expansions, the exact symbolic computations
for $K=3,4$ and the rational certificates are in Appendix~\ref{app:finiteK} of the SM, whose steps are cited below.
Throughout, $C=K_N$, the protocol is the index protocol, and $u_i=f(\pos_i)$.

\subsection{The linear payoff is a null vector: proof of Lemma~\ref{lem:B}}\label{app:B-kernel}

The identity $Q_K\one=0$ is equivalent to
$\langle Pu^{\rm lin}-\frac{N-1}{N}u^{\rm lin},u^f\rangle=0$ for every $f$, where $u^{\rm lin}$ and $u^f$ are the flows of the linear game and of $\calG_f(K_N,K)$. By Proposition~\ref{prop:energies}, with $d_i=N-1$, $Pu^{\rm lin}$ is the flow generated by $(N-1)(s_i-\bar s)/K$, while $u^{\rm lin}-Pu^{\rm lin}$ is the flow of the interaction terms in \eqref{eq:rankkernel}. Writing the resulting flow inner products as conditional covariances and using \eqref{eq:cons} gives the equivalent identity
\begin{align}\label{eq:A-B2}
& \sum_i\E[\pos_i]\,\E[f(\pos_i)]-\sum_{r=0}^{N-1}rf(r) \nonumber\\
=& \frac{N+1}{K}\sum_i\Cov\bigl(s_i,f(\pos_i)\bigr).    
\end{align}

Both sides are linear in $f$, so it suffices to verify \eqref{eq:A-B2} for $f(r)=x^r$ for $N$ distinct values of $x$. Conditional on $s_i$, the events that the other agents precede $i$ are independent, with probabilities $(K-s_i)/K$ or $(K-1-s_i)/K$ according to the protocol orientation. The resulting generating functions are products $\mathsf A_a^{\,i}\mathsf B_a^{\,N-1-i}$ with $\mathsf A_a=\mathsf B_{a+1}$; the telescoping calculation in Step~\ref{step:B} of the SM proves \eqref{eq:A-B2}. Hence $Q_K\one=0$.

Finally, Proposition~\ref{prop:energies} on $K_N$, where $d_i=N-1$, gives the Rayleigh quotient
$\rho_\pot=(N-1)/N$.
 \hfill$\square$

\subsection{Stage 1: splitting by interaction order}\label{app:B-orders}
By Lemma~\ref{lem:numop} and \eqref{eq:cons}, $g_T=\sum_{i\in T}[u_i]_T$, and $Q_K=K^{N+1}\sum_{m\ge1}Q^{(m)}_K$ with
\begin{equation}\label{eq:A-Qm}
Q^{(m)}_K=\sum_{|T|=m}\Big[\tfrac{N-1}N\sum_{i\in T}\|[u_i]_T\|^2-\tfrac1m\Big\|\sum_{i\in T}[u_i]_T\Big\|^2\Big].
\end{equation}

\begin{proposition}[Higher orders]\label{prop:A-orders}
For $N\ge3$, $K\ge2$ and every $T$ with $|T|=m$, $\|\sum_{i\in T}[u_i]_T\|^2\le2\sum_{i\in T}\|[u_i]_T\|^2$; hence
$Q^{(m)}_K\succeq0$ for all $m\ge3$.
\end{proposition}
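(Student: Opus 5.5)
The final claim is a short coefficient comparison, so the real work is the norm inequality for a fixed $T$. Given $\|\sum_{i\in T}[u_i]_T\|^2\le2\sum_{i\in T}\|[u_i]_T\|^2$, each bracket in \eqref{eq:A-Qm} is at least $\big(\tfrac{N-1}{N}-\tfrac2m\big)\sum_{i\in T}\|[u_i]_T\|^2$. For $m\ge3$ and $N\ge3$ we have $\tfrac2m\le\tfrac23\le\tfrac{N-1}N$, so $Q^{(m)}_K\succeq0$. Note that $m\le N$ always, so for $N=3$ only $m=3$ occurs and the constant $\tfrac23$ is exactly tight there. It therefore remains to show that the cross terms $\sum_{i\ne k\in T}\langle[u_i]_T,[u_k]_T\rangle$ are at most $\sum_{i\in T}\|[u_i]_T\|^2$.

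To describe $\Phi_i:=[u_i]_T$ concretely, I first average over $s_{T^c}$. Then $A_{T^c}u_i$ is $\E f(\pos_i)$ conditional on $s_T$. Given $s_i$, the agents outside $T$ precede $i$ independently, with probabilities depending only on $s_i$ and on the index side. Hence $A_{T^c}u_i=F_i\big(s_i,\sum_{j\in T\setminus i}e_{ji}\big)$, where $e_{ji}=\one[j\text{ served before }i]$. Writing $F_i(s_i,\cdot)$ in its Newton (falling-factorial) basis in the indicators, $\Phi_i$ becomes a polynomial in the centred variables of the $e_{ji}$, $j\in T\setminus i$, with coefficients depending on $s_i$. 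Only the top-order terms, those involving every $j\in T\setminus i$ together with $s_i$, survive the projection onto the $T$-component. The key structural fact is $e_{ik}+e_{ki}=1$ on every edge, for every protocol, which is the same complementarity used in Proposition~\ref{prop:energies}.

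Next I fix $i\ne k$ and split $\langle\Phi_i,\Phi_k\rangle$ into two parts. The first is the part in which both $\Phi_i$ and $\Phi_k$ use the indicator of the shared edge $\{i,k\}$. Because $e_{ki}=1-e_{ik}$, this part is $-\Var(\text{edge indicator})$ times a product of like factors from the remaining variables. I expect to show it is nonpositive by conditioning on $s_{T\setminus\{i,k\}}$, where the remaining factors become a Gram-type product. The second part collects the remaining interactions through third agents $j\in T\setminus\{i,k\}$. I would bound it by Cauchy--Schwarz and AM--GM against $\tfrac1{m-1}(\|\Phi_i\|^2+\|\Phi_k\|^2)$. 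Summed over ordered pairs, this gives a total of at most $\sum_i\|\Phi_i\|^2$, which is the factor $2$ in the claim.

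The main obstacle is this last bookkeeping. The $T$-component is a full interaction of all $m$ variables, so $\Phi_i$ and $\Phi_k$ share every variable, not only the edge $\{i,k\}$. The risk is that the third-agent cross terms are not small enough for the AM--GM bound. If the direct split fails, my first fallback is to write $\Phi_i=\E[\,\cdot\mid s_T]$ of a product over $j\in T\setminus i$ of centred complementary indicators. I would then prove $\sum_{i\ne k}\langle\Phi_i,\Phi_k\rangle\le0$ outright via a conditional negative-association argument: given $s_T$, the comparisons form a tournament, and indicators of opposite orientation are negatively correlated. That would give the stronger constant $1$ in place of $2$, and the stated bound would follow a fortiori.
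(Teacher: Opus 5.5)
There is a genuine gap at the cross-term step; your reduction to $\sum_{i\ne k}\langle[u_i]_T,[u_k]_T\rangle\le\sum_i\|[u_i]_T\|^2$ and the coefficient comparison are fine.

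\textbf{The proposed split does not exist.} The cross term $\langle[u_i]_T,[u_k]_T\rangle$ does not separate additively into a shared-edge part and a third-agent part, because every term of $[u_i]_T$ contains both the factor of the edge $\{i,k\}$ and one factor for each $j\in T\setminus\{i,k\}$. Write $[u_i]_T=(I-A_i)F_i$ with $F_i=\Psi^{(i)}D^{(i)}(s_i)$ as in Lemma~\ref{lem:struct}. Then $\langle[u_i]_T,[u_k]_T\rangle=\langle F_i,F_k\rangle$, since each $F$ is centred in the other's coordinate. Conditioning on $(s_i,s_k)=(a,b)$ gives a single bilinear form $K^{-2}\sum_{a,b}D^{(i)}(a)D^{(k)}(b)\Omega_{ik}(a,b)$. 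In it your two ingredients enter multiplicatively. The complementary edge factor is $\le0$, and each third agent contributes $\dk(t^{(i)}_j(a),t^{(k)}_j(b))\ge0$, a covariance of two increasing indicators of $s_j$. So there is no separate third-agent term to bound by Cauchy--Schwarz, and the kernel $\Omega_{ik}$ is entrywise nonpositive.

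\textbf{Nonpositive kernel does not give a nonpositive cross term.} The coefficients $D^{(i)}(a)=\E[\Delta^{m-1}f(B^{(a)}_{\notin T})]$ and $D^{(k)}(b)$ are different smoothings of $\Delta^{m-1}f$ and can have opposite signs. In fact both your primary claim and your fallback (constant $1$) are false. Take $K=2$, $N=4$, $T=\{0,1,2\}$ and $f(0),\dots,f(3)=0,0,1,-1$. Then $[u_1]_T=0$, $[u_0]_T=-D^{(0)}(0)\,\Pi$ and $[u_2]_T=D^{(2)}(1)\,\Pi$, where $\Pi=\prod_{j\in T}(\one[s_j=1]-\tfrac12)$. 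Here $D^{(2)}(1)=\Delta^2f(0)=1$ and $D^{(0)}(0)=\tfrac12(\Delta^2f(0)+\Delta^2f(1))=-1$. So $[u_0]_T=[u_2]_T=\Pi$, the cross term equals $+\|\Pi\|^2>0$, and $\|\sum_i[u_i]_T\|^2=2\sum_i\|[u_i]_T\|^2$. No negative-association argument can help: given $s_T$ the comparisons are deterministic, and the signs come from $f$.

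\textbf{Pairwise budgets cannot close the argument.} The same example defeats any uniform pairwise AM--GM budget, since this single pair must absorb the entire allowance $\sum_i\|[u_i]_T\|^2$. Separately, your bound $\frac1{m-1}(\|\Phi_i\|^2+\|\Phi_k\|^2)$, summed over ordered pairs, gives $2\sum_i\|\Phi_i\|^2$, i.e.\ constant $3$ rather than $2$.

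\textbf{The missing idea is a global comparison with absolute values.} Replace each $D^{(i)}$ by $|D^{(i)}|$, giving $\tilde X_i$. Because $\Omega_{ik}\le0$ entrywise, $\langle X_i,X_k\rangle\le-\langle\tilde X_i,\tilde X_k\rangle$. Hence
$\sum_{i\ne k}\langle X_i,X_k\rangle\le\sum_i\|\tilde X_i\|^2-\|\sum_i\tilde X_i\|^2\le\sum_i\|\tilde X_i\|^2$.
Finally $\|\tilde X_i\|\le\|X_i\|$, because $\|F_i\|^2$ depends only on the squares $D^{(i)}(a)^2$, while the subtracted $\|A_iF_i\|^2$ has the nonnegative kernel $\prod_j\dk$. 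The factor $2$ is exactly the cost of passing to these absolute values.
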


\begin{proof}
Write $X_i=[u_i]_T=(I-A_i)[\Psi^{(i)}D^{(i)}(s_i)]$. By Lemma~\ref{lem:struct} of the SM, for $i\ne k$ the bilinear kernel of $\langle X_i,X_k\rangle$ in $(D^{(i)},D^{(k)})$ is entrywise nonpositive, while the kernel subtracted in $\|X_i\|^2$ is entrywise nonnegative. Let $\tilde X_i$ be obtained by replacing $D^{(i)}$ with $|D^{(i)}|$. These sign properties imply
$\langle X_i,X_k\rangle\le-\langle\tilde X_i,\tilde X_k\rangle$ for $i\ne k$ and $\|\tilde X_i\|\le\|X_i\|$. Hence
\[
\begin{aligned}
\sum_{i\ne k}\langle X_i,X_k\rangle
&\le-\sum_{i\ne k}\langle\tilde X_i,\tilde X_k\rangle
=\sum_i\|\tilde X_i\|^2-\Big\|\sum_i\tilde X_i\Big\|^2\\
&\le\sum_i\|X_i\|^2.
\end{aligned}
\]
Therefore $\|\sum_{i\in T}X_i\|^2\le2\sum_{i\in T}\|X_i\|^2$. Consequently,
\[
Q_K^{(m)}
\ge\big(\frac{N-1}{N}-\frac2m\big)\sum_{|T|=m}\sum_{i\in T}\|[u_i]_T\|^2\ge0,
\]
since $N,m\ge3$. Thus $Q_K^{(m)}\succeq0$ for every $m\ge3$.
\end{proof}

The argument uses no property of $f$. It remains to show $Q^{(1)}_K+Q^{(2)}_K\succeq0$, where $Q^{(1)}_K\preceq0$.

\subsection{Stage 2: reduction to bond variables}\label{app:B-bond}
For $k=0,\dots,N-2$ and a level $a$, define the \emph{bond vector}
$\hat D_k(a)=\E\big[\Delta f\big(\Bin(k,\tfrac{K-a}K)+\Bin(N-2-k,\tfrac{K-1-a}K)\big)\big]$, and let $\hat D(f)=(\hat D_k)_{k=0}^{N-2}$. Let $\mu_k=\frac{N-1-k}N$ and
$\nu_k=\frac{k+1}N$. For a general array $D=(D_k)_{k=0}^{N-2}$ with $D_k\in\R^K$, write $G_k,H_k$ for the
inclusive and exclusive cumulative sums of $D_k$, as in Section~\ref{sec:finiteK}.

\begin{proposition}[Two-body reduction]\label{prop:A-bond}
For every payoff $f$,
\[
Q^{(1)}_K(f)+Q^{(2)}_K(f)=\fB_{N,K}(\hat D(f)),
\]
where $\fB_{N,K}$ is the quadratic form in arrays $D$ defined in Step~\ref{step:bond} of the SM. With
\[
Z_k=\frac{N-2}{2}[\mu_k(\calD-2\Var)(H_k)+\nu_k(\calD-2\Var)(G_k)],
\]
it satisfies
\begin{equation}\label{eq:A-SOS}
K^2\fB_{N,K}(D)
\!=\!\sum_k Z_k+\mathrm{Sq}(D)
-\sum_k(2N\mu_k\nu_k-\tfrac12)\Var(D_k),
\end{equation}
where $\mathrm{Sq}(D)\ge0$ is the sum of variances
\begin{equation}\label{eq:A-Sq}
\begin{aligned}
\mathrm{Sq}(D)=\tfrac1{2N}\sum_{k<l}\big[&
(N{-}1{-}l)\Var(H_k-H_l+D_l)\\[-2pt]
&+(k{+}1)\Var(G_k-G_l+D_k)\\
&+(l{-}k)\Var(H_k-G_l)\big].
\end{aligned}
\end{equation}
Moreover, $\mathrm{Sq}(D)=0$ iff $D\in\operatorname{span}(\one\otimes\one)$, and
$\hat D(f)\in\operatorname{span}(\one\otimes\one)$ iff $f$ is affine.
\end{proposition}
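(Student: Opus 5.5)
We will compute $Q^{(1)}_K$ and $Q^{(2)}_K$ explicitly from \eqref{eq:A-Qm}, express them in bond variables, and then reorganise the result into the sum-of-squares form \eqref{eq:A-SOS}.

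\emph{Hoeffding components.} On $K_N$ under the index protocol, conditional on $s_i=a$ the events ``$j$ precedes $i$'' are independent Bernoulli variables. Their probabilities are $(K-a)/K$ for $j<i$ and $(K-1-a)/K$ for $j>i$. Hence $\pos_i\mid s_i=a$ is a sum $\Bin(i,\frac{K-a}K)+\Bin(N-1-i,\frac{K-1-a}K)$.

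The order-one component $[u_i]_{\{i\}}$ is the centred conditional mean $\E[f(\pos_i)\mid s_i]$. For the order-two component $[u_i]_{\{i,j\}}$, we condition on $(s_i,s_j)$. Removing agent $j$ from the binomial sum gives
\[
f(\pos_i)=f(R)+\one[j\text{ precedes }i]\,\Delta f(R),
\]
where $R$ is the position among the remaining $N-2$ agents. After double centring, $[u_i]_{\{i,j\}}$ is $(I-A_i)(I-A_j)$ applied to $\one[j\to i\text{ wins}](s_i,s_j)\,\hat D_k(s_i)$. Here $\hat D_k$ is exactly the bond vector, with $k=i$ or $i-1$ depending on whether $j>i$ or $j<i$.

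The order-one component is similarly a cumulative sum of bond vectors: differencing $\E[f(\pos_i)\mid s_i=a]$ in $a$ produces a $\Delta f$ inside the binomial expectation. So both terms are quadratic forms in $\hat D=(\hat D_k)_k$. The indicator $\one[s_j>s_i]$ or $\one[s_j\ge s_i]$, paired with $D(s_i)$, produces the cumulative sums $H$ (strict) and $G$ (inclusive) when integrated over $s_j$. This is where $\calD$ and $\Var$ of $H_k,G_k$ enter.

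\emph{Collecting terms.} Summing over pairs $\{i,j\}$ and using $\|[u_i]_T+[u_j]_T\|^2=\|[u_i]_T\|^2+\|[u_j]_T\|^2+2\langle\cdot,\cdot\rangle$ writes $Q^{(2)}_K$ as diagonal terms in $H_k,G_k$ minus cross terms between different bonds $k<l$. The counts $N-1-l$, $k+1$ and $l-k$ count the third agents placed before, after or between the two indices.

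Next I would complete squares, pairing the cross terms with parts of the diagonal terms and of $Q^{(1)}_K$. The goal is to produce the three variance families of $\mathrm{Sq}(D)$ in \eqref{eq:A-Sq}. What remains of the diagonal is then compared with $Z_k$, and the residual should be the single term $-(2N\mu_k\nu_k-\frac12)\Var(D_k)$. The weights $\mu_k,\nu_k$ arise as the fractions of agents $j$ with $j>i$ versus $j<i$ when bond $k$ is used.

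The \textbf{main obstacle} is this bookkeeping: finding the right grouping so that all inter-bond cross terms are absorbed exactly into $\mathrm{Sq}$. I would first determine the grouping by symbolic computation for small $N$, e.g.\ $N=3,4$, matching coefficients of $\langle D_k(a),D_l(b)\rangle$ monomials, and then verify the general identity coefficient by coefficient.

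\emph{Kernel statements.} Each summand of $\mathrm{Sq}$ is nonnegative. Vanishing forces $H_k-G_l$, $H_k-H_l+D_l$ and $G_k-G_l+D_k$ to be constant in the level for all $k<l$. Taking differences in $a$ forces each $D_k$ to be constant and all $D_k$ to be equal, i.e.\ $D\in\operatorname{span}(\one\otimes\one)$; conversely such $D$ gives zero.

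Finally, $\hat D(f)$ is constant in $a$ and $k$ iff $\E\Delta f$ under every such binomial mixture is the same number. Letting $a$ range and $k$ vary, these mixtures have independent Bernstein-type coefficients, and Bernstein polynomials of degree $N-2$ determine $\Delta f$. So this happens iff $\Delta f$ is constant, i.e.\ $f$ is affine.
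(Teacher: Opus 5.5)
Your first step follows the paper's route: conditional independence given $s_i$, the one-flip identity, and $[u_i]_{\{i,j\}}$ as a double-centred indicator times $\hat D_k(s_i)$ with $k=i$ or $i-1$. Your kernel argument for $\mathrm{Sq}$ is also the paper's differencing argument. But the proposal stops where the content of the proposition begins.

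\emph{The reduction is not computed.} To identify $Q^{(1)}_K+Q^{(2)}_K$ with $\fB_{N,K}$ you have to compute, for $i<j$,
\[
\|[u_i]_{\{i,j\}}\|^2=K^{-2}(\calD-\Var)(H_i),\qquad \|[u_j]_{\{i,j\}}\|^2=K^{-2}(\calD-\Var)(G_{j-1}),
\]
and $\langle[u_i]_{\{i,j\}},[u_j]_{\{i,j\}}\rangle=-K^{-2}\Cov(H_i,G_{j-1})$. The sign of the last one comes from the complementarity $\one[s_i\ge s_j]=1-\one[s_j>s_i]$, which you never invoke. You also need $Q^{(1)}_K=-\frac1{NK^2}\sum_j\Var(\Omega_j)$ with $\Omega_j=\sum_{k<j}H_k+\sum_{k\ge j}G_k$.

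\emph{Your bookkeeping is misattributed.} In $Q^{(2)}_K$ each pair $i<j$ contributes exactly one cross term, indexed by $(k,l)=(i,j-1)$. So $k=l$ occurs, and no third agent enters. The counts $k+1$, $l-k$, $N-1-l$ come instead from the vertex variables: for bonds $k<l$, the vertex $j$ sees $(G_k,G_l)$, $(H_k,G_l)$ or $(H_k,H_l)$ for that many values of $j$. This is how they enter, through $\sum_j\Var(\Omega_j)$.

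\emph{The main gap is that \eqref{eq:A-SOS} is not proved.} You call it the main obstacle and propose to guess the grouping at $N=3,4$ and verify it later. The grouping is already given by \eqref{eq:A-Sq}, and what is needed is the verification, which is missing.

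What is missing is the mechanism that turns the negative first-order term into squares. Coefficient matching would in principle work for general $N$, not only small cases. After the $\calD$-parts of term~I cancel against $\sum_kZ_k$, what remains is a covariance identity that holds formally with $H_k,D_k$ free and $G_k=H_k+D_k$. The paper proves it in four moves.
\begin{enumerate}
\item Write term~I over vertices as $\frac{N-2}2\E_j\sum_k(\calD-\Var)(Z_{jk})$, with $Z_{jk}\in\{H_k,G_k\}$ and $\Omega_j=\sum_kZ_{jk}$.
\item For each $j$, apply the spread identity $n\sum_k\Var(z_k)-\Var(\sum_kz_k)=\sum_{k<l}\Var(z_k-z_l)$ with $n=N-1$. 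This yields $\sum_kZ_k$; the pairwise variances $\Var(G_k-G_l)$, $\Var(H_k-G_l)$, $\Var(H_k-H_l)$ with multiplicities $k+1$, $l-k$, $N-1-l$; and a term $+\frac12\Var(\Omega_j)$, which cancels half of the first-order term.
\item Using $G=H+D$, $\mu_k+\nu_k=1$ and $\nu_k+\mu_l=1-\frac{l-k}N$, the remainder collapses to $\sum_{k<l}[\mu_l\Cov(H_k,D_l)-\nu_k\Cov(G_l,D_k)]$ plus diagonal terms.
\item These cross terms are absorbed into $\Var(H_k-H_l+D_l)$ and $\Var(G_k-G_l+D_k)$ via $2\Cov(H,D)=\Var G-\Var H-\Var D$. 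One then checks that the coefficient of $\Var G_k-\Var H_k$ vanishes and that of $\Var(D_k)$ is $-(2N\mu_k\nu_k-\frac12)$.
\end{enumerate}

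Two smaller points. First, the kernel claim for $\mathrm{Sq}$ needs $N\ge3$; for $N=2$ the sum is empty. Second, your Bernstein argument for the last claim is only an assertion. For fixed $k$ the values $\hat D_k(a)$ are not a Bernstein polynomial known as a polynomial, since only $K$ values of $a$ are available, so the claimed independence must be shown. The simplest route is the top level $a=K-1$: there $\hat D_k(K-1)=\E\,\Delta f(\Bin(k,\frac1K))$, a triangular system in $\Delta f(0),\dots,\Delta f(N-2)$ with diagonal entries $K^{-k}\ne0$.
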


\begin{proof}
Given $s_i=a$, the events that opponents precede $i$ are independent, with probability $(K-a)/K$ for opponents favoured by the protocol and $(K-1-a)/K$ otherwise.

\emph{First order.} Lowering $a$ by one increases each such probability by $1/K$. Since
$\E f(R+\mathrm{Bern}(p+1/K))-\E f(R+\mathrm{Bern}(p))
=K^{-1}\E\Delta f(R)$,
one obtains, up to an additive constant,
$\varphi_j(a):=\E[u_j\mid s_j=a]=-K^{-1}\Omega_j$, where
$\Omega_j=\sum_{k<j}H_k+\sum_{k\ge j}G_k$. Hence
$Q_K^{(1)}=-\frac1{NK^2}\sum_j\Var(\Omega_j)$.

\emph{Second order.} Conditioning also on $s_j=b$ isolates the indicator of whether $j$ precedes $i$:
$\E[u_i\mid a,b]=\chi(a)+\one[j\prec i]\,\hat D_k(a)$, with $k\in\{i-1,i\}$. Thus $[u_i]_{\{i,j\}}$ is a centred Bernoulli indicator times the corresponding bond vector. Substitution into \eqref{eq:A-Qm} gives the diagonal Bernoulli-covariance terms and, by complementarity of the two orientations, cross terms $-K^{-2}\Cov(H_k,G_l)$. Summation by parts and
$n\sum_k\Var(z_k)-\Var(\sum_k z_k)=\sum_{k<l}\Var(z_k-z_l)$
then yield \eqref{eq:A-SOS}--\eqref{eq:A-Sq}; the detailed rearrangement is given in Steps~\ref{step:bond}--\ref{step:kerSq} of the SM.

Since every coefficient in \eqref{eq:A-Sq} is nonnegative, $\mathrm{Sq}=0$ iff all displayed variance terms vanish. Step~\ref{step:kerSq} of the SM shows that this happens iff $D\in\operatorname{span}(\one\otimes\one)$. Finally, by the definition of $\hat D(f)$, this occurs iff all second differences of $f$ vanish, equivalently iff $f$ is affine.
\end{proof}

\subsection{Stage 3: the level inequality}\label{app:B-level}
By Lemma~\ref{lem:level}, $Z_k\ge\frac{N-2}2(K-2)\mu_k\nu_k\Var(D_k)$, so the only negative term of \eqref{eq:A-SOS}
is compensated bond by bond. The proof of Lemma~\ref{lem:level} works in the step coordinates $c_m=D(m)-D(m-1)$. In
these coordinates the difference of the two sides of \eqref{eq:level} is the Hadamard product of the Green's function
$\min(m,m')(K-\max(m,m'))/K$ of the path with a one-pair matrix minus a rank-one term. The inverse of the one-pair
matrix is a path Laplacian with Robin ends, whose row sums show that the second factor is positive semidefinite (and singular), and
the Schur product theorem concludes (Lemma~\ref{lem:Lstar} of the SM). The constant $K-2$ cannot be improved, and for
$K=2$ the inequality gives no margin.

\subsection{Stage 4: the Perron argument and three branches}\label{app:B-perron}
Let $\varrho(\mu):=\mu\nu+\tfrac12c^\star(\mu)$, where $c^\star(\mu)$ is the best constant in \eqref{eq:level} for the
given $\mu$. Retaining part of $\mathrm{Sq}$, positivity of $\fB_{N,K}$ reduces to $NL-2\operatorname{diag}(1/\theta_k)\succeq0$, where $L$ is the Dirichlet Laplacian of the path of bonds and
$\theta_k=(N-2)\varrho(\mu_k)-N\mu_k\nu_k+\frac12$. The vector $v_k=(k+1)(N-1-k)$ satisfies $Lv=2\cdot\one$ exactly,
so by the $M$-matrix criterion it suffices that, for every bond,
\begin{equation}\label{eq:A-T}
(N-2)\,\varrho(\mu_k)-2N\mu_k\nu_k+\tfrac12\ \ge\ 0 .
\end{equation}
Lemma~\ref{lem:level} gives $\varrho(\mu)\ge K\mu\nu/2$, and with $\mu\nu\le\frac14$ condition \eqref{eq:A-T} holds
whenever $(N-2)(K-4)\ge4$. This covers $K\ge5$ up to finitely many $N$. At $K=4$ the bound from Lemma~\ref{lem:level}
never suffices, and one uses the true $\varrho(\mu)$, the smallest root of an explicit cubic; root isolation by Sturm
sequences certifies $\varrho\ge2\mu\nu+5(\mu\nu)^3$ and hence \eqref{eq:A-T} for $N\ge10$. At $K=3$ the level space
modulo constants is two-dimensional; an explicit congruence removes the same-bond coupling and leaves a sum of squares
plus a $Z$-matrix with a positive Perron vector for $N\ge17$, the required polynomial identities in $N$ being established
by exact computer algebra. The finitely many remaining pairs $(K,N)$ are settled by $LDL^\top$ factorisations in exact
rational arithmetic (Table~\ref{tab:A-coverage}). Steps~\ref{step:Sqlow}--\ref{step:certs} of the SM give the details.

\begin{table}[t]\centering
\caption{Coverage of Theorem~\ref{thm:finiteK} for $K\ge3$, $N\ge3$: by the large-$N$ branch of
Appendix~\ref{app:B-perron} or by an exact rational certificate.}\label{tab:A-coverage}
\small
\begin{tabular}{@{}lccc@{}}\toprule
$K$ & all large $N$ & bond-space cert. & image cert.\\\midrule
$3$ & $N\ge17$ & -- & $4\le N\le16^{\ast}$\\
$4$ & $N\ge10$ & $5\le N\le9$ & $N=3,4$\\
$5$ & $N\ge6$ & $N=4,5$ & $N=3$\\
$6,7$ & $N\ge4$ & $N=3$ & --\\
$\ge8$ & $N\ge3$ & -- & --\\\bottomrule
\end{tabular}\\[2pt]
{\footnotesize $^{\ast}$At $(K,N)=(3,3)$ the two-body form vanishes identically on the image, and
$Q_3=\frac83\big(\begin{smallmatrix}1&-1\\-1&1\end{smallmatrix}\big)$ is computed exactly from the definition.}
\end{table}

\subsection{Assembly}\label{app:B-assembly}
Stages 1 to 4 give $Q_K\succeq0$ for all $K\ge3$ and $N\ge3$. The kernel statement $\ker Q_K=\operatorname{span}(\one)$ follows
from Lemma~\ref{lem:B}, the kernel of $\mathrm{Sq}$ and the last sentence of Proposition~\ref{prop:A-bond}
(Step~\ref{step:assembly} of the SM). For $N=2$ the space of increments is one-dimensional and the statement follows from
Lemma~\ref{lem:B}. The counterexample for $K=2$ is verified in exact rational arithmetic (Appendix~\ref{app:repro} of the SM).
\hfill$\square$

\section*{References}
\bibliographystyle{IEEEtran}
\bibliography{sample}

% Generated by IEEEtran.bst, version: 1.14 (2015/08/26)
\begin{thebibliography}{10}
\providecommand{\url}[1]{#1}
\csname url@samestyle\endcsname
\providecommand{\newblock}{\relax}
\providecommand{\bibinfo}[2]{#2}
\providecommand{\BIBentrySTDinterwordspacing}{\spaceskip=0pt\relax}
\providecommand{\BIBentryALTinterwordstretchfactor}{4}
\providecommand{\BIBentryALTinterwordspacing}{\spaceskip=\fontdimen2\font plus
\BIBentryALTinterwordstretchfactor\fontdimen3\font minus \fontdimen4\font\relax}
\providecommand{\BIBforeignlanguage}[2]{{%
\expandafter\ifx\csname l@#1\endcsname\relax
\typeout{** WARNING: IEEEtran.bst: No hyphenation pattern has been}%
\typeout{** loaded for the language `#1'. Using the pattern for}%
\typeout{** the default language instead.}%
\else
\language=\csname l@#1\endcsname
\fi
#2}}
\providecommand{\BIBdecl}{\relax}
\BIBdecl

\bibitem{dresner2008}
K.~Dresner and P.~Stone, ``A multiagent approach to autonomous intersection management,'' \emph{Journal of artificial intelligence research}, vol.~31, pp. 591--656, 2008.

\bibitem{levin2017}
M.~W. Levin and D.~Rey, ``Conflict-point formulation of intersection control for autonomous vehicles,'' \emph{Transportation Research Part C: Emerging Technologies}, vol.~85, pp. 528--547, 2017.

\bibitem{xu2018}
B.~Xu, S.~E. Li, Y.~Bian, S.~Li, X.~J. Ban, J.~Wang, and K.~Li, ``Distributed conflict-free cooperation for multiple connected vehicles at unsignalized intersections,'' \emph{Transportation Research Part C: Emerging Technologies}, vol.~93, pp. 322--334, 2018.

\bibitem{zanardi2021}
A.~Zanardi, E.~Mion, M.~Bruschetta, S.~Bolognani, A.~Censi, and E.~Frazzoli, ``Urban driving games with lexicographic preferences and socially efficient nash equilibria,'' \emph{IEEE Robotics and Automation Letters}, vol.~6, no.~3, pp. 4978--4985, 2021.

\bibitem{li2023}
N.~Li, Y.~Yao, I.~Kolmanovsky, E.~Atkins, and A.~R. Girard, ``Game-theoretic modeling of multi-vehicle interactions at uncontrolled intersections,'' \emph{IEEE Transactions on Intelligent Transportation Systems}, vol.~23, no.~2, pp. 1428--1442, 2022.

\bibitem{TAVV6081}
Z.~Lin, L.~Wu, Z.~Tian, A.~Lomuscio, and J.~Lan, ``Scalable and safe multi-agent coordination with reconstructed level-k monte carlo tree search,'' in \emph{Proceedings of the 25th International Conference on AAMAS}, 2026, pp. 394--402.

\bibitem{11130528}
Z.~Lin, J.~Lan, C.~Anagnostopoulos, Z.~Tian, and D.~Flynn, ``Safety-critical multi-agent mcts for mixed traffic coordination at unsignalized intersections,'' \emph{IEEE Transactions on Intelligent Transportation Systems}, vol.~26, no.~11, pp. 18\,992--19\,006, 2025.

\bibitem{monderer1996}
D.~Monderer and L.~S. Shapley, ``Potential games,'' \emph{Games and economic behavior}, vol.~14, no.~1, pp. 124--143, 1996.

\bibitem{marden2009}
J.~R. Marden, G.~Arslan, and J.~S. Shamma, ``Cooperative control and potential games,'' \emph{IEEE Transactions on Systems, Man, and Cybernetics, Part B (Cybernetics)}, vol.~39, no.~6, pp. 1393--1407, 2009.

\bibitem{blume1993}
L.~E. Blume, ``The statistical mechanics of strategic interaction,'' \emph{Games and Economic Behavior}, vol.~5, no.~3, pp. 387--424, 1993.

\bibitem{marden2012}
J.~R. Marden and J.~S. Shamma, ``Revisiting log-linear learning: Asynchrony, completeness and payoff-based implementation,'' \emph{Games and Economic Behavior}, vol.~75, no.~2, pp. 788--808, 2012.

\bibitem{marden2014efficiency}
J.~R. Marden and T.~Roughgarden, ``Generalized efficiency bounds in distributed resource allocation,'' \emph{IEEE Transactions on Automatic Control}, vol.~59, no.~3, pp. 571--584, 2014.

\bibitem{paccagnan2020utilityI}
D.~Paccagnan, R.~Chandan, and J.~R. Marden, ``Utility design for distributed resource allocation---{Part I}: Characterizing and optimizing the exact price of anarchy,'' \emph{IEEE Transactions on Automatic Control}, vol.~65, no.~11, pp. 4616--4631, 2020.

\bibitem{paccagnan2022utilityII}
D.~Paccagnan and J.~R. Marden, ``Utility design for distributed resource allocation---{Part II}: Applications to submodular, covering, and supermodular problems,'' \emph{IEEE Transactions on Automatic Control}, vol.~67, no.~2, pp. 618--632, 2022.

\bibitem{liu2023}
M.~Liu, I.~Kolmanovsky, H.~E. Tseng, S.~Huang, D.~Filev, and A.~Girard, ``Potential game-based decision-making for autonomous driving,'' \emph{IEEE Transactions on Intelligent Transportation Systems}, vol.~24, no.~8, pp. 8014--8027, 2023.

\bibitem{liu2025}
M.~Liu, H.~E. Tseng, D.~Filev, A.~Girard, and I.~Kolmanovsky, ``Game projection and robustness for game-theoretic autonomous driving,'' \emph{IEEE Transactions on Intelligent Transportation Systems}, vol.~26, no.~3, pp. 3446--3457, 2025.

\bibitem{candogan2011}
O.~Candogan, I.~Menache, A.~Ozdaglar, and P.~A. Parrilo, ``Flows and decompositions of games: Harmonic and potential games,'' \emph{Mathematics of Operations Research}, vol.~36, no.~3, pp. 474--503, 2011.

\bibitem{candogan2013}
O.~Candogan, A.~Ozdaglar, and P.~A. Parrilo, ``Near-potential games: Geometry and dynamics,'' \emph{ACM Transactions on Economics and Computation}, vol.~1, no.~2, pp. 11:1--11:32, 2013.

\bibitem{candogan2013geb}
------, ``Dynamics in near-potential games,'' \emph{Games and Economic Behavior}, vol.~82, pp. 66--90, 2013.

\bibitem{bichler2025}
M.~Bichler, D.~Legacci, P.~Mertikopoulos, M.~Oberlechner, and B.~S.~R. Pradelski, ``Characterizing the convergence of game dynamics via potentialness,'' \emph{Transactions on Machine Learning Research}, 2025.

\bibitem{sandholm2010}
W.~H. Sandholm, ``Decompositions and potentials for normal form games,'' \emph{Games and Economic Behavior}, vol.~70, no.~2, pp. 446--456, 2010.

\bibitem{hwang2011}
S.-H. Hwang and L.~Rey-Bellet, ``Strategic decompositions of normal form games: Zero-sum games and potential games,'' \emph{Games and Economic Behavior}, vol. 122, pp. 370--390, 2020.

\bibitem{arefizadeh2024}
S.~Arefizadeh, A.~Nedi{\'c}, and G.~Dasarathy, ``On characterizations of potential and ordinal potential games,'' 2024, arXiv:2405.06253.

\bibitem{abdou2022}
J.~Abdou, N.~Pnevmatikos, M.~Scarsini, and X.~Venel, ``Decomposition of games: Some strategic considerations,'' \emph{Mathematics of Operations Research}, vol.~47, no.~1, pp. 176--208, 2022.

\bibitem{gilles2026}
R.~P. Gilles, ``Second-order potentials for finite games: Existence, characterisation, and game decomposition,'' 2026, arXiv:2608.01967.

\bibitem{cheng2014finite}
D.~Cheng, ``On finite potential games,'' \emph{Automatica}, vol.~50, no.~7, pp. 1793--1801, 2014.

\bibitem{cheng2015networked}
D.~Cheng, F.~He, H.~Qi, and T.~Xu, ``Modeling, analysis and control of networked evolutionary games,'' \emph{IEEE Transactions on Automatic Control}, vol.~60, no.~9, pp. 2402--2415, 2015.

\bibitem{cheng2016}
D.~Cheng, T.~Liu, K.~Zhang, and H.~Qi, ``On decomposed subspaces of finite games,'' \emph{IEEE Transactions on Automatic Control}, vol.~61, no.~11, pp. 3651--3656, 2016.

\bibitem{hao2022survey}
Y.~Hao and J.-F. Zhang, ``A survey on decomposition of finite strategic-form games,'' in \emph{Complex Systems: Spanning Control and Computational Cybernetics: Foundations}, ser. Studies in Systems, Decision and Control, P.~Shi, J.~Stefanovski, and J.~Kacprzyk, Eds.\hskip 1em plus 0.5em minus 0.4em\relax Cham: Springer, 2022, vol. 414, pp. 363--384.

\bibitem{zhang2017}
K.~Zhang, ``Polynomial representation for orthogonal projections onto subspaces of finite games,'' in \emph{Proc. 36th Chinese Control Conference}, 2017, pp. 11\,267--11\,272.

\bibitem{arditti2020sep}
L.~Arditti, G.~Como, and F.~Fagnani, ``On the separability of functions and games,'' \emph{IEEE Transactions on Control of Network Systems}, vol.~11, no.~2, pp. 831--841, 2024.

\bibitem{guo2024}
X.~Guo, X.~Li, and Y.~Zhang, ``An $\alpha$-potential game framework for {$N$}-player dynamic games,'' 2024, arXiv:2403.16962.

\bibitem{rokade2025}
K.~Rokade, A.~Jain, F.~Parise, V.~Krishnamurthy, and {\'E}.~Tardos, ``Asymmetric network games: $\alpha$-potential function and learning,'' 2025, arXiv:2508.06619.

\bibitem{legacci2024}
D.~Legacci, P.~Mertikopoulos, and B.~Pradelski, ``A geometric decomposition of finite games: Convergence vs. recurrence under exponential weights,'' in \emph{Proceedings of the 41st International Conference on Machine Learning}, ser. Proceedings of Machine Learning Research, vol. 235.\hskip 1em plus 0.5em minus 0.4em\relax Vienna, Austria: PMLR, 2024, pp. 27\,137--27\,173.

\bibitem{legacci2024noregret}
D.~Legacci, P.~Mertikopoulos, C.~H. Papadimitriou, G.~Piliouras, and B.~S.~R. Pradelski, ``No-regret learning in harmonic games: Extrapolation in the face of conflicting interests,'' in \emph{Advances in Neural Information Processing Systems}, vol.~37, 2024.

\bibitem{boundedLLL2026}
Z.~Wang, E.~Akyol, and M.~M. Vasconcelos, ``Learning to coordinate over networks with bounded rationality,'' 2026, arXiv:2604.07751.

\bibitem{maes2010}
C.~Maes and K.~Neto{\v{c}}n{\'y}, ``Rigorous meaning of {M}c{L}ennan ensembles,'' \emph{Journal of Mathematical Physics}, vol.~51, no.~1, p. 015219, 2010.

\bibitem{szabo2015}
G.~Szab{\'o} and A.~Szolnoki, ``Congestion phenomena caused by matching pennies in evolutionary games,'' \emph{Physical Review E}, vol.~91, p. 032110, 2015.

\bibitem{szabo2016}
G.~Szab{\'o} and I.~Borsos, ``Evolutionary potential games on lattices,'' \emph{Physics Reports}, vol. 624, pp. 1--60, 2016.

\bibitem{brandt2009ranking}
F.~Brandt, F.~Fischer, P.~Harrenstein, and Y.~Shoham, ``Ranking games,'' \emph{Artificial Intelligence}, vol. 173, no.~2, pp. 221--239, 2009.

\bibitem{harks2011}
T.~Harks, M.~Klimm, and R.~H. M{\"o}hring, ``Characterizing the existence of potential functions in weighted congestion games,'' \emph{Theory of Computing Systems}, vol.~49, no.~1, pp. 46--70, 2011.

\bibitem{helmers1982}
R.~Helmers, \emph{Edgeworth Expansions for Linear Combinations of Order Statistics}, ser. Mathematical Centre Tracts.\hskip 1em plus 0.5em minus 0.4em\relax Amsterdam: Mathematisch Centrum, 1982, no. 105.

\bibitem{koekoek2010}
R.~Koekoek, P.~A. Lesky, and R.~F. Swarttouw, \emph{Hypergeometric Orthogonal Polynomials and Their $q$-Analogues}.\hskip 1em plus 0.5em minus 0.4em\relax Berlin: Springer, 2010.

\bibitem{strang2022}
A.~Strang, K.~C. Abbott, and P.~J. Thomas, ``The network {HHD}: Quantifying cyclic competition in trait-performance models of tournaments,'' \emph{SIAM Review}, vol.~64, no.~2, pp. 360--391, 2022.

\bibitem{stanley2012}
R.~P. Stanley, \emph{Enumerative Combinatorics, Volume 1}, 2nd~ed.\hskip 1em plus 0.5em minus 0.4em\relax Cambridge University Press, 2012.

\bibitem{gantmacher2002}
F.~R. Gantmacher and M.~G. Krein, \emph{Oscillation Matrices and Kernels and Small Vibrations of Mechanical Systems}.\hskip 1em plus 0.5em minus 0.4em\relax Providence, RI: AMS Chelsea Publishing, 2002, revised ed.

\end{thebibliography}

\vspace{-7 mm}
\begin{IEEEbiography}[{\includegraphics[width=1in,height=1.25in,clip,keepaspectratio]{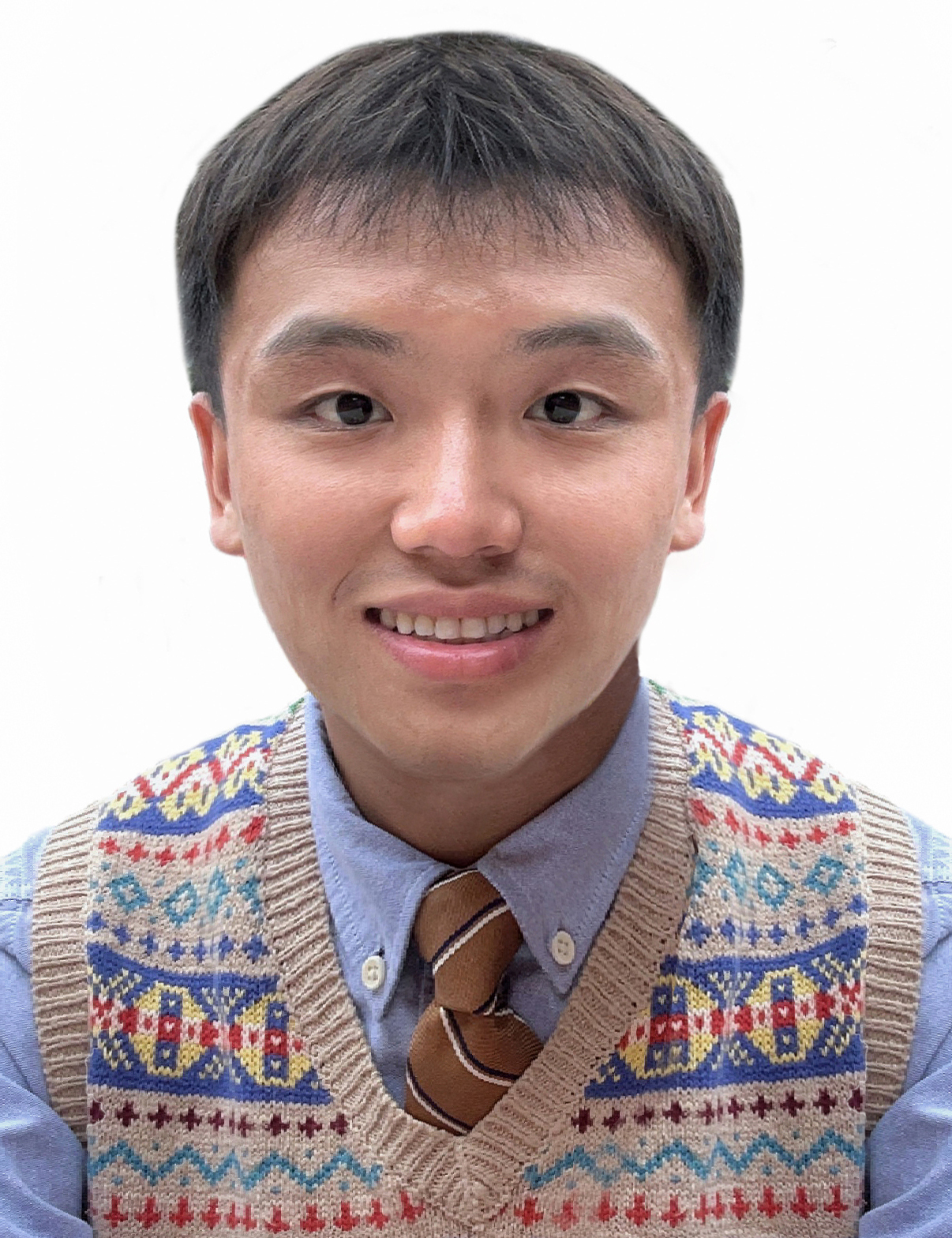}}]{Zhihao Lin}
received the Ph.D. degree in autonomous systems and connectivity from the University of Glasgow, Glasgow, U.K., in 2026. He is currently a JSPS Postdoctoral Fellow at the Graduate School of Information Science and Technology, The University of Osaka, Osaka, Japan. His research interests include multi-agent coordination, game theory and learning in games,
operator-theoretic methods for data-driven control, and safe autonomous driving.
\end{IEEEbiography}

\vspace{-7 mm}
\begin{IEEEbiography}[{\includegraphics[width=1in,height=1.25in,clip,keepaspectratio]{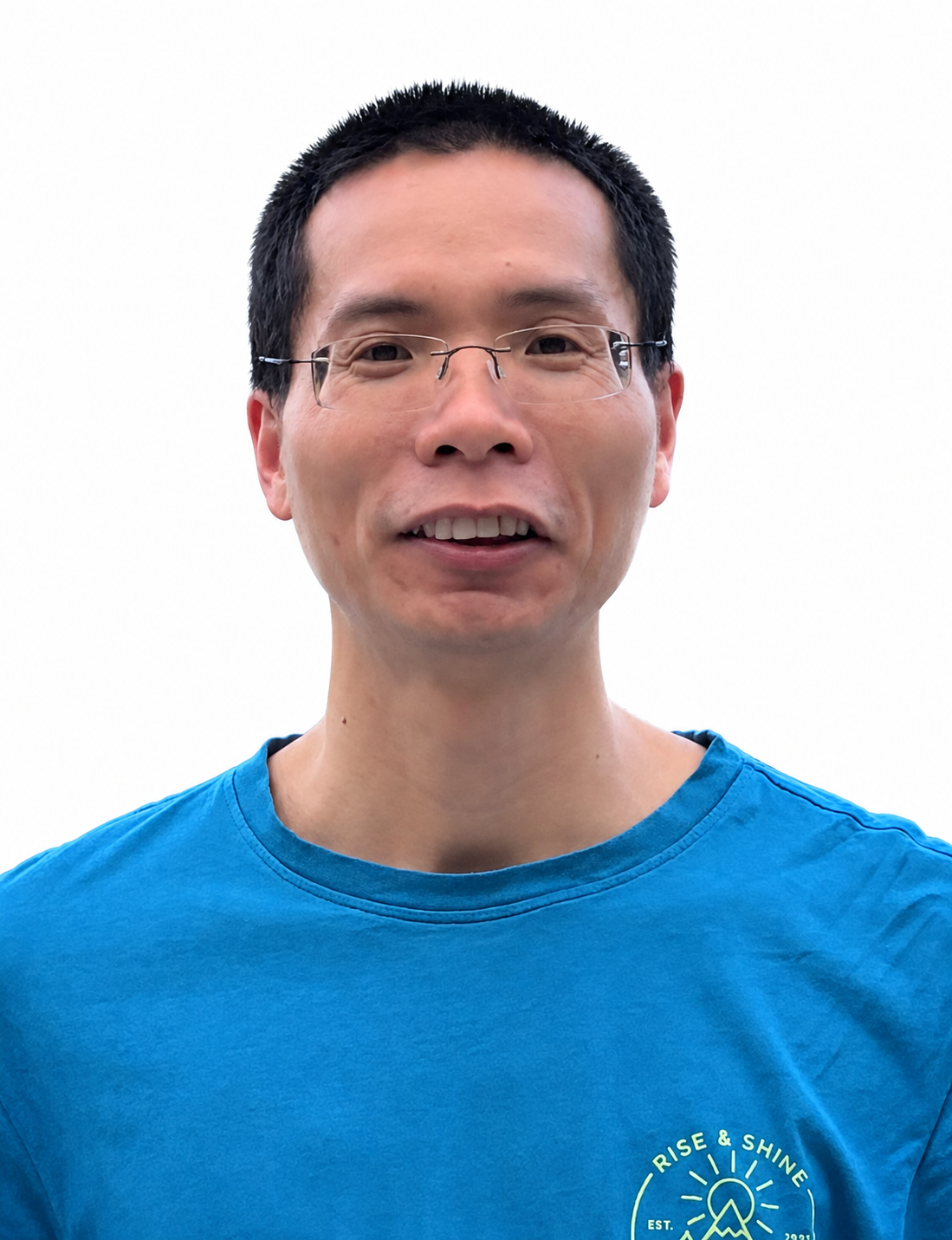}}]{Jianglin Lan}
received the Ph.D. degree in Engineering from the University of Hull in 2017. He is currently a Lecturer at the University of Glasgow and an Honorary Research Fellow with Imperial College London. From 2022 to 2025, he held a Leverhulme Early Career Fellowship at University of Glasgow. He is an Associate Editor for \emph{IEEE Transactions on Intelligent Transportation Systems}, an Editor for \emph{International Journal of Adaptive Control and Signal Processing}, and a member of the IFAC SAFEPROCESS Technical Committee. His research interests include safe AI, and control theory.
\end{IEEEbiography}

\vspace{-7 mm}
\begin{IEEEbiography}[{\includegraphics[width=1in,height=1.25in,clip,keepaspectratio]{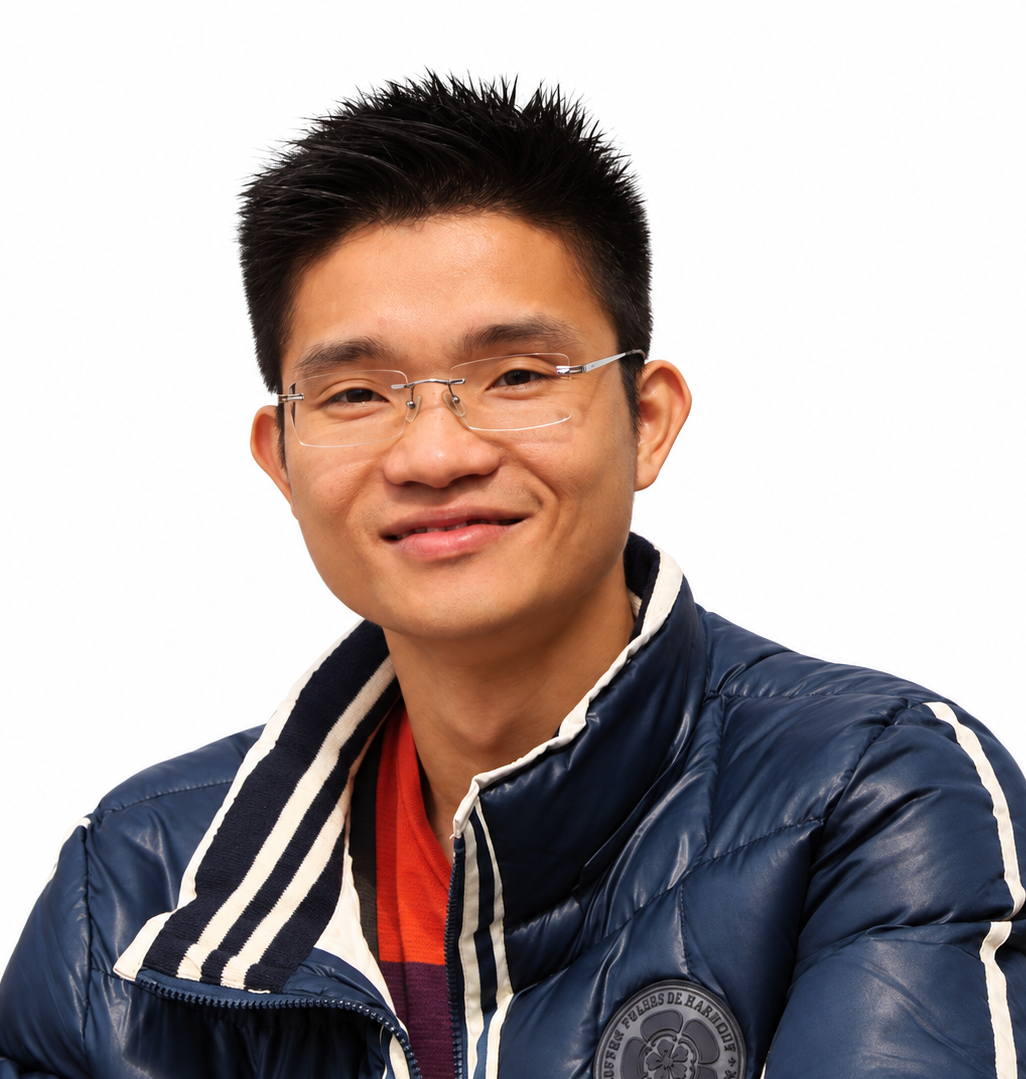}}]{Anh-Tu Nguyen}
(Senior Member, IEEE) received the Ph.D. degree in automatic control from the Universit\'e Polytechnique
Hauts-de-France, Valenciennes, France, in 2013.

He is currently an Associate Professor with INSA Hauts-de-France, Universit\'e Polytechnique Hauts-de-France,
Valenciennes, France. His research interests include robust control and estimation, cybernetic control systems, and
mechatronic applications.

Dr. Nguyen serves as an Associate Editor for several international journals, including the
\textsc{IEEE/ASME Transactions on Mechatronics}, the \textsc{IEEE Transactions on Automation Science and Engineering},
the \textsc{IEEE Transactions on Intelligent Transportation Systems}, \emph{Control Engineering Practice},
\emph{Engineering Applications of Artificial Intelligence}, and \emph{ISA Transactions}.
\end{IEEEbiography}

\vspace{-7 mm}
\begin{IEEEbiography}[{\includegraphics[width=1in,height=1.25in,clip,keepaspectratio]{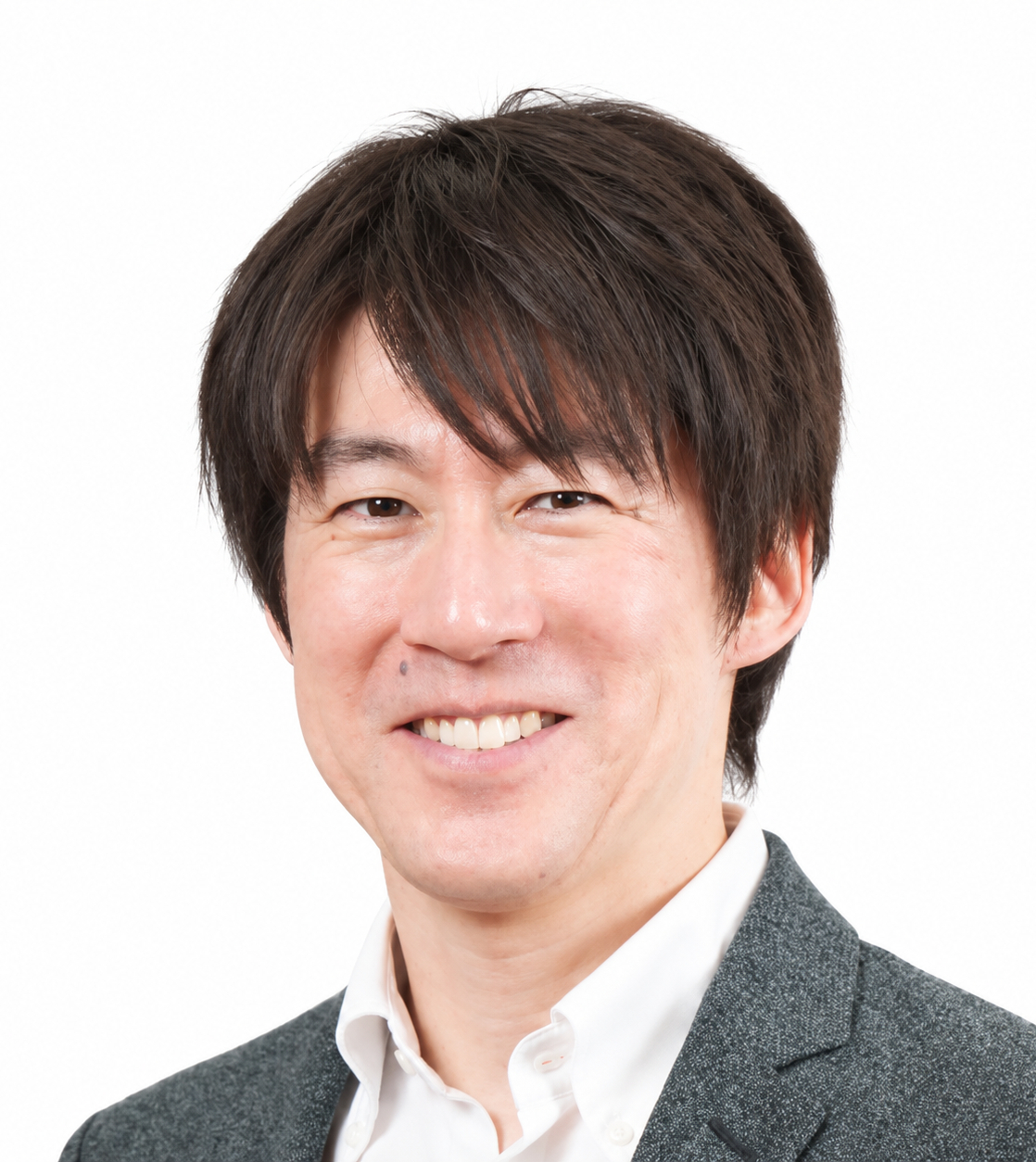}}]{Yoshinobu Kawahara}
received the B.S., M.S., and Ph.D. degrees in engineering
from the University of Tokyo, Tokyo, Japan, in 2003,
2005, and 2008, respectively.

He is currently a Professor with the Graduate School
of Information Science and Technology, The University
of Osaka, Osaka, Japan, and the Team Director of the
Dynamical Systems Learning Team at the RIKEN Center
for Advanced Intelligence Project, Japan. He is also
a Project Manager of the JST Moonshot R\&D Program
(Goal 3). He was previously a Professor with the
Institute of Mathematics for Industry, Kyushu
University, Fukuoka, Japan, and has held visiting
positions at the Max Planck Institute for Intelligent
Systems, T\"ubingen, Germany, and the University of
Washington, Seattle, WA, USA. His research interests
lie at the intersection of machine learning and
dynamical systems, including operator-theoretic methods
and their applications to scientific and engineering
problems.

Dr. Kawahara received the Young Scientists' Prize,
the Commendation for Science and Technology by the
Minister of Education, Culture, Sports, Science and
Technology of Japan, in 2020, and Best Paper Awards
from the Japanese Society for Artificial Intelligence
(2011 and 2020) and the Japanese Neural Network
Society (2020). He currently serves as an Action Editor
for \emph{Transactions on Machine Learning Research}
and \emph{Neural Networks}, and has served as an
Area Chair or Senior Area Chair for conferences
including NeurIPS, ICML, and ICLR.
\end{IEEEbiography}

% ---------------------------------------------------------------
% Supplementary material. Typeset with the main text (one compile,
% no xr/aux); split the PDF at this page for submission.
% ---------------------------------------------------------------
\clearpage
\onecolumn
\setcounter{page}{1}
\renewcommand{\thesection}{S-\Alph{section}}\setcounter{section}{0}\renewcommand{\thesectiondis}{S-\Alph{section}}
\renewcommand{\thesubsection}{\thesection.\arabic{subsection}}
\renewcommand{\theequation}{S\arabic{equation}}\setcounter{equation}{0}
\renewcommand{\thetable}{S\arabic{table}}\setcounter{table}{0}
\renewcommand{\thefigure}{S\arabic{figure}}\setcounter{figure}{0}\renewcommand{\theHfigure}{S\arabic{figure}}
\renewcommand{\theHsection}{S\Alph{section}}\renewcommand{\theHsubsection}{S\Alph{section}.\arabic{subsection}}
\renewcommand{\theHequation}{S\arabic{equation}}\renewcommand{\theHtable}{S\arabic{table}}
% steps of the proof of Theorem~\ref{thm:finiteK}: numbered by a counter, referenced by label
\newcounter{pstep}\setcounter{pstep}{-1}
\makeatletter
\newcommand{\stepsection}[2]{\stepcounter{pstep}\subsection{Step~\thepstep: #2}\edef\@currentlabel{\thepstep}\label{step:#1}}
\makeatother
\markboth{\hskip25pc SUPPLEMENTARY MATERIAL}{Lin \MakeLowercase{\textit{et al.}}: Supplementary material}
\begin{center}
{\Large\bfseries Supplementary Material for}\\[4pt]
{\large\itshape Priority Coordination Games:\\ Hodge Decomposition and a Sharp Design Limit}\\[10pt]
\end{center}
\bigskip
\noindent
This document gives the details behind the proofs in the appendices of the main text, the proofs of the further
results stated there, and documents the computer-assisted steps. References of the form ``Theorem~\ref{thm:finiteK}'' or
``\eqref{eq:rayleigh}'' point to the main text; auxiliary results proved here are numbered S1, S2, \dots, and equations
(S1), (S2), \dots. Appendix~\ref{app:finiteK} comes first because it proves the main technical result,
Theorem~\ref{thm:finiteK}; Appendix~\ref{sec:cont} gives a self-contained version of the proof of
Theorem~\ref{thm:spectrum} in Appendix~\ref{app:A} of the main text, and uses one lemma of Appendix~\ref{app:finiteK}.
\begin{itemize}
\item Appendix~\ref{app:finiteK} proves Theorem~\ref{thm:finiteK}
(Finite quantisation), in fourteen steps following the four-stage
outline of Section~\ref{sec:finiteK}; its last subsection
(Section~\ref{sec:recip}) proves Proposition~\ref{prop:recip}
(reciprocity and expansion in $K^{-2}$), independent of the rest.
\item Appendix~\ref{sec:pattern} proves the two statements about the harmonic component of the linear game made after
Proposition~\ref{prop:hflow} of the main text: the tie-pattern law (Theorem~\ref{thm:pattern-SM}) and the effect of
randomised tie-breaking (Proposition~\ref{prop:randtie-SM}).
\item Appendix~\ref{sec:cont} proves Theorem~\ref{thm:spectrum} (Hodge spectrum) and the identities
\eqref{eq:contquad}.
\item Appendix~\ref{sec:priced} proves the statement about priced priorities made after Corollary~\ref{cor:rigid} of
the main text (Proposition~\ref{prop:priced-SM}) and records the corresponding exact computation on the intersection.
\item Appendix~\ref{app:repro} lists the scripts behind every computer-assisted step, describes an end-to-end check
that certifies Theorem~\ref{thm:finiteK} for small $(K,N)$ without using any step of its proof, and says which
computations are \emph{not} part of any proof.
\end{itemize}
Proofs given in full in the main text (Lemma~\ref{lem:numop}, Proposition~\ref{prop:pairwise} and its corollary,
Propositions~\ref{prop:energies}, \ref{prop:hflow}, \ref{prop:fip}, \ref{prop:ne}
and~\ref{prop:learning}) are not repeated.

\begin{figure}[H]\centering
\includegraphics[width=\textwidth]{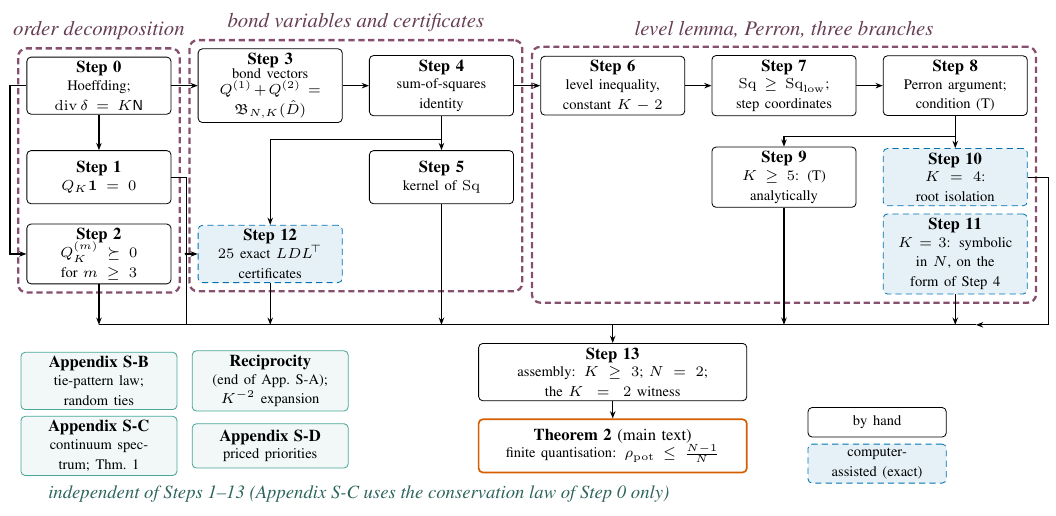}
\vspace{-8mm}
\caption{Logical structure of the supplementary material. Boxes are the fourteen steps of the proof of
Theorem~\ref{thm:finiteK} (Appendix~\ref{app:finiteK}); arrows point from a step to the steps that use it, and the
horizontal collector above Step~\ref{step:assembly} gathers everything the assembly needs: the higher orders
(Step~\ref{step:orders}), the kernel of $\mathrm{Sq}$ (Step~\ref{step:kerSq}), the three branches
(Steps~\ref{step:K5}--\ref{step:K3}) and the certificates (Step~\ref{step:certs}), which together cover every pair
$(K,N)$ with $K\ge3$; Step~\ref{step:B} feeds both the kernel statement and the image certificates. Dashed boxes are
the three computer-assisted steps; all computations are exact. Section~\ref{sec:recip} and
Appendices~\ref{sec:pattern}--\ref{sec:priced} are independent of the proof and of one another, except that
Appendix~\ref{sec:cont} uses the conservation law of Step~\ref{step:hoeff}.}\label{fig:dependency}
\end{figure}

\paragraph{What depends on what.} Figure~\ref{fig:dependency} shows the logical structure of this document. The four
appendices are independent of one another and depend on the main text only through Lemma~\ref{lem:numop} and the
definitions of Section~\ref{sec:setup}, except that Appendix~\ref{sec:cont} uses Lemma~\ref{lem:cons} of
Step~\ref{step:hoeff}; Section~\ref{sec:recip} is not used in the proof of Theorem~\ref{thm:finiteK}.

\paragraph{Conventions.} Notation is that of the main text. Throughout, $[N]=\{0,\dots,N-1\}$; functions on the
profile space $S$ carry the uniform probability measure and $\|g\|^2=\E[g^2]$; flows carry the counting inner
product $\langle x,y\rangle=\sum_ex(e)y(e)$; $A_i$ is the average over the $i$-th coordinate, $g_T$ the Hoeffding
component of $g$ on $T\subseteq[N]$, and $[u_i]_T$ that of the payoff $u_i$. The bond weights of Appendix~\ref{app:finiteK} are
$\mu_k=\frac{N-1-k}N$ and $\nu_k=\frac{k+1}N$, as in the main text; $\beta$ is reserved for the rationality parameter
and appears only in Appendix~\ref{sec:priced}. In Appendix~\ref{app:finiteK}
``Step~$k$'' refers to one of the fourteen steps of the proof of Theorem~\ref{thm:finiteK}, which are numbered in
reading order; parts of a single proof are labelled (1), (2), \dots\ and are local to that proof. Three kinds of
statement occur and are marked as such: results proved by hand; polynomial identities in $N$ established by computer
algebra, whose scripts are named at the point of use; and finitely many exact rational certificates. Nothing in any
proof depends on a floating-point computation.

\section{Proof of Theorem~\ref{thm:finiteK} (finite quantisation)}\label{app:finiteK}
Throughout this appendix the conflict graph is complete, $C=K_N$, the protocol is the index protocol, and
$u_i=f(\pos_i)$ for an arbitrary payoff-from-rank function $f$ with increments $F_j=f(j-1)-f(j)$; the factor $K^{N}$
between the expectation norm of functions and the counting norm of flows appears explicitly wherever one is
converted into the other. The proof is organised in fourteen steps, numbered in the order in which they are read;
Appendix~\ref{app:B} of the main text follows the four stages of Section~\ref{sec:finiteK} and cites the steps each
consists of.

\paragraph{Roadmap.} Table~\ref{tab:roadmap} lists the steps, what each proves, where its result is used, and how it
is verified. A reader interested only in the logical structure can read Steps~\ref{step:hoeff}--\ref{step:kerSq}
and~\ref{step:assembly} and take Table~\ref{tab:cov} on trust; a reader auditing the computer-assisted parts needs
Steps~\ref{step:K4}--\ref{step:certs} and Appendix~\ref{app:repro}.

\begin{table}[H]\centering\small
\caption{Roadmap of the proof of Theorem~\ref{thm:finiteK}. ``exact'' means rational or symbolic arithmetic.}
\label{tab:roadmap}
\begin{tabular}{@{}>{\raggedright\arraybackslash}p{0.06\linewidth}>{\raggedright\arraybackslash}p{0.40\linewidth}>{\raggedright\arraybackslash}p{0.22\linewidth}>{\raggedright\arraybackslash}p{0.26\linewidth}@{}}\toprule
Step & proves & used in & verified by\\\midrule
\ref{step:hoeff} & Laplacian $\dv\delta=K\numop$, conservation, order decomposition $Q_K=K^{N+1}\sum_mQ^{(m)}_K$ (Lemmas~\ref{lem:lap}--\ref{lem:orderdec}) & every later step & by hand\\
\ref{step:B} & $Q_K\one=0$: the linear payoff is an exact null vector (Lemma~\ref{lem:B}) & kernel statement, Step~\ref{step:assembly} & by hand\\
\ref{step:orders} & $\|\sum_{i\in T}[u_i]_T\|^2\le2\sum_i\|[u_i]_T\|^2$, hence $Q^{(m)}_K\succeq0$ for $m\ge3$ (Proposition~\ref{prop:orders}, Lemma~\ref{lem:struct}) & reduces the theorem to $Q^{(1)}+Q^{(2)}\succeq0$ & by hand\\
\ref{step:bond} & bond vectors $\hat D_k(f)$; first- and second-order components in bond form (Propositions~\ref{prop:first}, \ref{prop:second}); $Q^{(1)}+Q^{(2)}=\fB_{N,K}(\hat D(f))$ (Proposition~\ref{thm:C}); injectivity on affine $f$ (Lemma~\ref{lem:transfer}) & Steps~\ref{step:SOS}--\ref{step:certs} work on $\fB$ alone & by hand\\
\ref{step:SOS} & Hahn--Poincar\'e and spread identities; the sum-of-squares identity $K^2\fB=\sum_kZ_k+\mathrm{Sq}-\mathrm{Neg}$ (Theorem~\ref{thm:SOS}) & Steps~\ref{step:kerSq}--\ref{step:K3} & by hand\\
\ref{step:kerSq} & $\mathrm{Sq}(D)=0$ iff $D\in\operatorname{span}(\one\otimes\one)$ (Lemma~\ref{lem:kerSq}) & kernel statement, Step~\ref{step:assembly} & by hand\\
\ref{step:Lstar} & sharp level inequality with constant $K-2$ (Lemma~\ref{lem:Lstar-SM}); the level pencil $\varrho(\mu)$ (Proposition~\ref{cor:rho}) & Steps~\ref{step:Sqlow}--\ref{step:K4} & by hand (Schur product)\\
\ref{step:Sqlow} & lower bound $\mathrm{Sq}\ge\mathrm{Sq}_{\rm low}$ in the bond vectors alone; step coordinates (Propositions~\ref{prop:Sqlow}, \ref{prop:hadamard}) & Step~\ref{step:perron} & by hand\\
\ref{step:perron} & Perron argument: $\fB\succeq0$ if condition (T) holds on every bond (Lemmas~\ref{lem:path}, \ref{lem:Zmatrix}, Proposition~\ref{prop:T}) & Steps~\ref{step:K5}--\ref{step:K4} & by hand ($M$-matrix criterion)\\
\ref{step:K5} & $K\ge5$: (T) from Lemma~\ref{lem:Lstar-SM} whenever $(N-2)(K-4)\ge4$ (Theorem~\ref{thm:K5}) & assembly & by hand\\
\ref{step:K4} & $K=4$: root isolation of the cubic for $\varrho(\mu)$, $\varrho\ge2\mu\nu+5(\mu\nu)^3$, hence (T) for $N\ge10$ (Lemma~\ref{lem:K4root}, Theorem~\ref{thm:K4}) & assembly & Sturm sequences; \texttt{audit\_\allowbreak K4.py} (exact)\\
\ref{step:K3} & $K=3$: block representation, shear, $C-\gamma I\succeq0$, Perron certificate for $N\ge17$ (Lemmas~\ref{lem:K3u}--\ref{lem:K3M}, Theorem~\ref{thm:K3}) & assembly & computer algebra, symbolic in $N$: \texttt{k3\_\allowbreak representation\_\allowbreak proof.py}, \texttt{s1\_\allowbreak closed.py}, \texttt{verify\_\allowbreak M\_\allowbreak symbolic.py} (exact)\\
\ref{step:certs} & the remaining pairs $(K,N)$ by exact $LDL^\top$ certificates on the bond space or on its image (Table~\ref{tab:cov}) & assembly & $25$ stored certificates; \texttt{verify\_\allowbreak certificates.py} (exact)\\
\ref{step:assembly} & assembly; $N=2$; the counterexample at $K=2$, $N=7$ & Theorem~\ref{thm:finiteK} & by hand; \texttt{k2\_\allowbreak witness.py} (exact)\\\bottomrule
\end{tabular}
\end{table}

\stepsection{hoeff}{Hoeffding decomposition, the number operator and conservation}\label{sec:hoeff}

This step fixes conventions and records, in the form used below, the facts behind Lemma~\ref{lem:numop} of the main
text. Throughout, $S=\{0,\dots,K-1\}^N$ carries the uniform probability measure and $\|g\|^2=\E[g^2]$ for functions
on~$S$; flows carry the counting inner product $\langle x,y\rangle=\sum_ex(e)y(e)$. For a function $g$ on~$S$ and a
coordinate~$i$, $A_i$ denotes averaging over~$s_i$. The Hoeffding
(ANOVA) decomposition is $g=\sum_{T\subseteq[N]}g_T$ with
$g_T=\prod_{j\in T}(I-A_j)\prod_{j\notin T}A_j\,g$; $g_T$ depends only on
$s_T$, has zero mean in each of its variables, and the $g_T$ are
mutually orthogonal. We write $g_m=\sum_{|T|=m}g_T$. The
\emph{number operator} $\numop=\sum_i(I-A_i)$ acts on $g_T$ as
multiplication by~$|T|$. On the complete graph we abbreviate
$\pos_i=\pos_{K_N}(i,s)$.

\begin{lemmaS}[Laplacian of the game graph]\label{lem:lap}
Let $\delta$ be the gradient and let $\delta^*$ be its adjoint. With the
divergence convention $(\dv x)(s)=\sum_{e\,\mathrm{into}\,s}x(e)-\sum_{e\,\mathrm{out\ of}\,s}x(e)$
one has $\langle\delta\phi,x\rangle=\langle\phi,\dv x\rangle$, i.e.\ $\delta^*=\dv$. Then
$\delta^*\delta=K\numop$; in particular $\delta^*\delta g_T=K|T|\,g_T$.
For the game flow $u$ of payoffs $(u_i)$,
\begin{equation}\label{eq:divS}
(\dv u)(s)=K\sum_i\big[u_i(s)-\bar u_i(s_{-i})\big],\qquad
\bar u_i:=A_iu_i,
\end{equation}
and the potential energy is
\begin{equation}\label{eq:specS}
\|\delta\phi^\star\|^2=K^{N}\sum_{m=1}^N\frac{\|(\dv u)_m\|^2}{mK}
=K^{N+1}\sum_{m=1}^N\frac{\|g_m\|^2}{m},\qquad g:=\sum_i(u_i-A_iu_i),
\end{equation}
which is the first identity of \eqref{eq:twoenergies} in the main text.
\end{lemmaS}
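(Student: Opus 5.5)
The proof has three ingredients: the adjoint identity, the factorisation of $\delta^*\delta$ over coordinates, and the spectral evaluation of the energy.

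\emph{Adjoint identity.} Each edge $e=(i,s\to s')$ contributes $x(e)(\phi(s')-\phi(s))$ to $\langle\delta\phi,x\rangle$. Regrouping these contributions by vertex, $\phi(s')$ collects $x(e)$ for every edge into $s'$, and $\phi(s)$ collects $-x(e)$ for every edge out of $s$. This gives $\langle\delta\phi,x\rangle=\sum_s\phi(s)(\dv x)(s)$, hence $\delta^*=\dv$ for the counting inner product on flows and the counting (unnormalised) inner product on functions.

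\emph{Factorisation of $\delta^*\delta$.} The edges split by the moving agent $i$. For fixed $i$ and $s_{-i}$, the edges joining the $K$ profiles $(t,s_{-i})$ form a complete graph $K_K$, oriented from the lower to the higher level. Orientation does not affect $\dv\delta$, since reversing an edge flips the sign of both $x(e)$ and its incidence. The Laplacian of $K_K$ acting on a function $\varphi$ of $t$ is
\[
\sum_{t'}(\varphi(t)-\varphi(t'))=K\varphi(t)-K\bar\varphi=K(I-A)\varphi .
\]
Summing over $i$ gives $\delta^*\delta=K\sum_i(I-A_i)=K\numop$. Since $\numop g_T=|T|g_T$ by the product form of $g_T$ (each $(I-A_j)$ with $j\in T$ fixes $g_T$, each $A_j$ with $j\notin T$ fixes it, and $(I-A_j)g_T=0$ for $j\notin T$), we get $\delta^*\delta g_T=K|T|g_T$.

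\emph{Divergence of the game flow.} The same regrouping applies to the game flow $u(i,s\to s')=u_i(s')-u_i(s)$. At $s$, summing over agent $i$'s moves from all other levels $t$, with the sign convention of $\dv$, gives $\sum_t(u_i(s)-u_i(t,s_{-i}))=K(u_i-A_iu_i)(s)$. This is \eqref{eq:divS}.

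\emph{Potential energy.} The projection $\delta\phi^\star$ satisfies the normal equation $\delta^*\delta\phi^\star=\delta^*u$, i.e.\ $K\numop\phi^\star=\dv u=Kg$. Since $g$ has zero mean and $\numop$ is invertible on mean-zero functions, $\phi^\star=\numop^{-1}g$ up to a constant. Then
\[
\|\delta\phi^\star\|^2=\langle\phi^\star,\delta^*\delta\phi^\star\rangle_{\rm count}=\langle\phi^\star,\dv u\rangle_{\rm count}.
\]
Converting the counting inner product on functions to the expectation norm introduces the factor $K^N$. Orthogonality of the Hoeffding grades then yields
\[
K^N\sum_m\frac{\langle(\dv u)_m,(\dv u)_m\rangle}{mK}=K^N\sum_m\frac{K^2\|g_m\|^2}{mK}=K^{N+1}\sum_m\frac{\|g_m\|^2}{m},
\]
which is \eqref{eq:specS}.

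The only point needing care is this bookkeeping of normalisations: the factor $K^N$ from passing between counting and expectation norms, and the factor $K$ coming from the Laplacian of $K_K$.
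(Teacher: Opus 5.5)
Your proposal is correct and follows the paper's own argument (the proof of Lemma~\ref{lem:numop}, to which this lemma defers). That argument views the game graph as a product of copies of $K_K$ with Laplacian $K(I-A)$, computes $\dv u=Kg$ directly, reduces the normal equation to $\numop\phi^\star=g$, and evaluates $\|\delta\phi^\star\|^2$ by adjointness and Hoeffding orthogonality. The only blemish is notational: in your last display, $\langle(\dv u)_m,(\dv u)_m\rangle$ must be read in the expectation norm, as $\|(\dv u)_m\|^2=K^2\|g_m\|^2$, because the factor $K^N$ has already been pulled out in front.
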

\begin{proof}
This is Lemma~\ref{lem:numop} of the main text, whose proof identifies $\dv\delta=K\numop$ and $\dv u=Kg$; the sum
over the $K^N$ profiles in $\langle\phi^\star,\dv\delta\phi^\star\rangle$ supplies the factor $K^N$ in
\eqref{eq:specS}, and diagonalising $\numop$ on the Hoeffding components gives the two forms.
\end{proof}

\begin{lemmaS}[conservation]\label{lem:cons}
For every conflict graph and every payoff-from-rank function~$f$,
$\sum_iu_i(s)$ is independent of~$s$ on the complete graph, and
$\sum_i\pos_C(i,s)=|E|$ on every graph. Consequently, on the complete
graph, $\sum_i[u_i]_T=0$ for every nonempty~$T$, hence
$g_T=\sum_{i\in T}[u_i]_T=-\sum_{i\notin T}[\bar u_i]_T$.
\end{lemmaS}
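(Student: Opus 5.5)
The plan has three parts: the counting identity, the constancy of the total payoff on $K_N$, and the Hoeffding algebra.

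\emph{Counting identity.} For $\sum_i\pos_C(i,s)=|E|$, I will assign each conflict edge $\{i,j\}$ to the endpoint served second. Fix $s$. By \eqref{eq:pos}, $j$ precedes $i$ iff $s_j>s_i$, or $s_j=s_i$ and $j\to i$. For the pair $\{i,j\}$ exactly one of three cases holds: $s_i\ne s_j$, so exactly one strict inequality holds; or $s_i=s_j$, so exactly one of $i\to j$, $j\to i$ holds, because the protocol orients each edge exactly once. Either way, exactly one of the indicators ``$j$ counted in $\pos_C(i,s)$'' and ``$i$ counted in $\pos_C(j,s)$'' equals one. Summing over $i$ counts each edge exactly once and gives $|E|$. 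This uses no acyclicity, so it holds for every orientation. This is \eqref{eq:cons}.

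\emph{Constant total payoff on $K_N$.} Under the index protocol, service is a total order: sort by level in decreasing order and break ties by index. On $K_N$ every pair conflicts, so $\pos_i$ is exactly agent $i$'s rank in this order. Hence $s\mapsto(\pos_i(s))_i$ is a bijection onto $\{0,\dots,N-1\}$. It follows that $\sum_iu_i(s)=\sum_{r=0}^{N-1}f(r)$, which does not depend on $s$. Since the statement is made for the index protocol on $K_N$, this is the case needed. For a general acyclic orientation of $K_N$ the same argument works after relabelling.

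\emph{Hoeffding consequences.} The map $g\mapsto g_T$ is linear, and the component of a constant function on any nonempty $T$ vanishes, because $(I-A_j)$ kills constants for $j\in T$. Therefore $\sum_i[u_i]_T=0$ for every $T\neq\emptyset$.

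Next I compute $g_T$ for $g=\sum_i(u_i-A_iu_i)$. For $i\in T$, $(I-A_i)$ commutes with the projector defining $g_T$ and fixes it, so $[u_i-A_iu_i]_T=[u_i]_T$. For $i\notin T$, the projector contains $A_i$ and $A_i(I-A_i)=0$, so the term contributes zero. This gives $g_T=\sum_{i\in T}[u_i]_T$.

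To get the last form, I split $\sum_i[u_i]_T=0$ into the sums over $i\in T$ and $i\notin T$. For $i\notin T$, the identity $A_i^2=A_i$ gives $[u_i]_T=[A_iu_i]_T=[\bar u_i]_T$. Together these yield $g_T=-\sum_{i\notin T}[\bar u_i]_T$.

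The only delicate point is the bijection claim. There I must check that ties are resolved into a strict order, which holds because the index protocol is a total order. Everything else is routine algebra.
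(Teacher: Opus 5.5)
Your proof is correct and follows essentially the same route as the paper. Each conflict edge is charged to the endpoint served second. On $K_N$ with a total-order protocol the positions form a permutation of $\{0,\dots,N-1\}$. The Hoeffding identities follow because $I-A_j$ kills constants and because, for $i\notin T$, the projector contains the idempotent factor $A_i$.

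One small slip: the bijection should be stated as $i\mapsto\pos_i(s)$ for each fixed $s$, not $s\mapsto(\pos_i(s))_i$.
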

\begin{proof}
On $K_N$ the map $i\mapsto\pos_i$ is a permutation of
$\{0,\dots,N-1\}$ (the order by $(s_i,-i)$ is strict), so
$\sum_if(\pos_i)=\sum_rf(r)$. On a general graph each conflict edge
contributes exactly one unit to the position of whichever endpoint is
served second. A constant has no nonconstant Hoeffding components, which
gives $\sum_i[u_i]_T=0$; finally $[u_i]_T$ with $i\notin T$ equals
$[\bar u_i]_T$ because $[u_i]_T$ contains the factor $A_i$.
\end{proof}

\begin{lemmaS}[order decomposition]\label{lem:orderdec}
For every $f$,
\begin{equation}\label{eq:P2}
\|u\|^2=K^{N+1}\sum_i\sum_{T\ni i}\|[u_i]_T\|^2,\qquad
\|Pu\|^2=K^{N+1}\sum_{T\ne\emptyset}\frac1{|T|}\Big\|\sum_{i\in T}[u_i]_T\Big\|^2 .
\end{equation}
Consequently $F^\top Q_KF=K^{N+1}\sum_{m\ge1}Q^{(m)}_K(f)$ with
\begin{equation}\label{eq:Qm-SM}
Q^{(m)}_K(f)=\sum_{|T|=m}\Big[\frac{N-1}N\sum_{i\in T}\|[u_i]_T\|^2-\frac1m\Big\|\sum_{i\in T}[u_i]_T\Big\|^2\Big],
\end{equation}
which is the order decomposition used in stage~1) of the outline in the main text.
\end{lemmaS}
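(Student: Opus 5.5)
We prove the two identities in \eqref{eq:P2} and then read off \eqref{eq:Qm-SM} by linearity in the payoff.

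\emph{Total energy.} By the second identity of \eqref{eq:twoenergies} (Lemma~\ref{lem:numop}), $\|u\|^2=K^{N+1}\sum_i\|u_i-A_iu_i\|^2$. The function $u_i-A_iu_i=(I-A_i)u_i$ has Hoeffding components $[u_i]_T$ for $T\ni i$ and zero components for $T\not\ni i$, because $(I-A_i)$ commutes with every factor in $g_T=\prod_{j\in T}(I-A_j)\prod_{j\notin T}A_ju_i$ and is idempotent on the first kind of product while annihilating the second. Orthogonality of the Hoeffding components then gives $\|u_i-A_iu_i\|^2=\sum_{T\ni i}\|[u_i]_T\|^2$, which is the first identity.

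\emph{Potential energy.} Lemma~\ref{lem:orderdec} rests on the first identity of \eqref{eq:twoenergies}, restated as \eqref{eq:specS}: $\|Pu\|^2=\|\delta\phi^\star\|^2=K^{N+1}\sum_{T\neq\emptyset}\|g_T\|^2/|T|$, where we regroup $g_m$ by its orthogonal pieces. Since $g=\sum_i(I-A_i)u_i$ and $[(I-A_i)u_i]_T=\one[i\in T]\,[u_i]_T$ by the preceding paragraph, we get $g_T=\sum_{i\in T}[u_i]_T$, matching Lemma~\ref{lem:cons}. Substituting yields the second identity. Neither identity uses the complete graph or the rank structure; both hold for arbitrary payoffs.

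\emph{Order decomposition.} Both energies are quadratic in $f$, hence in the increments $F$, because $u_i=f(\pos_i)$ and the Hoeffding projections are linear. The constant term of $f$ contributes nothing, since it is annihilated by $I-A_i$. Therefore $F^\top Q_KF=\frac{N-1}N\|u\|^2-\|Pu\|^2$ by \eqref{eq:rayleigh} and the definitions of $M_{\mathrm{tot}}$ and $M_\pot$ as Gram matrices of the flows $I_j$, with $u_f=\sum_jF_jI_j$. Inserting \eqref{eq:P2} and grouping both sums by $|T|=m$ — writing $\sum_i\sum_{T\ni i}=\sum_T\sum_{i\in T}$ — gives $F^\top Q_KF=K^{N+1}\sum_{m\ge1}Q^{(m)}_K(f)$ with $Q^{(m)}_K$ as in \eqref{eq:Qm-SM}.

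The only point requiring care is the normalisation bookkeeping. The counting-norm factor $K^{N}$ combines with the factor $K$ from $\dv\delta=K\numop$ to produce $K^{N+1}$ in both identities. This is needed for the common factor to pull out of $Q_K$, and it follows directly from Lemma~\ref{lem:numop}.
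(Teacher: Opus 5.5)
Your proof is correct and is essentially the paper's argument. Both identities follow from Lemma~\ref{lem:numop} once one notes that $(I-A_i)u_i=\sum_{T\ni i}[u_i]_T$ is an orthogonal sum, so that $g_T=\sum_{i\in T}[u_i]_T$. The formula for $Q^{(m)}_K$ is then the Gram-matrix identity $F^\top Q_KF=\frac{N-1}N\|u\|^2-\|Pu\|^2$, regrouped by $|T|=m$.

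One small remark on your last paragraph. In the total-energy identity the extra factor $K$ comes from $\sum_{x<y}(f(y)-f(x))^2=K^2\Var(f)$, or equivalently from the per-coordinate Laplacian $K(I-A_i)$, not from $\dv\delta=K\numop$ itself. Since you cite Lemma~\ref{lem:numop} for both identities, this does not affect the argument.
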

\begin{proof}
Since $I-A_i$ annihilates the Hoeffding components on sets not containing $i$ and fixes the others,
$u_i-A_iu_i=\sum_{T\ni i}[u_i]_T$, an orthogonal sum; this gives the first identity by Lemma~\ref{lem:numop}. Summing
over $i$ and collecting by index set, $g=\sum_i(u_i-A_iu_i)$ has components $g_T=\sum_{i\in T}[u_i]_T$, and the second
identity is again Lemma~\ref{lem:numop}. The last statement is the definition
$Q_K=\frac{N-1}NM_{\mathrm{tot}}-M_\pot$ with $F^\top M_{\mathrm{tot}}F=\|u\|^2$ and $F^\top M_\pot F=\|Pu\|^2$
(flow norms).
\end{proof}

\stepsection{B}{proof of Lemma~\ref{lem:B} (exact eigenvector)}\label{sec:lemA}

With the order decomposition in place, the first thing to establish is that the linear payoff is an exact null
vector of $Q_K$ at every finite $K$. The result is used twice: for the kernel statement of
Theorem~\ref{thm:finiteK} (Step~\ref{step:assembly}), and inside the certificates of Step~\ref{step:certs}, which need to know that the bond
form annihilates the linear payoff.

\begin{mainlem}[\ref{lem:B} of the main text]
For every $N\ge2$ and $K\ge2$, $Q_K\one=0$, i.e.\
$M_\pot\one=\frac{N-1}NM_{\mathrm{tot}}\one$.
\end{mainlem}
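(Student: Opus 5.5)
The plan is to verify the equivalent statement that, for every payoff-from-rank function $f$,
\[
\Big\langle Pu^{\rm lin}-\tfrac{N-1}{N}u^{\rm lin},\,u^f\Big\rangle=0,
\]
where $u^{\rm lin}$ and $u^f$ are the flows of $\calG(K_N,K)$ and $\calG_f(K_N,K)$. This suffices because $F^\top M_\pot\one=\langle Pu^f,Pu^{\rm lin}\rangle=\langle u^f,Pu^{\rm lin}\rangle$ and $F^\top M_{\mathrm{tot}}\one=\langle u^f,u^{\rm lin}\rangle$, with $F=\one$ corresponding to $f(r)=-r$.

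\emph{Step 1: flow inner products as covariances.} For any two payoff families $(v_i)$ and $(w_i)$, fix $i$ and $s_{-i}$ and apply the polarised form of the identity $\sum_{x<y}(f(y)-f(x))^2=K^2\Var f$ from the proof of Lemma~\ref{lem:numop}. This gives
\[
\langle u^v,u^w\rangle=K^{N+1}\sum_i\E\big[\Cov_{s_i}(v_i,w_i\mid s_{-i})\big].
\]
By Proposition~\ref{prop:energies} with $d_i=N-1$, $Pu^{\rm lin}$ is the flow of $(N-1)s_i/K$. Since $s_i$ is independent of $s_{-i}$, the conditional covariance averages to the unconditional one, so
\[
\langle Pu^{\rm lin},u^f\rangle=K^{N}(N-1)\sum_i\Cov\big(s_i,f(\pos_i)\big).
\]

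\emph{Step 2: the linear term.} For $\langle u^{\rm lin},u^f\rangle$ I write $\E[\Cov_{s_i}(-\pos_i,f(\pos_i)\mid s_{-i})]=-\E[\pos_if(\pos_i)]+\E[(A_i\pos_i)\,f(\pos_i)]$.

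\begin{itemize}
\item The first term is handled by conservation (Lemma~\ref{lem:cons}): $\sum_i\pos_if(\pos_i)=\sum_{r=0}^{N-1}rf(r)$ pointwise.
\item For the second term, $A_i\pos_i=\sum_{j\ne i}\Pr(j\text{ precedes }i\mid s_j)$, which is affine in $s_{-i}$: it equals a constant plus $\frac1K\sum_{j\ne i}s_j$. Hence $\E[(A_i\pos_i)f(\pos_i)]=\E[\pos_i]\,\E[f(\pos_i)]+\frac1K\sum_{j\ne i}\Cov(s_j,f(\pos_i))$.
\item Summing over $i$ and exchanging the sums, I use conservation once more: $\sum_{i\ne j}f(\pos_i)=\mathrm{const}-f(\pos_j)$. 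This turns the cross covariances into $-\frac1K\sum_j\Cov(s_j,f(\pos_j))$.
\end{itemize}

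Collecting the pieces, the required identity becomes one between $\sum_i\E[\pos_i]\E[f(\pos_i)]-\sum_r rf(r)$ and a multiple of $\frac1K\sum_i\Cov(s_i,f(\pos_i))$. I expect the multiple to be $(N+1)/K$, as in \eqref{eq:A-B2}, and I will check that constant carefully.

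\emph{Step 3: verification via generating functions.} Both sides are linear in $f$, so it suffices to take $f(r)=x^r$ for $N$ distinct values of $x$. Conditional on $s_i=a$, the other agents precede $i$ independently. Under the index protocol there are $i$ agents with probability $(K-a)/K$ and $N-1-i$ agents with probability $(K-1-a)/K$. Hence
\[
\E[x^{\pos_i}\mid s_i=a]=\mathsf A_a^{\,i}\mathsf B_a^{\,N-1-i},
\]
where $\mathsf B_a=1+(x-1)\frac{K-1-a}{K}$ and $\mathsf A_a=\mathsf B_{a-1}$. Every term of the identity is then a sum over $a$ and $i$ of such products, weighted by $1$, by $a$, or by $\E[\pos_i]$. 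The plan is to sum first over $i$ as a geometric-type sum in the ratio $\mathsf A_a/\mathsf B_a$, and then over $a$. The shift $\mathsf A_a=\mathsf B_{a-1}$ should make the $a$-sum telescope, with boundary terms $\mathsf B_{K-1}=1$ and $\mathsf B_{-1}=x$. The terms $\sum_r rx^r$ should arise from these endpoints.

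I expect this telescoping bookkeeping to be the main obstacle, in particular the weight $a$ in the covariance term. My first attempt will be summation by parts in $a$, using $\mathsf B_{a-1}-\mathsf B_a=(x-1)/K$, so that $a$-weighted sums reduce to unweighted ones. After that, the $K$-dependence should cancel and the identity should hold for every $K\ge2$.
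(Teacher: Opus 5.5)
Your proposal is correct and follows essentially the paper's proof. The reduction to \eqref{eq:A-B2} via conditional covariances and two uses of conservation is the same computation: you work with $A_i\pos_i$ directly instead of the paper's split into $\phi_1$, $\Xi_i$, $\Gamma_i$, and the constant does come out as $(N+1)/K$. Your Step~3, the generating-function argument with $\mathsf A_a=\mathsf B_{a-1}$, is also the paper's; when you carry it out, note that not everything telescopes. The $\E[\pos_i]$-weighted and the $a$-weighted sums each leave a multiple of $\sum_a\mathsf B_a^N$, and these cancel against each other precisely because of the factor $N+1$, after which the $K$-dependence disappears.
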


Players are indexed $0,\dots,N-1$, so player $i$ has exactly $i$
opponents of smaller index. Write $u^{\rm lin}$ for the flow of the linear
game and $u^f$ for that of $\mathcal G_f(K_N,K)$.

\emph{(1) Reduction.}
By definition of the Gram matrices, $Q_K\one=0$ is equivalent to
$\langle Pu^{\rm lin}-\frac{N-1}Nu^{\rm lin},\,u^f\rangle=0$ for all $f$.
By Proposition~\ref{prop:energies} with $d_i\equiv N-1$, $Pu^{\rm lin}=\delta\phi$ with
$\phi=\frac{N-1}K\sum_is_i$, which on the edges of player~$i$ is the
flow of the payoff $(N-1)\phi_1(s_i)$, $\phi_1(a):=\frac{a-\bar s}K$,
$\bar s=\frac{K-1}2$. Writing $-\pos_i=\sum_{j\ne i}\psi_{ij}(s_i,s_j)$ with
$\psi_{ij}(a,b)=-\one[b>a]$ ($j>i$) or $-\one[b\ge a]$ ($j<i$), the
Hoeffding decomposition of each kernel is
$\psi_{ij}=\bar\psi_{ij}+\phi_1(a)-\phi_1(b)+\Gamma_{ij}(a,b)$ (both kernels have
the same first-order parts $\phi_1(a)$ and $-\phi_1(b)$; this is checked
directly: $\E_b\psi_{ij}(a,b)$ equals $-\frac{K-1-a}K$ or
$-\frac{K-a}K$, both equal to $\frac{a-\bar s}K$ up to a constant).
Hence $u^{\rm lin}$ is the flow of $(N-1)\phi_1(s_i)+\Gamma_i$ with
$\Gamma_i:=\sum_{j\ne i}\Gamma_{ij}(s_i,s_j)$, and the harmonic part
$h=u^{\rm lin}-\delta\phi$ is the flow of $\Gamma_i$. For two payoff
families $(w_i)$, $(v_i)$ the inner product of their flows is
\begin{equation}\label{eq:flowip}
\langle\text{flow}(w),\text{flow}(v)\rangle
=\sum_i\sum_{s_{-i}}\sum_{a<b}\big(w_i(b)-w_i(a)\big)\big(v_i(b)-v_i(a)\big)
=K^{N+1}\sum_i\E\big[\Cov(w_i,v_i\mid s_{-i})\big],
\end{equation}
since $\sum_{a<b}(x_b-x_a)(y_b-y_a)=K^2\Cov(x,y)$ for the uniform law on
$\{0,\dots,K-1\}$. Because $\Gamma_i$ and $\phi_1(s_i)$ are centred in
$s_i$ for every $s_{-i}$, the conditional covariances reduce to plain
expectations, and the claim becomes
\begin{equation}\label{eq:B1}
\sum_i\E[\Gamma_i\,v_i]=\sum_i\E[\phi_1(s_i)\,v_i]\qquad\text{for all }f,
\end{equation}
where $v_i=f(\pos_i)$.

\emph{(2) Using conservation.}
Let $\Xi_i=\sum_{j\ne i}(-\phi_1(s_j))$ be the second-variable first-order
part, so that $-\pos_i=-\E\pos_i+\allowbreak(N-1)\phi_1(s_i)+\Xi_i+\Gamma_i$. By Lemma~\ref{lem:cons},
$\sum_i\E[\Xi_iv_i]=\allowbreak-\sum_j\E\big[\phi_1(s_j)\sum_{i\ne j}v_i\big]
=-\sum_j\E\big[\phi_1(s_j)(\mathrm{const}-v_j)\big]=\sum_j\E[\phi_1(s_j)v_j]$.
Therefore
$\sum_i\E[\Gamma_iv_i]=\allowbreak\sum_i\E[(\E\pos_i-\pos_i)v_i]-N\sum_i\E[\phi_1(s_i)v_i]$,
and \eqref{eq:B1} is equivalent to
\begin{equation}\label{eq:B2}
\sum_i\E[\pos_i]\,\E[f(\pos_i)]-\sum_{r}r f(r)
=\frac{N+1}K\sum_i\Cov\big(s_i,f(\pos_i)\big),
\end{equation}
where we used $\sum_i\E[\pos_if(\pos_i)]=\sum_rrf(r)$ (Lemma~\ref{lem:cons}).

\emph{(3) Generating functions.}
Both sides of \eqref{eq:B2} are linear in $f$, so it suffices to take
$f(r)=x^r$. Put $y=x-1$. Given $s_i$, write $\hat a=K-1-s_i$; opponent $j<i$
is served before $i$ with probability $\frac{\hat a+1}K$ and opponent $j>i$
with probability $\frac{\hat a}K$, independently across~$j$. Let
$\mathsf B_{\hat a}=1+\frac{\hat a}Ky$ and $\mathsf A_{\hat a}=1+\frac{\hat a+1}Ky=\mathsf B_{\hat a+1}$. Then
\[
\Pi_i(x):=\E[x^{\pos_i}]=\frac1K\sum_{\hat a=0}^{K-1}\mathsf A_{\hat a}^{\,i}\mathsf B_{\hat a}^{\,N-1-i},\qquad
\mu_i:=\E[\pos_i]=\frac{(N-1)\frac{K-1}2+i}K .
\]
With $\sum_{i=0}^{N-1}\mathsf A^i\mathsf B^{N-1-i}=\frac{\mathsf A^N-\mathsf B^N}{\mathsf A-\mathsf B}$, $\mathsf A_{\hat a}-\mathsf B_{\hat a}=\frac yK$,
telescoping ($\mathsf A_{\hat a}=\mathsf B_{\hat a+1}$, $\mathsf B_0=1$, $\mathsf B_K=1+y$), $\mathsf S:=\sum_{\hat a=0}^K\mathsf B_{\hat a}^N$
and $G(x):=\sum_{r=0}^{N-1}x^r=\frac{(1+y)^N-1}y$:
\begin{align*}
\sum_i\Pi_i&=\frac1y\sum_{\hat a=0}^{K-1}(\mathsf A_{\hat a}^N-\mathsf B_{\hat a}^N)=G,\\
\sum_ii\,\Pi_i&=\frac1y\sum_{\hat a=0}^{K-1}(N\mathsf A_{\hat a}^N+\mathsf B_{\hat a}^N)-\frac{K\big((1+y)^{N+1}-1\big)}{y^2}\\
 &=\frac{(N+1)\mathsf S-N-(1+y)^N}{y}-\frac{K\big((1+y)^{N+1}-1\big)}{y^2},\\
Z(x)&:=\sum_i\E[s_ix^{\pos_i}]=\frac1y\sum_{\hat a=0}^{K-1}(K-1-\hat a)(\mathsf A_{\hat a}^N-\mathsf B_{\hat a}^N)\\
 &=\frac{\mathsf S-K-(1+y)^N}{y}.
\end{align*}
(The second line uses $\sum_ii\,\mathsf A^i\mathsf B^{N-1-i}=\mathsf A\partial_{\mathsf A}\frac{\mathsf A^N-\mathsf B^N}{\mathsf A-\mathsf B}$;
the third is a summation by parts: with $\mathsf A_{\hat a}=\mathsf B_{\hat a+1}$,
$\sum_{\hat a=0}^{K-1}(K-1-\hat a)(\mathsf B_{\hat a+1}^N-\mathsf B_{\hat a}^N)=\sum_{\hat a=1}^{K-1}\mathsf B_{\hat a}^N-(K-1)\mathsf B_0^N
=\mathsf S-\mathsf B_0^N-\mathsf B_K^N-(K-1)$, and $\mathsf B_0^N=1$, $\mathsf B_K^N=(1+y)^N$.) The generating-function form of
\eqref{eq:B2} is $\sum_i\mu_i\Pi_i-xG'=\frac{N+1}K\big(Z-\frac{K-1}2G\big)$.
Substituting, $\mathsf S$ and $K$ cancel and the difference of the two sides equals
\begin{align*}
NG+\frac{N+1}y-\frac{(1+y)^{N+1}-1}{y^2}-xG'
&=\frac{Ny(1+y)^N+y-(1+y)^{N+1}+1}{y^2}\\
&\quad-\frac{Ny(1+y)^N-(1+y)^{N+1}+1+y}{y^2}=0 .
\end{align*}
This proves \eqref{eq:B2}, hence Lemma~\ref{lem:B}. \qed

\begin{remarkS}
The quantisation $K$ disappears in part (3): the identity only uses that,
conditionally on $s_i$, the opponents are served before $i$ independently,
and that positions form a permutation. 
\end{remarkS}

\stepsection{orders}{proof of Proposition~\ref{prop:orders} (higher orders)}\label{sec:orders}
The order decomposition of Lemma~\ref{lem:orderdec} splits $Q_K$ into a sum over Hoeffding orders $m$. This step
disposes of every order $m\ge3$ at once, by an argument that uses nothing about $f$; what remains afterwards is the
two-body form $Q^{(1)}_K+Q^{(2)}_K$, which occupies Steps~\ref{step:bond}--\ref{step:certs}.

\begin{propS}[higher orders]\label{prop:orders}
For all $N\ge3$, $K\ge2$ and $m\ge3$, $Q^{(m)}_K\succeq0$, where
\[
Q^{(m)}_K(f)=\frac{N-1}N\sum_{|T|=m}\sum_{i\in T}\|[u_i]_T\|^2-\frac1m\sum_{|T|=m}\Big\|\sum_{i\in T}[u_i]_T\Big\|^2 .
\]
More precisely, for every $T$ with $|T|=m$,
$\big\|\sum_{i\in T}[u_i]_T\big\|^2\le2\sum_{i\in T}\|[u_i]_T\|^2$, and the
constant~$2$ cannot be improved at $K=2$.
\end{propS}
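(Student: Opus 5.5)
Fix $T$ with $|T|=m\ge3$ and write $X_i:=[u_i]_T$ for $i\in T$. The plan follows the outline of Proposition~\ref{prop:A-orders}, filling in the structural input. The target inequality $\|\sum_iX_i\|^2\le2\sum_i\|X_i\|^2$ is equivalent to $\sum_{i\ne k}\langle X_i,X_k\rangle\le\sum_i\|X_i\|^2$. Once it holds, each summand of $Q^{(m)}_K$ is at least $(\frac{N-1}N-\frac2m)\sum_i\|X_i\|^2$. This is nonnegative when $m\ge3$ and $N\ge3$, because then $\frac{N-1}N\ge\frac23\ge\frac2m$. So everything reduces to controlling the cross terms.

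\emph{Structural representation.} Condition on $s_i=a$. The events ``$j$ precedes $i$'' for $j\ne i$ are independent indicators $\one[s_j>a]$ or $\one[s_j\ge a]$, depending on the protocol. Expand $f(\pos_i)$ in these indicators and apply $\prod_{j\in T\setminus\{i\}}(I-A_j)\prod_{j\notin T}A_j$. Each centring $I-A_j$ turns an indicator into its centred version, and each average over $j\notin T$ turns an indicator into a Bernoulli mean. Hence
\[
X_i=(I-A_i)\Big[D^{(i)}(s_i)\prod_{j\in T\setminus\{i\}}\big(\one[j\prec i]-p_{ij}(s_i)\big)\Big],
\]
where $D^{(i)}(a)=\E\,\Delta^{m-1}f(\text{Bernoulli sum over }j\notin T)$ is a vector indexed by the level $a$. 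This is the $\Psi^{(i)}D^{(i)}$ form. For $i\ne k$, the inner product $\langle X_i,X_k\rangle$ is then a bilinear form $\sum_{a,b}D^{(i)}(a)D^{(k)}(b)\kappa_{ik}(a,b)$. Its kernel $\kappa_{ik}(a,b)$ factorises over $j\in T\setminus\{i,k\}$ into conditional covariances of centred indicators, times one factor coming from the pair $\{i,k\}$ itself.

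\emph{Sign lemma (the main obstacle).} I must show that $\kappa_{ik}\le0$ entrywise. The key input is the complementarity $\one[k\prec i]=1-\one[i\prec k]$: the pair factor is the covariance of $\one[k\prec i]$ and $\one[i\prec k]$ given $(s_i,s_k)$, which is minus a product. The factors over the remaining $j$ are products $\Cov(\one[s_j\gtrsim a],\one[s_j\gtrsim b])\ge0$, because both indicators are monotone in $s_j$. I also need the $(I-A_i)$ and $(I-A_k)$ corrections to preserve this sign. My first attempt will be to write each correction as a Brownian-bridge-type term $\min-\,$product in the level variables, as in $\kappa(q,q')$ of Section~\ref{sec:spectrum}, and check its sign. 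In the same way I expect the kernel of $\|X_i\|^2$ to be $\mathrm{diag}$ minus an entrywise nonnegative matrix.

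\emph{Closing.} Given these sign properties, let $\tilde X_i$ be the function built from $|D^{(i)}|$. The nonpositive kernels give $\langle X_i,X_k\rangle\le-\langle\tilde X_i,\tilde X_k\rangle$. Because the subtracted kernel in the diagonal norm is nonnegative, $\|\tilde X_i\|\le\|X_i\|$. Therefore
\[
\sum_{i\ne k}\langle X_i,X_k\rangle\le\sum_i\|\tilde X_i\|^2-\Big\|\sum_i\tilde X_i\Big\|^2\le\sum_i\|X_i\|^2 .
\]

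\emph{Sharpness at $K=2$.} To show the constant $2$ cannot be improved, I would exhibit a small $N$ and an $f$ concentrated on one difference. I expect the two-level indicators then to make $X_i$ nearly equal to $-X_k$ up to signs, so that the ratio $\|\sum_iX_i\|^2/\sum_i\|X_i\|^2$ approaches $2$. I would check this by direct computation.
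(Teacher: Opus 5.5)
Your outline is the paper's own: the representation $X_i=(I-A_i)F_i$ with $F_i=\Psi^{(i)}D^{(i)}$, a nonpositive cross kernel, a diagonal-minus-nonnegative diagonal kernel, the comparison with $|D^{(i)}|$, and $\frac{N-1}N-\frac2m\ge0$. Your closing chain is also right. However, the sign lemma, which you call the main obstacle, is left unproved, and the route you sketch for it is not the mechanism.

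\emph{Cross-term corrections.} There are no $(I-A_i)$, $(I-A_k)$ corrections to control. For $i\ne k$ in $T$, $F_k$ contains the factor $\theta(s_i;t^{(k)}_i(s_k))-p(t^{(k)}_i(s_k))$, which has mean zero in $s_i$ for each fixed $s_k$. Hence $A_iF_k=0$, and likewise $A_kF_i=0$. Since $A_i,A_k$ are commuting orthogonal projections, $\langle X_i,X_k\rangle=\langle F_i,F_k\rangle$ exactly. The same holds for $\tilde X_i,\tilde X_k$, which your step $\langle X_i,X_k\rangle\le-\langle\tilde X_i,\tilde X_k\rangle$ silently needs.

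\emph{Pair factor.} The pair factor is not a covariance given $(s_i,s_k)$, since both indicators are then deterministic. For $i<k$ and $(s_i,s_k)=(a,b)$ it is the pointwise product $(\one[b>a]-p(a))(\one[a\ge b]-p(b-1))$ of two indicators centred in \emph{different} variables. Complementarity and $p(b-1)=1-\frac bK$ rewrite it as $-(\one[b>a]-p(a))(\one[b>a]-\frac bK)$. The sign then needs a case split: it equals $-(1-p(a))(1-\frac bK)\le0$ for $b>a$ and $-p(a)\frac bK\le0$ for $b\le a$. Together with your correct remark that the factors for $j\notin\{i,k\}$ average to $\varkappa(t^{(i)}_j(a),t^{(k)}_j(b))\ge0$, this gives the nonpositive kernel.

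\emph{Diagonal terms.} The diagonal claim follows from $\|X_i\|^2=\|F_i\|^2-\|A_iF_i\|^2$. Here $\|F_i\|^2$ involves only $D^{(i)}(a)^2$, and $\|A_iF_i\|^2$ has the nonnegative kernel $\prod_j\varkappa(t^{(i)}_j(a),t^{(i)}_j(a'))$.

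\emph{Sharpness.} Your sharpness sketch has the mechanism reversed. At $K=2$ every interior member of $T$ has $\Psi^{(i)}\equiv0$. A factor from a smaller-index member vanishes at $a=0$ because $p(-1)=1$, and one from a larger-index member vanishes at $a=1$ because $p(1)=0$. The two extremes $i_1<i_m$ give $X_{i_1}=-D^{(i_1)}(0)\,W$ and $X_{i_m}=D^{(i_m)}(1)\,W$, with $W=\prod_{j\in T}(\one[s_j=1]-\frac12)$.

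The ratio $\|\sum_iX_i\|^2/\sum_i\|X_i\|^2$ therefore equals $2$ exactly when $X_{i_1}=X_{i_m}$, i.e.\ $D^{(i_m)}(1)=-D^{(i_1)}(0)\ne0$. Components with $X_{i_1}\approx-X_{i_m}$, as you propose, cancel and give a ratio near $0$.

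Equality is impossible for $T=[N]$: both coefficients then equal $\Delta^{m-1}f(0)$ and the sum vanishes, as conservation requires. So an agent outside $T$ is needed, and $f$ must balance two differences rather than be concentrated on one. For $N=4$ and $T=\{0,1,2\}$, one has $D^{(0)}(0)=\frac12(\Delta^2f(0)+\Delta^2f(1))$ and $D^{(2)}(1)=\Delta^2f(0)$. Equality holds iff $\Delta^2f(1)=-3\Delta^2f(0)\ne0$, for example $f=(0,0,1,-1)$.
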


\begin{proof}
Fix $T$ and write $X_i=[u_i]_T=(I-A_i)F_i$ with $F_i=\Psi^{(i)}D^{(i)}$ as in Lemma~\ref{lem:struct} below, whose proof does not use this proposition. Three facts.

\emph{(i) Cross terms.} For $i\ne k$ in $T$, $F_k$ is centred in $s_i$ (it
contains the factor $\theta(s_i;\cdot)-p$), so $A_iF_k=0$ and, $A_i$ being
an orthogonal projection, $\langle A_iF_i,F_k\rangle=\langle A_iF_i,A_iF_k\rangle=0$;
similarly the other cross projections vanish, whence
$\langle X_i,X_k\rangle=\langle F_i,F_k\rangle$.

\emph{(ii) The cross kernel is entrywise nonpositive.} For $i<k$,
conditioning on $s_i=a$, $s_k=b$ and averaging over the remaining
coordinates,
\begin{gather*}
\langle F_i,F_k\rangle=\frac1{K^2}\sum_{a,b}D^{(i)}(a)D^{(k)}(b)\,\Omega_{ik}(a,b),\\
\Omega_{ik}(a,b)=-\big(\one[b>a]-p(a)\big)\big(\one[b>a]-q(b)\big)\prod_{j\ne i,k}\dk\big(t^{(i)}_j(a),t^{(k)}_j(b)\big),
\end{gather*}
with $q(b)=b/K$ and $\dk(t,t')=p(\max(t,t'))-p(t)p(t')\ge0$. Indeed the
factor of $\Psi^{(i)}$ involving $s_k$ is $\one[b>a]-p(a)$ (as $k>i$), and
the factor of $\Psi^{(k)}$ involving $s_i$ is
$\one[a\ge b]-p(b-1)=-(\one[b>a]-q(b))$ (as $i<k$, $p(b-1)=1-q(b)$): the two
indicators are complementary, so the product is $-(1-p)(1-q)\le0$ if $b>a$
and $-pq\le0$ if $b\le a$. The remaining factors are covariances of two
increasing indicators of the same variable, hence nonnegative.

\emph{(iii) Diagonal terms.} $\|X_i\|^2=\|F_i\|^2-\|A_iF_i\|^2$, where
$\|F_i\|^2$ depends on $D^{(i)}$ only through $D^{(i)}(a)^2$, and
$\|A_iF_i\|^2=\frac1{K^2}\sum_{a,a'}D^{(i)}(a)D^{(i)}(a')\Omega^{\rm d}_i(a,a')$ with
$\Omega^{\rm d}_i(a,a')=\prod_{j\ne i}\dk(t^{(i)}_j(a),t^{(i)}_j(a'))\ge0$.

Let $\tilde D^{(i)}=|D^{(i)}|$ and $\tilde F_i,\tilde X_i$ the corresponding
vectors. By (i)--(ii),
\begin{align*}
\sum_{i\ne k}\langle X_i,X_k\rangle
&\le\frac1{K^2}\sum_{i\ne k}\sum_{a,b}|D^{(i)}(a)|\,|D^{(k)}(b)|\,|\Omega_{ik}(a,b)|
=-\sum_{i\ne k}\langle\tilde F_i,\tilde F_k\rangle\\
&=\sum_i\|\tilde X_i\|^2-\Big\|\sum_i\tilde X_i\Big\|^2\le\sum_i\|\tilde X_i\|^2 .
\end{align*}
By (iii), $\|\tilde X_i\|^2\le\|X_i\|^2$. Therefore
\[
\Big\|\sum_iX_i\Big\|^2=\sum_i\|X_i\|^2+\sum_{i\ne k}\langle X_i,X_k\rangle
\le2\sum_i\|X_i\|^2 .
\]
Summing over $T$,
\[
\sum_{|T|=m}\|g_T\|^2\le2\sum_{|T|=m}\sum_{i\in T}\|[u_i]_T\|^2,\qquad\text{so}\qquad
Q^{(m)}_K\ge\Big(\frac{N-1}N-\frac2m\Big)\sum_{|T|=m}\sum_{i\in T}\|[u_i]_T\|^2\ge0
\]
as soon as $m\ge\frac{2N}{N-1}$, i.e.\ $m\ge3$ for $N\ge3$. For $K=2$ the
interior members of $T$ have $X_i=0$ (their $\Psi^{(i)}$ vanishes at both
levels) and the two extreme members are parallel; for $f$ chosen so that they have equal norm and the same
orientation the inequality is an equality, so the constant~$2$ cannot be improved.
\end{proof}

In Proposition~\ref{prop:orders}, $Q^{(m)}_K$ is viewed as a quadratic form in the increments~$F$ of~$f$; since
$F^\top Q_KF=K^{N+1}\sum_mQ^{(m)}_K(f)$ by Lemma~\ref{lem:orderdec}, positive-semidefiniteness statements about the
$Q^{(m)}_K$ transfer to $Q_K$ unchanged.

\begin{lemmaS}[structure of the components]\label{lem:struct}
Let $T=\{i_1<\dots<i_m\}$ and $i\in T$. Put $t^{(i)}_j(a)=a$ for $j>i$ and
$a-1$ for $j<i$, $\theta(x;t)=\one[x>t]$, $p(t)=\frac{K-1-t}K$ (with
$p(-1)=1$). Then
\[
[u_i]_T=(I-A_i)F_i,\qquad
F_i(s_T)=\Psi^{(i)}(s_T)\,D^{(i)}(s_i),\qquad
\Psi^{(i)}=\prod_{j\in T\setminus i}\big(\theta(s_j;t^{(i)}_j(s_i))-p(t^{(i)}_j(s_i))\big),
\]
with $D^{(i)}(a)=\E\big[\Delta^{m-1}f(B^{(a)}_{\notin T})\big]$, where
$B^{(a)}_{\notin T}$ is the number of players outside $T$ served before~$i$
when $s_i=a$.
\end{lemmaS}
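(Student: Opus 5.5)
We compute $[u_i]_T=\prod_{j\in T}(I-A_j)\prod_{j\notin T}A_j\,u_i$ by conditioning. Fix $i\in T$ and $s_T$, and average $u_i=f(\pos_i)$ over the coordinates outside $T$. Given $s_i=a$, the events ``$j$ is served before $i$'' for $j\ne i$ are independent. Each is the indicator $\theta(s_j;t^{(i)}_j(a))=\one[s_j>t^{(i)}_j(a)]$. The threshold $t^{(i)}_j(a)$ is $a$ for $j>i$ (strict inequality needed) and $a-1$ for $j<i$ (ties favour $j$). Its probability under a uniform $s_j$ is $p(t^{(i)}_j(a))$, with $p(-1)=1$ covering the boundary case $a=0$, $j<i$. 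Write $\pos_i=B^{(a)}_{\notin T}+\sum_{j\in T\setminus i}\theta_j$. Then $\prod_{j\notin T}A_j u_i=\E\big[f\big(B^{(a)}_{\notin T}+\sum_{j\in T\setminus i}\theta_j\big)\big]$, where the expectation is over $B^{(a)}_{\notin T}$ only. This is a function of $s_T$ that depends on each $s_j$, $j\in T\setminus i$, only through the indicator $\theta_j$.

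Next, we apply $(I-A_j)$ for $j\in T\setminus i$. Set $\theta_j=p_j+(\theta_j-p_j)$, where $p_j:=p(t^{(i)}_j(a))$, and expand $f(B+\sum_j\theta_j)$ multilinearly in the centred variables $\xi_j=\theta_j-p_j$. Since each $\theta_j\in\{0,1\}$, the function is multilinear in $(\theta_j)$. Its coefficient on $\prod_{j\in U}\xi_j$ is the $|U|$-th mixed difference. Because $f$ depends only on the sum, that difference equals $\E_{\theta_{T\setminus(U\cup i)}}\Delta^{|U|}f(B+\sum_{j\notin U}\theta_j)$, with the other $\theta$'s averaged at their means. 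Concretely, each multilinear $g(\theta)$ can be written as $\sum_U\prod_{j\in U}\xi_j\cdot\E_{\theta'}[\Delta_U g]$, with the $\theta'_j\sim$ Bernoulli$(p_j)$ independent. Now $(I-A_j)$ annihilates every term that does not contain $\xi_j$: such a term does not depend on $s_j$, because the $p_k$ depend only on $a$. It also fixes every term that does: $\xi_j$ is centred in $s_j$ given $a$, and the coefficients do not involve $s_j$. Applying $\prod_{j\in T\setminus i}(I-A_j)$ therefore keeps exactly the top term $U=T\setminus i$, which is $\Psi^{(i)}(s_T)\,\E[\Delta^{m-1}f(B^{(a)}_{\notin T})]=\Psi^{(i)}D^{(i)}(a)$. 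The operators $A_j$ commute, so the order of application is irrelevant.

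Finally, we apply the remaining factor $(I-A_i)$. This gives $[u_i]_T=(I-A_i)F_i$ with $F_i=\Psi^{(i)}D^{(i)}$. $F_i$ itself need not be centred in $s_i$, because both $\Psi^{(i)}$ and $D^{(i)}$ depend on $a$; this is why the factor $(I-A_i)$ is kept explicitly in the statement rather than absorbed. The step needing the most care is the identification of the top coefficient as the plain $(m-1)$-fold difference averaged only over outsiders. This requires checking that the mixed difference of $f(\text{sum})$ in all $m-1$ indicators is $\Delta^{m-1}f$ evaluated at $B$, with no leftover insiders. That holds because $U$ exhausts $T\setminus i$. The other point to verify is the threshold convention at the boundary $a=0$ for $j<i$, where $p(-1)=1$ makes $\xi_j\equiv0$ consistent.
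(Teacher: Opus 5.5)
Your proof is correct and follows the paper's argument. Both write $\pos_i$ as the outsider count plus the insider indicators $\theta_j$ and average out the coordinates outside $T$. Both then use that each $(I-A_j)$, $j\in T\setminus i$, turns a function of the binary $\theta_j$ into $(\theta_j-p_j)$ times a forward difference, leaving $(I-A_i)$ as the last factor. The only difference is presentational: the paper applies the one-variable identity $\zeta(\theta_j)=\zeta(0)+\theta_j\Delta\zeta(0)$ successively, whereas you expand all at once in the centred variables and project onto the top term.
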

\begin{proof}
We have $\pos_i=\sum_{j\in T\setminus i}\theta(s_j;t^{(i)}_j(s_i))+B_{\notin T}$.
Averaging over the coordinates outside $T$ gives
$\E f\big(\sum_{j\in T\setminus i}\theta_j+B\big)$. For $j\in T\setminus i$
and any function $\zeta$, $\zeta(\theta_j)=\zeta(0)+\theta_j\Delta\zeta(0)$ because
$\theta_j\in\{0,1\}$, so $(I-A_j)\zeta(\theta_j)=(\theta_j-p_j)\Delta\zeta(0)$;
applying this successively to all $j\in T\setminus i$ produces the product
$\Psi^{(i)}$ and the $(m-1)$-fold difference, and $(I-A_i)$ is the last
factor.
\end{proof}
%=====================================================================================================
\stepsection{bond}{bond vectors and the two-body reduction (first part of Proposition~\ref{prop:bond})}\label{sec:bond}
The two-body form $Q^{(1)}_K+Q^{(2)}_K$ is a quadratic form in the $N-1$ increments of $f$. This step rewrites it as a
quadratic form $\fB_{N,K}$ in $N-1$ \emph{bond vectors} $\hat D_k(f)\in\R^K$, one for each pair of adjacent positions,
whose entries are averaged first differences of $f$. The point of the change of variables is that $\fB_{N,K}$ is
defined on all of $(\R^K)^{N-1}$, not only on the image $\{\hat D(f)\}$, and positivity on the larger space is what
Steps~\ref{step:SOS}--\ref{step:K3} prove; Lemma~\ref{lem:transfer} at the end of this step brings the kernel back to $f$.

\paragraph{Notation for the rest of the appendix.} Priority levels are $a\in\{0,\dots,K-1\}$ and carry the uniform probability
measure; $\Var$ and $\Cov$ of functions of the level refer to it. \emph{Bonds} are indexed by $k\in\{0,\dots,N-2\}$. For
$D\in\R^K$ we write
\[
(J_1D)(a)=\sum_{x\le a}D(x),\qquad (J_0D)(a)=\sum_{x<a}D(x),
\]
for the inclusive and exclusive cumulative sums, written without the operators in the main text; given bond vectors
$D_0,\dots,D_{N-2}\in\R^K$ we put $G_k=J_1D_k$, $H_k=J_0D_k$, so that $G_k-H_k=D_k$, and define the \emph{vertex
functions}
\begin{equation}\label{eq:Omega}
\Omega_j:=\sum_{k<j}H_k+\sum_{k\ge j}G_k,\qquad j=0,\dots,N-1 ,
\end{equation}
which satisfy $\Omega_k-\Omega_{k+1}=D_k$. Further $\mu_k=\frac{N-1-k}N$, $\nu_k=\frac{k+1}N$ (so that $\nu_k+\mu_l=1-\frac{l-k}N$ for $k\le l$; in particular $\mu_k+\nu_k=1$), and
\begin{equation}\label{eq:Hahn}
\calD(g)=\sum_{y=1}^{K-1}\frac{y(K-y)}K\big(g(y)-g(y-1)\big)^2,\qquad P(g):=\calD(g)-\Var(g),
\end{equation}
is the Hahn Dirichlet form. For a payoff-from-rank function $f$, with $\Delta f(r)=f(r+1)-f(r)$, the \emph{bond vectors
of $f$} are
\begin{equation}\label{eq:Dhat}
\hat D_k(a)=\E\Big[\Delta f\Big(\Bin\big(k,\tfrac{K-a}K\big)+\Bin\big(N-2-k,\tfrac{K-1-a}K\big)\Big)\Big],
\end{equation}
the two binomials being independent. The \emph{bond form} is the quadratic form on $(\R^K)^{N-1}$
\begin{equation}\label{eq:formula1}
K^2\fB_{N,K}(D):=\underbrace{\frac{N-2}{2}\sum_k\big[\mu_kP(H_k)+\nu_kP(G_k)\big]}_{\mathrm I}
+\underbrace{\sum_{k\le l}\Cov(H_k,G_l)}_{\mathrm{II}}
-\underbrace{\frac1N\sum_{j=0}^{N-1}\Var(\Omega_j)}_{\mathrm{III}} .
\end{equation}

\begin{lemmaS}[what an opponent sees]\label{lem:groups}
Conditionally on $s_i=a$, the events ``$j$ is served before $i$'', $j\ne i$, are independent; their probability is
$\frac{K-a}K$ for $j<i$ (the event $s_j\ge a$) and $\frac{K-1-a}K$ for $j>i$ (the event $s_j>a$). Moreover, for any
random variable $R$ with values in $\{0,\dots,N-2\}$, independent of the Bernoulli variables below, and any
$p\in[0,1-\frac1K]$,
\begin{equation}\label{eq:oneflip}
\E f\big(R+\mathrm{Bern}(p+\tfrac1K)\big)-\E f\big(R+\mathrm{Bern}(p)\big)=\tfrac1K\,\E\,\Delta f(R).
\end{equation}
\end{lemmaS}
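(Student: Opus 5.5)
The lemma has two parts, and I will prove them separately and directly from the definitions.

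\emph{Independence and probabilities.} I would use the index protocol on $K_N$ and the definition \eqref{eq:pos}. Opponent $j$ is served before $i$ iff $s_j>s_i$, or $s_j=s_i$ and $j\to i$, that is, $j<i$. So for $j<i$ the event is $\{s_j\ge a\}$, which has $K-a$ favourable values out of $K$. For $j>i$ it is $\{s_j>a\}$, with $K-1-a$ favourable values. Under the uniform product measure the coordinates $s_j$, $j\ne i$, are independent of each other and of $s_i$. Hence, given $s_i=a$, these events are functions of distinct independent coordinates and are therefore independent. This part is immediate; the only care needed is to read the tie rule with the correct orientation.

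\emph{The one-flip identity \eqref{eq:oneflip}.} I would condition on $R$, which is independent of the Bernoulli variable, and use linearity in the success probability. For a Bernoulli$(q)$ variable $X$ we have $\E g(R+X\mid R)=g(R)+q\,\Delta g(R)$, since $g(R+X)=g(R)+X\,\Delta g(R)$ when $X\in\{0,1\}$. Applying this with $q=p+\frac1K$ and with $q=p$ and subtracting leaves $\frac1K\Delta f(R)$. Taking expectations gives the claim. The constraint $p\le1-\frac1K$ only ensures that $p+\frac1K$ is a valid probability. The range $R\le N-2$ ensures that $\Delta f(R)$ involves only ranks $\le N-1$, where $f$ is defined.

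I expect no real obstacle. The one point to watch is that the two Bernoulli variables in \eqref{eq:oneflip} need not be coupled: each expectation is computed separately under its own law, and only independence from $R$ is used. In the later application, lowering $a$ by one raises each opponent's probability by exactly $\frac1K$. That is why this identity yields first differences of $f$ averaged over the binomial mixtures that define $\hat D_k$.
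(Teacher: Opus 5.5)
Your proposal is correct and follows the paper's proof. The first part is read off from the tie rule \eqref{eq:pos} under the index protocol together with independence of the coordinates. The second follows because $\E f(R+\mathrm{Bern}(q))=\E f(R)+q\,\E\Delta f(R)$ is affine in $q$.
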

\begin{proof}
The first statement is the definition \eqref{eq:pos} of the index protocol together with the independence of the
levels. For the second, $\E f(R+\mathrm{Bern}(p))=\E f(R)+p\,\E\Delta f(R)$ is affine in $p$.
\end{proof}

\begin{propS}[first order]\label{prop:first}
Let $\varphi_j(a):=\E[u_j\mid s_j=a]$ and let $\Omega_j$ be built from the bond vectors \eqref{eq:Dhat} of $f$. Then
$\varphi_j=-\frac1K\Omega_j+\mathrm{const}$, and
\begin{equation}\label{eq:Q1}
Q^{(1)}_K(f)=-\frac1N\sum_{j=0}^{N-1}\Var(\varphi_j)=-\frac1{NK^2}\sum_{j=0}^{N-1}\Var(\Omega_j).
\end{equation}
\end{propS}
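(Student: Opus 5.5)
The plan is to prove the two claims in order: first the expression for the first-order component $\varphi_j$, then the formula for $Q^{(1)}_K$. The second claim is almost immediate, so I treat it first. In \eqref{eq:Qm-SM} with $m=1$, each $T=\{j\}$ contains a single agent, so $\big\|\sum_{i\in T}[u_i]_T\big\|^2=\|[u_j]_{\{j\}}\|^2$. Moreover $[u_j]_{\{j\}}=\varphi_j-\E u_j$ as a function of $s_j$, whose squared norm is $\Var(\varphi_j)$. Hence
\[
Q^{(1)}_K=\Big(\tfrac{N-1}N-1\Big)\sum_j\Var(\varphi_j)=-\tfrac1N\sum_j\Var(\varphi_j).
\]
Once the first claim is proved, substituting $\varphi_j=-\frac1K\Omega_j+\mathrm{const}$ gives \eqref{eq:Q1}.

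For $\varphi_j$ I would use Lemma~\ref{lem:groups}. Given $s_j=a$, the position $\pos_j$ is a sum of independent Bernoulli variables: $j$ of them with probability $\frac{K-a}K$ (the opponents of smaller index) and $N-1-j$ with probability $\frac{K-1-a}K$. Define the interpolating family
\[
g_k(a)=\E f\Big(\Bin\big(k,\tfrac{K-a}K\big)+\Bin\big(N-1-k,\tfrac{K-1-a}K\big)\Big),\qquad k=0,\dots,N-1,
\]
so that $\varphi_j=g_j$. Moving one opponent from the lower probability to the higher one changes a single Bernoulli parameter by exactly $\frac1K$. By \eqref{eq:oneflip}, with $R$ the remaining $N-2$ opponents, this gives $g_{k+1}(a)-g_k(a)=\frac1K\hat D_k(a)$, where $\hat D_k$ is the bond vector \eqref{eq:Dhat}. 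The second relation closes the chain across levels. Since $\frac{K-1-a}K=\frac{K-(a+1)}K$, the state with all opponents at the lower probability at level $a$ coincides with the state with all at the higher probability at level $a+1$, that is, $g_0(a)=g_{N-1}(a+1)$.

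Telescoping these two relations, first in $k$ at fixed $a$ and then in $a$, expresses every $g_j(a)$, up to a constant, as $\pm\frac1K$ times a sum of cumulative sums of the $\hat D_k$. The bonds $k<j$ and the bonds $k\ge j$ pick up cumulative sums with different endpoint conventions (inclusive $G_k$ versus exclusive $H_k$), and this is exactly the structure of the vertex function $\Omega_j$ in \eqref{eq:Omega}. As a consistency check, the relation $\Omega_k-\Omega_{k+1}=D_k$ should match $g_{k+1}-g_k=\frac1K\hat D_k$ together with the asserted overall sign $-\frac1K$.

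I expect the main obstacle to be the bookkeeping of conventions, not any estimate. A naive telescoping in $a$ produces sums such as $\sum_{x<a}$ with an overall $+$ sign, whereas \eqref{eq:Omega} carries a $-$ sign. To reconcile the two I would fix the direction of the telescoping in $a$ consistently with the inclusive/exclusive endpoints, possibly passing to the reflected level $\hat a=K-1-a$ as in Step~\ref{step:B}. A sign flip of the cumulative sums is harmless up to a constant, because $\sum_{x\le a}D=\sum_x D-\sum_{x>a}D$. I would verify the resulting identity at $N=2$, where there is a single bond and $\Omega_0=G_0$, $\Omega_1=H_0$, before asserting it in general. Only the variances of the $\Omega_j$ enter \eqref{eq:Q1}, so additive constants never matter, which is what makes this reconciliation legitimate.
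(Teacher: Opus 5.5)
Your proposal is correct and follows the paper's proof. The chain $g_j(a)\to g_0(a)=g_{N-1}(a+1)\to g_j(a+1)$, built from the one-flip identity, is exactly the paper's two-stage raising of the opponents, and your computation of $Q^{(1)}_K$ from \eqref{eq:Qm-SM} with $m=1$ is the paper's. The sign bookkeeping you worry about resolves with no reflection needed: the chain gives $g_j(a)-g_j(a+1)=\frac1K\big[\sum_{k<j}\hat D_k(a)+\sum_{k\ge j}\hat D_k(a+1)\big]=\frac1K\big[\Omega_j(a+1)-\Omega_j(a)\big]$, with the exclusive sums $H_k$ arising for $k<j$ and the inclusive sums $G_k$ for $k\ge j$.
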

\begin{proof}
By Lemma~\ref{lem:groups}, given $s_j=a$ the position of $j$ is a sum of $j$ independent Bernoulli variables of
parameter $\frac{K-a}K$ (the opponents of smaller index) and $N-1-j$ of parameter $\frac{K-1-a}K$. Passing from level
$a$ to level $a-1$ raises every parameter by $\frac1K$; we do this one opponent at a time, in two stages, applying
\eqref{eq:oneflip} at each flip with $R$ the contribution of the other $N-2$ opponents.
\emph{Stage (i):} raise the $N-1-j$ opponents of larger index from $\frac{K-1-a}K$ to $\frac{K-a}K$. When $m$ of the
other $N-2$ opponents are at $\frac{K-a}K$ and $N-2-m$ at $\frac{K-1-a}K$, the increment is $\frac1K\hat D_m(a)$ by
\eqref{eq:Dhat}; here $m$ runs from $j$ to $N-2$.
\emph{Stage (ii):} all opponents are now at $\frac{K-a}K=\frac{K-1-(a-1)}K$; raise the $j$ opponents of smaller index
to $\frac{K-(a-1)}K$. When $m$ of the others have been raised the increment is $\frac1K\hat D_m(a-1)$, $m=0,\dots,j-1$.
Hence
\[
\varphi_j(a-1)-\varphi_j(a)=\frac1K\Big[\sum_{m\ge j}\hat D_m(a)+\sum_{m<j}\hat D_m(a-1)\Big]
=\frac1K\big[\Omega_j(a)-\Omega_j(a-1)\big],
\]
because $G_m(a)-G_m(a-1)=\hat D_m(a)$ and $H_m(a)-H_m(a-1)=\hat D_m(a-1)$. This gives the first claim. For
\eqref{eq:Q1}, $[u_j]_{\{j\}}=\varphi_j-\E\varphi_j$ and, in \eqref{eq:Qm-SM} with $m=1$,
$\frac{N-1}N\|[u_j]_{\{j\}}\|^2-\|[u_j]_{\{j\}}\|^2=-\frac1N\Var(\varphi_j)$.
\end{proof}

\begin{propS}[second order]\label{prop:second}
Let $i<j$, $X_i:=[u_i]_{\{i,j\}}$ and $X_j:=[u_j]_{\{i,j\}}$, and let $H_k,G_k$ be built from the bond vectors of $f$.
Then
\begin{equation}\label{eq:second}
\|X_i\|^2=\frac{P(H_i)}{K^2},\qquad \|X_j\|^2=\frac{P(G_{j-1})}{K^2},\qquad
\langle X_i,X_j\rangle=-\frac{\Cov(H_i,G_{j-1})}{K^2}.
\end{equation}
\end{propS}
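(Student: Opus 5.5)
The plan is to specialise Lemma~\ref{lem:struct} to $T=\{i,j\}$ with $m=2$ and then compute the three quantities by conditioning on the pair of levels $(s_i,s_j)=(a,b)$.

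\emph{Structure of the components.} For player $i$ (with $i<j$, so $t^{(i)}_j(a)=a$), Lemma~\ref{lem:struct} gives $X_i=(I-A_i)F_i$ with
\[
F_i=\big(\one[b>a]-p(a)\big)\,D^{(i)}(a).
\]
Here $D^{(i)}(a)=\E\,\Delta f(B^{(a)}_{\notin T})$. Given $s_i=a$, the $N-2$ outsiders split into $i$ players of smaller index, each served first with probability $\frac{K-a}K$, and $N-2-i$ players of larger index (excluding $j$), each with probability $\frac{K-1-a}K$. Comparing with \eqref{eq:Dhat}, this means $D^{(i)}=\hat D_i$.

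For player $j$ the opponent $i$ has smaller index, so $t^{(j)}_i(b)=b-1$. The factor is $\one[a\ge b]-p(b-1)=-(\one[b>a]-b/K)$, and the outsiders now comprise $i'=j-1$ smaller-index players, giving $D^{(j)}=\hat D_{j-1}$. So both components are centred Bernoulli indicators of the single event $\{b>a\}$, times a bond vector.

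\emph{Norms.} By fact (iii) in the proof of Proposition~\ref{prop:orders}, $\|X_i\|^2=\|F_i\|^2-\|A_iF_i\|^2$. First, averaging over $b$,
\[
\|F_i\|^2=\tfrac1K\sum_a p(a)\big(1-p(a)\big)\hat D_i(a)^2 .
\]
Second, $A_iF_i$ is a function of $b$ alone:
\[
(A_iF_i)(b)=\tfrac1K\sum_{a<b}\hat D_i(a)-\tfrac1K\sum_a p(a)\hat D_i(a)=\tfrac1K\big(H_i(b)-c\big),
\]
with a constant $c$ that makes it centred, i.e.\ $c=\E H_i$. Hence $\|A_iF_i\|^2=\Var(H_i)/K^2$. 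It remains to identify $\frac1K\sum_a p(a)(1-p(a))\hat D_i(a)^2$ with $\calD(H_i)/K^2$. Since $H_i(y)-H_i(y-1)=\hat D_i(y-1)$ and $p(y-1)(1-p(y-1))=y(K-y)/K^2$, this is exactly the Hahn Dirichlet form \eqref{eq:Hahn}, and $\|X_i\|^2=P(H_i)/K^2$ follows.

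The computation for $X_j$ is symmetric. The weight is $\frac bK(1-\frac bK)$ attached to $\hat D_{j-1}(b)$, which matches the increments $G(y)-G(y-1)=D(y)$ at $y=b$ with weight $y(K-y)/K^2$; the $b=0$ term vanishes. Averaging over $a$ now produces $\sum_{b\le a}$, i.e.\ the inclusive sum $G_{j-1}$. This gives $P(G_{j-1})/K^2$.

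\emph{Cross term.} By fact (i) of Proposition~\ref{prop:orders}, $\langle X_i,X_j\rangle=\langle F_i,F_j\rangle$, and by (ii)
\[
\langle F_i,F_j\rangle=-\frac1{K^2}\sum_{a,b}\hat D_i(a)\hat D_{j-1}(b)\big(\one[b>a]-p(a)\big)\big(\one[b>a]-\tfrac bK\big),
\]
with no $\dk$ factors because $T$ has only two elements. Expanding the product into four terms:
\begin{itemize}
\item $\one[b>a]$ gives $\frac1{K^2}\sum_b\hat D_{j-1}(b)H_i(b)$, i.e.\ a product of the two $H/G$ weightings;
\item the terms involving $p(a)$ and $b/K$ produce the means $\E H_i$ and $\E G_{j-1}$.
\end{itemize}
Summation by parts should then recombine everything into $\E[H_iG_{j-1}]-\E H_i\,\E G_{j-1}$, which yields $-\Cov(H_i,G_{j-1})/K^2$.

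The main obstacle is this last bookkeeping step. One must track the inclusive versus exclusive boundary terms at $a=b$, and check that $\sum_a p(a)\hat D(a)=\E H$ and $\sum_b \frac bK\hat D(b)=K-\E G$ up to normalisation, including the constant shifts. I would verify these identities first on $K=2$ before writing the general summation by parts.
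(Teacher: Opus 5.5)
Your two norm computations are correct and follow the paper's own argument. The paper computes $\E[u_i\mid a,b]=\chi(a)+\one[b>a]\hat D_i(a)$ directly instead of quoting Lemma~\ref{lem:struct}, but it reaches the same $F=(\one[b>a]-p(a))\hat D_i(a)$, the same split $\|F\|^2-\|A_aF\|^2$, and the same identification with $\calD(H_i)$ and $\Var(H_i)$. There is one wording slip for $X_j$: the projection $A_j$ averages over $b=s_j$, not over $a$. Averaging $F_j$ over $a$ gives zero, and the sum $\sum_{b\le a}$ you write comes from averaging over $b$.

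The gap is the cross term, which is the third identity of the statement. You stop at ``summation by parts should then recombine everything'', so this identity is not proved. Moreover, one of the auxiliary identities you plan to check is wrong. In fact
\[
\sum_b\tfrac bK\hat D_{j-1}(b)=G_{j-1}(K-1)-\E G_{j-1},
\]
that is, the total sum of $\hat D_{j-1}$ minus $\E G_{j-1}$, not $K-\E G$.

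The paper avoids the bookkeeping with one observation. Check the cases $b>a$ and $b\le a$ separately, and read the product of the two centred indicators as a covariance of indicators of a single uniform level $U$:
\[
\big(\one[b>a]-p(a)\big)\big(\one[b>a]-\tfrac bK\big)=\dk(a,b-1)=\Cov_U\big(\one[U>a],\one[U\ge b]\big).
\]
Since $H_i(U)=\sum_a\hat D_i(a)\one[U>a]$ and $G_{j-1}(U)=\sum_b\hat D_{j-1}(b)\one[U\ge b]$, bilinearity turns your double sum into $\Cov(H_i,G_{j-1})$ at once. This gives $\langle X_i,X_j\rangle=-\Cov(H_i,G_{j-1})/K^2$.

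If you prefer to finish your own four-term expansion, it also works:

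1. Use $\sum_ap(a)\hat D_i(a)=\E H_i$ and the corrected identity above. The four terms then sum to
\[
\tfrac1K\sum_b(K-b)\hat D_{j-1}(b)H_i(b)+\tfrac1K\sum_a(K-1-a)\hat D_i(a)G_{j-1}(a)-\E H_i\,\E G_{j-1}.
\]
2. Show that the first two terms together equal $\E[H_iG_{j-1}]$. This follows from the count $\#\{U\le K-1:\ U>a,\ U\ge b\}=K-\max(a+1,b)$.

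Either way, this step has to be written out; as it stands, the third identity is only conjectured.
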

\begin{proof}
Write $a=s_i$, $b=s_j$, let $A_a,A_b$ be the averaging operators in these coordinates and $C_a=I-A_a$, $C_b=I-A_b$; they
are commuting orthogonal projections. Put $p(t)=\frac{K-1-t}K$, so that $\E_b\one[b>t]=p(t)$ for $-1\le t\le K-1$.

\emph{The component $X_i$.} Given $a$ and $b$, $\pos_i=\one[b>a]+R$, where $R$ is the contribution of the other $N-2$
opponents: $i$ of smaller index and $N-2-i$ of larger index, so by Lemma~\ref{lem:groups}
$\E[f(R+1)-f(R)\mid a]=\hat D_i(a)$ and
$\E[u_i\mid a,b]=\chi(a)+\one[b>a]\hat D_i(a)$ with $\chi(a)=\E[f(R)\mid a]$. Now $X_i=C_aC_b\E[u_i\mid a,b]$; $C_b$
annihilates $\chi(a)$, and $C_b[\one[b>a]\hat D_i(a)]=(\one[b>a]-p(a))\hat D_i(a)=:F(a,b)$. Thus $X_i=C_aF$ with $F$
centred in $b$, and $\|X_i\|^2=\|F\|^2-\|A_aF\|^2$. First,
\[
\|F\|^2=\frac1K\sum_ap(a)(1-p(a))\hat D_i(a)^2=\frac1{K^2}\sum_{a=0}^{K-2}\frac{(a+1)(K-1-a)}K\hat D_i(a)^2=\frac{\calD(H_i)}{K^2},
\]
since $H_i(y)-H_i(y-1)=\hat D_i(y-1)$. Second, $\sum_a\one[b>a]\hat D_i(a)=H_i(b)$ and
$\sum_ap(a)\hat D_i(a)=\E_bH_i(b)$, so $(A_aF)(b)=\frac1K(H_i(b)-\E H_i)$ and $\|A_aF\|^2=\Var(H_i)/K^2$. This proves
the first identity.

\emph{The component $X_j$.} Symmetrically, given $a,b$ one has $\pos_j=\one[a\ge b]+R'$, where the other opponents of $j$
are $j-1$ of smaller index and $N-1-j$ of larger index, so that the coefficient is $\hat D_{j-1}(b)$, and
$X_j=C_b\tilde F$ with $\tilde F(a,b)=(\one[a\ge b]-p(b-1))\hat D_{j-1}(b)$, centred in $a$. Here
$p(b-1)(1-p(b-1))=\frac{b(K-b)}{K^2}$ and $G_{j-1}(y)-G_{j-1}(y-1)=\hat D_{j-1}(y)$ give
$\|\tilde F\|^2=\calD(G_{j-1})/K^2$, while $\sum_b\one[a\ge b]\hat D_{j-1}(b)=G_{j-1}(a)$ gives
$\|A_b\tilde F\|^2=\Var(G_{j-1})/K^2$.

\emph{The cross term.} Since $C_a$ is self-adjoint, $C_aC_b=C_bC_a$, $C_a\tilde F=\tilde F$ and $C_bF=F$,
\[
\langle C_aF,C_b\tilde F\rangle=\langle F,C_bC_a\tilde F\rangle=\langle F,C_b\tilde F\rangle=\langle C_bF,\tilde F\rangle
=\langle F,\tilde F\rangle .
\]
The two indicators are complementary, $\one[a\ge b]=1-\one[b>a]$ and $p(b-1)=1-\E_U\one[U\ge b]$, whence
\[
\big(\one[b>a]-p(a)\big)\big(\one[a\ge b]-p(b-1)\big)=-\big(\one[b>a]-p(a)\big)\big(\one[b>a]-\tfrac bK\big).
\]
Checking the cases $b>a$ and $b\le a$ separately, the right-hand side equals
$-\dk(a,b-1)$, where $\dk(t,t'):=\Cov_U(\one[U>t],\one[U>t'])=p(\max(t,t'))-p(t)p(t')\ge0$ for $U$ uniform on the levels.
Therefore
\[
\langle X_i,X_j\rangle=-\frac1{K^2}\sum_{a,b}\hat D_i(a)\hat D_{j-1}(b)\,\Cov_U\big(\one[U>a],\one[U\ge b]\big)
=-\frac1{K^2}\Cov\big(H_i(U),G_{j-1}(U)\big).\qedhere
\]
\end{proof}

\begin{propS}[two-body reduction]\label{thm:C}\label{prop:bond}
For every $N\ge2$, $K\ge2$ and every payoff-from-rank function $f$,
$Q^{(1)}_K(f)+Q^{(2)}_K(f)=\fB_{N,K}\big(\hat D(f)\big)$.
\end{propS}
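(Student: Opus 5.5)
The plan is to assemble the identity directly from Propositions~\ref{prop:first} and~\ref{prop:second}, inserting them into the order-one and order-two pieces of \eqref{eq:Qm-SM}. For $m=1$ there is nothing left to do. Proposition~\ref{prop:first} already gives
\[
Q^{(1)}_K(f)=-\frac1{NK^2}\sum_{j}\Var(\Omega_j),
\]
and this is exactly $-\mathrm{III}/K^2$ in \eqref{eq:formula1}, with $\Omega_j$ built from $\hat D(f)$. The work is therefore in showing that $Q^{(2)}_K(f)=(\mathrm I+\mathrm{II})/K^2$.

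For $m=2$, fix a pair $T=\{i<j\}$. The summand of \eqref{eq:Qm-SM} equals
\[
\frac{N-1}N\big(\|X_i\|^2+\|X_j\|^2\big)-\frac12\|X_i+X_j\|^2
=\Big(\frac{N-1}N-\frac12\Big)\big(\|X_i\|^2+\|X_j\|^2\big)-\langle X_i,X_j\rangle .
\]
By \eqref{eq:second} this is
\[
K^{-2}\Big[\tfrac{N-2}{2N}\big(P(H_i)+P(G_{j-1})\big)+\Cov(H_i,G_{j-1})\Big].
\]
Next I sum over all pairs $i<j$ and reindex by bonds, with $k=i$ and $l=j-1$, so that $0\le k\le l\le N-2$.

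The cross terms then give $\sum_{k\le l}\Cov(H_k,G_l)$, which is $\mathrm{II}$. For the diagonal terms I count multiplicities. $P(H_k)$ occurs once for each $j>k$, that is $N-1-k=N\mu_k$ times. $P(G_l)$ occurs once for each $i\le l$, that is $l+1=N\nu_l$ times. Multiplying by $\frac{N-2}{2N}$ gives $\frac{N-2}{2}\sum_k[\mu_kP(H_k)+\nu_kP(G_k)]$, which is $\mathrm I$. Adding the $m=1$ piece yields $K^2\bigl(Q^{(1)}_K+Q^{(2)}_K\bigr)=\mathrm I+\mathrm{II}-\mathrm{III}$, i.e.\ $\fB_{N,K}(\hat D(f))$.

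The only delicate point is the bookkeeping in Proposition~\ref{prop:second}. There I need that, for a pair $i<j$, agent $i$'s coefficient is indeed the bond $\hat D_i$ and agent $j$'s is $\hat D_{j-1}$. This holds because agent $j$'s other opponents split into $j-1$ of smaller index and $N-1-j$ of larger index, matching \eqref{eq:Dhat} with $k=j-1$. Given that, the index ranges $k\le l$, including $k=l$ for adjacent agents, come out correctly.

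For $N=2$ the coefficient $N-2$ vanishes, and the identity reduces to the single-pair computation above. No step beyond this counting is expected to be an obstacle.
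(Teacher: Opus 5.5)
Your proposal is correct and follows essentially the same route as the paper. You insert Propositions~\ref{prop:first} and~\ref{prop:second} into the order-one and order-two parts of \eqref{eq:Qm-SM} and use $\frac{N-1}N-\frac12=\frac{N-2}{2N}$. You then count the multiplicities $N-1-k$ and $k+1$ of $P(H_k)$ and $P(G_k)$, and reindex the cross terms by the bijection $(i,j)\mapsto(i,j-1)$ onto $\{k\le l\}$, exactly as the paper does.
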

\begin{proof}
For $m=2$, \eqref{eq:Qm-SM} reads
$Q^{(2)}_K=\sum_{i<j}\big[\frac{N-2}{2N}(\|X_i\|^2+\|X_j\|^2)-\langle X_i,X_j\rangle\big]$, since
$\frac{N-1}N-\frac12=\frac{N-2}{2N}$. By \eqref{eq:second}, bond $k$ occurs through $P(H_k)$ in the $N-1-k$ pairs $(k,j)$,
$j>k$, and through $P(G_k)$ in the $k+1$ pairs $(i,k+1)$, $i\le k$; this gives $K^{-2}\,\mathrm I$. The map
$(i,j)\mapsto(k,l)=(i,j-1)$ is a bijection from $\{i<j\}$ onto $\{k\le l\}$, so the cross terms give $K^{-2}\,\mathrm{II}$.
Finally $Q^{(1)}_K=-K^{-2}\,\mathrm{III}$ by \eqref{eq:Q1}.
\end{proof}

\begin{lemmaS}[kernel transfer]\label{lem:transfer}
$\hat D(f)\in\operatorname{span}(\one\otimes\one)$, i.e.\ $\hat D_k(a)=c$ for all $k$ and $a$, if and only if $f$ is
affine.
\end{lemmaS}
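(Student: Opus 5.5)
The ``if'' direction is immediate. If $f$ is affine, then $\Delta f\equiv c$ is constant. Every expectation in \eqref{eq:Dhat} is then $c$, whatever the binomial laws are, so $\hat D_k(a)=c$ for all $k$ and $a$.

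For the ``only if'' direction I would read $\hat D_k(a)$ as a polynomial in the level. Put $x=\frac{K-a}K$, so that $\frac{K-1-a}K=x-\frac1K$. Define
\[
\Phi_k(x)=\E\,\Delta f\big(\Bin(k,x)+\Bin(N-2-k,x-\tfrac1K)\big),
\]
a polynomial of degree at most $N-2$ in $x$. The hypothesis says that $\Phi_k(x)=c$ at the $K$ points $x=1,\frac{K-1}K,\dots,\frac1K$.

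With only $K$ nodes one cannot conclude directly that $\Phi_k$ is constant, because $K$ may be smaller than $N-1$. I would therefore not rely on polynomial identity alone, and would instead use the extreme levels together with the whole family of bonds.

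\emph{Step 1: top level.} At $a=K-1$ we have $x=\frac1K$, so the second binomial is $\Bin(N-2-k,0)=0$. Hence $\hat D_k(K-1)=\E\,\Delta f(\Bin(k,\frac1K))=c$ for every $k=0,\dots,N-2$.

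\emph{Step 2: induction.} Let $g=\Delta f-c$. Then $\E\,g(\Bin(k,\frac1K))=0$ for every $k$. For $k=0$ this gives $g(0)=0$. Suppose $g(0)=\dots=g(k-1)=0$. Then
\[
\E\,g(\Bin(k,\tfrac1K))=K^{-k}g(k)=0,
\]
because the event $\{\Bin=k\}$ has probability $K^{-k}>0$ and all lower values of $g$ vanish. Induction on $k$ up to $N-2$ gives $g(r)=0$ for all $r=0,\dots,N-2$. So $\Delta f$ is constant on $\{0,\dots,N-2\}$, that is, $f$ is affine on the ranks $\{0,\dots,N-1\}$.

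This induction is exactly the triangular structure of the bond vectors in $k$. It holds for every $K\ge2$, since $\frac1K\in(0,1)$.

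\emph{Expected difficulty.} The only subtle point is choosing the right level. Generic levels couple two binomials; the top level kills one of them and makes the system triangular. Taken together, the two directions give the equivalence stated in the lemma.
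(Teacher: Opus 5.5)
Your proof is correct and follows the paper's argument: at the top level $a=K-1$ the second binomial degenerates, which leaves a triangular system in $\Delta f(0),\dots,\Delta f(N-2)$ with diagonal entries $K^{-k}>0$. Your induction on $g=\Delta f-c$ is that triangular solve written out explicitly.
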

\begin{proof}
If $\Delta f\equiv c$ then $\hat D_k\equiv c$ by \eqref{eq:Dhat}. Conversely, at the top level $a=K-1$ the second binomial
in \eqref{eq:Dhat} is degenerate at $0$, so $\hat D_k(K-1)=\E\,\Delta f(\Bin(k,\frac1K))$, $k=0,\dots,N-2$. This is a
triangular linear system in $\Delta f(0),\dots,\Delta f(N-2)$ with diagonal entries $K^{-k}>0$; its unique solution with
right-hand side $c\,\one$ is $\Delta f\equiv c$.
\end{proof}

Proposition~\ref{thm:C} uses only exchangeability within the two index groups, the one-flip identity and the orthogonality of
the Hoeffding projections. It holds for every $K\ge2$, including $K=2$, where Theorem~\ref{thm:finiteK} fails: the
hypothesis $K\ge3$ enters only through Lemma~\ref{lem:Lstar-SM} below.

%=====================================================================================================
\stepsection{SOS}{the sum-of-squares identity}\label{sec:SOS}
The bond form \eqref{eq:formula1} has one negative term, $\mathrm{III}$, of the same order as the two positive ones. This
step rearranges the three terms into a sum of manifestly nonnegative pieces plus a single negative term of a
different shape, $-\sum_kc_k\Var(D_k)$, which involves the bond vectors only through their variances; every later step
is a way of paying for that term. Two elementary identities are needed first.

\begin{lemmaS}[Hahn--Poincar\'e]\label{lem:hahn}
For every function $g$ of the level, $\calD(g)=\sum_{m=1}^{K-1}m(m+1)\Var_m(g)$, where $\Var_m(g)$ is the squared norm of
the component of $g$ along the discrete Chebyshev polynomial of degree $m$ on $\{0,\dots,K-1\}$. In particular
$\calD(g)\ge2\Var(g)$, with equality iff $g$ is affine.
\end{lemmaS}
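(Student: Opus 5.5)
We prove the Hahn--Poincaré identity by diagonalising $\calD$ in the discrete Chebyshev basis $\{t_m\}_{m=0}^{K-1}$ of the uniform law on $\{0,\dots,K-1\}$, the polynomials $t_m(x)=\sum_{k=0}^m(-1)^k\binom mk\binom{m+k}k x^{\underline k}/(K-1)^{\underline k}$ already used in Theorem~\ref{thm:spectrum} with $N$ replaced by $K$.

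First, rewrite the form through an operator. Summation by parts gives $\calD(g)=\langle g,\mathcal L g\rangle$ in $L^2$ of the uniform law, where
\[
(\mathcal L g)(x)=(x+1)(K-1-x)\,\big(g(x)-g(x+1)\big)+x(K-x)\,\big(g(x)-g(x-1)\big),
\]
and the factor $1/K$ in the weights is absorbed by the normalisation $\E$. The boundary terms vanish because the weight $y(K-y)$ is zero at $y=0$ and $y=K$. This $\mathcal L$ is exactly the second-order difference operator of the Hahn polynomials $Q_m(x;0,0,K-1)$, i.e. the discrete Chebyshev polynomials.

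Second, verify $\mathcal L t_m=m(m+1)t_m$. The quickest route is to check that $\mathcal L$ maps the falling factorial $x^{\underline k}$ to $k(k+1)x^{\underline k}$ plus lower-degree terms, so $\mathcal L$ is triangular on polynomials of degree at most $K-1$ with diagonal entries $k(k+1)$. Since $\mathcal L$ is self-adjoint, it preserves the $L^2$-orthogonal flag of these polynomials, and its eigenvectors are therefore the orthogonal polynomials $t_m$, with eigenvalues equal to the diagonal entries $m(m+1)$. Alternatively, the eigenvalue relation can be cited as the Hahn difference equation in \cite{koekoek2010} with $\alpha=\beta=0$.

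Third, expand $g=\sum_m c_m t_m$. Orthogonality gives $\calD(g)=\sum_m m(m+1)c_m^2\|t_m\|^2=\sum_m m(m+1)\Var_m(g)$. The $m=0$ term drops out, and $\Var(g)=\sum_{m\ge1}\Var_m(g)$. Since $m(m+1)\ge2$ for $m\ge1$, with equality only at $m=1$, we obtain $\calD(g)\ge2\Var(g)$, with equality iff $\Var_m(g)=0$ for all $m\ge2$, i.e. iff $g$ is affine.

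The main obstacle is the triangularity computation in the second step: one must check that the leading coefficient of $\mathcal L x^{\underline k}$ is exactly $k(k+1)$. Writing $\mathcal L$ as $B(x)\nabla\Delta$ plus first-order terms, with $B(x)=x(K-x)$ quadratic and the drift linear, reduces this to comparing the top-degree terms of two explicit polynomials; everything else in the argument is routine.
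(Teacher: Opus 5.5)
Your proposal is correct and follows the paper's proof: summation by parts writes $\calD(g)=\E[g\,\mathcal L g]$ with the same Hahn difference operator (parameters $(0,0)$), whose eigenfunctions are the discrete Chebyshev polynomials with eigenvalues $m(m+1)$, and the inequality with its equality case follows because $m(m+1)\ge 6>2$ for $m\ge 2$. The only difference is that you also sketch a self-contained check of the eigenrelation, where the paper simply cites the Hahn difference equation. Your check is sound: $\mathcal L x^{\underline k}=k(k+1)x^{\underline k}+\text{lower}$, and self-adjointness then forces each $t_m$ to be an eigenvector.
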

\begin{proof}
Summation by parts gives $\calD(g)=\E[g\cdot Lg]$ with
$(Lg)(y)=-\big[(y+1)(K-1-y)(g(y+1)-g(y))-y(K-y)(g(y)-g(y-1))\big]$, the second-order difference operator of the Hahn
polynomials with parameters $(0,0)$, whose eigenfunctions are the discrete Chebyshev polynomials with eigenvalues
$m(m+1)$ \cite[Sec.~9.5]{koekoek2010}. The components of degree $m\ge2$ have eigenvalue at least $6>2$.
\end{proof}

\begin{lemmaS}[spread identity]\label{lem:spread}
For functions $z_0,\dots,z_{n-1}$ of the level,
$n\sum_k\Var(z_k)-\Var\big(\sum_kz_k\big)=\sum_{k<l}\Var(z_k-z_l)$.
\end{lemmaS}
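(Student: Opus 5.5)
The identity is a purely algebraic statement about variances of functions of the level under the uniform law, and I would prove it by expanding both sides into means of squares and cross products. Write $\Var(z)=\E[z^2]-(\E z)^2$. Linearity makes both sides quadratic forms in the tuple $(z_0,\dots,z_{n-1})$. The constant parts $(\E z_k)$ enter in exactly the same way as the centred parts, since each $\Var$ is invariant under adding constants. So I would first replace every $z_k$ by its centred version $z_k-\E z_k$. After that, each $\Var$ is just $\E[\cdot^2]$ of a centred function.

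Assuming the $z_k$ are centred, the left side is
\[
n\sum_k\E[z_k^2]-\E\Big[\Big(\sum_kz_k\Big)^2\Big]=n\sum_k\E[z_k^2]-\sum_k\E[z_k^2]-2\sum_{k<l}\E[z_kz_l].
\]
This simplifies to $(n-1)\sum_k\E[z_k^2]-2\sum_{k<l}\E[z_kz_l]$.

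The right side is
\[
\sum_{k<l}\E[(z_k-z_l)^2]=\sum_{k<l}\big(\E[z_k^2]+\E[z_l^2]\big)-2\sum_{k<l}\E[z_kz_l].
\]
In the first sum on the right, each index $k$ appears in exactly $n-1$ unordered pairs. That sum is therefore $(n-1)\sum_k\E[z_k^2]$, and the two sides agree.

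Equivalently, this is Lagrange's identity $n\sum_kx_k^2-(\sum_kx_k)^2=\sum_{k<l}(x_k-x_l)^2$, applied pointwise to the centred functions and then averaged over the level. I would mention this as a one-line alternative.

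There is no real obstacle. The only point needing care is the reduction to centred functions, and that reduction is justified because both sides are translation invariant in each $z_k$ separately: $\Var(z_k-z_l)$ and $\Var(\sum_k z_k)$ are unaffected by adding constants to the individual $z_k$.
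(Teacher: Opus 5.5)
Your proof is correct and is essentially the paper's argument: the paper expands both sides to $(n-1)\sum_k\Var(z_k)-2\sum_{k<l}\Cov(z_k,z_l)$, which is exactly your computation once the $z_k$ are centred. Your translation-invariance justification for centring is valid, and so is the pointwise Lagrange-identity alternative.
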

\begin{proof}
Both sides equal $(n-1)\sum_k\Var(z_k)-2\sum_{k<l}\Cov(z_k,z_l)$.
\end{proof}

\begin{thmS}[sum-of-squares identity]\label{thm:SOS}
For all bond vectors $D_0,\dots,D_{N-2}\in\R^K$,
\begin{equation}\label{eq:SOS-SM}
K^2\fB_{N,K}(D)=\sum_kZ_k+\mathrm{Sq}(D)-\sum_kc_k\Var(D_k),\qquad c_k:=2N\mu_k\nu_k-\tfrac12 ,
\end{equation}
where $Z_k:=\frac{N-2}2\big[\mu_k(\calD-2\Var)(H_k)+\nu_k(\calD-2\Var)(G_k)\big]\ge0$ and
\begin{equation}\label{eq:Sq-SM}
\mathrm{Sq}(D):=\frac1{2N}\sum_{k<l}\Big[(N{-}1{-}l)\Var(H_k-H_l+D_l)+(k{+}1)\Var(G_k-G_l+D_k)+(l{-}k)\Var(H_k-G_l)\Big]\ge0 .
\end{equation}
\end{thmS}
\begin{proof}
\emph{1) Vertex variables.} For $j\in\{0,\dots,N-1\}$ and a bond $k$ put $Z_{jk}:=H_k$ if $k<j$ and $Z_{jk}:=G_k$ if
$k\ge j$, so that $\Omega_j=\sum_kZ_{jk}$, and let $\E_j$ denote the average over $j$. For fixed $k$, $Z_{jk}=H_k$ for the
$N-1-k$ values $j>k$ and $Z_{jk}=G_k$ for the $k+1$ values $j\le k$. Hence, in \eqref{eq:formula1},
\[
\mathrm I=\tfrac{N-2}2\,\E_j\sum_kP(Z_{jk}),\qquad \mathrm{III}=\E_j\Var\Big(\sum_kZ_{jk}\Big),\qquad
\sum_kZ_k=\tfrac{N-2}2\,\E_j\sum_k(\calD-2\Var)(Z_{jk}).
\]
\emph{2) Spreading.} With $n=N-1$ bonds, Lemma~\ref{lem:spread} gives, for each $j$,
\[
\tfrac{N-2}2\sum_kP(Z_{jk})=\tfrac{N-2}2\sum_k(\calD-2\Var)(Z_{jk})+\tfrac12\sum_{k<l}\Var(Z_{jk}-Z_{jl})
-\tfrac12\sum_k\Var(Z_{jk})+\tfrac12\Var(\Omega_j);
\]
indeed both sides equal $\frac{N-2}2\sum_k\calD(Z_{jk})-\frac{N-2}2\sum_k\Var(Z_{jk})$. Averaging over $j$ and inserting
into \eqref{eq:formula1},
\begin{equation}\label{eq:SOSmid}
K^2\fB=\sum_kZ_k+\tfrac12\E_j\sum_{k<l}\Var(Z_{jk}-Z_{jl})+\Xi,\qquad
\Xi:=\mathrm{II}-\tfrac12\E_j\Big[\Var(\Omega_j)+\sum_k\Var(Z_{jk})\Big].
\end{equation}
\emph{3) The pair term.} For $k<l$ the pair $(Z_{jk},Z_{jl})$ equals $(G_k,G_l)$ for the $k+1$ values $j\le k$,
$(H_k,G_l)$ for the $l-k$ values $k<j\le l$, and $(H_k,H_l)$ for the $N-1-l$ values $j>l$. Therefore
\begin{equation}\label{eq:pairterm}
\tfrac12\E_j\sum_{k<l}\Var(Z_{jk}-Z_{jl})=\frac1{2N}\sum_{k<l}\Big[(k{+}1)\Var(G_k-G_l)+(l{-}k)\Var(H_k-G_l)+(N{-}1{-}l)\Var(H_k-H_l)\Big].
\end{equation}
\emph{4) The residual.} Since $\frac12[\Var(\Omega_j)+\sum_k\Var(Z_{jk})]=\sum_k\Var(Z_{jk})+\sum_{k<l}\Cov(Z_{jk},Z_{jl})$,
the same count gives
\[
\Xi=\sum_{k<l}\Big[\big(1-\tfrac{l-k}N\big)\Cov(H_k,G_l)-\nu_k\Cov(G_k,G_l)-\mu_l\Cov(H_k,H_l)\Big]
+\sum_k\Big[\Cov(H_k,G_k)-\mu_k\Var(H_k)-\nu_k\Var(G_k)\Big].
\]
As $1-\frac{l-k}N=\nu_k+\mu_l$ and $G=H+D$, the first bracket is $\mu_l\Cov(H_k,D_l)-\nu_k\Cov(G_l,D_k)$; as
$\mu_k+\nu_k=1$, the second is $\mu_k\Cov(H_k,D_k)-\nu_k\Cov(G_k,D_k)=(\mu_k-\nu_k)\Cov(H_k,D_k)-\nu_k\Var(D_k)$.
Thus
\begin{equation}\label{eq:Xi}
\Xi=\sum_{k<l}\big[\mu_l\Cov(H_k,D_l)-\nu_k\Cov(G_l,D_k)\big]+\sum_k\big[(\mu_k-\nu_k)\Cov(H_k,D_k)-\nu_k\Var(D_k)\big].
\end{equation}
\emph{5) Absorption.} Expanding the shifted variances and using
$2\Cov(H,D)=\Var(G)-\Var(H)-\Var(D)$,
\begin{align}
\tfrac12\Var(H_k-H_l)+\Cov(H_k,D_l)&=\tfrac12\Var(H_k-H_l+D_l)+\tfrac12\big[\Var G_l-\Var H_l\big]-\Var(D_l),\label{eq:abs1}\\
\tfrac12\Var(G_k-G_l)-\Cov(G_l,D_k)&=\tfrac12\Var(G_k-G_l+D_k)-\tfrac12\big[\Var G_k-\Var H_k\big]-\Var(D_k).\label{eq:abs2}
\end{align}
The weights match: the $(H,H)$ pair carries $\frac{N-1-l}{2N}=\frac{\mu_l}2$ in \eqref{eq:pairterm} and $\mu_l$ in
\eqref{eq:Xi}, the $(G,G)$ pair $\frac{\nu_k}2$ and $\nu_k$. Applying \eqref{eq:abs1}--\eqref{eq:abs2} converts
\eqref{eq:pairterm} plus the off-diagonal part of \eqref{eq:Xi} into $\mathrm{Sq}(D)$ plus multiples of
$\frac12(\Var G_k-\Var H_k)$ and of $\Var(D_k)$, which we now collect together with the diagonal part of \eqref{eq:Xi},
written as $(\mu_k-\nu_k)\cdot\frac12(\Var G_k-\Var H_k)-\frac{\mu_k-\nu_k}2\Var(D_k)-\nu_k\Var(D_k)$.

\emph{Coefficient of $\frac12(\Var G_k-\Var H_k)$.} Bond $k$ is the second index of $k$ pairs, each contributing
$+\mu_k$ by \eqref{eq:abs1}, and the first index of $N-2-k$ pairs, each contributing $-\nu_k$ by \eqref{eq:abs2}; the
diagonal contributes $\mu_k-\nu_k$. The total is
$\frac1N\big[k(N-1-k)-(k+1)(N-2-k)+(N-2-2k)\big]=0$.

\emph{Coefficient of $\Var(D_k)$.} The same three sources give $-k\mu_k$, $-(N-2-k)\nu_k$ and
$-\frac{\mu_k-\nu_k}2-\nu_k=-\frac12$, and
$\frac{k(N-1-k)+(k+1)(N-2-k)}N+\frac12=\frac{2(N-1-k)(k+1)}N-\frac12=c_k$.

\emph{6) Signs.} $Z_k\ge0$ by Lemma~\ref{lem:hahn}, and $\mathrm{Sq}\ge0$ because its weights are nonnegative.
\end{proof}

%=====================================================================================================
\stepsection{kerSq}{the kernel of \texorpdfstring{$\mathrm{Sq}$}{Sq}}\label{sec:kerSq}
The term $\mathrm{Sq}$ of Theorem~\ref{thm:SOS} is what pins down the kernel of $\fB_{N,K}$: once positivity is known,
$\fB(D)=0$ forces $\mathrm{Sq}(D)=0$, and the following lemma turns that into $D\in\operatorname{span}(\one\otimes\one)$.
It is used in the kernel part of Proposition~\ref{prop:T} and in the assembly (Step~\ref{step:assembly}).

\begin{lemmaS}[kernel of $\mathrm{Sq}$]\label{lem:kerSq}
Let $N\ge3$, $K\ge2$. If $\mathrm{Sq}(D)=0$ then $D\in\operatorname{span}(\one\otimes\one)$: there is $c\in\R$ with
$D_k=c\,\one$ for every bond $k$.
\end{lemmaS}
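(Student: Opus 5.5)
Since $\mathrm{Sq}(D)$ in \eqref{eq:Sq-SM} is a nonnegative combination of variances, I will read off from $\mathrm{Sq}(D)=0$ that every term with a positive weight vanishes. Then I will turn each vanishing variance into a statement that a certain function of the level is constant.

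First I check the weights. For $0\le k<l\le N-2$ we have $N-1-l\ge1$, $k+1\ge1$ and $l-k\ge1$. So all three variances in each pair term vanish, and since $N\ge3$ there are at least two bonds, so such pairs exist and every bond belongs to at least one. Hence, for every $k<l$, each of the functions
\[
H_k-G_l,\qquad H_k-H_l+D_l,\qquad G_k-G_l+D_k
\]
is constant in the level $a$.

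Next I use the relations $G=H+D$ to isolate the bond vectors. Substituting $G_l=H_l+D_l$ into the first function, $H_k-H_l-D_l$ is constant. Comparing this with the second function, $2D_l$ is constant. Substituting $G_k=H_k+D_k$ into the third function gives $(H_k-G_l)+2D_k$, which is therefore constant, and so $D_k$ is constant. Each index occurs as a first index ($k\le N-3$) or a second index ($l\ge1$) of some pair. Hence every bond vector is constant, say $D_k=c_k\one$.

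Finally I show that the $c_k$ agree. With $D_k=c_k\one$ we get $H_k(a)=c_ka$ and $G_k(a)=c_k(a+1)$. Then $H_k-G_l=(c_k-c_l)a-c_l$, and this is constant in $a\in\{0,\dots,K-1\}$ with $K\ge2$ only if $c_k=c_l$. Thus $D_k=c\one$ for all $k$, that is, $D\in\operatorname{span}(\one\otimes\one)$.

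There is no real obstacle here. The only points needing care are that all weights in \eqref{eq:Sq-SM} are strictly positive in the range $k<l\le N-2$, and that $N\ge3$ guarantees every bond appears in some pair.
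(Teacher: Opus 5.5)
Your proof is correct and follows the paper's argument. Both read off from the positive weights that $H_k-G_l$, $H_k-H_l+D_l$ and $G_k-G_l+D_k$ are constant for every $k<l$. Both then use that each bond is a first or second index to conclude that every $D_k$ is constant, and match the constants through $H_k-G_l$. The only difference is cosmetic: you combine the constant functions directly via $G=H+D$, while the paper takes first differences in the level and obtains the same relations.
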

\begin{proof}
On the range $0\le k<l\le N-2$ the three weights in \eqref{eq:Sq-SM} are at least $1$. Hence $\mathrm{Sq}(D)=0$ forces,
for every pair $k<l$, each of $H_k-G_l$, $H_k-H_l+D_l$ and $G_k-G_l+D_k$ to be constant in the level. Using
$H(a+1)-H(a)=D(a)$ and $G(a+1)-G(a)=D(a+1)$ and taking first differences at $a=0,\dots,K-2$:
\[
D_k(a)=D_l(a+1),\qquad D_k(a)-2D_l(a)+D_l(a+1)=0,\qquad 2D_k(a+1)-D_l(a+1)-D_k(a)=0 .
\]
Substituting the first relation into the second gives $D_l(a+1)=D_l(a)$, so $D_l$ is constant whenever $l$ occurs as a
second index; substituting it into the third gives $D_k(a+1)=D_k(a)$, so $D_k$ is constant whenever $k$ occurs as a first
index. Since $N\ge3$ there are at least two bonds, and every bond is a first index ($k\le N-3$) or a second index
($l\ge1$); so $D_k=c_k\one$ for all $k$, and the first relation gives $c_k=c_l$.
\end{proof}
All three groups of squares are needed: the first alone does not constrain $D_0(K-1)$ nor $D_{N-2}(0)$; the second removes
the latter freedom and the third the former.

%=====================================================================================================
\stepsection{Lstar}{the sharp level inequality}\label{sec:Lstar}
The terms $Z_k$ of Theorem~\ref{thm:SOS} are nonnegative by Lemma~\ref{lem:hahn}, but that alone does not pay for
$-c_k\Var(D_k)$. This step shows that $Z_k$ dominates $\Var(D_k)$ with an explicit constant, and that the constant
carries the factor $K-2$: this is the one place where the hypothesis $K\ge3$ enters.

\begin{lemmaS}[sharp level inequality]\label{lem:Lstar-SM}\label{lem:Lstar}
For every $K\ge3$, $\mu\in[0,1]$, $\nu=1-\mu$ and $D\in\R^K$,
\begin{equation}\label{eq:Lstar-SM}
\mu(\calD-2\Var)(J_0D)+\nu(\calD-2\Var)(J_1D)\ \ge\ (K-2)\,\mu\nu\,\Var(D),
\end{equation}
and the constant $K-2$ cannot be improved. Moreover, in the step coordinates $c_m=D(m)-D(m-1)$, $m=1,\dots,K-1$, the
difference of the two sides of \eqref{eq:Lstar-SM} equals $\frac2{K^2}\,c^\top\big[(Kg)\circ\Phi_\mu\big]c$, where
$g(m,m')=\min(m,m')(K-\max(m,m'))/K$ is the Green's function of the Dirichlet Laplacian of the path
$\{1,\dots,K-1\}$ and $\Phi_\mu=\widetilde M_\mu-\frac{K\mu\nu}2\one\one^\top$ is positive semidefinite, with
\begin{equation}\label{eq:Mtilde}
\widetilde M_\mu(m,m')=\tilde a_{\min(m,m')}\tilde b_{\max(m,m')},\qquad
\tilde a_m=\frac{m-1+2\mu}2,\quad\tilde b_m=\frac{K-m-1+2\nu}2 .
\end{equation}
\end{lemmaS}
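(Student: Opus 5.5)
The plan is to work entirely in the step coordinates $c_m=D(m)-D(m-1)$, $m=1,\dots,K-1$, since both sides of \eqref{eq:Lstar-SM} are invariant under adding a constant to $D$ only up to the cumulative sums. Adding a constant $c$ to $D$ adds $c\,a$ to $J_0D$ and $c(a+1)$ to $J_1D$, i.e.\ an affine function. $\calD-2\Var$ is a quadratic form that vanishes on affine functions (Lemma~\ref{lem:hahn}). I would first check that it is also blind to affine shifts, meaning its associated bilinear form kills affine functions; this holds because the Hahn eigenfunction of degree $1$ is orthogonal to the higher Chebyshev components. So both sides depend only on $c\in\R^{K-1}$.

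Next I would compute the three quadratic forms explicitly in $c$. For $\Var(D)$, write $D(a)=D(0)+\sum_{m\le a}c_m$. Then $\Var(D)=\sum_{m,m'}c_mc_m'\Cov(\one[U\ge m],\one[U\ge m'])$, and that covariance is $\min(m,m')(K-\max(m,m'))/K^2=g(m,m')/K$. Hence $\Var(D)=\frac1K c^\top g\,c$. For the left side, $(\calD-2\Var)(g)=\sum_{p\ge2}(p(p+1)-2)\Var_p(g)$. Rather than using the spectral form, I would compute directly. The increments of $J_0D$ are $D(y-1)$ and those of $J_1D$ are $D(y)$, so $\calD(J_0D)=\sum_y\frac{y(K-y)}K D(y-1)^2$, and similarly for $J_1D$. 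The variances of $J_0D,J_1D$ are again Green's-function forms in $D$ itself. Expressing $D$ through $c$ via the lower-triangular summation gives each term as $c^\top(\cdot)c$. The goal is to recognise the combination $\mu(\cdot)+\nu(\cdot)-(K-2)\mu\nu\,\Var(D)$ as $\frac{2}{K^2}c^\top[(Kg)\circ\Phi_\mu]c$ entrywise. I expect this bookkeeping to reduce to checking, for $m\le m'$, that the $(m,m')$ coefficient factors as $2\min(m,m')(K-\max(m,m'))\,(\tilde a_m\tilde b_{m'}-\frac{K\mu\nu}{2})/K^2$. This is a polynomial identity in $m,m',K,\mu$ and can be checked by expanding.

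Given the Hadamard representation, positivity follows from the Schur product theorem once $\Phi_\mu\succeq0$, since $g$ is a Green's function and hence positive definite. To show $\Phi_\mu\succeq0$, note that $\widetilde M_\mu$ is a one-pair (Green-type) matrix with $\tilde a$ increasing and $\tilde b$ decreasing, both positive for $\mu\in(0,1)$. Its inverse is tridiagonal, a path Laplacian with Robin boundary terms determined by $\tilde a_1=\mu$ and $\tilde b_{K-1}=\nu$. Then $\widetilde M_\mu-t\one\one^\top\succeq0$ iff $t\,\one^\top\widetilde M_\mu^{-1}\one\le1$. The row sums of the tridiagonal inverse telescope to the boundary terms, and I expect $\one^\top\widetilde M_\mu^{-1}\one=2/(K\mu\nu)$ exactly. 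This gives $\Phi_\mu\succeq0$ and singular, which explains sharpness. The endpoints $\mu\in\{0,1\}$ follow by continuity.

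For sharpness of $K-2$, use the null vector $w$ of $\Phi_\mu$, namely $w\propto\widetilde M_\mu^{-1}\one$. The Hadamard form is generally not zero there, so I would instead test a specific $D$. The natural candidate is $D$ with a single nonzero step ($c=e_m$) at a central $m$, or $D$ linear. For linear $D$ the cumulative sums are quadratic, so only the degree-$2$ Hahn component contributes with weight $4$, which gives an explicit ratio. I would then optimise over $\mu$ (at $\mu=\frac12$) and over candidate $D$ to show the ratio approaches $K-2$. The main obstacle is exactly this sharpness step, together with verifying the entrywise factorisation identity; the Robin-Laplacian row-sum computation is routine.
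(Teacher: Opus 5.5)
Your proof of the inequality itself is sound and is essentially the paper's. You reduce to step coordinates and obtain the Hadamard representation $\frac{2}{K^2}c^\top[(Kg)\circ\Phi_\mu]c$. You show $\Phi_\mu\succeq0$ from the tridiagonal (Robin) inverse of the one-pair matrix $\widetilde M_\mu$ together with $\one^\top\widetilde M_\mu^{-1}\one=2/(K\mu\nu)$, and conclude by the Schur product theorem.

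Expanding $\calD(J_0D)$, $\Var(J_1D),\dots$ in $c$ (after normalising $D(0)=0$, since the individual terms are not functions of $c$ alone) would verify the entrywise identity. The paper gets it with much less bookkeeping, and in a way that explains the factorisation. Writing $\Var$ as a sum over pairs of levels, the diagonal terms cancel and $(\calD-2\Var)(J_1D)=\frac{2}{K^2}\sum_{1\le y<y'\le K-1}y(K-y')(D(y)-D(y'))^2$, and similarly for $J_0D$. The identity $\mu(y+1)(K-1-y')+\nu y(K-y')=(y+\mu)(K-\mu-y')-\mu\nu$ turns the weight into a product minus a constant. Summing over $y<m$, $y'\ge m'$ then gives $m\tilde a_m\cdot(K-m')\tilde b_{m'}$ at once.

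The genuine gap is the sharpness of $K-2$: your plan for it would fail. Your own Hadamard identity gives, for a single step $c=e_m$,
$$\frac{\mathrm{LHS}}{\mu\nu\Var(D)}=\frac{2\tilde a_m\tilde b_m}{\mu\nu}-2 .$$
At $\mu=\frac12$ this equals $2m(K-m)-2\ge2(K-2)$. A linear $D$ gives $(K^2-4)/(15\mu\nu)$, i.e.\ $\frac{4}{15}(K^2-4)>K-2$ at $\mu=\frac12$, and it diverges as $\mu\to0$.

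More fundamentally, no fixed interior $\mu$ can work. For $\mu\in(0,1)$ the diagonal of $\Phi_\mu$ is positive: it equals $\frac{(K-2)\mu^2}{2}$ at $m=1$ and $\frac{(K-2)\nu^2}{2}$ at $m=K-1$, and is at least the smaller of these in between, by concavity of $m\mapsto\tilde a_m\tilde b_m$. By the strict Schur product theorem (if $A\succ0$ and $B\succeq0$ has no zero diagonal entry, then $A\circ B\succ0$), $(Kg)\circ\Phi_\mu\succ0$. So the best constant at that $\mu$ is strictly larger than $(K-2)\mu\nu$; for $K=3$, $\mu=\frac12$ it is $\frac43\mu\nu$, attained by linear $D$. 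In the same vein, singularity of $\Phi_\mu$ does not "explain sharpness".

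Sharpness is a boundary phenomenon. Take $D=e_0$, i.e.\ the \emph{boundary} step $c=e_1$ rather than a central one, and let $\mu\to0$. Then $J_1D\equiv1$, so the $\nu$-term vanishes. Next, $J_0D=\one[a\ge1]$ gives $(\calD-2\Var)(J_0D)=(K-1)(K-2)/K^2$, while $\Var(D)=(K-1)/K^2$. The ratio is therefore exactly $(K-2)/\nu$, which tends to $K-2$. Equivalently, $\Phi_\mu(1,1)=\frac{(K-2)\mu^2}{2}$, so the excess over the right-hand side is a fraction $\mu/\nu$ of it. This is the paper's argument.
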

\begin{proof}
\emph{1) Pair representation.} Let $\Phi=J_1D$ and, for $1\le y\le y'\le K-1$,
$\gamma(y,y'):=\frac yK\big(1-\frac{y'}K\big)$. Since
$\Var(\Phi)=\frac1{K^2}\sum_{a<b}\big(\sum_{a<y\le b}D(y)\big)^2$ under the uniform law on the levels, the coefficient
of $D(y)D(y')$, $y\le y'$, in $\Var(\Phi)$ is $\frac1{K^2}\#\{(a,b):a<y,\ b\ge y'\}=\gamma(y,y')$ (counted twice if
$y<y'$). Writing
$D(y)D(y')=\frac12(D(y)^2+D(y')^2)-\frac12(D(y)-D(y'))^2$,
\[
\Var(\Phi)=\sum_{y\ge1}D(y)^2\Big[\gamma(y,y)+\sum_{y'\ne y}\gamma(\min,\max)\Big]-\sum_{1\le y<y'}\gamma(y,y')\big(D(y)-D(y')\big)^2 .
\]
A direct summation gives $\sum_{y'\ne y}\gamma(\min,\max)=\frac{K-2}2\gamma(y,y)$, while $\calD(\Phi)=K\sum_y\gamma(y,y)D(y)^2$.
Hence the diagonal terms cancel in $\calD-2\Var$ and
\[
(\calD-2\Var)(J_1D)=\frac2{K^2}\sum_{1\le y<y'\le K-1}y\,(K-y')\,\big(D(y)-D(y')\big)^2 .
\]
The same computation for $J_0D$, which is $J_1D$ shifted by one level, gives
$(\calD-2\Var)(J_0D)=\frac2{K^2}\sum_{0\le y<y'\le K-2}(y+1)(K-1-y')(D(y)-D(y'))^2$.

\emph{2) The pencil.} For $0\le y<y'\le K-1$ (both weights vanish outside the ranges above),
\[
\mu(y+1)(K-1-y')+\nu\,y(K-y')=(y+\mu)(K-\mu-y')-\mu\nu ,
\]
an identity in $(y,y',\mu)$. Since $\Var(D)=\frac1{K^2}\sum_{y<y'}(D(y)-D(y'))^2$, \eqref{eq:Lstar-SM} is equivalent to
\begin{equation}\label{eq:pencil}
\sum_{0\le y<y'\le K-1}\Big[(y+\mu)(K-\mu-y')-\tfrac K2\mu\nu\Big]\big(D(y)-D(y')\big)^2\ \ge0 .
\end{equation}
\emph{3) Step coordinates.} With $D(y')-D(y)=\sum_{y<m\le y'}c_m$, the left-hand side of \eqref{eq:pencil} is
$c^\top\mathcal Kc$ with, for $m\le m'$,
\[
\mathcal K(m,m')=\sum_{y<m,\ y'\ge m'}\Big[(y+\mu)(K-\mu-y')-\tfrac K2\mu\nu\Big]
=m(K-m')\Big[\tilde a_m\tilde b_{m'}-\tfrac K2\mu\nu\Big],
\]
because $\sum_{y<m}(y+\mu)=m\tilde a_m$ and $\sum_{y'\ge m'}(K-\mu-y')=(K-m')\tilde b_{m'}$. Thus
$\mathcal K=(Kg)\circ\Phi_\mu$.

\emph{4) $\Phi_\mu\succeq0$.} Let $0<\mu<1$. The sequences $\tilde a$ and $\tilde b$ are positive, $\tilde a$
increasing and $\tilde b$ decreasing, with constant Wronskian
$\tilde a_{m+1}\tilde b_m-\tilde a_m\tilde b_{m+1}=\frac12(\tilde a_m+\tilde b_m)=\frac K4$. The inverse of such a one-pair
(Green) matrix is tridiagonal \cite[Ch.~II, pp.~78--79 and 103--104]{gantmacher2002}: $\widetilde T:=\widetilde M_\mu^{-1}$ has off-diagonal
entries $-\frac4K$, interior diagonal entries $\frac8K$, and corner entries $\frac4K(1+\frac1{2\mu})$ and
$\frac4K(1+\frac1{2\nu})$. That is, $\widetilde T=\frac4K\big[L_{\rm path}+\operatorname{diag}(\frac1{2\mu},0,\dots,0,\frac1{2\nu})\big]\succ0$,
with $L_{\rm path}$ the Neumann Laplacian of the path; in particular
$\one^\top\widetilde T\one=\frac4K\big(\frac1{2\mu}+\frac1{2\nu}\big)=\frac2{K\mu\nu}$. For a positive definite
$M$, $c\,\one\one^\top\succeq0$ iff $c\,\one^\top M^{-1}\one\le1$; with $c=\frac K2\mu\nu$ equality holds, so
$\Phi_\mu\succeq0$, with one-dimensional kernel. The cases $\mu\in\{0,1\}$ follow by continuity.

\emph{5) Schur product.} $g\succeq0$ and $\Phi_\mu\succeq0$, so $\mathcal K=(Kg)\circ\Phi_\mu\succeq0$ by the Schur
product theorem. Sharpness of the constant $K-2$: take $D=e_0$ and let $\mu\to0$.
\end{proof}

\begin{propS}[the level pencil]\label{cor:rho}
For $0<\mu<1$ let $\varrho(\mu)$ be the smallest generalised eigenvalue of the pencil
$\big((Kg)\circ\widetilde M_\mu,\,Kg\big)$, and let $c^\star(\mu)$ be the largest constant that can replace
$(K-2)\mu\nu$ in \eqref{eq:Lstar-SM}. Then $\varrho(\mu)=\mu\nu+\tfrac12c^\star(\mu)$, which is the definition
used in the main text, and $\varrho(\mu)\ge\frac K2\mu\nu$.
\end{propS}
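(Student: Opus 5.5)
The plan is to rewrite both sides of \eqref{eq:Lstar-SM}, for an arbitrary constant $c$ in place of $(K-2)\mu\nu$, as quadratic forms in the step coordinates $c_m=D(m)-D(m-1)$, $m=1,\dots,K-1$. The best constant then becomes a generalised Rayleigh quotient, and we identify that quotient with the smallest eigenvalue of the pencil $\big((Kg)\circ\widetilde M_\mu,\,Kg\big)$. Everything needed is already computed in the proof of Lemma~\ref{lem:Lstar-SM}; we only have to keep track of the constant term instead of fixing it at $\frac K2\mu\nu$.

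\emph{Step 1 (the variance).} Take the counting of step~3) in that proof with the bracket replaced by $1$. Since $\#\{(y,y'):y<m,\ y'\ge m'\}=m(K-m')=Kg(m,m')$ for $m\le m'$, the identity $\Var(D)=\frac1{K^2}\sum_{y<y'}(D(y)-D(y'))^2$ becomes
\[
\Var(D)=\tfrac1{K^2}\,c^\top(Kg)c .
\]

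\emph{Step 2 (the left-hand side).} By steps~1)--2) of that proof, the left-hand side of \eqref{eq:Lstar-SM} is
\[
\tfrac2{K^2}\sum_{y<y'}\big[(y+\mu)(K-\mu-y')-\mu\nu\big](D(y)-D(y'))^2 .
\]
The first part of the bracket produces $(Kg)\circ\widetilde M_\mu$, exactly as in step~3). The constant $-\mu\nu$ produces $-\mu\nu\,Kg$, by Step~1. The left-hand side is therefore
\[
\tfrac2{K^2}c^\top\big[(Kg)\circ\widetilde M_\mu\big]c-2\mu\nu\Var(D).
\]
Consequently, \eqref{eq:Lstar-SM} with constant $c$ holds for all $D$ if and only if
\[
c^\top\big[(Kg)\circ\widetilde M_\mu\big]c\ \ge\ \big(\mu\nu+\tfrac c2\big)\,c^\top(Kg)c\qquad\text{for all } c\in\R^{K-1}.
\]

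\emph{Step 3 (identifying $\varrho$).} The map $D\mapsto(c_m)$ is onto $\R^{K-1}$ and kills only constants. The matrix $Kg$ is positive definite, being a multiple of the inverse of the Dirichlet path Laplacian. Hence the best constant $c^\star(\mu)$ satisfies $\mu\nu+\frac12c^\star(\mu)=\min_{c\ne0}$ of the Rayleigh quotient, and that minimum is $\varrho(\mu)$ by definition. The bound $\varrho\ge\frac K2\mu\nu$ then follows at once from Lemma~\ref{lem:Lstar-SM}, which gives $c^\star\ge(K-2)\mu\nu$, so that $\varrho\ge\mu\nu+\frac{K-2}2\mu\nu=\frac K2\mu\nu$. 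Equivalently, it follows from $\Phi_\mu\succeq0$ via the Schur product.

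The only point needing care is Step~2, namely checking that the constant $-\mu\nu$ in the pencil identity enters with the same $Kg$ weight as $\Var(D)$. This requires $(Kg)\circ(\one\one^\top)=Kg$ and matching factors of $2/K^2$. The rest is bookkeeping.
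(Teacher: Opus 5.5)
Your proposal is correct and follows the paper's proof essentially step for step. You write the left-hand side as $\frac{2}{K^2}c^\top[(Kg)\circ\widetilde M_\mu]c-2\mu\nu\Var(D)$ with $\Var(D)=K^{-2}c^\top(Kg)c$, read off $c^\star=2\varrho-2\mu\nu$ from the Rayleigh quotient using $Kg\succ0$, and obtain $\varrho\ge\frac K2\mu\nu$ from $(Kg)\circ\Phi_\mu\succeq0$ (equivalently from Lemma~\ref{lem:Lstar-SM}); the only blemish is that the letter $c$ denotes both the candidate constant and the step vector, so rename one of them.
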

\begin{proof}
By part 3) of the preceding proof and $\Var(D)=K^{-2}c^\top(Kg)c$, the left-hand side of \eqref{eq:Lstar-SM} equals
$\frac2{K^2}c^\top[(Kg)\circ\widetilde M_\mu]c-2\mu\nu\Var(D)$, so $\mathrm{LHS}\ge\lambda\Var(D)$ holds for all
$D$ iff $(Kg)\circ\widetilde M_\mu\succeq(\mu\nu+\frac\lambda2)Kg$; hence $c^\star=2\varrho-2\mu\nu$. Finally
$(Kg)\circ\widetilde M_\mu-\frac K2\mu\nu\,(Kg)=(Kg)\circ\Phi_\mu\succeq0$, and $Kg\succ0$.
\end{proof}

%=====================================================================================================
\stepsection{Sqlow}{a lower bound in the bond vectors alone, and step coordinates}\label{sec:Sqlow}
The identity \eqref{eq:SOS-SM} has a single negative term, $-\sum_kc_k\Var(D_k)$, which involves the bond vectors only
through their covariances. We bound the two positive terms from below by quantities of the same kind.

\begin{lemmaS}[minimising out a common part]\label{lem:minout}
Let $w_1,\dots,w_r>0$ with $\sum_iw_i=W$, and let $Y_1,\dots,Y_r$ be functions of the level. For every function $X$ of
the level,
$\sum_iw_i\Var(X+Y_i)\ \ge\ \sum_iw_i\Var(Y_i)-\frac1W\Var\big(\sum_iw_iY_i\big)$.
\end{lemmaS}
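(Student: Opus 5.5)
The inequality is a weighted variance-minimisation statement, and I would prove it by expanding in $X$ and minimising pointwise. Since $\Var$ is a quadratic form on functions of the level (positive semidefinite, vanishing on constants), expansion gives
\[
\sum_iw_i\Var(X+Y_i)=W\Var(X)+2\Cov\Big(X,\sum_iw_iY_i\Big)+\sum_iw_i\Var(Y_i).
\]
Put $\bar Y:=W^{-1}\sum_iw_iY_i$, the weighted mean of the $Y_i$. The first two terms then read $W[\Var(X)+2\Cov(X,\bar Y)]$.

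The next step is to complete the square. We have $\Var(X)+2\Cov(X,\bar Y)=\Var(X+\bar Y)-\Var(\bar Y)$. The first term on the right is nonnegative, so the whole expression is bounded below by $-\Var(\bar Y)$, with equality when $X=-\bar Y$ up to a constant. Substituting back gives
\[
\sum_iw_i\Var(X+Y_i)\ge\sum_iw_i\Var(Y_i)-W\Var(\bar Y)=\sum_iw_i\Var(Y_i)-\tfrac1W\Var\Big(\sum_iw_iY_i\Big),
\]
which is the claim. The bound is attained by choosing $X=-\bar Y$.

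There is no real obstacle here; the only care needed is bookkeeping. One must use the homogeneity $\Var(cZ)=c^2\Var(Z)$ to convert $W\Var(\bar Y)$ into $W^{-1}\Var(\sum_iw_iY_i)$. One must also note that positivity of the $w_i$ is used only to ensure $W>0$ and that the completed square carries a positive coefficient. Equivalently, the lemma is the parallel-axis (Steiner) identity for the weighted family $\{Y_i\}$: the left side equals $W\Var(X+\bar Y)+\sum_iw_i\Var(Y_i-\bar Y)$. I would record this form, since it makes the equality case explicit for its later use in bounding $\mathrm{Sq}$ from below.
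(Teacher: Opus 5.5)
Your proof is correct and is essentially the paper's argument: expand $\sum_iw_i\Var(X+Y_i)$ as a quadratic in $X$, complete the square against the weighted combination of the $Y_i$, and drop the nonnegative term $W\Var(X+\bar Y)$. The only difference is cosmetic: you normalise $\bar Y$ as the weighted mean, whereas the paper uses the weighted sum and divides by $W$ inside the square.
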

\begin{proof}
$\sum_iw_i\Var(X+Y_i)=W\Var(X)+2\Cov(X,\bar Y)+\sum_iw_i\Var(Y_i)$ with $\bar Y=\sum_iw_iY_i$, and
$W\Var(X)+2\Cov(X,\bar Y)=W\Var(X+\bar Y/W)-\Var(\bar Y)/W\ge-\Var(\bar Y)/W$.
\end{proof}

\begin{propS}[the bond matrix]\label{prop:Sqlow}
Let $g_b(k,l):=\frac{(\min(k,l)+1)(N-1-\max(k,l))}N$, $0\le k,l\le N-2$, and
\begin{equation}\label{eq:Pbond}
\mathsf P:=2(N-1)\operatorname{diag}(\mu_k\nu_k)-\tfrac2N\,g_b .
\end{equation}
Then $\mathrm{Sq}(D)\ \ge\ \mathrm{Sq}_{\rm low}(D):=\sum_{k,l}\mathsf P_{kl}\Cov(D_k,D_l)$ for all bond vectors.
\end{propS}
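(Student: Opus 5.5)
The plan is to apply Lemma~\ref{lem:minout} once for each pair $k<l$, after rewriting the three variances in \eqref{eq:Sq-SM} as shifts of one common function of the level. Put $X_{kl}:=H_k-H_l$. Using $G=H+D$ for both bonds, the three arguments become
\[
H_k-H_l+D_l=X_{kl}+D_l,\qquad H_k-G_l=X_{kl}-D_l,\qquad G_k-G_l+D_k=X_{kl}+2D_k-D_l .
\]
Their weights are $N-1-l$, $l-k$ and $k+1$, which are positive on the range $0\le k<l\le N-2$ and sum to $W=N$. So for each pair the bracket in \eqref{eq:Sq-SM} has exactly the form $\sum_iw_i\Var(X+Y_i)$ of Lemma~\ref{lem:minout}, with
\[
Y_1=D_l,\qquad Y_2=-D_l,\qquad Y_3=2D_k-D_l .
\]

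Lemma~\ref{lem:minout} then bounds each bracket below by a quadratic form in $D_k$ and $D_l$ alone:
\[
(N-1-l)\Var(D_l)+(l-k)\Var(D_l)+(k+1)\Var(2D_k-D_l)-\tfrac1N\Var\Big(2(k+1)D_k+(N-2-2l)D_l\Big).
\]
The weighted sum in the last term is computed as follows. The coefficient of $D_k$ is $2(k+1)$. The coefficient of $D_l$ is $(N-1-l)-(l-k)-(k+1)=N-2-2l$. This lower bound uses only $\Var$ and $\Cov$ of the bond vectors, so every dependence on the cumulative sums $H,G$ has been minimised out. The inequality is a sum of nonnegative per-pair inequalities, so no positivity of a residual is needed.

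It remains to sum these bounds over $k<l$ with the prefactor $\frac1{2N}$ and identify the result with $\sum_{k,l}\mathsf P_{kl}\Cov(D_k,D_l)$. This bookkeeping is the main obstacle.

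The diagonal coefficient of $\Var(D_j)$ collects contributions from pairs where $j$ is the first index and pairs where $j$ is the second index:
\begin{itemize}
\item as first index, $4(k+1)-\frac4N(k+1)^2$ per pair, over the $N-2-j$ values of $l$;
\item as second index, $(N-1-l)+(l-k)+(k+1)-\frac1N(N-2-2l)^2$ per pair, over the $j$ values of $k$.
\end{itemize}
After dividing by $2N$, the sum should equal $2(N-1)\mu_j\nu_j-\frac2Ng_b(j,j)$.

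For the off-diagonal coefficient, with $k<l$ the coefficient of $2\Cov(D_k,D_l)$ comes from a single pair:
\[
-2(k+1)-\tfrac2N\,2(k+1)(N-2-2l).
\]
Divided by $2N$, this must reproduce $-\frac2Ng_b(k,l)=-\frac2{N^2}(k+1)(N-1-l)$.

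I am not sure the off-diagonal coefficients match exactly. If the per-pair minimisation is slightly lossy or mismatched, I would fall back on a different grouping. The first alternative is to minimise jointly over a common part shared by all pairs with the same $k$, namely $H_k$ together with all $H_l$, using vertex variables as in Step~\ref{step:SOS}. The second is to keep $\mathsf P$ as the target and verify the identity as a polynomial identity in $(k,l,N)$, checking small $N$ first.
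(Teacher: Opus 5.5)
Your argument is the paper's proof. It uses the same common part $X=H_k-H_l$, the same shifts $D_l$, $-D_l$, $2D_k-D_l$ with weights $N-1-l$, $l-k$, $k+1$ summing to $N$, and one application of Lemma~\ref{lem:minout} per pair $k<l$.

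The only defect is in the bookkeeping you left open: your off-diagonal coefficient has a spurious factor $2$. The cross term of $\Var\big(2(k+1)D_k+(N-2-2l)D_l\big)$ is $2\cdot2(k+1)(N-2-2l)\Cov(D_k,D_l)$. The coefficient of $2\Cov(D_k,D_l)$ in the per-pair bound is therefore $-2(k+1)-\frac2N(k+1)(N-2-2l)=-\frac4N(k+1)(N-1-l)$. Dividing by $2N$ gives exactly $-\frac2Ng_b(k,l)=\mathsf P_{kl}$.

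The diagonal closes as well. Both of your per-pair contributions simplify to $\frac4N(j+1)(N-1-j)=4N\mu_j\nu_j$, and bond $j$ lies in $N-2$ pairs. Hence the coefficient is $\frac1{2N}(N-2)\,4N\mu_j\nu_j=2(N-2)\mu_j\nu_j=2(N-1)\mu_j\nu_j-\frac2Ng_b(j,j)$.

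The cleanest way to see this, as the paper does, is to substitute $N-1-l=N\mu_l$, $l+1=N\nu_l$, $k+1=N\nu_k$, $N-1-k=N\mu_k$. Each pair then contributes at least $2\mu_k\nu_k\Var(D_k)+2\mu_l\nu_l\Var(D_l)-4\nu_k\mu_l\Cov(D_k,D_l)$, and the sum over pairs is immediate. The identification with $\mathsf P$ is exact, so neither fallback is needed.
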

\begin{proof}
Fix $k<l$ and put $X=H_k-H_l$, $a=N-1-l$, $b=k+1$, $c=l-k$, so that $a+b+c=N$. Since $G=H+D$, the three functions in
\eqref{eq:Sq-SM} are $X+D_l$, $X+2D_k-D_l$ and $X-D_l$, with weights $a,b,c$. By Lemma~\ref{lem:minout}, and because
$aD_l+b(2D_k-D_l)-cD_l=2bD_k+(2a-N)D_l$, the pair $(k,l)$ contributes to $2N\,\mathrm{Sq}$ at least
\[
(a+c)\Var(D_l)+b\Var(2D_k-D_l)-\tfrac1N\Var\big(2bD_k+(2a-N)D_l\big)
=\tfrac{4a(N-a)}N\Var(D_l)+\tfrac{4b(N-b)}N\Var(D_k)-\tfrac{8ab}N\Cov(D_k,D_l),
\]
where we used $N-\frac{(2a-N)^2}N=\frac{4a(N-a)}N$ and $-4b-\frac{4b(2a-N)}N=-\frac{8ab}N$. Now $a=N\mu_l$,
$N-a=N\nu_l$, $b=N\nu_k$, $N-b=N\mu_k$; dividing by $2N$, the pair contributes at least
$2\mu_l\nu_l\Var(D_l)+2\mu_k\nu_k\Var(D_k)-4\nu_k\mu_l\Cov(D_k,D_l)$. Each bond belongs to $N-2$ pairs,
which gives the diagonal $2(N-2)\mu_k\nu_k=2(N-1)\mu_k\nu_k-\frac2Ng_b(k,k)$, and for $k<l$ the symmetric
off-diagonal entry is $-2\nu_k\mu_l=-\frac2Ng_b(k,l)$.
\end{proof}

\begin{propS}[step coordinates]\label{prop:hadamard}
Let $c_k\in\R^{K-1}$, $c_k(m)=D_k(m)-D_k(m-1)$, and let $g$, $\widetilde M_\mu$ be as in Lemma~\ref{lem:Lstar-SM}. Then
$\Cov(D_k,D_l)=K^{-2}c_k^\top(Kg)c_l$, and
\begin{equation}\label{eq:F}
K^2\Big[\sum_kZ_k+\mathrm{Sq}_{\rm low}-\sum_kc_k\Var(D_k)\Big]
=(N-2)\sum_kc_k^\top\big[(Kg)\circ\widetilde M_{\mu_k}\big]c_k+\sum_{k,l}\Lambda_{kl}\,c_k^\top(Kg)\,c_l,
\end{equation}
with $\Lambda:=\mathsf P-\operatorname{diag}\big((N-2)\mu_k\nu_k+c_k\big)$, where in the last expression $c_k=2N\mu_k\nu_k-\frac12$
is the scalar of \eqref{eq:SOS-SM}.
\end{propS}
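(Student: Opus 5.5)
The plan is to prove the two claims separately. First the covariance identity in step coordinates, then \eqref{eq:F} by substituting into each of the three groups of terms on the left.

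\emph{Covariance identity.} Under the uniform law on the levels, $\Var(D)=K^{-2}\sum_{y<y'}(D(y)-D(y'))^2$. Polarising gives
\[
\Cov(D_k,D_l)=K^{-2}\sum_{y<y'}\big(D_k(y')-D_k(y)\big)\big(D_l(y')-D_l(y)\big).
\]
Next write $D(y')-D(y)=\sum_{y<m\le y'}c(m)$ and exchange the sums. The coefficient of $c_k(m)c_l(m')$ is then the number of pairs $y<y'$ with $y<\min(m,m')$ and $y'\ge\max(m,m')$. This count is $\min(m,m')\,(K-\max(m,m'))=Kg(m,m')$, exactly the count already used in part~3) of the proof of Lemma~\ref{lem:Lstar-SM}. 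Hence $\Cov(D_k,D_l)=K^{-2}c_k^\top(Kg)c_l$, and in particular $\Var(D_k)=K^{-2}c_k^\top(Kg)c_k$.

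\emph{The $Z_k$ terms.} From part~3) of Lemma~\ref{lem:Lstar-SM} and the proof of Proposition~\ref{cor:rho}, applied with $\mu=\mu_k$, $\nu=\nu_k$ and $D=D_k$,
\[
\mu_k(\calD-2\Var)(H_k)+\nu_k(\calD-2\Var)(G_k)=\tfrac{2}{K^2}\,c_k^\top\big[(Kg)\circ\widetilde M_{\mu_k}\big]c_k-2\mu_k\nu_k\Var(D_k).
\]
Multiplying by $\frac{N-2}{2}$ and then by $K^2$ gives
\[
K^2Z_k=(N-2)\,c_k^\top\big[(Kg)\circ\widetilde M_{\mu_k}\big]c_k-(N-2)\mu_k\nu_k\,c_k^\top(Kg)c_k .
\]

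\emph{The remaining terms.} Using the covariance identity,
\[
K^2\,\mathrm{Sq}_{\rm low}=\sum_{k,l}\mathsf P_{kl}\,c_k^\top(Kg)c_l,
\qquad
K^2\,c_k\Var(D_k)=c_k\,c_k^\top(Kg)c_k .
\]
Summing the three contributions, every term except the Schur-product one is a bilinear form $\sum_{k,l}\Lambda_{kl}\,c_k^\top(Kg)c_l$. Its diagonal correction is $-(N-2)\mu_k\nu_k-c_k$ and its off-diagonal part is $\mathsf P$. This is exactly $\Lambda=\mathsf P-\operatorname{diag}\big((N-2)\mu_k\nu_k+c_k\big)$, which proves \eqref{eq:F}.

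The main obstacle is bookkeeping rather than mathematics: the normalisations $K^2$ and $\frac{N-2}{2}$ have to be tracked correctly. The notation also overloads $c_k$, which denotes both the step vector and the scalar $2N\mu_k\nu_k-\frac12$ of \eqref{eq:SOS-SM}, so these must be kept apart. Finally, one has to confirm that part~3) of Lemma~\ref{lem:Lstar-SM} expresses the left-hand side of \eqref{eq:Lstar-SM} with the factor $\frac{2}{K^2}$ on the full $\widetilde M_\mu$ term before $\frac K2\mu\nu$ is subtracted; the identity $(y+\mu)(K-\mu-y')-\mu\nu$ in part~2) of that proof is what justifies this.
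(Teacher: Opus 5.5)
Your proposal is correct and follows the paper's proof essentially step for step. Both arguments use the same counting $\min(m,m')(K-\max(m,m'))=Kg(m,m')$ for the covariance identity, derive $K^2Z_k=(N-2)c_k^\top[(Kg)\circ\widetilde M_{\mu_k}]c_k-(N-2)\mu_k\nu_k\,c_k^\top(Kg)c_k$ from Lemma~\ref{lem:Lstar-SM}, and then add $K^2\mathrm{Sq}_{\rm low}$ and $-K^2c_k\Var(D_k)$ to read off $\Lambda$. The only cosmetic difference is your starting point: you take the form $\frac2{K^2}c^\top[(Kg)\circ\widetilde M_\mu]c-2\mu\nu\Var(D)$ from the proof of Proposition~\ref{cor:rho}, whereas the paper starts from the $\Phi_\mu$ form in the statement of Lemma~\ref{lem:Lstar-SM} and expands $\Phi_\mu=\widetilde M_\mu-\frac{K\mu\nu}2\one\one^\top$; the two coincide because $c^\top(Kg)c=K^2\Var(D)$.
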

\begin{proof}
$\Cov(D,D')=\frac1{K^2}\sum_{y<y'}(D(y')-D(y))(D'(y')-D'(y))$, and the coefficient of $c(m)c'(m')$, $m\le m'$, in this sum
is $\#\{y<m\}\cdot\#\{y'\ge m'\}=m(K-m')=Kg(m,m')$. By Lemma~\ref{lem:Lstar-SM},
$\mu(\calD-2\Var)(J_0D)+\nu(\calD-2\Var)(J_1D)=\frac2{K^2}c^\top[(Kg)\circ\Phi_\mu]c+(K-2)\mu\nu\Var(D)$, and
$(Kg)\circ\Phi_\mu=(Kg)\circ\widetilde M_\mu-\frac{K\mu\nu}2Kg$; hence
\[
K^2Z_k=(N-2)\,c_k^\top\big[(Kg)\circ\widetilde M_{\mu_k}\big]c_k-(N-2)\mu_k\nu_k\,c_k^\top(Kg)c_k .
\]
Add $K^2\mathrm{Sq}_{\rm low}=\sum_{k,l}\mathsf P_{kl}c_k^\top(Kg)c_l$ and $-K^2c_k\Var(D_k)=-c_k\,c_k^\top(Kg)c_k$.
\end{proof}

%=====================================================================================================
\stepsection{perron}{a Perron argument in the space of bonds}\label{sec:perron}
Proposition~\ref{prop:hadamard} has reduced positivity of $\fB_{N,K}$ to positivity of a block matrix on the space
of bonds whose blocks are Hadamard products with the fixed positive definite matrix $Kg$. This step shows that the
whole matrix is positive as soon as a scalar condition, (T) below, holds on every bond; the condition is then checked
branch by branch in Steps~\ref{step:K5}--\ref{step:K4}. Let $L$ be the Dirichlet Laplacian of the path with vertices $0,\dots,N-2$ (diagonal $2$, off-diagonal $-1$), and recall
$\varrho(\mu)$ from Proposition~\ref{cor:rho}.

\begin{lemmaS}\label{lem:path}
$g_b=L^{-1}$, and the positive vector $v_k=(k+1)(N-1-k)=N^2\mu_k\nu_k$ satisfies $Lv=2\cdot\one$.
\end{lemmaS}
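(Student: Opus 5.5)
The lemma has two parts, and both come from direct computation with the tridiagonal matrix $L$ on the vertices $0,\dots,N-2$; that is, $n:=N-1$ vertices, with diagonal $2$ and off-diagonal $-1$.

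\emph{Plan for $g_b=L^{-1}$.} I will show $Lg_b=I$ column by column. Fix a column index $l$ and write $x_k:=N g_b(k,l)=(\min(k,l)+1)(N-1-\max(k,l))$. For $k<l$ this is $x_k=(k+1)(N-1-l)$, which is affine in $k$. For $k>l$ it is $x_k=(l+1)(N-1-k)$, again affine in $k$. The two formulas agree at $k=l$, and the boundary conventions $x_{-1}=0$ and $x_{N-1}=0$ are exactly the Dirichlet conditions built into $L$.

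Because $x$ is affine on each side, the second difference $-(x_{k-1}-2x_k+x_{k+1})$ vanishes for every $k\neq l$, including $k=0$ and $k=N-2$, where the boundary zeros enter. At $k=l$ it equals the jump in slope. The slope on the left is $N-1-l$ and the slope on the right is $-(l+1)$, so the second difference is $(N-1-l)+(l+1)=N$. Therefore $(L\,x)_k=N\,\delta_{kl}$, which gives $L g_b=I$.

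I will also note the structural reason, though the proof does not need it: $g_b$ is the discrete Green's function of the path, and $\frac{(\min+1)(N-1-\max)}{N}$ is its standard closed form.

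\emph{Plan for $Lv=2\cdot\one$.} Take $v_k=(k+1)(N-1-k)$. It vanishes at $k=-1$ and at $k=N-1$, so the Dirichlet boundary terms are consistent with extending $v$ by these zeros. Its second difference is constant:
\[
v_{k-1}-2v_k+v_{k+1}=-2,
\]
since $v$ is a quadratic in $k$ with leading coefficient $-1$. Hence $(Lv)_k=2$ for every $k$, including the two end rows.

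Positivity of $v$ is immediate for $0\le k\le N-2$. The identity $v_k=N^2\mu_k\nu_k$ follows directly from $\mu_k=\frac{N-1-k}{N}$ and $\nu_k=\frac{k+1}{N}$.

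Neither part presents a real obstacle. The only point that needs care is verifying the end rows of $L$ against the boundary extension. I will handle this uniformly by extending each vector by zeros at $k=-1$ and $k=N-1$, which both formulas already satisfy.
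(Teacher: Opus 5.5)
Your proof is correct and takes the paper's route. The paper identifies $g_b$ as the Green's function of the Dirichlet path Laplacian, and it obtains $Lv=2\cdot\one$ from the constant second difference $-2$ of the quadratic $v$ together with its zeros at $k=-1$ and $k=N-1$. Your column-by-column slope-jump computation, $(N-1-l)+(l+1)=N$, just writes out the Green's-function identity that the paper cites.
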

\begin{proof}
$g_b$ is the Green's function of the Dirichlet problem on the path; $v$ is quadratic in $k$ with second difference $-2$
and vanishes at $k=-1$ and $k=N-1$.
\end{proof}

\begin{lemmaS}[symmetric $Z$-matrices]\label{lem:Zmatrix}
Let $M$ be a real symmetric matrix with $M_{ij}\le0$ for $i\ne j$, and let $v$ have positive entries. For every $x$,
\begin{equation}\label{eq:Zid}
x^\top Mx=\sum_i\frac{(Mv)_i}{v_i}x_i^2+\frac12\sum_{i\ne j}(-M_{ij})\,v_iv_j\Big(\frac{x_i}{v_i}-\frac{x_j}{v_j}\Big)^2 .
\end{equation}
Hence $Mv\ge0$ implies $M\succeq0$; if moreover $M$ is irreducible and $Mv\ne0$, then $M\succ0$.
\end{lemmaS}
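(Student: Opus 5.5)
The identity \eqref{eq:Zid} is a pure algebraic rearrangement, so I will verify it by substituting $x_i=v_iy_i$ and expanding both sides as quadratic forms in $y$. On the left, $x^\top Mx=\sum_{i,j}M_{ij}v_iv_jy_iy_j$. On the right, the first sum is
\[
\sum_i(Mv)_iv_iy_i^2=\sum_{i,j}M_{ij}v_iv_jy_i^2 .
\]
For the second sum I expand $(y_i-y_j)^2=y_i^2+y_j^2-2y_iy_j$. By the symmetry of $M$, the quantity $\frac12\sum_{i\ne j}(-M_{ij})v_iv_j(y_i^2+y_j^2)$ equals $-\sum_{i\ne j}M_{ij}v_iv_jy_i^2$. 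This cancels the off-diagonal part of the first sum and leaves its diagonal part $\sum_iM_{ii}v_i^2y_i^2$. The cross terms contribute $\sum_{i\ne j}M_{ij}v_iv_jy_iy_j$. Adding these two pieces reproduces $\sum_{i,j}M_{ij}v_iv_jy_iy_j$, which is the left side. The only care needed is the bookkeeping of the factor $\tfrac12$ over ordered pairs $i\ne j$.

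Positivity then follows by reading off signs. If $Mv\ge0$ entrywise, every coefficient $(Mv)_i/v_i$ is nonnegative. Because $M$ is a $Z$-matrix, every weight $(-M_{ij})v_iv_j$ is also nonnegative. So the right side of \eqref{eq:Zid} is a sum of nonnegative terms, and $M\succeq0$.

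For strict positivity, suppose $x^\top Mx=0$. Then every term on the right vanishes. Whenever $M_{ij}<0$, the ratios must agree: $x_i/v_i=x_j/v_j$. If $M$ is irreducible, the graph of its nonzero off-diagonal entries is connected, so all ratios equal a common value $t$, and $x=tv$. The first sum then equals $t^2\sum_i(Mv)_iv_i$. Since $Mv\ge0$ and $Mv\ne0$, we have $\sum_i(Mv)_iv_i>0$, which forces $t=0$ and hence $x=0$. Therefore $M\succ0$.

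I expect no real obstacle. The one point to state explicitly is that irreducibility is what propagates the equality $x_i/v_i=x_j/v_j$ across all indices.
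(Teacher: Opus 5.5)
Your proposal is correct and follows the paper's proof: expand the square using $\sum_jM_{ij}v_j=(Mv)_i$ (your substitution $x_i=v_iy_i$ is just a clean way to organise this), then use irreducibility to force $x=tv$ and $t^2v^\top Mv=0$ to get $t=0$. The bookkeeping of the factor $\tfrac12$ and the strictness step $\sum_i(Mv)_iv_i>0$ are both handled correctly.
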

\begin{proof}
Expand the square and use $\sum_jM_{ij}v_j=(Mv)_i$. If $x^\top Mx=0$ with $M$ irreducible, the second sum forces
$x=t\,v$, and then the first gives $t^2v^\top Mv=0$, so $t=0$ when $Mv\ne0$.
\end{proof}

\begin{propS}[condition (T)]\label{prop:T}
Let $N\ge3$, $K\ge3$, and put $\theta_k:=(N-2)\varrho(\mu_k)-N\mu_k\nu_k+\frac12$. If, for every bond $k$,
\begin{equation}\label{eq:T-SM}
(N-2)\,\varrho(\mu_k)-2N\mu_k\nu_k+\tfrac12\ \ge\ 0,\tag{T}
\end{equation}
then $\fB_{N,K}\succeq0$. If in addition \eqref{eq:T-SM} is strict for at least one bond, then
$\ker\fB_{N,K}=\operatorname{span}(\one\otimes\one)$.
\end{propS}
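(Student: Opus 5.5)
The plan is to bound $K^2\fB_{N,K}$ from below by a Kronecker-structured form $\sum_{k,l}M_{kl}\,c_k^\top(Kg)c_l$ with a scalar $(N-1)\times(N-1)$ matrix $M$ on the bonds, and then show $M\succeq0$ with the $Z$-matrix criterion of Lemma~\ref{lem:Zmatrix} and the positive vector $v$ of Lemma~\ref{lem:path}.

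\emph{Step 1 (lower bound).} Start from Theorem~\ref{thm:SOS}. Replace $\mathrm{Sq}$ by $\mathrm{Sq}_{\rm low}$ (Proposition~\ref{prop:Sqlow}) and pass to step coordinates via \eqref{eq:F}. By Proposition~\ref{cor:rho}, each block satisfies $(Kg)\circ\widetilde M_{\mu_k}\succeq\varrho(\mu_k)\,Kg$. Hence
\[
K^2\fB_{N,K}(D)\ \ge\ \sum_{k,l}M_{kl}\,c_k^\top(Kg)\,c_l,\qquad M:=\operatorname{diag}\big((N-2)\varrho(\mu_k)\big)+\Lambda .
\]
Inserting $\mathsf P=2(N-1)\operatorname{diag}(\mu_k\nu_k)-\frac2Ng_b$ and the scalar $c_k=2N\mu_k\nu_k-\frac12$, the diagonal coefficient is $(N-2)\varrho(\mu_k)+\big(2(N-1)-(N-2)-2N\big)\mu_k\nu_k+\frac12=\theta_k$. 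Since $g_b=L^{-1}$, this gives $M=\operatorname{diag}(\theta_k)-\frac2NL^{-1}$. Because $Kg\succ0$, the form on the right is $c^\top(M\otimes Kg)c$, so it is positive semidefinite (respectively definite) as soon as $M$ is.

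\emph{Step 2 (reduction to the path).} Condition (T) gives $\theta_k\ge N\mu_k\nu_k>0$. For positive definite matrices, $\operatorname{diag}(\theta)\succeq\frac2NL^{-1}$ is equivalent to $\frac N2L\succeq\operatorname{diag}(1/\theta)$, because inversion is order-reversing (apply a congruence by $\operatorname{diag}(\theta)^{-1/2}$ and $L^{1/2}$ if needed). The matrix $NL-2\operatorname{diag}(1/\theta)$ is a tridiagonal irreducible $Z$-matrix. With $v_k=N^2\mu_k\nu_k$ and $Lv=2\one$ (Lemma~\ref{lem:path}),
\[
\big((NL-2\operatorname{diag}(1/\theta))v\big)_k=2N-\frac{2N^2\mu_k\nu_k}{\theta_k},
\]
which is $\ge0$ exactly when $\theta_k\ge N\mu_k\nu_k$, i.e.\ exactly (T). Lemma~\ref{lem:Zmatrix} then gives $NL-2\operatorname{diag}(1/\theta)\succeq0$, hence $M\succeq0$ and $\fB_{N,K}\succeq0$. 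If (T) is strict at some bond, the same lemma gives strict definiteness, hence $M\succ0$.

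\emph{Step 3 (kernel).} The inclusion $\operatorname{span}(\one\otimes\one)\subseteq\ker\fB_{N,K}$ is a direct check. If $D_k\equiv c$, then $\Var(D_k)=0$, $H_k$ and $G_k$ are affine so $Z_k=0$ by Lemma~\ref{lem:hahn}, and every difference inside $\mathrm{Sq}$ is constant, so $\mathrm{Sq}=0$. For the converse, let $\fB(D)=0$ and $M\succ0$. Then $c^\top(M\otimes Kg)c=0$ forces all step coordinates to vanish, so each $D_k$ is a constant vector. Then $\Var(D_k)=0$ and $Z_k=0$, so the identity \eqref{eq:SOS-SM} reduces to $0=K^2\fB(D)=\mathrm{Sq}(D)$. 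Lemma~\ref{lem:kerSq} (valid since $N\ge3$) then yields $D\in\operatorname{span}(\one\otimes\one)$.

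\emph{Main obstacle.} The only delicate point is the bookkeeping in Step 1: the diagonal of $M$ must come out exactly as $\theta_k$, and the equivalence between $\operatorname{diag}(\theta)\succeq\frac2NL^{-1}$ and the Perron condition must hold with precisely the inequality (T). The rest is assembly of earlier results.
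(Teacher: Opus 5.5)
Your proposal is correct and follows the paper's proof essentially step for step. It uses the same lower bound through $\mathrm{Sq}_{\rm low}$, \eqref{eq:F} and the pencil $\varrho$; the same bond matrix $\operatorname{diag}(\theta_k)-\frac2N L^{-1}$; the same order-reversal to $NL-2\Theta^{-1}$; and the same $Z$-matrix/Perron check with $v_k=N^2\mu_k\nu_k$. Two differences are cosmetic. You drop a harmless positive factor $K^{-2}$ in front of the Kronecker form. In the kernel part, you check $\one\otimes\one\in\ker\fB$ directly from the sum-of-squares identity rather than through Proposition~\ref{thm:C} and Lemma~\ref{lem:B}, and you close the converse with Lemma~\ref{lem:kerSq} rather than evaluating $\mathrm{Sq}$ explicitly on level-constant bond vectors.
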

\begin{proof}
By Theorem~\ref{thm:SOS} and Propositions~\ref{prop:Sqlow} and~\ref{prop:hadamard},
$K^2\fB=(\mathrm{Sq}-\mathrm{Sq}_{\rm low})+K^{-2}\mathcal F(c)$ with $\mathcal F$ the right-hand side of \eqref{eq:F}, and
$\mathrm{Sq}-\mathrm{Sq}_{\rm low}\ge0$.
By definition of $\varrho$, $(Kg)\circ\widetilde M_{\mu_k}\succeq\varrho(\mu_k)\,Kg$, so
$\mathcal F(c)\ge\sum_{k,l}\mathsf A_{kl}\,c_k^\top(Kg)c_l$ with
$\mathsf A:=\operatorname{diag}\big((N-2)\varrho(\mu_k)\big)+\Lambda$. Inserting \eqref{eq:Pbond},
\[
\mathsf A=\operatorname{diag}(\theta_k)-\tfrac2N\,g_b=\operatorname{diag}(\theta_k)-\tfrac2N\,L^{-1}.
\]
Since $Kg\succ0$, the form $\sum_{k,l}\mathsf A_{kl}c_k^\top(Kg)c_l=c^\top(\mathsf A\otimes Kg)c$ is nonnegative iff
$\mathsf A\succeq0$. Condition \eqref{eq:T-SM} says $\theta_k\ge N\mu_k\nu_k>0$. For $\Theta=\operatorname{diag}(\theta_k)\succ0$,
$\Theta\succeq\frac2NL^{-1}$ is equivalent to $\frac N2L\succeq\Theta^{-1}$, i.e.\ to $\mathsf M:=NL-2\Theta^{-1}\succeq0$.
$\mathsf M$ is a symmetric $Z$-matrix and, by Lemma~\ref{lem:path},
$(\mathsf Mv)_k=2N-\frac{2v_k}{\theta_k}=\frac{2N}{\theta_k}\big(\theta_k-N\mu_k\nu_k\big)\ge0$,
so $\mathsf M\succeq0$ by Lemma~\ref{lem:Zmatrix}, and $\fB\succeq0$.

For the kernel, let $\fB(D)=0$. If \eqref{eq:T-SM} is strict for some bond, then $\mathsf Mv\ne0$; $L$ is irreducible, so
$\mathsf M\succ0$, hence $\mathsf A\succ0$ and $\mathsf A\otimes Kg\succ0$. As both $\mathrm{Sq}-\mathrm{Sq}_{\rm low}$ and
$\mathcal F$ are nonnegative, $\mathcal F(c)=0$, so $c=0$: every bond vector is constant in the level, $D_k=t_k\one$. Then
$H_k(a)=t_ka$, $G_k(a)=t_k(a+1)$, $\mathrm{Sq}_{\rm low}(D)=0$ and
$\mathrm{Sq}(D)=\frac1{2N}\sum_{k<l}N(t_k-t_l)^2\Var(a)$, which vanishes only if all $t_k$ coincide. Conversely
$\fB(\one\otimes\one)=0$ by Proposition~\ref{thm:C} and Lemma~\ref{lem:B} applied to the linear payoff, for which all
$Q^{(m)}_K$, $m\ge3$, vanish.
\end{proof}

%=====================================================================================================
\stepsection{K5}{the branch \texorpdfstring{$K\ge5$}{K>=5}}\label{sec:K5}
For $K\ge5$ the lower bound $\varrho(\mu)\ge\frac K2\mu\nu$ of Proposition~\ref{cor:rho} verifies (T) for all but
finitely many $N$.
\begin{thmS}\label{thm:K5}
Let $K\ge5$ and $N\ge3$ with $(N-2)(K-4)\ge4$. Then $\fB_{N,K}\succeq0$ with kernel $\operatorname{span}(\one\otimes\one)$.
Explicitly this covers $N\ge6$ for $K=5$, $N\ge4$ for $K=6,7$, and $N\ge3$ for $K\ge8$.
\end{thmS}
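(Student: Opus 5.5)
The plan is to verify condition \eqref{eq:T-SM} on every bond and then invoke Proposition~\ref{prop:T}, which yields both $\fB_{N,K}\succeq0$ and, once (T) is strict on at least one bond, the kernel statement $\ker\fB_{N,K}=\operatorname{span}(\one\otimes\one)$.

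The only input needed about the level pencil is the lower bound of Proposition~\ref{cor:rho}, $\varrho(\mu)\ge\frac K2\mu\nu$, which comes from Lemma~\ref{lem:Lstar-SM}. Substituting it into the left-hand side of (T) and writing $x_k:=\mu_k\nu_k$ gives
\[
(N-2)\varrho(\mu_k)-2Nx_k+\tfrac12\ \ge\ \Big(\tfrac{(N-2)K}{2}-2N\Big)x_k+\tfrac12
=\tfrac{(N-2)(K-4)-8}{2}\,x_k+\tfrac12 .
\]
The bond weights satisfy $x_k=\frac{(k+1)(N-1-k)}{N^2}\in(0,\frac14]$.

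If $(N-2)(K-4)\ge8$, the coefficient of $x_k$ is nonnegative, so the right-hand side is at least $\frac12>0$. If $4\le(N-2)(K-4)<8$, the coefficient is negative but at least $-2$, so $x_k\le\frac14$ gives a right-hand side of at least $\frac{(N-2)(K-4)-8}{8}+\frac12=\frac{(N-2)(K-4)-4}{8}\ge0$. In both cases (T) holds on every bond.

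For strictness, the case $(N-2)(K-4)\ge8$ is strict everywhere. When $4\le(N-2)(K-4)<8$, the inequality is strict on any bond with $x_k<\frac14$. Such a bond always exists: the two end bonds $k=0$ and $k=N-2$ have $x=\frac{N-1}{N^2}$, and $\frac{N-1}{N^2}<\frac14$ because $(N-2)^2>0$ for $N\ge3$. So Proposition~\ref{prop:T} gives both the semidefiniteness and the kernel.

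Finally I would list the explicit ranges. For $K=5$ the condition $N-2\ge4$ means $N\ge6$. For $K=6$, $2(N-2)\ge4$ means $N\ge4$, and for $K=7$, $3(N-2)\ge4$ also means $N\ge4$. For $K\ge8$, $(K-4)(N-2)\ge4$ already holds at $N=3$.

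There is no real obstacle. All the difficulty has been front-loaded into Proposition~\ref{prop:T} and the sharp bound of Lemma~\ref{lem:Lstar-SM}. The only delicate point is the borderline case: when equality $\mu_k\nu_k=\frac14$ and $(N-2)(K-4)=4$ occur together (the middle bond for even $N$, e.g.\ $K=5$, $N=6$), that bond may be exactly tight. The kernel claim must then rest on strictness at the end bonds, as argued above.
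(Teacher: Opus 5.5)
Your proposal is correct and follows the paper's proof essentially step for step. You substitute $\varrho(\mu)\ge\frac K2\mu\nu$ from Proposition~\ref{cor:rho} to reduce (T) to $s\,x_k+\frac12\ge0$ with $s=\frac{(N-2)(K-4)-8}{2}$, split on the sign of $s$ using $x_k\le\frac14$, and get strictness from a bond with $x_k<\frac14$, which feeds the kernel part of Proposition~\ref{prop:T}. The only difference is how you find that bond: you use the end bonds, while the paper notes that at most one of the $N-1\ge2$ bonds has $\mu_k\nu_k=\frac14$.
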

\begin{proof}
Write $x=\mu_k\nu_k\in(0,\frac14]$. By Proposition~\ref{cor:rho}, $\varrho(\mu_k)\ge\frac K2x$, so the left-hand side of
\eqref{eq:T-SM} is at least $x\,s+\frac12$ with $s:=\frac{(N-2)K}2-2N$. If $s\ge0$ this is positive. If $s<0$ it is at
least $\frac s4+\frac12=\frac18\big[(N-2)(K-4)-4\big]\ge0$, with strict inequality unless $x=\frac14$. Since
$\mu_k\nu_k=\frac14$ for at most one bond and there are $N-1\ge2$ bonds, \eqref{eq:T-SM} is strict for some bond, and
Proposition~\ref{prop:T} applies.
\end{proof}
\emph{Where the argument stops.} At $K=4$ the bracket is $(N-2)\cdot0-4<0$ for every $N$: the bound $\varrho\ge\frac K2\mu\nu$ is never sufficient, and the
true value of $\varrho$ is needed. At $K=3$ even that fails (Step~\ref{step:K3}). The three-branch case distinction is therefore forced by the problem, not by the method.

%=====================================================================================================
\stepsection{K4}{the branch \texorpdfstring{$K=4$}{K=4} (computer-assisted root isolation)}\label{sec:K4}
For $K=4$ the pencil of Proposition~\ref{cor:rho} is $3\times3$ and
$\det\big((Kg)\circ\widetilde M_\mu-\varrho\,Kg\big)=-4\,\mathcal P(\varrho;\mu)$ with
\begin{align*}
\mathcal P(\varrho;\mu)={}&4\varrho^3-10\varrho^2+6\varrho+12\mu^4\varrho-24\mu^3\varrho+12\mu^2\varrho^2-13\mu^2\varrho-12\mu\varrho^2+25\mu\varrho\\
&+4\mu^6-12\mu^5-3\mu^4+26\mu^3-3\mu^2-12\mu .
\end{align*}

\begin{lemmaS}\label{lem:K4root}
For every $\mu\in(0,1)$, $\varrho(\mu)$ is the unique root of $\mathcal P(\cdot;\mu)$ in $(0,\frac7{10})$, and
\begin{equation}\label{eq:rho0}
\varrho(\mu)\ >\ \varrho_0(\mu):=2\mu\nu+5(\mu\nu)^3 .
\end{equation}
\end{lemmaS}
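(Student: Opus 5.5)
We treat $\varrho(\mu)$ as the smallest eigenvalue of the symmetric pencil $\big((Kg)\circ\widetilde M_\mu,\,Kg\big)$ with $K=4$. Since $Kg\succ0$, all three generalised eigenvalues are real and are exactly the roots of $\det\big((Kg)\circ\widetilde M_\mu-\varrho\,Kg\big)=-4\,\mathcal P(\varrho;\mu)$. The first step is to confirm this determinant identity symbolically by expanding the $3\times3$ matrices. From the definitions, $g$ has entries $\min(m,m')(4-\max(m,m'))/4$, while $\tilde a_m=(m-1+2\mu)/2$ and $\tilde b_m=(3-m+2\nu)/2$. This is routine exact algebra. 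The leading coefficient $4$ of $\mathcal P$ in $\varrho$ shows that $\mathcal P$ is a monic-up-to-scaling cubic whose three real roots are the eigenvalues.

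The plan for the lemma itself has two parts.

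\emph{Uniqueness of the root.} We show that $\varrho(\mu)$ is the only root of $\mathcal P(\cdot;\mu)$ in $(0,\tfrac7{10})$. It suffices to establish three facts for all $\mu\in(0,1)$. First, $\mathcal P(0;\mu)<0$. This is the constant term $4\mu^6-12\mu^5-3\mu^4+26\mu^3-3\mu^2-12\mu$, and after factoring out $\mu$ we check its sign with a Sturm sequence on $(0,1)$. Second, $\mathcal P(\tfrac7{10};\mu)>0$, which is a univariate polynomial in $\mu$ certified in the same way. Third, the middle eigenvalue is at least $\tfrac7{10}$. For the third fact we do not need an exact computation. Instead we show that the product of the two larger roots exceeds the square of the allowed gap, or more simply that $\mathcal P$ has exactly one sign change on $(0,\tfrac7{10})$. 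We certify this with a Sturm sequence in $\varrho$ whose coefficients are polynomials in $\mu$. The signs of these coefficients are then checked on $(0,1)$ by further univariate root isolation. The positivity of $\varrho(\mu)$ itself follows from $\varrho\ge\tfrac K2\mu\nu=2\mu\nu>0$ (Proposition~\ref{cor:rho}).

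\emph{The lower bound \eqref{eq:rho0}.} Once uniqueness is known, $\mathcal P(\cdot;\mu)$ is negative on $(0,\varrho(\mu))$ and positive on $(\varrho(\mu),\tfrac7{10})$. Since $\varrho_0(\mu)=2\mu\nu+5(\mu\nu)^3\le\tfrac12+\tfrac5{64}<\tfrac7{10}$, the inequality $\varrho>\varrho_0$ is equivalent to $\mathcal P(\varrho_0(\mu);\mu)<0$ for $\mu\in(0,1)$. Both $\mathcal P$ and $\varrho_0$ are invariant under $\mu\leftrightarrow\nu$, because the pencil is symmetric under reversing the levels. We therefore substitute $x=\mu\nu\in(0,\tfrac14]$ and rewrite $\mathcal P(\varrho_0;\mu)$ as a polynomial $R(x)$. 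To do this, we first check that $\mathcal P(\varrho;\mu)$ is a polynomial in $(\varrho,\mu\nu)$, which the reflection symmetry suggests. Then $R(x)<0$ on $(0,\tfrac14]$ is a single univariate sign condition. We expect $R$ to vanish at $x=0$ (since $\varrho(\mu)\approx 2\mu\nu$ near the ends), so we divide by the appropriate power of $x$ and apply Sturm's theorem to the quotient on $[0,\tfrac14]$, with exact rational arithmetic, as in \texttt{audit\_\allowbreak K4.py}.

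\emph{Main obstacle.} The hardest step is the uniqueness claim, which requires excluding a second root in $(0,\tfrac7{10})$ uniformly in $\mu$ as a two-parameter statement. We would try the discriminant-free route first. If $\mathcal P(0;\mu)<0<\mathcal P(\tfrac7{10};\mu)$ and $\partial_\varrho\mathcal P>0$ on $[0,\tfrac7{10}]$, then uniqueness follows. The derivative condition is a quadratic in $\varrho$, so it reduces to checking its value at the endpoints and its vertex, each a polynomial sign condition in $x=\mu\nu$ on $(0,\tfrac14]$. If the derivative is not positive everywhere, the fallback is a subdivision of $x$ into finitely many rational intervals, with interval bounds on each one. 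A further subtlety is the behaviour as $x\to0$, where the lower bound becomes tight at leading order. There the sign of $R(x)/x^k$ at $x=0$ must be computed exactly, from the next-order term of the expansion $\varrho=2x+\dots$.
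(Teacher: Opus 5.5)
Your outline matches the paper on the three sign facts, $\mathcal P(0;\mu)<0$, $\mathcal P(\tfrac7{10};\mu)>0$ and $\mathcal P(\varrho_0(\mu);\mu)<0$. It also matches the paper in deducing $\varrho>\varrho_0$ from the last fact via $\varrho_0\le\frac{37}{64}<\frac7{10}$.

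You diverge at the step you call the main obstacle, and there you are missing the one-line argument the paper uses. The coefficient of $\varrho^2$ in $\mathcal P$ is $-(10+12\mu\nu)$, so by Vieta the three roots sum to $\frac52+3\mu\nu\ge\frac52>3\cdot\frac7{10}$.
\begin{itemize}
\item Since $\mathcal P(0;\mu)<0<\mathcal P(\tfrac7{10};\mu)$ and the leading coefficient is positive, $(0,\tfrac7{10})$ contains an odd number of roots, counted with multiplicity.
\item By the root sum, it cannot contain all three, so it contains exactly one.
\item Every root is positive, because both pencil matrices are positive definite. Hence the other two roots exceed $\tfrac7{10}$, and the root in $(0,\tfrac7{10})$ is the smallest one, $\varrho(\mu)$.
\end{itemize}
So no parametric Sturm sequence in $\varrho$, product-of-roots bound or subdivision is needed.

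Your first-choice ``discriminant-free'' route would actually fail. Write $x=\mu\nu$. Then $\partial_\varrho\mathcal P(\tfrac7{10};\mu)=12x^2+\tfrac{41}5x-\tfrac{53}{25}$, which is negative for $x<\tfrac15$. So $\mathcal P$ has a local maximum inside $(0,\tfrac7{10})$ whenever $\mu$ lies outside $[\tfrac12-\tfrac1{2\sqrt5},\tfrac12+\tfrac1{2\sqrt5}]$, and $\partial_\varrho\mathcal P>0$ on $[0,\tfrac7{10}]$ is false there.

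On the other hand, your substitution $x=\mu\nu$ buys more than you use. One checks that
$\mathcal P=4\varrho^3-(10+12x)\varrho^2+(6+25x+12x^2)\varrho-x(4x^2+15x+12)$. This makes all three sign facts elementary:
\begin{itemize}
\item $\mathcal P(0)=-x(4x^2+15x+12)<0$ is immediate.
\item $\mathcal P(\tfrac7{10})=\tfrac{84}{125}-\tfrac{19}{50}x-\tfrac{33}5x^2-4x^3$ is decreasing in $x$, with value $\tfrac{51}{500}$ at $x=\tfrac14$.
\item $\mathcal P(\varrho_0)=x^2(-5+34x-75x^2+60x^3-250x^4+300x^5+500x^7)$. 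Bounding $34x-75x^2$, $60x^3-250x^4$, $300x^5$ and $500x^7$ separately on $(0,\tfrac14]$ shows the bracket is at most $-5+\tfrac{289}{75}+0.09+0.30+0.04<0$.
\end{itemize}
Combined with the trace argument, this gives a proof checkable entirely by hand. The paper instead certifies the same three facts by Sturm sequences for polynomials in $\mu$ of degree $4$, $6$ and $14$.
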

\begin{proof}
All three roots are positive, because both matrices of the pencil are positive definite. The following facts about
polynomials in $\mu$ with rational coefficients are established by exact root counting (Sturm sequences; script
\texttt{audit\_K4.py}): (i) $\mathcal P(0;\mu)=\mu(1-\mu)q_0(\mu)$ with $\deg q_0=4$ and $q_0<0$ on $[0,1]$;
(ii) $\mathcal P(\frac7{10};\mu)$, of degree $6$, has no root in $[0,1]$ and equals $\frac{51}{500}>0$ at $\mu=\frac12$;
(iii) $\mathcal P(\varrho_0(\mu);\mu)=[\mu(1-\mu)]^2q_1(\mu)$ with $\deg q_1=14$ and $q_1<0$ on $[0,1]$. Moreover
the three roots sum to $\frac52+3\mu(1-\mu)\ge\frac52>3\cdot\frac7{10}$, so they are not all below $\frac7{10}$. With the
positive leading coefficient, (i) and (ii) give exactly one root in $(0,\frac7{10})$ and two above; it is $\varrho(\mu)$.
Since $\varrho_0\le\frac{37}{64}<\frac7{10}$ and $\mathcal P<0$ on $(0,\varrho(\mu))$, (iii) gives $\varrho_0(\mu)<\varrho(\mu)$.
\end{proof}
The bound \eqref{eq:rho0} is nearly tight ($\min_\mu[\varrho-\varrho_0]\approx8\times10^{-7}$), which is why $K=4$ needs it.

\begin{thmS}\label{thm:K4}
For $K=4$ and $N\ge10$, $\fB_{N,4}\succeq0$ with kernel $\operatorname{span}(\one\otimes\one)$.
\end{thmS}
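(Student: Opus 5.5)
The plan is to verify condition \eqref{eq:T-SM} on every bond and then invoke Proposition~\ref{prop:T}. With $x=\mu_k\nu_k\in(0,\frac14]$, Lemma~\ref{lem:K4root} gives $\varrho(\mu_k)>2x+5x^3$. It therefore suffices to show
\[
(N-2)(2x+5x^3)-2Nx+\tfrac12\ \ge\ 0,\qquad\text{i.e.}\qquad \psi_N(x):=5(N-2)x^3-4x+\tfrac12\ \ge\ 0
\]
at the $N-1$ values $x_k=(k+1)(N-1-k)/N^2$. Because the inequality in \eqref{eq:rho0} is strict, \eqref{eq:T-SM} then holds strictly at every bond. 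The kernel statement $\ker\fB_{N,4}=\operatorname{span}(\one\otimes\one)$ then follows directly from the second part of Proposition~\ref{prop:T}.

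The work is to show $\psi_N(x)\ge0$ on $(0,\frac14]$ for $N\ge10$. The cubic $\psi_N$ is convex on $x>0$ and has its minimum at $x^\ast=\sqrt{4/(15(N-2))}$. At $N=10$, $x^\ast=\sqrt{1/30}\approx0.183<\frac14$, and $\psi_{10}(x^\ast)=\frac12-\frac83x^\ast\approx0.013>0$, since at the critical point $5(N-2)x^{\ast3}=\frac43x^\ast$. For $N>10$ the coefficient $5(N-2)$ grows and $x^\ast$ shrinks, so $\min\psi_N=\frac12-\frac83x^\ast$ increases. Hence $\psi_N>0$ on all of $(0,\infty)$ for every $N\ge10$, and no discretisation of the bond values $x_k$ is needed. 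I would confirm $\frac83\sqrt{1/30}<\frac12$ exactly by squaring: $\frac{64}{9\cdot30}=\frac{64}{270}<\frac14$.

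I expect no real obstacle here. The delicate part, the nearly tight bound $\varrho>\varrho_0$, is already settled by Lemma~\ref{lem:K4root}, and the only thing to watch is the margin at $N=10$. If it proved too thin, I would fall back on the exact bond values $x_k$ rather than the continuous minimum. The threshold $N\ge10$ is consistent with the statement: at $N=9$ the same computation gives $\frac83\sqrt{4/105}\approx0.52>\frac12$, so the continuous bound just fails there. The cases $N\le9$ are left to the certificates of Step~\ref{step:certs}.
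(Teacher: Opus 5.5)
Your proposal is correct and follows the paper's proof essentially step for step. You bound $\varrho(\mu_k)$ below by $\varrho_0=2x+5x^3$ from Lemma~\ref{lem:K4root}, collapse the left side of (T) to $5(N-2)x^3-4x+\tfrac12$, minimise at $x_*^2=4/(15(N-2))$ to obtain $\tfrac12-\tfrac83x_*$, and conclude from Proposition~\ref{prop:T} with (T) strict on every bond. Your exact check $64/270<1/4$ at $N=10$ is equivalent to the paper's criterion $N-2\ge 1024/135$.
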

\begin{proof}
With $x=\mu_k\nu_k$ and \eqref{eq:rho0}, the left-hand side of \eqref{eq:T-SM} exceeds
$(N-2)(2x+5x^3)-2Nx+\frac12=5(N-2)x^3-4x+\frac12=:f_N(x)$; the linear terms collapse to $-4x$ independently of $N$. On
$x>0$, $f_N$ is minimised at $x_*^2=\frac4{15(N-2)}$, where $f_N(x_*)=\frac12-\frac83x_*$. This is nonnegative iff
$x_*\le\frac3{16}$, i.e.\ iff $N-2\ge\frac{1024}{135}=7.58\ldots$, i.e.\ iff $N\ge10$. So \eqref{eq:T-SM} holds strictly
for every bond, and Proposition~\ref{prop:T} applies.
\end{proof}
The threshold is sharp for this route: for $3\le N\le9$ condition \eqref{eq:T-SM} with the exact $\varrho$ fails at some
bond. Those cases, and the finitely many cases left open by Theorem~\ref{thm:K5}, are settled by exact certificates in
Step~\ref{step:certs}.

%=====================================================================================================
\stepsection{K3}{the branch \texorpdfstring{$K=3$}{K=3} (computer-assisted)}\label{sec:K3}
For $K=3$ the lower bound $\mathrm{Sq}\ge\mathrm{Sq}_{\rm low}$ is too lossy: the form \eqref{eq:F} is indefinite, with
smallest eigenvalue of order $-N/10$. The part $X=H_k-H_l$ discarded in Proposition~\ref{prop:Sqlow} must be kept, and we
work with the bond form itself. The argument has a fixed architecture, proved by hand, and a small number of closed forms
for matrix entries which are polynomial identities in $(N,i,j)$ established by computer algebra.

\paragraph{Representation.} The level space is $\R^3$ with basis $\one$, $t_1=(-1,0,1)$, $t_2=(1,-2,1)$. Write
$D_k=u_k\one+s_kt_1+q_kt_2$ and order the coordinates as $(u,s,q)\in(\R^{N-1})^3$, bonds being indexed by
$i=k+1\in\{1,\dots,N-1\}$. Let $T$ be the matrix of the quadratic form $K^2\fB_{N,3}$ in these coordinates, with blocks
$T_{uu},T_{ur},T_{rr}$, $r=(s,q)$.

\begin{lemmaS}[block representation]\label{lem:K3rep}
Index bonds by $k,l\in\{0,\dots,N-2\}$ and write $T_{kl}$ for the $3\times3$ block of $T$ coupling $D_k$ and $D_l$ in the
basis $(\one,t_1,t_2)$. Then $T_{kl}=\frac12(X_{kl}+X_{lk}^\top)$ with
\[
\begin{aligned}
X_{kl}={}&\frac{N-2}{2N}\,\delta_{kl}\big[(N-1-k)P_H+(k+1)P_G\big]+\one[k\le l]\,C_{HG}\\
&-\frac1N\big[(N-1-\max(k,l))C_{HH}+(\min(k,l)+1)C_{GG}+2\max(l-k,0)\,C_{HG}\big],
\end{aligned}
\]
where the six rational $3\times3$ matrices are $36P_H=\big(\begin{smallmatrix}24&-12&-12\\-12&16&-24\\-12&-24&96\end{smallmatrix}\big)$,
$36P_G=\big(\begin{smallmatrix}24&12&-12\\12&16&24\\-12&24&96\end{smallmatrix}\big)$,
$36C_{HG}=\big(\begin{smallmatrix}24&12&-12\\-12&-4&12\\-12&-12&-12\end{smallmatrix}\big)$,
$36C_{HH}=\big(\begin{smallmatrix}24&-12&-12\\-12&8&0\\-12&0&24\end{smallmatrix}\big)$,
$36C_{GG}=\big(\begin{smallmatrix}24&12&-12\\12&8&0\\-12&0&24\end{smallmatrix}\big)$: the Gram matrices, in the level
basis, of $P(J_0\,\cdot)$, $P(J_1\,\cdot)$, $\Cov(J_0\,\cdot,J_1\,\cdot)$, $\Cov(J_0\,\cdot,J_0\,\cdot)$ and
$\Cov(J_1\,\cdot,J_1\,\cdot)$ at $K=3$.
\end{lemmaS}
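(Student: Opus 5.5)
The plan is to expand the bond form \eqref{eq:formula1} directly in the basis $(\one,t_1,t_2)$ of the level space $\R^3$. Every term of $K^2\fB_{N,3}$ is bilinear in the bond vectors, and each is one of five types: $P(H_k)$, $P(G_k)$, $\Cov(H_k,G_l)$, $\Cov(H_k,H_l)$, $\Cov(G_k,G_l)$. Since $H_k=J_0D_k$ and $G_k=J_1D_k$ depend linearly on $D_k$, each type is a fixed bilinear form on $\R^3$, independent of $N$ and of the bond indices. So the whole proof splits into two parts: a combinatorial count of how often each type occurs for a given pair $(k,l)$, and a finite computation of five $3\times3$ Gram matrices at $K=3$.

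\emph{Step 1: the counts.} For $\mathrm I$, bond $k$ contributes $\frac{N-2}{2}[\mu_kP(H_k)+\nu_kP(G_k)]$. With $\mu_k=(N-1-k)/N$ and $\nu_k=(k+1)/N$ this gives the diagonal term $\frac{N-2}{2N}\delta_{kl}[(N-1-k)P_H+(k+1)P_G]$. For $\mathrm{II}=\sum_{k\le l}\Cov(H_k,G_l)$, the term with ordered indices $(k,l)$ is $\one[k\le l]\,C_{HG}$ applied as $D_k^\top C_{HG}D_l$.

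For $\mathrm{III}$, I would expand $\Var(\Omega_j)$ with $\Omega_j=\sum_kZ_{jk}$ and use the vertex count from the proof of Theorem~\ref{thm:SOS}. For $k\le l$, the pair $(Z_{jk},Z_{jl})$ takes three forms as $j$ varies:
\begin{itemize}
\item it equals $(G_k,G_l)$ for the $k+1$ values $j\le k$;
\item it equals $(H_k,G_l)$ for the $l-k$ values $k<j\le l$;
\item it equals $(H_k,H_l)$ for the $N-1-l$ values $j>l$.
\end{itemize}
Writing $\Var(\Omega_j)=\sum_{k,l}\Cov(Z_{jk},Z_{jl})$ as a sum over ordered pairs, the mixed term $(H_k,G_l)$ appears once with $k<l$ and once transposed. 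Recording it in $X_{kl}$ only for $l>k$ therefore needs the factor $2\max(l-k,0)$, and the $\frac1N$ comes from the average over $j$. This gives the bracket in $X_{kl}$. Finally, $T$ is the symmetric matrix of the form $\sum_{k,l}D_k^\top X_{kl}D_l$, hence $T_{kl}=\frac12(X_{kl}+X_{lk}^\top)$.

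\emph{Step 2: the Gram matrices.} For each basis vector $e\in\{\one,t_1,t_2\}$ I compute $J_0e$ and $J_1e$:
\begin{itemize}
\item $J_1\one=(1,2,3)$ and $J_0\one=(0,1,2)$;
\item $J_1t_1=(-1,-1,0)$ and $J_0t_1=(0,-1,-1)$;
\item $J_1t_2=(1,-1,0)$ and $J_0t_2=(0,1,-1)$.
\end{itemize}
I then evaluate $\Var$, $\Cov$ under the uniform law on $\{0,1,2\}$, and $\calD(g)=\frac23[(g_1-g_0)^2+(g_2-g_1)^2]$. As a check, $\calD(J_0\one)=\frac43$ and $\Var(J_0\one)=\frac23$, so $P(J_0\one)=\frac23=24/36$, matching the entry of $P_H$. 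The remaining entries follow by polarisation.

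The only real obstacle is bookkeeping. The orientation conventions in $C_{HG}$ (row $=H$-argument, column $=G$-argument) and the symmetrisation must be consistent with the factor $2\max(l-k,0)$, in particular for the diagonal $k=l$, where $\mathrm{II}$ contributes $\Cov(H_k,G_k)$ once and $\mathrm{III}$ contributes $(N-1-k)C_{HH}+(k+1)C_{GG}$. I would close by checking the resulting $T$ against a direct numerical evaluation of \eqref{eq:formula1} for small $N$, and by checking that $\one\otimes\one$ lies in its kernel, as Proposition~\ref{prop:T} requires.
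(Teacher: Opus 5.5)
Your proposal is correct. The derivation of the block structure is the paper's: the same three contributions of \eqref{eq:formula1} and the same vertex counts $N-1-\max(k,l)$, $\min(k,l)+1$ and $|l-k|$.

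The difference is in how the argument is closed. After writing down the form of $X_{kl}$, the paper treats the five level matrices and the bookkeeping as still to be confirmed. It compares $N\,T_{kl}$ with $N$ times the Gram matrix obtained by polarising \eqref{eq:formula1} directly, in exact arithmetic for $N=3,\dots,12$. It then extends this to all $N$, because on each region $\{k<l\}$, $\{k=l\}$, $\{k>l\}$ the entries are polynomials of total degree at most $2$ in $(N,k,l)$ vanishing on a $3\times3\times3$ grid.

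You instead prove both parts by hand. The symmetrisation is an identity valid for every $N$: each off-diagonal mixed pair appears once as $(k,l)$ and once as $(l,k)$, which gives the factor $2\max(l-k,0)$ and $T_{kl}=\frac12(X_{kl}+X_{lk}^\top)$. The five Gram matrices are a finite computation independent of $N$, $k$ and $l$, and it does reproduce the stated matrices. For example, $\Cov(J_0t_1,J_1t_1)=-\frac19$ gives the entry $-4$ of $36C_{HG}$. Also $\Cov(J_0\one,J_1t_1)=\frac13$ while $\Cov(J_0t_1,J_1\one)=-\frac13$, which confirms the row-$H$/column-$G$ orientation you were worried about.

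Your route is computer-free and needs no interpolation argument, because all the $N$-dependence sits in the scalar counts. The paper's route buys an independent end-to-end check of the whole block formula against \eqref{eq:formula1}, guarding against bookkeeping slips, at the cost of computer assistance. Your closing numerical comparison plays the role of that check but is not needed for the logic.
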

\begin{proof}
The three terms of \eqref{eq:formula1} contribute, respectively, the diagonal blocks $\frac{N-2}{2N}[(N-1-k)P_H+(k+1)P_G]$
(term I), the block $C_{HG}$ for $k\le l$ (term II), and, from $-\frac1N\sum_j\Var(\Omega_j)$ with
$\Omega_j=\sum_{k<j}H_k+\sum_{k\ge j}G_k$, the count of indices $j$ for which the pair $(D_k,D_l)$ enters $\Omega_j$ as
$(H,H)$, $(G,G)$ or mixed: $N-1-\max(k,l)$, $\min(k,l)+1$ and $|l-k|$ respectively. This gives the form of $X_{kl}$;
what remains is to confirm the five level matrices and the bookkeeping. To this end note that, on each of the regions
$\{k<l\}$, $\{k=l\}$, $\{k>l\}$, every entry of $N\,T_{kl}$ is a polynomial of total degree at most $2$ in $(N,k,l)$,
and so is every entry of $N$ times the Gram matrix of $K^2\fB_{N,3}$ computed directly from \eqref{eq:formula1} by
polarisation. The difference vanishes identically for $N=3,\dots,12$ (exact rational arithmetic,
\texttt{k3\_representation\_proof.py}). A polynomial of total degree at most $2$ in three variables that vanishes on the
$3\times3\times3$ grid $N\in\{7,8,9\}$, $k\in\{0,1,2\}$, $l-k\in\{1,2,3\}$ (which lies inside the region $\{k<l\}$ for
every such $N$) is zero, and likewise on the $3\times3$ grid in $(N,k)$ for the diagonal region; the region $\{k>l\}$
follows by symmetry. Hence the identity holds for all $N\ge3$.
\end{proof}

\begin{lemmaS}[constant mode]\label{lem:K3u}
$T_{uu}=\frac13\big[(N-1)I-\one\one^\top\big]$ and $\one^\top T_{ur}=0$. Consequently $\fB_{N,3}\succeq0$ iff
$S:=T_{rr}-T_{ur}^\top T_{uu}^{+}T_{ur}\succeq0$, where $T_{uu}^{+}=\frac3{N-1}\big(I-\frac1{N-1}\one\one^\top\big)$; and if
$S\succ0$ then $\ker\fB_{N,3}=\operatorname{span}(\one\otimes\one)$.
\end{lemmaS}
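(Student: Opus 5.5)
The plan is to read everything off the block representation of Lemma~\ref{lem:K3rep}, working only with the entries of the five level matrices that involve the constant direction $\one$, and then to finish with a Schur-complement argument.

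\emph{Step 1: the block $T_{uu}$.} I take the $(\one,\one)$ entry of each level matrix. This entry is the value of the corresponding functional at $D=\one$, where $H(a)=a$ and $G(a)=a+1$. All five entries equal $\frac{24}{36}=\frac23$; for instance $P(H)=\calD(H)-\Var(H)=\frac43-\frac23$.

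Substituting into $X_{kl}$ gives three cases:
\begin{itemize}
\item for $k<l$, the $(u,u)$ entry is $\frac23\big[1-\frac{N+l-k}N\big]=-\frac23\frac{l-k}N$;
\item for $k>l$, it is $-\frac23\big(1-\frac{k-l}N\big)$;
\item on the diagonal, it is $\frac23\cdot\frac{N-2}2$, because the $\one[k\le l]$ term and the $\frac1N$ term cancel.
\end{itemize}
Symmetrising, $T_{kl}=\frac12(X_{kl}+X_{lk})$, the $N$-dependence in the off-diagonal entries cancels exactly to $-\frac13$. The diagonal entries are $\frac{N-2}3$. This is the claimed $\frac13[(N-1)I-\one\one^\top]$.

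\emph{Step 2: $\one^\top T_{ur}=0$.} I will compute the first row of each level matrix in the $(t_1,t_2)$ columns, read from the displayed matrices:
\begin{itemize}
\item $P_H$: $(-12,-12)/36$;
\item $P_G$: $(12,-12)/36$;
\item $C_{HG}$: $(12,-12)/36$, together with first column $(-12,-12)/36$;
\item $C_{HH}$: $(-12,-12)/36$;
\item $C_{GG}$: $(12,-12)/36$.
\end{itemize}
The first row and first column are needed separately for $C_{HG}$ because it is not symmetric and enters $T_{kl}$ through $X_{kl}+X_{lk}^\top$.

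For each fixed $l$, I then sum the $(u,r)$ entries of $T_{kl}$ over $k$. The coefficients are the counts $N-1-\max$, $\min+1$, $\max(l-k,0)$ and $\one[k\le l]$. Summed over $k$, these are elementary arithmetic series in $l$. I expect the $t_1$ and $t_2$ components to cancel separately.

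As a cross-check, I would use the conceptual reason behind the cancellation. The bilinear form $\fB(\one\otimes\one,D)$ is computed with $H_k=a$ and $G_k=a+1$, which are affine in the level. Then $(\calD-2\Var)$ of anything affine pairs to zero by Lemma~\ref{lem:hahn}, and in the remaining covariance terms only $\Cov(a,\cdot)$ survives. The counts should telescope, exactly as in the coefficient computations of Theorem~\ref{thm:SOS}. This bookkeeping is the only real computation in the proof, and I expect it to be the main obstacle: it is routine but error-prone.

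\emph{Step 3: the consequences.} Put $\Pi:=I-\frac1{N-1}\one\one^\top$, which is the orthogonal projector onto $\one^\perp$ in $\R^{N-1}$. Then $T_{uu}=\frac{N-1}3\Pi$, so its Moore--Penrose inverse is $\frac3{N-1}\Pi$, as stated, and $T_{uu}\succeq0$ with $\ker T_{uu}=\operatorname{span}(\one)$. By Step 2 the columns of $T_{ur}$ lie in $\operatorname{range}T_{uu}=\one^\perp$.

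The generalised Schur complement criterion then applies: for a symmetric block matrix with $T_{uu}\succeq0$ and $\operatorname{range}T_{ur}\subseteq\operatorname{range}T_{uu}$, one has $T\succeq0$ iff $S\succeq0$. It follows from the identity
\[
(u,r)^\top T(u,r)=\big\|T_{uu}^{1/2}(u+T_{uu}^{+}T_{ur}r)\big\|^2+r^\top Sr .
\]

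For the kernel statement, suppose $S\succ0$ and $(u,r)^\top T(u,r)=0$. Both terms of the identity must vanish, so $r=0$, and then $T_{uu}u=0$, so $u\in\operatorname{span}(\one)$. In the original variables this means $D_k=c\,\one$ for every $k$, that is, $D\in\operatorname{span}(\one\otimes\one)$. Conversely, this span lies in the kernel, since $\fB(\one\otimes\one)=0$ by Proposition~\ref{prop:T}'s last line.
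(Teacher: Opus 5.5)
Your proof is correct, but it reaches the two identities by a different route from the paper. The paper does not use the $K=3$ level matrices for this lemma. Instead it substitutes $D_k=u_k\one$ into the sum-of-squares identity of Theorem~\ref{thm:SOS}:
\begin{itemize}
\item $Z_k=0$ and $\Var(D_k)=0$, because $H_k=u_ka$ and $G_k=u_k(a+1)$ are affine;
\item each of the three shifted variances in $\mathrm{Sq}$ equals $(u_k-u_l)^2\Var(a)$, and their weights sum to $N$, so $\mathrm{Sq}=\frac13\sum_{k<l}(u_k-u_l)^2$.
\end{itemize}
It then reads off $\one^\top T_{ur}=0$ from $\one\otimes\one$ being a null vector of $\fB$, and finishes with the same Schur complement you write out.

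That route takes two lines and rests only on hand-proved identities. Yours inherits the computer-verified block representation of Lemma~\ref{lem:K3rep}. In exchange, yours establishes the matrix statement $\one^\top T_{ur}=0$ explicitly. The paper's appeal to the kernel really needs $T(\one\otimes\one)=0$, not merely $\fB(\one\otimes\one)=0$. The two are not interchangeable while $\fB_{N,3}$ may be indefinite.

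The bookkeeping you deferred does close. From the first rows and columns you list, $2T_{kl}(\one,t_2)=\frac13-\frac{N-1}3\delta_{kl}$, which sums to zero over the $N-1$ bonds. Likewise
$2\sum_kT_{kl}(\one,t_1)=\frac13(2l+2-N)\big[\frac{N-2}N+1-\frac{2(N-1)}N\big]=0$.

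Your cross-check is in fact the quickest complete argument, and it is simpler than you expect. Polarise Theorem~\ref{thm:SOS} with one argument equal to $\one\otimes\one$:
\begin{itemize}
\item the three functions inside $\mathrm{Sq}$ become the constants $1,1,-1$;
\item $(\calD-2\Var)$ pairs an affine function to zero by Lemma~\ref{lem:hahn};
\item $\Var(D_k)$ pairs a constant.
\end{itemize}
Every term vanishes, so $T(\one\otimes\one)=0$ with no counts to telescope. This gives $T_{uu}\one=0$ and $\one^\top T_{ur}=0$ together.
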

\begin{proof}
For $D_k=u_k\one$ one has $H_k(a)=u_ka$, $G_k(a)=u_k(a+1)$, so $Z_k=0$, $\Var(D_k)=0$ and
$\mathrm{Sq}=\frac1{2N}\sum_{k<l}N(u_k-u_l)^2\Var(a)$ with $\Var(a)=\frac23$; this is
$\frac13[(N-1)\|u\|^2-(\one^\top u)^2]$. The direction $u=\one$ is $\one\otimes\one\in\ker\fB$ (proof of
Proposition~\ref{prop:T}), hence is not coupled to $r$. The rest is the Schur complement on $\one^\perp$.
\end{proof}

\begin{lemmaS}[closed forms; computer algebra]\label{lem:K3blocks}
Let $v_i=i(N-i)$ and $\Gamma_{ij}=\min(i,j)(N-\max(i,j))/N$. Then $S=\begin{pmatrix}A&B\\B^\top&C\end{pmatrix}$ with
\[
A_{ij}=\delta_{ij}\Big[\tfrac{5N-9}{36}+\tfrac{(N-2)^2v_i}{3N^2(N-1)}\Big]-\tfrac4{3N}\Gamma_{ij}-\tfrac1{36},\quad
B_{ij}=\begin{cases}\frac{(2i-N)(5N-12)}{12N}&i=j\\[2pt]\frac14-\frac j{3N}&i<j\\[2pt]\frac1{12}-\frac j{3N}&i>j\end{cases},\quad
C_{ij}=\begin{cases}\frac{5N-14}4&i=j\\[2pt]-\frac34+\frac{|i-j|}N&i\ne j.\end{cases}
\]
\end{lemmaS}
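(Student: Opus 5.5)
The plan is to compute the Schur complement $S=T_{rr}-T_{ur}^\top T_{uu}^{+}T_{ur}$ blockwise from the representation of Lemma~\ref{lem:K3rep}. First I extract from the block formula for $T_{kl}$ the scalar entries in the basis $(\one,t_1,t_2)$. The level matrices there are explicit $3\times3$ rationals, and the coefficients are piecewise affine in $(N,k,l)$ on the three regions $k<l$, $k=l$, $k>l$. Reindexing bonds by $i=k+1$, this gives closed forms for $T_{ur}$, $T_{ss}$, $T_{sq}$ and $T_{qq}$. Each entry is a constant plus a multiple of $\min(i,j)$, $\max(i,j)$ or $|i-j|$, divided by $N$.

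Next I form the correction $T_{ur}^\top T_{uu}^{+}T_{ur}$, using $T_{uu}^{+}=\frac3{N-1}(I-\frac1{N-1}\one\one^\top)$ from Lemma~\ref{lem:K3u}. Since $\one^\top T_{ur}=0$, the projector acts trivially, so the correction is simply $\frac3{N-1}T_{ur}^\top T_{ur}$. This leaves only sums over an intermediate bond index $m$ of products of piecewise-linear kernels in $(i,m)$ and $(m,j)$. I evaluate these by splitting the sum at $\min(i,j)$ and $\max(i,j)$, which turns each sum into a polynomial in $(N,i,j)$ on each region. The appearance of $v_i=i(N-i)$ and of the Green's function $\Gamma_{ij}$ in $A$ is what such sums produce, via $\sum_m \min(i,m)\min(m,j)$-type identities. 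Note that only the $u$–$s$ coupling can feed the $A$ block nontrivially, consistent with $B$ and $C$ being simpler.

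The verification will follow the same strategy as in the proof of Lemma~\ref{lem:K3rep}. On each region $\{i<j\}$, $\{i=j\}$, $\{i>j\}$, every entry of $N^2(N-1)S$ minus the claimed closed form is a polynomial in $(N,i,j)$ of bounded total degree; I expect at most $4$, and will bound this degree explicitly from the summation. I check that the difference vanishes in exact rational arithmetic for enough $N$, and on a grid in $(N,i,j-i)$ of side exceeding the degree that lies inside the region for all those $N$. A polynomial vanishing on such a grid is identically zero, so the identity holds for all $N\ge3$.

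The main obstacle is the bookkeeping of the piecewise sums when $i$ or $j$ is near the boundary bonds $1$ and $N-1$, together with the degree bound that justifies grid interpolation. I will handle this by writing each region's sum as a sum of at most three polynomial pieces with symbolic endpoints, so that the result is a genuine polynomial on the whole region. The script then carries out this verification symbolically in $N$.
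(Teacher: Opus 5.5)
Your plan is correct and is essentially the paper's argument. The paper also reduces the Schur complement of Lemma~\ref{lem:K3u} to the sums $\sum_k T_{ur}(k,\cdot)T_{ur}(k,\cdot)$, since the projector drops out because $\one^\top T_{ur}=0$, and evaluates them region by region as polynomial identities in $(N,i,j)$ by exact symbolic summation (\texttt{s1\_closed.py}); your degree estimate of $4$ for the entries of $N^2(N-1)S$ is right, so either that summation or your grid-interpolation check closes the argument.
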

\noindent\emph{Status.} By Lemma~\ref{lem:K3rep} every entry of $T$ is, on each of the regions $\{i<j\}$, $\{i=j\}$,
$\{i>j\}$, a polynomial of degree at most $2$ in $(N,i,j)$; the Schur complement of Lemma~\ref{lem:K3u} introduces only
the sums $\sum_kT_{ur}(k,\cdot)T_{ur}(k,\cdot)$, which \texttt{s1\_closed.py} evaluates in closed form region by region
by symbolic summation, giving the formulas above as polynomial identities in $(N,i,j)$.

\paragraph{The shear.} Put $R=\operatorname{diag}\big(\frac{2i-N}{3N}\big)$, $\gamma:=\frac{N-4}2$ and
$\Theta=\begin{pmatrix}I&-R\\0&I\end{pmatrix}$. Then
\begin{equation}\label{eq:shear}
\Theta S\Theta^\top=\begin{pmatrix}\mathsf M&0\\0&C-\gamma I\end{pmatrix}
+\frac1\gamma\begin{pmatrix}\widetilde B\\\gamma I\end{pmatrix}\begin{pmatrix}\widetilde B\\\gamma I\end{pmatrix}^{\!\top},
\qquad\widetilde A=A-BR-RB^\top+RCR,\quad\widetilde B=B-RC,\quad \mathsf M:=\widetilde A-\tfrac1\gamma\widetilde B\widetilde B^\top ,
\end{equation}
which is an algebraic identity (complete the square against $\gamma I$). The shear is a congruence, not a rotation: it removes
the $O(N)$ diagonal of $B$, which couples $s_k$ and $q_k$ on the same bond; $\widetilde B_{ii}=\frac{2i-N}{6N}=O(1)$. Since
$\gamma>0$ for $N\ge5$, $S\succ0$ follows from $\mathsf M\succ0$ and $C-\gamma I\succeq0$.

\begin{lemmaS}[$C-\gamma I\succeq0$; by hand]\label{lem:K3C}
Let $z_i=\frac12-\frac iN$ and $H:=\frac2N\Gamma+2zz^\top+\frac14\one\one^\top\succeq0$. Then $C=\frac{5N-11}4I-H$ and
$\operatorname{tr}H=\frac{3(N-1)}4=\frac{5N-11}4-\gamma$. Consequently, for any factorisation $H=\sum_rh_rh_r^\top$,
\[
C-\gamma I=(\operatorname{tr}H)I-H=\sum_r\big(\|h_r\|^2I-h_rh_r^\top\big)\succeq0 .
\]
\end{lemmaS}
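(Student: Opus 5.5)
The plan is to verify the two displayed identities entry by entry from the closed form of $C$ in Lemma~\ref{lem:K3blocks}, and then to conclude with the elementary fact that a positive semidefinite matrix is dominated by its trace times the identity. Throughout, indices run over $i,j\in\{1,\dots,N-1\}$, $z_i=\frac12-\frac iN$ and $\Gamma_{ij}=\min(i,j)(N-\max(i,j))/N$.

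\emph{Step 1: $H\succeq0$.} The rank-one terms $2zz^\top$ and $\frac14\one\one^\top$ are positive semidefinite. For $\Gamma$ I will use that it is the Green's function of the Dirichlet Laplacian on the path $\{1,\dots,N-1\}$. This is the matrix $g_b$ of Lemma~\ref{lem:path} after the shift $i=k+1$, so $\Gamma=L^{-1}\succ0$. Hence $H$ is a sum of positive semidefinite matrices with nonnegative weights.

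\emph{Step 2: $C=\frac{5N-11}4I-H$.} This is a direct comparison with Lemma~\ref{lem:K3blocks}.
\begin{itemize}
\item On the diagonal, $\Gamma_{ii}=i(N-i)/N$ gives
\[
H_{ii}=\frac{2i}N-\frac{2i^2}{N^2}+2\Big(\frac12-\frac iN\Big)^2+\frac14=\frac34 ,
\]
where the $i$-dependent terms cancel. Then $\frac{5N-11}4-\frac34=\frac{5N-14}4=C_{ii}$.
\item Off the diagonal, for $i<j$,
\[
-H_{ij}=-\frac{2i(N-j)}{N^2}-2\Big(\frac12-\frac iN\Big)\Big(\frac12-\frac jN\Big)-\frac14=-\frac34+\frac{j-i}N ,
\]
because the $ij/N^2$ terms cancel and the linear terms combine to $(j-i)/N$. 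This matches $C_{ij}=-\frac34+\frac{|i-j|}N$, and the case $i>j$ follows by symmetry.
\end{itemize}

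\emph{Step 3: the trace identity.} There are $N-1$ diagonal entries, each equal to $\frac34$, so $\operatorname{tr}H=\frac{3(N-1)}4$. With $\gamma=\frac{N-4}2$,
\[
\frac{5N-11}4-\gamma=\frac{5N-11-2N+8}4=\frac{3(N-1)}4 ,
\]
which is the claimed equality.

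\emph{Step 4: conclusion.} By Steps 2 and 3, $C-\gamma I=(\operatorname{tr}H)I-H$. Take any factorisation $H=\sum_rh_rh_r^\top$, for example from the spectral decomposition, which exists by Step 1. Then $\operatorname{tr}H=\sum_r\|h_r\|^2$, so $C-\gamma I=\sum_r(\|h_r\|^2I-h_rh_r^\top)$. Each summand is positive semidefinite, since $x^\top(\|h\|^2I-hh^\top)x=\|h\|^2\|x\|^2-(h^\top x)^2\ge0$ by Cauchy--Schwarz.

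There is no real obstacle. The only point needing care is the bookkeeping in Step 2: I must use the same index range $1\le i\le N-1$ as in Lemma~\ref{lem:K3blocks}, and the exact normalisation of $\Gamma$, so that the diagonal of $H$ comes out as the constant $\frac34$. That constant is what makes the trace identity work.
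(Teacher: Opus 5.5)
Your proof is correct and is essentially the paper's argument: the same entrywise comparison with the closed form of $C$ (constant diagonal $H_{ii}=\frac34$, off-diagonal $H_{ij}=\frac34-\frac{|i-j|}{N}$), the same trace count, and the same decomposition into the positive semidefinite rank-one pieces $\|h\|^2I-hh^\top$. The one addition is your explicit reason for $H\succeq0$, namely that $\Gamma$ is the index-shifted $g_b=L^{-1}$ of Lemma~\ref{lem:path}; the paper only asserts $H\succeq0$ in the statement.
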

\begin{proof}
Off the diagonal, $\frac2N\Gamma_{ij}=\frac{2\min(i,j)}N-\frac{2ij}{N^2}$ and $2z_iz_j=\frac12-\frac{i+j}N+\frac{2ij}{N^2}$,
so $H_{ij}=\frac34-\frac{|i-j|}N=\frac{5N-11}4\delta_{ij}-C_{ij}$. On the diagonal the dependence on $i$ cancels:
$H_{ii}=\frac{2i(N-i)}{N^2}+2(\frac12-\frac iN)^2+\frac14=\frac34$. Each matrix $\|h\|^2I-hh^\top$ has eigenvalues $\|h\|^2$
and $0$.
\end{proof}

\begin{lemmaS}[$\mathsf M$ is a $Z$-matrix with positive Perron certificate; computer algebra]\label{lem:K3M}
For $i<j$, with $d=j-i$,
\[
\mathsf M_{ij}=-\frac{i(N-j)}{54N^4(N-4)}\Big[N^2(73N-423)+8N^2(N-i)+16d(N^2-d^2)+4N(i^2+j^2)+72Ni+120N(N-j)+48ij+8N+16d\Big],
\]
which is negative for $N\ge6$. Moreover
\begin{align*}
(\mathsf M\one)_i&=\frac{v_i\,\mathcal R(N,v_i)}{540N^4(N-4)(N-1)},\\
\mathcal R&=53N^5-1338N^4+5805N^3-5380N^2+384N-64\\&\quad+2(N-1)(29N^2-120N+40)v+16(N-1)v^2 ,
\end{align*}
and the substitution $N=m+19$, $v=u+(N-1)$ turns $\mathcal R$ into a polynomial in $(m,u)$ all of whose coefficients are
positive (constant term $171000$). Since $v_i-(N-1)=(i-1)(N-1-i)\ge0$, $\mathsf M\one>0$ for every $N\ge19$. For
$N=17,18$ the vector $w_i=1+\frac{(2i-N)^2}{20N^2}$ satisfies, in exact rational arithmetic,
$\min_i(\mathsf Mw)_i=\frac{7839473}{1107488460}$ and $\frac{5964113}{446410440}$ respectively.
\end{lemmaS}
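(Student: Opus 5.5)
The plan is to compute $\mathsf M=\widetilde A-\frac1\gamma\widetilde B\widetilde B^\top$ in closed form from the entries of $A,B,C$ in Lemma~\ref{lem:K3blocks}, the diagonal matrix $R=\operatorname{diag}\big(\frac{2i-N}{3N}\big)$ and $\gamma=\frac{N-4}2$ of \eqref{eq:shear}. The only step that is not entrywise is the sum $\sum_k\widetilde B_{ik}\widetilde B_{jk}$. Everything else follows mechanically:
\begin{itemize}
\item $\widetilde A=A-BR-RB^\top+RCR$ only rescales rows and columns by $R$;
\item $\widetilde B=B-RC$ is, on each of the regions $\{i<j\}$, $\{i=j\}$, $\{i>j\}$, a polynomial of low degree in $(N,i,j)$ divided by a power of $N$.
\end{itemize}
I would therefore first write $\widetilde B$ region by region, checking in particular that $\widetilde B_{ii}=\frac{2i-N}{6N}$ (the $O(N)$ diagonal of $B$ must cancel against $R_{ii}C_{ii}$).

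For the product $\widetilde B\widetilde B^\top$, fix $i<j$ and split the summation index $k$ into five ranges: $k<i$, $k=i$, $i<k<j$, $k=j$ and $k>j$. On each range the summand is an explicit polynomial in $k$, so the range sums are evaluated by Faulhaber's formulas symbolically in $(N,i,j)$. This is the computer-algebra part, done in the same way as \texttt{s1\_closed.py}, and the case $i=j$ is handled likewise. The result, multiplied by $\frac1\gamma=\frac2{N-4}$ and subtracted from $\widetilde A_{ij}$, should factor as $-\frac{i(N-j)}{54N^4(N-4)}[\cdots]$ with the stated bracket. As in Lemma~\ref{lem:K3rep}, I would certify this identity rigorously by a degree bound plus exact agreement on a sufficiently large grid of $(N,i,j)$, rather than trusting the simplifier alone.

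The sign claim comes next. For $N\ge6$ every term of the bracket is nonnegative except possibly $N^2(73N-423)$. That term is negative only for $N=6$ ($73\cdot6=438>423$, so it is in fact positive there too), and $16d(N^2-d^2)\ge0$ since $d<N$. Hence $\mathsf M_{ij}<0$, and by symmetry $\mathsf M$ is a symmetric $Z$-matrix.

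For the row sums, I would compute $(\mathsf M\one)_i=\sum_j\mathsf M_{ij}$ by summing the off-diagonal closed form over $j<i$ and $j>i$ and adding $\mathsf M_{ii}$. Again this is symbolic summation, and the answer is symmetric under $i\mapsto N-i$, so it can be expressed in $v_i=i(N-i)$. I expect this to produce the factor $v_i$ and the polynomial $\mathcal R(N,v)$. Positivity for $N\ge19$ then follows by substituting $N=m+19$ and $v=u+(N-1)$ and checking, by exact expansion, that all coefficients are positive. This suffices because $u=(i-1)(N-1-i)\ge0$ and $m\ge0$, and the denominator is positive for $N\ge5$.

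For $N=17,18$, where $\mathsf M\one$ may fail to be positive, I would instead evaluate $\mathsf M$ exactly as a rational matrix and compute $\mathsf Mw$ for the proposed weight $w_i=1+\frac{(2i-N)^2}{20N^2}$, confirming the stated minima. Lemma~\ref{lem:Zmatrix} then converts positivity of $\mathsf Mv$, for a positive $v$, into $\mathsf M\succ0$ (irreducibility holds since all off-diagonal entries are nonzero).

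The main obstacle is the bookkeeping of the region-wise symbolic summation in $\widetilde B\widetilde B^\top$ and in $\mathsf M\one$. Boundary terms at $k=i$ and $k=j$ are easy to mishandle, so I would guard every closed form by an independent exact evaluation of $\mathsf M$ from its definition for $N$ up to about $30$.
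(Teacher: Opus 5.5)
Your plan is correct and matches the paper's: the paper also derives the entry formula and the row sums $\mathsf M\one$ from the closed forms of Lemma~\ref{lem:K3blocks} by region-by-region symbolic summation (\texttt{verify\_M\_symbolic.py}), then checks the positivity of the shifted coefficients of $\mathcal R$ and the two weighted certificates at $N=17,18$ in exact arithmetic, so your grid-based cross-check is an added safeguard rather than a different route. One wording slip in your sign argument: $N^2(73N-423)$ is positive exactly when $N\ge6$ and negative only at $N=5$, which is what your parenthetical intends, so the conclusion $\mathsf M_{ij}<0$ for $N\ge6$ stands.
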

\noindent\emph{Status.} The two entry formulas are polynomial identities proved by region-by-region symbolic summation
(\texttt{verify\_M\_symbolic.py}, whose hard-coded block entries are checked against \texttt{s1\_closed.py} by
\texttt{check\_blocks.py}); the positivity of the shifted coefficients and the two certificates for $N=17,18$ are exact
computations in the same script.

\begin{thmS}\label{thm:K3}
For $K=3$ and every $N\ge17$, $\fB_{N,3}\succeq0$ with kernel $\operatorname{span}(\one\otimes\one)$.
\end{thmS}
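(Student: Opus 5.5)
The plan is to assemble Theorem~\ref{thm:K3} from the lemmas just established, in the order Lemma~\ref{lem:K3u}, shear identity \eqref{eq:shear}, Lemmas~\ref{lem:K3C} and~\ref{lem:K3M}. By Lemma~\ref{lem:K3u}, the constant mode $u$ decouples: $T_{uu}$ is positive semidefinite with kernel $\operatorname{span}(\one)$, and $\one^\top T_{ur}=0$. Hence $\fB_{N,3}\succeq0$ reduces to positivity of the Schur complement $S=T_{rr}-T_{ur}^\top T_{uu}^{+}T_{ur}$, and the kernel statement reduces to showing $S\succ0$. So the whole task is to prove $S\succ0$ for every $N\ge17$, using the closed forms of $A,B,C$ from Lemma~\ref{lem:K3blocks}.

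Since $N\ge17>4$, we have $\gamma=(N-4)/2>0$. The congruence \eqref{eq:shear} then exhibits $\Theta S\Theta^\top$ as the sum of three pieces: the block-diagonal matrix $\operatorname{diag}(\mathsf M,\,C-\gamma I)$, and the rank-$(N-1)$ positive semidefinite term $\gamma^{-1}WW^\top$ with $W=(\widetilde B;\gamma I)$. Because $\Theta$ is invertible, it suffices to show that this sum is positive definite.

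\textbf{Controlling the pieces.} Lemma~\ref{lem:K3C} gives $C-\gamma I\succeq0$. Lemma~\ref{lem:K3M} shows that $\mathsf M$ is a symmetric $Z$-matrix, since its off-diagonal entries are negative for $N\ge6$. It also supplies a positive vector $v$ with $\mathsf Mv>0$ entrywise: take $v=\one$ for $N\ge19$ and $v=w$ for $N=17,18$. Because the entries are strictly negative, $\mathsf M$ is irreducible, so Lemma~\ref{lem:Zmatrix} yields $\mathsf M\succ0$.

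\textbf{Positive definiteness of the sum.} Let $x=(x_1,x_2)$ annihilate the quadratic form. All three pieces are nonnegative, so each must vanish separately. Vanishing of $x_1^\top\mathsf Mx_1$ forces $x_1=0$. Vanishing of the rank term forces $W^\top x=\widetilde B^\top x_1+\gamma x_2=0$, and with $x_1=0$ this gives $x_2=0$. Hence $S\succ0$, and Lemma~\ref{lem:K3u} gives both $\fB_{N,3}\succeq0$ and $\ker\fB_{N,3}=\operatorname{span}(\one\otimes\one)$.

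I expect the main obstacle to be not this assembly but the correctness of the closed-form entries of $\mathsf M$ and $\mathsf M\one$ in Lemma~\ref{lem:K3M}. These rest on the symbolic summations, and on the coefficient-positivity check of $\mathcal R$ after the shift $N=m+19$; here I would simply cite those lemmas as established. One small point to check is that the shear really is a congruence: $\Theta$ is unitriangular, hence invertible, so positivity transfers back to $S$.
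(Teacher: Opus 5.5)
Your proposal is correct and follows the paper's proof essentially verbatim. You reduce to the Schur complement $S$ via Lemma~\ref{lem:K3u}, apply the shear congruence \eqref{eq:shear} with $\gamma=(N-4)/2>0$, obtain $\mathsf M\succ0$ from Lemmas~\ref{lem:K3M} and~\ref{lem:Zmatrix} and $C-\gamma I\succeq0$ from Lemma~\ref{lem:K3C}, and conclude that a null vector must have vanishing first block and then, through the rank term, vanishing second block — which is exactly the paper's case split ``if $s\ne0$ \dots; if $s=0$, $q\ne0$ \dots''.
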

\begin{proof}
By Lemma~\ref{lem:K3M} and Lemma~\ref{lem:Zmatrix} (with $v=\one$ or $v=w$), $\mathsf M\succ0$; by Lemma~\ref{lem:K3C},
$C-\gamma I\succeq0$. In \eqref{eq:shear}, if $s\ne0$ the first block is positive on $(s,q)$, and if $s=0$, $q\ne0$ the square
term contributes $\gamma\|q\|^2>0$. Hence $\Theta S\Theta^\top\succ0$, so $S\succ0$, and Lemma~\ref{lem:K3u} concludes.
\end{proof}
The threshold is genuine for the bond form: $\lambda_{\min}(S)$ is $-0.003$ at $N=16$ and $+0.018$ at $N=17$. For $N\le16$
the bond form is indefinite on the full bond space, but it is positive on the image $\{\hat D(f)\}$, which is all that
Theorem~\ref{thm:finiteK} needs; see Step~\ref{step:certs}. The natural continuum limit of $\fB_{N,3}$ has spectral
infimum $0$, so positivity at $K=3$ is a finite-$N$ effect, which is why no continuum or Gershgorin shortcut is available.

%=====================================================================================================
\stepsection{certs}{the finitely many remaining cases (exact certificates)}\label{sec:certs}
\begin{wraptable}{r}{0.6\textwidth}\vspace{-8pt}\centering\small
\caption{Coverage of $\{(K,N):K\ge3,\ N\ge3\}$. No pair is left uncovered. The first column is analytic for $K\ge5$
(Theorem~\ref{thm:K5}) and rests on exact symbolic computation for $K=4$ and $K=3$ (Steps~\ref{step:K4}--\ref{step:K3}).}\label{tab:cov}
\begin{tabular}{@{}lcccc@{}}\toprule
$K$ & all large $N$ & bond-space certificate & image certificate & direct\\\midrule
$3$ & $N\ge17$ (Thm.~\ref{thm:K3}) & -- & $4\le N\le16$ & $N=3$\\
$4$ & $N\ge10$ (Thm.~\ref{thm:K4}) & $5\le N\le9$ & $N=3,4$ & --\\
$5$ & $N\ge6$ (Thm.~\ref{thm:K5}) & $N=4,5$ & $N=3$ & --\\
$6,7$ & $N\ge4$ (Thm.~\ref{thm:K5}) & $N=3$ & -- & --\\
$\ge8$ & $N\ge3$ (Thm.~\ref{thm:K5}) & -- & -- & --\\\bottomrule
\end{tabular}
\end{wraptable}
Theorems~\ref{thm:K5}, \ref{thm:K4} and~\ref{thm:K3} leave the pairs $(K,N)$ listed in Table~\ref{tab:cov}. They are
settled by two certificates, both in exact rational arithmetic (Python \texttt{fractions}), with no floating point.

\emph{Bond-space certificate} (\texttt{full\_space\_certificate.py}). The Gram matrix $\mathsf G$ of $\fB_{N,K}$ on
$\R^{(N-1)K}$ is built by polarisation; the script checks that $\mathsf G(\one\otimes\one)=0$ exactly and that
$\mathsf G+(\one\otimes\one)(\one\otimes\one)^\top$ has an $LDL^\top$ factorisation with positive pivots. This is equivalent
to $\fB_{N,K}\succeq0$ with kernel $\operatorname{span}(\one\otimes\one)$.

\emph{Image certificate} (\texttt{bond\_certificate.py}). Where the bond form is indefinite on the full space, the
$(N-1)\times(N-1)$ matrix of $f\mapsto\fB_{N,K}(\hat D(f))=Q^{(1)}_K+Q^{(2)}_K$ is built exactly from \eqref{eq:Dhat}. It
annihilates the linear payoff (Lemma~\ref{lem:B} and Proposition~\ref{prop:orders}), and the script certifies that it
is positive definite on the complement spanned by the second differences
$e_k-2e_{k+1}+e_{k+2}$, $k=0,\dots,N-3$, in payoff coordinates, again by $LDL^\top$ with positive pivots.

\emph{Stored certificates and an independent checker.} For each of the $25$ pairs the certified matrix $\mathsf A$
(that is, $\mathsf G+(\one\otimes\one)(\one\otimes\one)^\top$, respectively the image matrix), the unit lower-triangular
factor $\mathsf L$ and the pivots $\mathsf D$ are stored as exact rationals in \texttt{certificates/*.json}
(\texttt{make\_certificates.py}). The checker \texttt{verify\_certificates.py} uses only the Python standard library
and imports none of the code that produced them. For every file it (1)~rebuilds $\mathsf A$ from
\eqref{eq:formula1} and \eqref{eq:Dhat} (an implementation independent of the generator, which evaluates the bond form
through the bridge kernels of the original proof notes) and checks equality entry by entry; (2)~checks that
$\mathsf L$ is unit lower triangular, that all pivots are positive and that $\mathsf L\mathsf D\mathsf L^\top=\mathsf A$
exactly; and (3)~checks the kernel statement ($\mathsf G(\one\otimes\one)=0$, respectively that the bond vectors of the
linear payoff pair to zero with everything). A reader who trusts rational arithmetic therefore needs to trust only
this one short file, together with Proposition~\ref{prop:bond} (which identifies $\fB_{N,K}(\hat D(f))$ with
$Q^{(1)}_K+Q^{(2)}_K$) and Proposition~\ref{prop:orders}. All $25$ certificates pass.

\emph{The pair $(K,N)=(3,3)$.} Here the two-body form vanishes identically on the image, and we compute $Q_3$ itself in
exact arithmetic from the definition: $Q_3=\frac83\begin{pmatrix}1&-1\\-1&1\end{pmatrix}$, positive semidefinite with kernel
$\operatorname{span}(\one)$ (generalised eigenvalue $\frac{32/3}{36}=\frac8{27}$).

\noindent The \emph{bond-space} certificate fails at $(K,N)=(3,3),(4,3),(4,4),(5,3)$, as \texttt{full\_\allowbreak space\_\allowbreak certificate.py}
reports. This is expected: there the bond form is indefinite (or degenerate) on the full space, and the pairs are covered
by the image certificate or, for $(3,3)$, directly (\texttt{end\_\allowbreak to\_\allowbreak end.py}).

%=====================================================================================================
\stepsection{assembly}{assembly, the case \texorpdfstring{$N=2$}{N=2}, and \texorpdfstring{$K=2$}{K=2}}\label{sec:assembly}
Everything is now in place: the orders $m\ge3$ (Step~\ref{step:orders}), the two-body form on the bond space (Steps~\ref{step:bond}--\ref{step:certs}), and
the kernel (Steps~\ref{step:B} and~\ref{step:kerSq}). The case $N=2$ is trivial.
\begin{proof}[Proof of Theorem~\ref{thm:finiteK}]
Let $K\ge3$. \emph{$N=2$.} The increment space is one-dimensional and spanned by $\one$, so $Q_K=0$ by
Lemma~\ref{lem:B}, and the statement is trivial (every nonconstant $f$ is affine). \emph{$N\ge3$.} By
Lemma~\ref{lem:orderdec}, $F^\top Q_KF=K^{N+1}\sum_{m\ge1}Q^{(m)}_K(f)$; by Proposition~\ref{prop:orders},
$Q^{(m)}_K\succeq0$ for $m\ge3$; by Proposition~\ref{thm:C}, $Q^{(1)}_K+Q^{(2)}_K=\fB_{N,K}(\hat D(f))$. By
Theorems~\ref{thm:K5}, \ref{thm:K4}, \ref{thm:K3} and Table~\ref{tab:cov}, this is nonnegative for every $f$, so
$Q_K\succeq0$. For the kernel, $\one\in\ker Q_K$ by Lemma~\ref{lem:B}. Conversely let $F^\top Q_KF=0$. Every summand
vanishes, in particular $\fB_{N,K}(\hat D(f))=0$. In the large-$N$ and bond-space cases
$\ker\fB_{N,K}=\operatorname{span}(\one\otimes\one)$, so $f$ is affine by Lemma~\ref{lem:transfer}; in the image-certified
cases the form is positive definite on a complement of the linear payoff, so $f$ is affine; and $(3,3)$ was computed
directly.
\end{proof}

\paragraph{Where $K\ge3$ enters, and the counterexample at $K=2$.} Proposition~\ref{thm:C} and Theorem~\ref{thm:SOS} hold for
$K=2$. The hypothesis enters through Lemma~\ref{lem:Lstar-SM} alone: its margin $(K-2)\mu\nu\Var(D)$ vanishes
identically for binary priorities, and every later step inherits this. Correspondingly the statement is false at $K=2$:
building $Q_2$ at $N=7$ from the definition in exact rational arithmetic (\texttt{k2\_witness.py}), the strictly decreasing
payoff with increments $F=(9,8,6,4,2,1)$ gives
\[
F^\top Q_2F=-\frac{7237}{480}<0,\qquad\rho_\pot=\frac{53454277}{62354880}>\frac67 ,
\]
and the reversed increments give the same values. For $N\le6$, $Q_2\succeq0$. Table~\ref{tab:gap-SM} records the
margin $\frac{N-1}N-\max_f\rho_\pot(f)$ over non-affine $f$ for small $(K,N)$: it is positive throughout $K\ge3$, in
agreement with Theorem~\ref{thm:finiteK}, turns negative at $K=2$ exactly at $N=7$, and grows in magnitude with $N$.

\begin{table}[htbp]\centering
\caption{Margin $\frac{N-1}N-\max_f\rho_\pot(f)$ over non-affine payoff-from-rank functions $f$ on $K_N$, computed
in floating point from the pencil $(M_\pot,M_{\mathrm{tot}})$ (not part of any proof). When positive it is the gap
$\lambda_1-\lambda_2$ between the two largest generalised eigenvalues, and as $K\to\infty$ it tends to
$\lambda_1-\lambda_2=\frac2{N+2}$ of Theorem~\ref{thm:spectrum}. The entries at $K=2$, $N\ge7$ (bold) are negative,
in agreement with the exact witness at $N=7$; blank cells were not computed.}\label{tab:gap-SM}
\small
\begin{tabular}{@{}rccccccc@{}}\toprule
$N$ & $K=2$ & $K=3$ & $K=4$ & $K=5$ & $K=6$ & $K=7$ & $K=8$\\\midrule
3  & $+.167$ & $+.296$ & $+.342$ & $+.363$ & $+.374$ & $+.381$ & $+.385$\\
4  & $+.075$ & $+.216$ & $+.267$ & $+.291$ & $+.304$ & $+.311$ & $+.317$\\
5  & $+.033$ & $+.169$ & $+.219$ & $+.243$ & $+.256$ & $+.264$ & $+.269$\\
6  & $+.011$ & $+.138$ & $+.186$ & $+.209$ & $+.221$ & $+.229$ & $+.234$\\
7  & $\mathbf{-.001}$ & $+.116$ & $+.161$ & $+.183$ & $+.195$ & $+.202$ & $+.207$\\
8  & $\mathbf{-.009}$ & $+.100$ & $+.142$ & $+.163$ & $+.174$ & $+.181$ & $+.185$\\
9  & $\mathbf{-.014}$ & $+.088$ & $+.127$ & $+.147$ & $+.157$ & & \\
10 & $\mathbf{-.017}$ & $+.078$ & $+.115$ & & & & \\
11 & $\mathbf{-.019}$ & $+.070$ & & & & & \\
12 & $\mathbf{-.020}$ & $+.063$ & & & & & \\\bottomrule
\end{tabular}
\end{table}

%=====================================================================================================
\subsection{The reciprocity law: proof of Proposition~\ref{prop:recip}}\label{sec:recip}
This subsection is independent of Steps~\ref{step:hoeff}--\ref{step:assembly}. It does not give another proof of Theorem~\ref{thm:finiteK}; it
describes exactly how the finite-$K$ forms deviate from the continuum ones.

\paragraph{Setting.} $C=([N],E)$ is a conflict graph and the protocol is given by a total order $\prec$ of the agents:
$j$ is served before its neighbour $i$ iff $s_j>s_i$, or $s_j=s_i$ and $j\prec i$; $\prec^{\rm rev}$ is the reversed
order. For $S\subseteq[N]$ put $A_S=\prod_{j\in S}A_j$. With $u_i=f(\pos_C(i,\cdot))$, $g_i=(I-A_i)u_i$ and $g=\sum_ig_i$,
the forms
\[
\mathsf m_{\rm tot}(f,f')=\sum_i\E\big[g_i^{f}g_i^{f'}\big],\qquad
\mathsf m_{\pot}(f,f')=\E\big[g^{f}\,\numop^{-1}g^{f'}\big]
\]
are $K^{-(N+1)}$ times the two energy forms, and $K^{-(N+1)}Q_K=\frac{N-1}N\mathsf m_{\rm tot}-\mathsf m_\pot$ on $K_N$. We write
$\mathsf m^{\prec}(K)$ for either of them, as a function of $K$.

\begin{mainthm}[Proposition~\ref{prop:recip} of the main text]
{\rm(a)} $\mathsf m^{\prec}(K)$ is a polynomial in $1/K$ of degree at most $2N-1$; its constant term is the
corresponding form of the game with continuous priorities on $C$. {\rm(b)} As polynomials in $x=1/K$,
$\mathsf m^{\prec}(-x)=\mathsf m^{\prec^{\rm rev}}(x)$. {\rm(c)} If $C$ has an automorphism $\pi$ with
$\pi(j)\prec\pi(i)$ whenever $\{i,j\}\in E$ and $i\prec j$, then $\mathsf m^{\prec}(K)$ is a polynomial in $K^{-2}$ of
degree at most $N-1$ that vanishes at $K=1$.
\end{mainthm}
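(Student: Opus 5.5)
The plan is to write every entry of the two forms as a finite combination of normalised lattice-point counts, one count for each pattern of ties and strict inequalities among a bounded number of independent uniform levels. Polynomiality then follows from binomial formulas, and the reversal in (b) from Stanley's reciprocity for $P$-partitions \cite{stanley2012}.

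\emph{Step 1 (reduction to expectations of products).} By the Hoeffding expansion, $\mathsf m_{\pot}(f,f')=\sum_{T\neq\emptyset}|T|^{-1}\E[g^f_T g^{f'}_T]$. Each $g_T$ is a signed sum of compositions of the averages $A_S$ applied to the $u_i$. Hence $\E[g^f_Tg^{f'}_T]$, and likewise each term $\E[g^f_ig^{f'}_i]$ of $\mathsf m_{\rm tot}$, expands into a signed finite sum of terms $\E[u^f_i(s)\,u^{f'}_k(s')]$. Here $s,s'$ are profiles that share some coordinates and use fresh independent copies for the others, so at most $M\le 2N$ independent uniform levels on $\{0,\dots,K-1\}$ occur. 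The coefficients of this expansion do not depend on $K$.

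\emph{Step 2 (counting, part (a)).} Each integrand depends on the levels only through the service comparisons, which are ``$s_j>s_i$, or $s_j=s_i$ and $j\prec i$''. Extend $\prec$ to the labelled copies in a fixed way; among copies of the same agent the tie order is irrelevant, because two such copies never appear in the same position. Each point of $\{0,\dots,K-1\}^M$ then determines a linear order of the $M$ labelled variables, obtained by sorting by (level, then the extended $\prec$). The integrand is a function of this linear order $\sigma$ alone. By the standard $P$-partition count, the number of points inducing $\sigma$ is $\binom{K-1-\mathrm{des}(\sigma)+M}{M}$, where $\mathrm{des}$ counts the positions at which a weak inequality is forced to be strict. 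Multiplied by $K^{-M}$, this is a polynomial in $x=1/K$ of degree at most $M$, and after a short check at most $M-1\le 2N-1$, since the top term $K^M/M!$ cancels against $K^{-M}$ only down to a constant. The constant term is $1/M!$ for every $\sigma$. The constant term of the whole expression is therefore the average of the integrand over uniformly random linear orders, which is exactly the continuous-priority expectation, where ties have probability zero. This proves (a).

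\emph{Step 3 (reciprocity, part (b)).} Substitute $K\to-K$ in $\binom{K-1-d+M}{M}$. Reciprocity gives $(-1)^M\binom{K-1-(M-1-d)+M}{M}$, i.e.\ the count for the complementary descent set, in which strict and weak comparisons are exchanged. Exchanging strict and weak comparisons between agents means that every tie is now won by the other agent, so the count becomes the one for $\prec^{\rm rev}$. The sign $(-1)^M$ cancels with $K^{-M}\mapsto(-1)^MK^{-M}$. Summing the finitely many terms, whose coefficients are independent of $K$, gives $\mathsf m^{\prec}(-x)=\mathsf m^{\prec^{\rm rev}}(x)$.

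\emph{Step 4 (part (c) and the expansion).} Relabelling the agents by $\pi$ maps the game under $\prec$ to the game under $\prec^{\rm rev}$, up to renaming. Both forms are sums over all agents, so they are invariant under renaming, and $\mathsf m^{\prec}=\mathsf m^{\prec^{\rm rev}}$. Combined with (b), $\mathsf m^{\prec}$ is even in $x$; having degree at most $2N-1$, it is a polynomial in $K^{-2}$ of degree at most $N-1$. The counting formula is valid for every integer $K\ge1$. At $K=1$ there is a single profile and $g_i=(I-A_i)u_i=0$, so both forms vanish at $x=1$.

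Applying all of this on $K_N$ with $i\mapsto N-1-i$ yields \eqref{eq:Qexpansion}. The constant term is $Q_\infty$ by (a), and the identity $Q_\infty+\sum_j\mathsf C_j=0$ is the vanishing at $K=1$.

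The step I expect to be the main obstacle is Step 2. One must check carefully that, for the nested averages produced by $\numop^{-1}$ and the copies of the same agent, the integrand really is a function of a single extended linear order. One must also confirm that the degree bound $2N-1$ survives, i.e.\ that the number of independent variables never exceeds $2N$ and the leading cancellation holds.
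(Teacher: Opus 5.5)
Your proposal is essentially the paper's proof: the same reduction to normalised lattice-point counts in at most $2N$ levels, the same cell count $\binom{K+M-1-\mathrm{des}(\sigma)}{M}$, the same reciprocity $\binom{-K+M-1-d}{M}=(-1)^M\binom{K+d}{M}$, and the same automorphism and $K=1$ arguments for (c). In (b), read the complementary descent set as the descent set of the full reversal of the extended order, twins included, as the paper does; your remark that twin ties are irrelevant makes this harmless.

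The one misstatement is your justification of the degree bound. The top term $K^M/M!$ only gives the constant term in $x$. The coefficient of $x^M$ vanishes for a different reason: $\binom{K+M-1-d}{M}$ is divisible by $K$ whenever $0\le d\le M-1$. This gives degree at most $M-1\le 2N-1$ in $x$, and hence, after evenness, degree at most $N-1$ in $K^{-2}$.
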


\emph{(1) Reduction to counting.} Let $U^r_i=\one[\pos_C(i,\cdot)=r]$, so that $u_i=\sum_rf(r)U_i^r$. The $A_j$
are commuting orthogonal projections, so
$\prod_{j\in T}(I-A_j)\prod_{j\notin T}A_j=\sum_{S\supseteq T^c}(-1)^{|S\setminus T^c|}A_S$, and $\numop^{-1}$ acts on the
range of this projection as $1/|T|$ (Lemma~\ref{lem:numop}). Hence both forms are linear combinations, with
coefficients that do not depend on $K$, of
\[
F(K)=\E\big[U_i^{r}\cdot A_SU_k^{r'}\big],\qquad i,k\in[N],\ S\subseteq[N]
\]
(the projections $I-A_i$ in $g_i$ are absorbed into $S$, since $A_SA_i=A_{S\cup\{i\}}$). Now
$A_SU_k^{r'}(s)=\E'[U_k^{r'}(s')]$, where $s'_j=s_j$ for $j\notin S$ and the $s'_j$, $j\in S$, are fresh independent
uniform levels. With $n=N+|S|\le2N$,
\begin{equation}\label{eq:count}
F(K)=K^{-n}\,\#\big\{(s,s'_S)\in\{0,\dots,K-1\}^n:\ \pos(i,s)=r,\ \pos(k,s')=r'\big\}.
\end{equation}

\emph{(2) Lexicographic cells.} Order the $n$ coordinates by the agent they belong to, according to $\prec$,
and put $s_j$ before its twin $s'_j$; call this total order of the coordinates $\prec$ again. The rule for twins is an
auxiliary choice: any other rule gives another decomposition of the same set into cells, and
\eqref{eq:Fcells} below holds for each of them. Every point of the cube induces a strict
total order $\sigma$ of the coordinates: larger level first, ties broken by $\prec$. The set in \eqref{eq:count} is a
union of such cells: $\pos(i,s)$ is the number of neighbours of $i$ that precede $s_i$ in $\sigma$, and likewise for
$\pos(k,s')$; only coordinates of \emph{different} agents are ever compared, so the convention for twins is immaterial.
Let $\mathcal O$ be the set of orders $\sigma$ for which the two conditions hold. \emph{It is defined without reference
to the protocol.}

\begin{lemmaS}[cells]\label{lem:cells}
For a strict order $\sigma=(\sigma_1,\dots,\sigma_n)$ let
$\operatorname{des}_\prec(\sigma)=\#\{t:\sigma_{t+1}\prec\sigma_t\}$. The number of points of $\{0,\dots,K-1\}^n$ whose
lexicographic order is $\sigma$ equals $\binom{K+n-1-\operatorname{des}_\prec(\sigma)}n$.
\end{lemmaS}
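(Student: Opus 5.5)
The plan is to describe the cell of $\sigma$ as a set of chains of inequalities, then straighten the strict ones into weak ones by a shift, which reduces the count to weakly monotone sequences. First, I would check that the lexicographic order is well defined. For a point $x\in\{0,\dots,K-1\}^n$, the rule ``larger level first, ties broken by $\prec$'' compares coordinates by the key $(x_c,\ \prec\text{-rank of }c)$, with the level descending and, on a tie, the $\prec$-earlier coordinate first. Since $\prec$ is a strict total order of the $n$ coordinates (twins included, by the auxiliary convention), this key is injective. Hence every point induces exactly one strict order $\sigma$, and the cells partition the cube.

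Next, fix $\sigma=(\sigma_1,\dots,\sigma_n)$. A point $x$ induces $\sigma$ iff, for every consecutive pair $t$, $t+1$, the coordinate $\sigma_t$ is placed before $\sigma_{t+1}$; the consecutive conditions suffice because the induced relation is transitive. For the pair $(\sigma_t,\sigma_{t+1})$, the requirement is $x_{\sigma_t}>x_{\sigma_{t+1}}$, or $x_{\sigma_t}=x_{\sigma_{t+1}}$ together with $\sigma_t\prec\sigma_{t+1}$. So the condition is the weak inequality $x_{\sigma_t}\ge x_{\sigma_{t+1}}$ when $\sigma_t\prec\sigma_{t+1}$, and the strict inequality $x_{\sigma_t}>x_{\sigma_{t+1}}$ when $\sigma_{t+1}\prec\sigma_t$, that is, at a descent of $\sigma$. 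The cell is therefore the set of sequences $K-1\ge z_1\ge z_2\ge\dots\ge z_n\ge0$, where $z_t=x_{\sigma_t}$, with strict inequality exactly at the $d:=\operatorname{des}_\prec(\sigma)$ descent positions.

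Then comes the shift. Put $e_t=\#\{\text{descents at positions }\tau\ge t\}$, so that $e_t-e_{t+1}$ is $1$ at a descent and $0$ otherwise, and $e_n=0$, $e_1\le d$. Define $w_t=z_t-e_t$. A strict step $z_t\ge z_{t+1}+1$ at a descent becomes $w_t\ge w_{t+1}$, and a weak step stays weak. The bounds become $w_n=z_n\ge0$ and $w_1=z_1-d\le K-1-d$, where I take $e_1=d$ by counting descents over all $t\ge1$. This map is a bijection onto the weakly decreasing sequences in $\{0,\dots,K-1-d\}^n$, since the inverse adds $e_t$ back. Such sequences are multisets of size $n$ from $K-d$ values, so there are $\binom{K-d+n-1}{n}$ of them. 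This is the claimed $\binom{K+n-1-\operatorname{des}_\prec(\sigma)}{n}$, with the convention that it vanishes when $K-d\le0$, which is consistent because the binomial coefficient is then zero for $n\ge1$.

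The only delicate point is the orientation convention: which consecutive pairs force strictness. I expect to get it right by checking the tie rule directly; a descent is where the later coordinate would win a tie, so a tie is forbidden there. The remaining steps are the standard stars-and-bars argument used for $P$-partitions.
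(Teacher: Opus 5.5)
Your proof is correct and is the paper's argument: the cell is the chain $v_1\ge\dots\ge v_n$ with strict steps exactly at the $\prec$-descents of $\sigma$, and subtracting the number of descents at positions $\ge t$ gives a bijection onto the weakly decreasing sequences in $\{0,\dots,K-1-\operatorname{des}_\prec(\sigma)\}$, which stars and bars counts. Your extra checks are sound: the ordering is well defined, consecutive comparisons suffice, and the degenerate case $\operatorname{des}_\prec(\sigma)\ge K$ is handled, since $\operatorname{des}_\prec(\sigma)\le n-1$ keeps the upper index $K+n-1-\operatorname{des}_\prec(\sigma)$ a nonnegative integer below $n$.
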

\begin{proof}
The levels $v_t$ of $\sigma_t$ satisfy $v_1\ge\dots\ge v_n$, with $v_t>v_{t+1}$ required exactly when
$\sigma_{t+1}\prec\sigma_t$ (a tie would have put $\sigma_{t+1}$ first). Subtracting from $v_t$ the number of forced
strict steps at positions $\ge t$ gives a bijection with the weakly decreasing sequences in
$\{0,\dots,K-1-\operatorname{des}_\prec(\sigma)\}$, of which there are $\binom{K-\operatorname{des}_\prec(\sigma)+n-1}n$.
\end{proof}
\noindent Therefore
\begin{equation}\label{eq:Fcells}
F^{\prec}(K)=K^{-n}\sum_{\sigma\in\mathcal O}\binom{K+n-1-\operatorname{des}_\prec(\sigma)}n .
\end{equation}
Each binomial is a polynomial of degree $n$ in $K$ divisible by $K$, since $0\le n-1-\operatorname{des}\le n-1$. Hence
$F$ is a polynomial in $1/K$ of degree at most $n-1\le2N-1$, with constant term $|\mathcal O|/n!$, which is the
probability that $n$ i.i.d.\ continuous priorities fall in $\mathcal O$, i.e.\ the value of $F$ in the continuum game.
This proves (a).

\emph{(3) Reciprocity.} For integers $n\ge1$ and $d$, $\binom{-K+n-1-d}{n}=(-1)^n\binom{K+d}{n}$ as polynomials
in $K$. Let $\prec^{\rm rev}$ denote the reversal of the \emph{whole} order of the $n$ coordinates, twins included. Since
$\prec$ is a total order, every adjacent pair of $\sigma$ is either a descent or an ascent, so
$n-1-\operatorname{des}_\prec(\sigma)=\operatorname{des}_{\prec^{\rm rev}}(\sigma)$. (Reversing the order of the agents
but not that of the twins would break this identity whenever two twins are adjacent in $\sigma$.) On coordinates of
different agents $\prec^{\rm rev}$ is the reversed protocol, and by (2) its rule for twins is immaterial, so the
right-hand side of \eqref{eq:Fcells} written for $\prec^{\rm rev}$ is $F$ for the game with the reversed protocol. Substituting in \eqref{eq:Fcells},
\[
F^{\prec}(-K)=(-K)^{-n}(-1)^n\sum_{\sigma\in\mathcal O}\binom{K+\operatorname{des}_\prec(\sigma)}{n}
=K^{-n}\sum_{\sigma\in\mathcal O}\binom{K+n-1-\operatorname{des}_{\prec^{\rm rev}}(\sigma)}{n}=F^{\prec^{\rm rev}}(K),
\]
because $\mathcal O$ is the same set for both protocols. This is an instance of the reciprocity theorem for
$P$-partitions \cite[Sec.~3.15.3]{stanley2012}, equivalently of Ehrhart--Macdonald reciprocity for order polytopes:
changing the sign of the dilation exchanges weak and strict inequalities, that is, it exchanges who wins a tie. Summing
with the $K$-independent coefficients of (1) gives (b).

\emph{(4) Symmetric protocols.} If $\pi$ is as in (c), relabelling the agents by $\pi$ maps the game with
protocol $\prec^{\rm rev}$ onto the game with protocol $\prec$ on the same graph, and the forms, being sums over all
agents, are invariant. Hence $\mathsf m^{\prec^{\rm rev}}=\mathsf m^{\prec}$, and (b) shows that $\mathsf m^{\prec}$ is
even in $1/K$; being of degree at most $2N-1$, it has degree at most $2N-2$. At $K=1$ every $g_i$ vanishes, and
\eqref{eq:Fcells} holds for every integer $K\ge1$, so the polynomial vanishes at $K=1$. On $K_N$ the reversal
$i\mapsto N-1-i$ is such an automorphism, which gives \eqref{eq:Qexpansion}. \qed

\noindent\emph{What the expansion does not say.} It concerns the forms, not their eigenvalues. At finite $K$ the order
forms do not commute, and the generalised eigenvalues are algebraic functions of $y=K^{-2}$, not polynomials. Because
the continuum eigenvalues $\lambda_p$ are simple (Theorem~\ref{thm:spectrum}), analytic perturbation theory applies
at $y=0$: the finite-$K$ eigenvalues are analytic in $y$ near $0$, hence equal to $\lambda_p+O(K^{-2})$, with no term
of order $K^{-1}$. Whether the coefficient of $K^{-2}$ is non-zero for a given $p$ is not claimed.

\begin{remarkS}[tests of the reciprocity law]
By (b), when no automorphism reverses the protocol the expansions for $\prec$ and $\prec^{\rm rev}$ have equal even
and opposite odd coefficients: on the star with a favoured centre ($N=4$, prize table $(0,-1,-3,-4.5)$) the term
$-0.5625K^{-1}$ of $\mathsf m_\pot$ changes sign under relabelling while $2.1043-2.5085K^{-2}$ does not, and the same
holds on the path $0$--$1$--$3$--$2$ and the paw graph to $10^{-7}$; on $K_4$ the odd coefficients vanish.
\end{remarkS}

\begin{remarkS}[the critical number of levels; what survives at finite $K$]
The numbers $K^\star(N)$ quoted after Proposition~\ref{prop:recip} of the main text are obtained by interpolating exact enumerations of $Q_K$ in
$y=K^{-2}$ (for $N\le6$ the fit is over-determined, with residuals below $10^{-13}$). The same computations show, for
$N\le7$, that in the discrete Chebyshev basis all forms split by the parity of the degree, that the order-$m$ potential
form vanishes on polynomials of degree less than $m$, and that the order forms do \emph{not} commute at finite $K$: the
order-by-order diagonalisation of Theorem~\ref{thm:spectrum} is a property of the continuum. The proof above requires
a total-order protocol; orientations with directed cycles and randomised tie-breaking are not covered.
\end{remarkS}

\section{The harmonic component of the linear game: tie-pattern law and randomised tie-breaking}\label{sec:pattern}
Throughout this appendix the game is the linear rank-order game $\calG(C,K)$ on an arbitrary conflict graph $C$. By
Proposition~\ref{prop:hflow}, along an edge $(i,s\to s')$ the harmonic flow is $h(i,s\to s')=W-d_i\ell/K$, the
deviation of the number $W$ of overtaken neighbours from its mean under uniformly random levels. Define the harmonic
energy of a profile as
\[
H(s):=\sum_i\sum_{s'_i>s_i}h(i,s\to s')^2 ,
\]
so that $\Eharm=\sum_sH(s)=K^N\,\E[H]$: $H$ assigns the energy of every edge of the game graph to its lower
endpoint. $H(s)$ depends on this convention and on the tie-breaking order, and should not be read as a dissipation at
$s$. Let $\bar H$ be the average of $H$ over a uniformly random total order of the agents used
as the protocol. The deviation $h$ is produced most strongly by ties, and the following theorem makes this exact: the
conditional expectation of $\bar H$ given the tie pattern is an explicit linear function of the pattern.

\begin{thmS}[tie-pattern law]\label{thm:pattern-SM}
Let $\tau_{jk}=\one[s_j=s_k]$, $e_{jk}=\one[\{j,k\}\in E]$,
$c_{jk}=|\calN(j)\cap \calN(k)|$, and let $\bar H_i$ be the harmonic energy of
player~$i$ averaged over a uniformly random tie-breaking order,
$\bar H=\sum_i\bar H_i$. Then for every player~$i$
\begin{equation}\label{eq:perplayer}
\E[\bar H_i\mid\tau]=\frac{K-1}6\sum_{\{j,k\}\subseteq \calN(i)}\tau_{jk}
+\frac{K-2}{12}\sum_{j\in \calN(i)}\tau_{ij}+\mathsf c_i,\quad
\mathsf c_i=\frac{d_i\big[K^2-K+1-(d_i-1)(K-1)\big]}{12K},
\end{equation}
and summing over $i$,
$\E[\bar H\mid\tau]=\calA_C(s)+\mathsf c_C$, where
$\calA_C(s)=\sum_{j<k}\big[\frac{K-2}6e_{jk}+\frac{K-1}6c_{jk}\big]\tau_{jk}$ is the weighted tie count
with $\mathsf c_C=\sum_i\mathsf c_i=\frac{K^2-1}{12K}\sum_id_i-\frac1K\big[\frac{K-2}6|E|+\frac{K-1}6\sum_i\binom{d_i}2\big]$.
\end{thmS}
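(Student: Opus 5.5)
Fix a player $i$ and its level $a=s_i$. By Proposition~\ref{prop:hflow}, for each $b>a$ the harmonic flow on the edge $(i,s\to s')$ with $s'_i=b$ is $h=W_b-d_i\ell/K$, with $\ell=b-a$. Here $W_b=\sum_{j\in\calN(i)}X_j$ and $X_j=\one[j\text{ overtaken}]$ as in \eqref{eq:overtake}. Expanding the square gives
\[
H_i(s)=\sum_{b>a}\Big[\sum_jX_j+\sum_{j\ne k}X_jX_k-\tfrac{2d_i\ell}K\sum_jX_j+\tfrac{d_i^2\ell^2}{K^2}\Big].
\]
So $H_i$ splits into single-neighbour terms and pair terms. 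The plan is to average each kind over the random protocol, then condition on $\tau$ by averaging over the levels.

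\emph{Single terms.} Under a uniformly random total order, $j\to i$ with probability $1/2$. Averaging $X_j$ over the protocol therefore gives $\one[a<s_j<b]+\tfrac12\one[s_j=a]+\tfrac12\one[s_j=b]$. Now condition on $\tau$, where only $\tau_{ij}$ matters for this term.
\begin{itemize}
\item If $\tau_{ij}=1$, the term is exactly $\tfrac12$ for every $b>a$.
\item If $\tau_{ij}=0$, the pair $(a,s_j)$ is uniform on ordered distinct pairs. I will sum over $b>a$ and average, which should give $\E[\sum_{b>a}X_j\mid\tau_{ij}=0]$ as a simple rational function of $K$.
\end{itemize}
The cross terms $\ell X_j$ are handled the same way. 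The constant $\ell^2$ term has unconditional average $\sum_\ell(K-\ell)\ell^2/K$, which is routine.

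\emph{Pair terms.} For $j\ne k$ in $\calN(i)$, the protocol average of $X_jX_k$ factorises once the order is fixed, because $j,k$ are compared only with $i$. The only subtlety is that under a random total order the events $j\to i$ and $k\to i$ are \emph{not} independent. Their joint probabilities are $1/3$ for both holding and $1/6$ for each of the mixed cases. This correlation matters only when $s_j=s_k\in\{a,b\}$, and it is where the $\tau_{jk}$ dependence (the $\frac{K-1}6$ coefficient) will appear. I would compute $\E[\sum_{b>a}X_jX_k\mid\tau]$ by casing on the pattern among $(a,s_j,s_k)$: all distinct, $s_j=s_k\ne a$, $s_j=a\ne s_k$, and so on.

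The key claim to verify is that after conditioning the result depends only on $\tau_{jk}$ and on $\tau_{ij},\tau_{ik}$, and is linear in them. Sets of three equal levels would a priori bring in a product $\tau_{ij}\tau_{ik}$. I expect its coefficient to cancel against the single terms, and checking this is the main obstacle. My first attempt will be to express everything via the number of levels strictly between $a$ and $b$, using $\sum_{b>a}$ identities such as $\sum_{\ell}(K-\ell)\ell/K=(K^2-1)/6$. Failing a clean cancellation, I would verify the linear form on the finitely many patterns of three variables symbolically in $K$.

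\emph{Assembly.} Summing the pieces gives the coefficient $\frac{K-2}{12}$ of $\tau_{ij}$ and $\frac{K-1}6$ of $\tau_{jk}$. The constant $\mathsf c_i$ is fixed by consistency: taking the unconditional expectation must reproduce $\E[\bar H_i]$. By Proposition~\ref{prop:energies}, applied per player (which is legitimate since the per-player energy is $d_i$ times a fixed constant for every orientation), this is $\frac{(K^2-1)d_i}{12K}$. Since $\E\tau=1/K$, this yields $\mathsf c_i$, which is a useful check on the coefficients.

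For the graph-level form, sum \eqref{eq:perplayer} over $i$. Each edge $\{j,k\}$ receives $\frac{K-2}{12}$ from each endpoint, giving $\frac{K-2}6e_{jk}$. Each pair $\{j,k\}$ receives $\frac{K-1}6$ once for each common neighbour, giving $\frac{K-1}6c_{jk}$. Finally, $\mathsf c_C$ follows by the same unconditional-mean identity.
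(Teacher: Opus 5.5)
Your route is the paper's: expand the square, average over the random order with probabilities $\tfrac12,\tfrac13,\tfrac16$, use that given $\tau$ the levels of the local tie classes form a uniform injective assignment, and case on the type of $\{i,j,k\}$. Fixing $\mathsf c_i$ from the unconditional mean is also a legitimate shortcut. By Proposition~\ref{prop:pairwise}(b), the energy on player $i$'s edges is $K^{N+1}\sum_{j\in\calN(i)}\|\omega^-_{ij}\|^2$ for every orientation.

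The gap is that the two coefficients are asserted, not derived, and the one concrete thing you say about them is wrong. The joint probabilities you list are correct. But the order correlation does not matter only when $s_j=s_k\in\{a,b\}$. If $s_j=a$ and $s_k=b$, then $X_jX_k=\one[j\to i]\,\one[i\to k]$ has mean $\tfrac16$, not $\tfrac14$. This case occurs exactly in the type ``one of $j,k$ tied with $i$'': the other sits at the target with probability $\frac1{K-1}$. Dropping its covariance $-\tfrac1{12}$ shifts that type's constant by $\tfrac1{24}$, and the result is then not of the stated form. The coefficient of $\tau_{ij}$ picks up $\frac{d_i-1}{12}$, and pairs of neighbours both tied with $i$ get weight $\frac{K-2}6$ instead of $\frac{K-1}6$. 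Conversely, the order correlation supplies only $\tfrac1{12}$ of the $\frac{K-1}6$ per tied pair $\{j,k\}$. The remaining $\frac{2K-3}{12}$ comes from the coincidence of levels: a second moment of your protocol-averaged $X_j$ replaces a product over two distinct levels.

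Second, the $\tau_{ij}\tau_{ik}$ term cannot cancel against the single-neighbour terms, since those involve one tie indicator at a time; it has to vanish inside the pair terms. Let $w_{\rm dist},w_{\rm pair},w_{\rm one},w_{\rm all}$ be the conditional values of $\E\sum_{b>a}(X_j-\tfrac\ell K)(X_k-\tfrac\ell K)$ on the four types. Then the ordered pair $(j,k)$ contributes
\[
w_{\rm dist}+(w_{\rm one}-w_{\rm dist})(\tau_{ij}+\tau_{ik})+(w_{\rm pair}-w_{\rm dist})\tau_{jk}+(w_{\rm all}-w_{\rm pair}-2w_{\rm one}+2w_{\rm dist})\,\tau_{ij}\tau_{ik}.
\]
So the statement is equivalent to two identities plus $2(w_{\rm pair}-w_{\rm dist})=\frac{K-1}6$.
\begin{itemize}
\item $w_{\rm one}=w_{\rm dist}$ removes the term $2(d_i-1)(w_{\rm one}-w_{\rm dist})$ from the coefficient of $\tau_{ij}$. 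In your uncentred bookkeeping this shows up as a cancellation between the cross term $-\frac{2d_i\ell}{K}X_j$ and the pair terms, which is invisible term by term.
\item $w_{\rm all}=w_{\rm pair}$ kills the product term.
\end{itemize}
The paper computes $w_{\rm one}=w_{\rm dist}=-\frac{K-1}{12K}$ and $w_{\rm pair}=w_{\rm all}=\frac{(K-1)^2}{12K}$. It then shows both coincidences are the single summation identity $\sum_{\ell=1}^{K-1}(K-\ell)\big[\frac{2\ell(K-\ell)}{K}-\frac{K+1}{3}\big]=0$. That computation, with the $-\tfrac1{12}$ case included, is the proof. Your plan postpones it to the ``Assembly'' step, where the coefficients are simply stated.
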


\begin{proof}
\emph{(0) Localisation.} By Proposition~\ref{prop:hflow}, $h(i,s\to s')=W-d_i\ell/K$
depends on $s$ only through $s|_{\{i\}\cup \calN(i)}$ and on the tie-breaking
order only through its restriction to $\{i\}\cup \calN(i)$. Conditionally on
the global tie pattern $\tau$, the levels of the tie classes are a
uniformly random injective assignment of $\{0,\dots,K-1\}$ to the classes;
its restriction to $\{i\}\cup \calN(i)$ is again a uniformly random injective
assignment to the local classes. Hence $\E[\bar H_i\mid\tau]$ depends only
on $\tau|_{\{i\}\cup \calN(i)}$. Fix $i$, write $b=s_i$, and let $\mathcal T_i$ be the tie class of~$i$ (the players at level~$b$).

\emph{(1) Two-body form.} For a target level $a>b$ and a neighbour
$j$ let $I_j=\one[b<s_j<a]+\one[s_j=b,\,j\prec i]+\one[s_j=a,\,i\prec j]$,
where $\prec$ is the random order, so that $W=\sum_jI_j$. For uniform
$s_j$ and uniform order, $\E I_j=\ell/K=:\bar\ell$ with $\ell=a-b$, hence
$d_i\ell/K=\sum_j\bar\ell$ and
\[
\bar H_i=\sum_{a>b}\E_\prec\Big[\Big(\sum_j(I_j-\bar\ell)\Big)^2\Big]
=\sum_{a>b}\Big[\sum_j\E_\prec(I_j-\bar\ell)^2+\sum_{j\ne k}\E_\prec(I_j-\bar\ell)(I_k-\bar\ell)\Big].
\]
The kernels do not depend on~$d_i$.

\emph{(2) Types.} For fixed $b$, $\E_\prec[\cdot]$ depends on
$s_j,s_k$ only through their position relative to $(b,a)$ and on the
order of $\{i,j,k\}$. Taking the expectation over the levels given the
local pattern, the one-body term depends only on whether $j\sim i$
(constants $\varpi_1^{\mathrm t}$, $\varpi_1^{\mathrm f}$), and the two-body term only on the
pattern type of $\{i,j,k\}$: all distinct ($\varpi_{\rm dist}$),
$j\sim k\not\sim i$ ($\varpi_{\rm pair}$), exactly one of $j,k$ tied with $i$
($\varpi_{\rm one}$; the two cases are equal by symmetry), all tied
($\varpi_{\rm all}$).

\emph{(3) Counting.} Let $n_t=|\mathcal T_i\cap \calN(i)|$ be the number of neighbours tied with~$i$ and
$n_c=\sum_{\mathcal T\ne\mathcal T_i}\binom{|\mathcal T\cap \calN(i)|}2$, the sum running over
the other tie classes~$\mathcal T$, the number of
tied neighbour pairs not involving~$i$. Ordered neighbour pairs $(j,k)$
fall into the types with multiplicities $n_t(n_t-1)$ (all tied),
$2n_t(d_i-n_t)$ (one tied with $i$), $2n_c$ ($j\sim k\not\sim i$) and
$(d_i-n_t)(d_i-n_t-1)-2n_c$ (distinct). Hence
\begin{align*}
\E[\bar H_i\mid\tau]&=n_t\varpi_1^{\mathrm t}+(d_i-n_t)\varpi_1^{\mathrm f}+n_t(n_t-1)\varpi_{\rm all}
+2n_t(d_i-n_t)\varpi_{\rm one}\\
&\quad+2n_c\,\varpi_{\rm pair}+\big[(d_i-n_t)(d_i-n_t-1)-2n_c\big]\varpi_{\rm dist}\\
&=2(\varpi_{\rm pair}-\varpi_{\rm dist})\,n_c+2(\varpi_{\rm all}-2\varpi_{\rm one}+\varpi_{\rm dist})\tbinom{n_t}2\\
&\quad+\big[\varpi_1^{\mathrm t}-\varpi_1^{\mathrm f}+2(d_i-1)(\varpi_{\rm one}-\varpi_{\rm dist})\big]\,n_t\\
&\quad+d_i\varpi_1^{\mathrm f}+d_i(d_i-1)\varpi_{\rm dist}.
\end{align*}
Note $\sum_{\{j,k\}\subseteq \calN(i)}\tau_{jk}=\binom{n_t}2+n_c$ and
$\sum_{j\in \calN(i)}\tau_{ij}=n_t$.

\emph{(4) The six constants.} Fix $b=s_i$, a target $a>b$,
$\ell=a-b$ and $\bar\ell=\ell/K$. Given the levels of $j$ and $k$, the random order
enters only through the events $j\prec i$, $i\prec k$, etc., whose
probabilities are $\frac12$ for one comparison and $\frac13$ (same
side) or $\frac16$ (opposite sides) for two comparisons involving~$i$.
Hence, with $I_j=\one[b<s_j<a]+\one[s_j=b]\one[j\prec i]+\one[s_j=a]\one[i\prec j]$,
\[
\E_\prec(I_j-\bar\ell)^2=\begin{cases}(1-\bar\ell)^2 & b<s_j<a,\\[2pt]
\tfrac12\big[(1-\bar\ell)^2+\bar\ell^2\big] & s_j\in\{b,a\},\\[2pt]
\bar\ell^2 & \text{otherwise},\end{cases}
\]
and, writing $\epsilon_j\in\{1,\tfrac12,0\}$ for $\E_\prec I_j$ according
to whether $s_j$ lies strictly between, on $\{b,a\}$, or elsewhere,
\[
\E_\prec(I_j-\bar\ell)(I_k-\bar\ell)=(\epsilon_j-\bar\ell)(\epsilon_k-\bar\ell)+\zeta_{jk},\qquad
\zeta_{jk}=\begin{cases}\tfrac1{12} & s_j=s_k\in\{b,a\},\\[2pt]
-\tfrac1{12} & \{s_j,s_k\}=\{b,a\},\\[2pt]
0 & \text{otherwise}\end{cases}
\]
(the correction $\zeta$ is the covariance of the two order indicators:
$\frac13-\frac14$ when both compare with $i$ from the same side,
$\frac16-\frac14$ from opposite sides). Each constant is now an average
of these expressions over $b$ uniform, $a>b$, and the levels of the
other players consistent with the type: a level tied with $i$ is~$b$;
a free level is uniform on the $K-1$ values $\ne b$; two free untied
levels are uniform on distinct pairs. Counting the levels by region
($b$ below, $\ell-1$ strictly between, one at $a$, $K-1-a$ above) turns
each average into a polynomial in $b,\ell$ which is then summed over
$0\le b<a\le K-1$; the summations are elementary (Faulhaber) and give
\begin{equation}\label{eq:consts}
\begin{gathered}
\varpi_1^{\mathrm f}=\frac{K^2-K+1}{12K},\qquad
\varpi_1^{\mathrm t}=\frac{(2K-1)(K-1)}{12K},\\[2pt]
\varpi_{\rm dist}=\varpi_{\rm one}=-\frac{K-1}{12K},\qquad
\varpi_{\rm pair}=\varpi_{\rm all}=\frac{(K-1)^2}{12K},\\[2pt]
\text{so that}\quad
\varpi_{\rm one}-\varpi_{\rm dist}=0,\qquad
\varpi_{\rm pair}-\varpi_{\rm dist}=\varpi_{\rm all}-\varpi_{\rm dist}=\frac{K-1}{12},\qquad
\varpi_1^{\mathrm t}-\varpi_1^{\mathrm f}=\frac{K-2}{12}.
\end{gathered}
\end{equation}
For instance, $\varpi_1^{\mathrm f}$ is
$\frac1{K(K-1)}\sum_{b<a}\big[(\ell-1)(1-\bar\ell)^2+\tfrac12\big((1-\bar\ell)^2+\bar\ell^2\big)+(K-1-\ell)\bar\ell^2\big]$
and $\varpi_1^{\mathrm t}=\frac1K\sum_{b<a}\tfrac12\big[(1-\bar\ell)^2+\bar\ell^2\big]$.

\emph{(4a) The four two-body constants, in full.} The identity
$\varpi_{\rm one}=\varpi_{\rm dist}$ carries the theorem, so this part of the
computation is given in detail. Fix $b$ and $a>b$, put $\ell=a-b$,
$\bar\ell=\ell/K$ and $\delta:=\tfrac12-\bar\ell$, and set, for a level
$v\in\{0,\dots,K-1\}$,
\[
\epsilon(v)=\begin{cases}1,&b<v<a,\\[1pt] \tfrac12,& v\in\{b,a\},\\[1pt] 0,&\text{otherwise,}\end{cases}
\qquad \Phi(v):=\epsilon(v)-\bar\ell ,
\]
so that $\E_\prec I_j=\epsilon(s_j)$ and $\Phi(b)=\Phi(a)=\delta$. Two
identities drive the whole step. First, since
$\sum_v\epsilon(v)=(\ell-1)+\tfrac12+\tfrac12=\ell=K\bar\ell$,
\begin{equation}\label{eq:Phizero}
\sum_{v=0}^{K-1}\Phi(v)=0,\qquad\text{hence}\qquad \sum_{v\ne b}\Phi(v)=-\delta .
\end{equation}
Second, with $E_2:=\sum_v\Phi(v)^2=(\ell-1)(1-\bar\ell)^2+2\delta^2+(K-\ell-1)\bar\ell^2$,
\begin{equation}\label{eq:E2id}
E_2-K\delta^2=\frac{2\ell(K-\ell)}K-\frac{K+2}4 .
\end{equation}
Indeed, writing $m=K-\ell$,
$(\ell-1)m^2+(m-1)\ell^2=K\ell m-(\ell^2+m^2)=(K+2)\ell m-K^2$ and
$(K-2\ell)^2=(m-\ell)^2=K^2-4\ell m$, so that
$K^2\big(E_2-K\delta^2\big)=(K+2)\ell m-K^2-\tfrac{K-2}4\big(K^2-4\ell m\big)=2K\ell m-\tfrac{K+2}4K^2$.

Now average $\E_\prec(I_j-\bar\ell)(I_k-\bar\ell)=\Phi(s_j)\Phi(s_k)+\zeta_{jk}$
over the levels admitted by each type (a level tied with $i$ equals $b$; a
free level is uniform on the $K-1$ values $\ne b$; two free untied levels
are uniform on the ordered distinct pairs among those $K-1$ values). The
$\zeta$ term contributes $+\tfrac1{12}$ exactly when $s_j=s_k\in\{b,a\}$ and
$-\tfrac1{12}$ exactly when $\{s_j,s_k\}=\{b,a\}$; in the ``one tied'' type
this requires $s_k=a$, of probability $\tfrac1{K-1}$, and in the ``all
distinct'' type it cannot occur, since neither level equals~$b$. Using
\eqref{eq:Phizero} and $\sum_{v\ne b}\Phi(v)^2=E_2-\delta^2$,
\begin{equation}\label{eq:types}
\begin{aligned}
\text{all tied}&:&& \delta^2+\tfrac1{12},\\
\text{exactly one tied with }i&:&& \frac{\delta}{K-1}\sum_{v\ne b}\Phi(v)-\frac1{12(K-1)}
   =-\frac{\delta^2+\tfrac1{12}}{K-1},\\
j\sim k\not\sim i&:&& \frac{E_2-\delta^2}{K-1}+\frac1{12(K-1)},\\
\text{all distinct}&:&& \frac{\big(\sum_{v\ne b}\Phi(v)\big)^2-\sum_{v\ne b}\Phi(v)^2}{(K-1)(K-2)}
   =\frac{2\delta^2-E_2}{(K-1)(K-2)} .
\end{aligned}
\end{equation}
Each constant is $\frac1K\sum_{0\le b<a\le K-1}$ of the corresponding line of
\eqref{eq:types}. Since $\delta=\frac{K-2\ell}{2K}$ and $E_2$ depend on
$(b,a)$ only through $\ell$, which occurs with multiplicity $K-\ell$, and
since $\sum_{\ell=1}^{K-1}\ell(K-\ell)^2=\frac{K^2(K^2-1)}{12}$,
\begin{equation}\label{eq:faul}
\sum_{b<a}1=\binom K2,\qquad
\sum_{b<a}\delta^2=\frac{(K-1)(K-2)}{24},\qquad
\sum_{b<a}\!\big(E_2-K\delta^2\big)=\frac{K(K-1)(K-2)}{24},
\end{equation}
the last by \eqref{eq:E2id}, whence $\sum_{b<a}E_2=\frac{K(K-1)(K-2)}{12}$.
Substituting \eqref{eq:faul} into \eqref{eq:types} gives the four values
in \eqref{eq:consts}.

The two coincidences in \eqref{eq:consts} are one identity, not two. Put
\[
\Theta:=\sum_{0\le b<a\le K-1}\Big[E_2-K\delta^2-\frac{K-2}{12}\Big].
\]
Subtracting the lines of \eqref{eq:types} termwise,
\[
\varpi_{\rm one}-\varpi_{\rm dist}=\frac{\Theta}{K(K-1)(K-2)},\qquad
\varpi_{\rm pair}-\varpi_{\rm all}=\frac{\Theta}{K(K-1)},
\]
and by \eqref{eq:faul}, $\Theta=\frac{K(K-1)(K-2)}{24}-\frac{K-2}{12}\binom K2=0$.
Equivalently, by \eqref{eq:E2id} the summand of $\Theta$ is
$\frac{2\ell(K-\ell)}K-\frac{K+1}3$, and
$\frac2K\sum_\ell\ell(K-\ell)^2=\frac{K(K-1)(K+1)}6=\frac{K+1}3\binom K2$.
Hence $\varpi_{\rm one}=\varpi_{\rm dist}$ and $\varpi_{\rm pair}=\varpi_{\rm all}$:
the first removes the $d_i$-dependence of the $n_t$ coefficient in (3),
the second merges $\binom{n_t}2$ and $n_c$ into the single statistic
$\sum_{\{j,k\}\subseteq\calN(i)}\tau_{jk}$.

Substituting \eqref{eq:consts} into (3):
the coefficient of $n_c$ is $\frac{K-1}6$; the coefficient of
$\binom{n_t}2$ is $\frac{K-1}6$ as well, so the two combine into
$\frac{K-1}6\sum_{\{j,k\}\subseteq \calN(i)}\tau_{jk}$; the coefficient of
$n_t$ is $\varpi_1^{\mathrm t}-\varpi_1^{\mathrm f}=\frac{K-2}{12}$, since $\varpi_{\rm one}=\varpi_{\rm dist}$; the constant is $\mathsf c_i=d_i\varpi_1^{\mathrm f}+d_i(d_i-1)\varpi_{\rm dist}$.
The identity $\varpi_{\rm one}=\varpi_{\rm dist}$ is what removes the
$d_i$-dependence of the $n_t$ coefficient. This proves
\eqref{eq:perplayer}.

\emph{(5) Summation.} A tied pair $\{j,k\}$ contributes
$\frac{K-2}{12}$ to $\bar H_j$ and to $\bar H_k$ if $\{j,k\}\in E$, and
$\frac{K-1}6$ to $\bar H_i$ for every common neighbour~$i$. Finally
$\E[\bar H_i]=\mathsf c_i+\frac{K-1}6\binom{d_i}2\frac1K+\frac{K-2}{12}\frac{d_i}K
=\frac{d_i(K^2-1)}{12K}$ gives
$\mathsf c_C=\frac{K^2-1}{12K}\sum_id_i-\frac1K\big[\frac{K-2}6|E|+\frac{K-1}6\sum_i\binom{d_i}2\big]$.
\end{proof}

\begin{corS}[complete graph]\label{cor:law-SM}
On $K_N$, with $A(s)$ the number of tied pairs, $\E[H\mid A=a]=c_1a+c_0$ for the index protocol, where
$c_1=\frac{(K-1)N-K}6$ and $c_0=\frac{N(N-1)[K^2+K-1-(K-1)N]}{12K}$.
\end{corS}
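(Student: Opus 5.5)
The plan is to derive the corollary from Theorem~\ref{thm:pattern-SM} in two moves. First, remove the average over the tie-breaking order using the symmetry of $K_N$. Second, specialise the weighted tie count $\calA_C$ and the constant $\mathsf c_C$ to the complete graph.

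\emph{Step 1 (from the averaged to the index protocol).} For a total order $\sigma$ of the agents, write $H^\sigma$ for the profile energy $H$ computed with $\sigma$ as the protocol. For a permutation $\pi$ of $[N]$, put $(\pi s)_j=s_{\pi^{-1}(j)}$. On $K_N$ every permutation is a graph automorphism. Relabelling the agents by $\pi$ therefore maps the game with protocol $\sigma$ at profile $s$ onto the game with protocol $\pi\sigma$ at $\pi s$. It maps the edges of the game graph bijectively, with their lower endpoints and their flows $h=W-d_i\ell/K$ from Proposition~\ref{prop:hflow}, so $H^{\sigma}(s)=H^{\pi\sigma}(\pi s)$. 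The law of $s$ is uniform and exchangeable, and the number of tied pairs is invariant: $A(\pi s)=A(s)$. Hence the conditional law of $s$ given $A=a$ is invariant under $s\mapsto\pi s$, and so $\E[H^{\sigma}\mid A=a]=\E[H^{\pi\sigma}\mid A=a]$ for every $\pi$. Since the permutations act transitively on total orders, $\E[H^{\mathrm{index}}\mid A=a]=\E[\bar H\mid A=a]$.

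\emph{Step 2 (tower property).} The event $\{A=a\}$ is measurable with respect to the tie pattern $\tau$. Theorem~\ref{thm:pattern-SM} then gives
\[
\E[\bar H\mid A=a]=\E\big[\calA_{K_N}(s)+\mathsf c_{K_N}\,\big|\,A=a\big].
\]
On $K_N$ we have $e_{jk}=1$ and $c_{jk}=N-2$ for every pair. Thus every tied pair carries the same weight
\[
\tfrac{K-2}6+\tfrac{(K-1)(N-2)}6=\tfrac{(K-1)N-K}6=c_1,
\]
and $\calA_{K_N}(s)=c_1A(s)$.

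\emph{Step 3 (the constant).} Substitute $d_i=N-1$, $\sum_id_i=N(N-1)$, $|E|=\binom N2$ and $\sum_i\binom{d_i}2=N\binom{N-1}2$ into $\mathsf c_C$. Factoring out $N(N-1)/(12K)$ leaves
\[
K^2-1-(K-2)-(K-1)(N-2)=K^2+K-1-(K-1)N,
\]
which gives $c_0$.

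The only step that needs care is Step 1. One must check that the convention assigning each edge's energy to its lower endpoint is preserved by the relabelling, which it is, since relabelling does not change levels. One must also check that conditioning on $A$, rather than on the full pattern $\tau$, is what makes the conditional law permutation-invariant. The rest is direct substitution.
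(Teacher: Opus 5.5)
Your proposal is correct and follows essentially the paper's argument. You specialise the tie-pattern law to $K_N$, which gives a uniform pair weight $c_1$ and the constant $\mathsf c_{K_N}=c_0$; you use the tower property to pass from conditioning on $\tau$ to conditioning on $A$; and you use exchangeability of the conditional law given $A$ to replace the order-averaged $\bar H$ by the index-protocol energy $H$. The only difference is that you carry out the exchangeability step first, while the paper does it last.
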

\begin{proof}
On $K_N$ every pair is an edge with $N-2$ common neighbours, so the
weight of every tied pair is
$\frac{K-2}6+(N-2)\frac{K-1}6=\frac{(K-1)N-K}6=c_1$, and Theorem~\ref{thm:pattern-SM}
gives $\E[\bar H\mid\tau]=c_1A(s)+c_0$ with $A=\sum_{j<k}\tau_{jk}$
and $c_0=\mathsf c_{K_N}=\frac{N(N-1)[K^2+K-1-(K-1)N]}{12K}$ (a direct
simplification). Conditioning on $A$ and using the tower property gives
$\E[\bar H\mid A=a]=c_1a+c_0$. It remains to replace $\bar H$ by the
fixed-order energy $H$: for a permutation $\sigma$ of the players,
$H(\sigma s)$ is the energy of the game with tie-breaking order
$\sigma$; since the law of $s$ given $A$ is exchangeable in the players,
$\E[H\mid A]=\E[H\circ\sigma\mid A]$ for every $\sigma$, hence
$\E[H\mid A]=\E[\bar H\mid A]$.
\end{proof}

\begin{figure}[htbp]\centering
\includegraphics[width=0.7\textwidth]{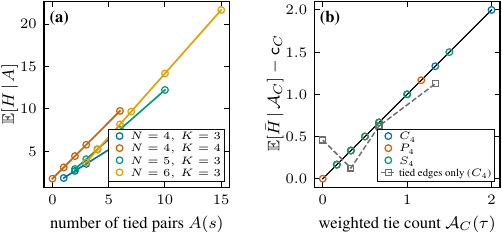}
\caption{The tie-pattern law against exhaustive enumeration (floating point; not part of any proof).
(a)~Complete graph, index protocol: $\E[H\mid A]$ against the number $A$ of tied pairs (points) and the affine law of
Corollary~\ref{cor:law-SM} (lines).
(b)~$C_4$, $P_4$, $S_4$, $K=3$: the order-averaged energy $\E[\bar H\mid\calA_C]$ lies on the line of slope one given by
\eqref{eq:perplayer}; conditioning on tied conflict edges alone ($C_4$, grey, shifted for display) is not
affine.}
\label{fig:law-SM}
\end{figure}

Theorem~\ref{thm:pattern-SM} was also checked against exhaustive enumeration on $P_4$, $S_4$, $C_4$, $C_5$ and random
graphs with $N\le6$, $K\le5$ (Figure~\ref{fig:law-SM}). The full pattern is needed: on $C_4$, $P_4$ and $S_4$ ($K=3$,
averaged over all $24$ orders) the conditional energy equals $\calA_C(\tau)+\mathsf c_C$ to within $4\times10^{-10}$,
whereas conditioning on the number of tied conflict edges alone is not monotone on $C_4$.

\subsection{Randomised tie-breaking}
A deterministic protocol amplifies the harmonic component but does not create it.

\begin{propS}[randomised tie-breaking]\label{prop:randtie-SM}
If ties are broken uniformly at random and payoffs are taken in expectation, $\Epot$ is as in
Proposition~\ref{prop:energies} and $\Eharm$ is multiplied by $(K-2)/(K+1)$. The game is an exact potential game iff
$K=2$.
\end{propS}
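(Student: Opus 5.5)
The plan is to run the linear-payoff argument of Proposition~\ref{prop:energies} again, now with the expected-payoff kernel, and read both energies off Proposition~\ref{prop:pairwise}(c). Under uniformly random tie-breaking, agent $i$'s expected conflict position is $\sum_{j\in\calN(i)}\big(\one[s_j>s_i]+\tfrac12\one[s_j=s_i]\big)$. The expected-payoff game is therefore the pairwise game \eqref{eq:pairwise} with $\gamma_i\equiv0$ and the protocol-free kernel
\[
\psi_{ij}(x,y)=-\one[y>x]-\tfrac12\one[y=x].
\]

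The first step is antisymmetry. Since $\psi_{ij}(x,y)+\psi_{ji}(y,x)\equiv-1$, double centring gives $\omega^+_{ij}=0$ exactly as before. Next I compute the marginal $\E_y\psi_{ij}(x,y)=-\frac{K-1-x}{K}-\frac1{2K}$. Up to a constant this equals $(x-\bar s)/K$, so $a_{ij}$, the own effect $\sa_i(x)=d_i(x-\bar s)/K$ and $\|\sa_i\|^2$ are unchanged. By Proposition~\ref{prop:pairwise}(a),(c), $\phi^\star$ and $\Epot=c_K\sum_id_i^2$ then coincide with Proposition~\ref{prop:energies}.

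The second step is the interaction norm, which is the only real computation. Under the uniform law on $(x,y)$, $-\psi_{ij}$ takes the value $1$ with probability $\frac{K-1}{2K}$ and the value $\tfrac12$ with probability $\frac1K$, and it has mean $\tfrac12$. Hence
\[
\Var\psi_{ij}=\frac{2K-1}{4K}-\frac14=\frac{K-1}{4K}.
\]
By the ANOVA orthogonality \eqref{eq:anova}, with $\|a_{ij}\|^2=\|b_{ij}\|^2=\frac{K^2-1}{12K^2}$,
\[
\|\omega_{ij}\|^2=\frac{K-1}{4K}-\frac{K^2-1}{6K^2}=\frac{(K-1)(K-2)}{12K^2}.
\]
Since $\omega^+_{ij}=0$, this equals $\|\omega^-_{ij}\|^2$. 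Comparing with the deterministic value $\frac{K^2-1}{12K^2}$ gives the factor $(K-2)/(K+1)$ in $\Eharm$ via \eqref{eq:polyenergies}.

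Finally, by Proposition~\ref{prop:pairwise}(d), the game is an exact potential game iff every $\omega^-_{ij}$ vanishes. This happens iff $(K-1)(K-2)=0$, that is, iff $K=2$, assuming as usual that the graph has at least one conflict. The step most prone to error is the variance bookkeeping for the tie value $\tfrac12$; everything else is inherited from the earlier proof.
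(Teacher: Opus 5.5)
Your proposal is correct and matches the paper's proof step for step: the same kernel (the paper writes it as $-\frac12-\frac12\sgn(y-x)$), the same antisymmetry and unchanged own effect, $\Var\psi_{ij}=\frac{K-1}{4K}$ minus the two marginal norms, and then Proposition~\ref{prop:pairwise}(c),(d). Your caveat that the graph needs at least one conflict for the ``iff $K=2$'' claim is an assumption the paper leaves implicit.
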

\begin{proof}
Now $\psi(x,y)=-\frac12-\frac12\sgn(y-x)$ for every ordered pair, so again $\psi(x,y)+\psi(y,x)\equiv-1$,
$\omega^+\equiv0$, and the own effect is unchanged. Moreover $\Var\psi=\frac{K-1}{4K}$, so
$\|\omega\|^2=\frac{K-1}{4K}-\frac{K^2-1}{6K^2}=\frac{(K-1)(K-2)}{12K^2}$, which is $(K-2)/(K+1)$ times the deterministic
value $\frac{K^2-1}{12K^2}$ of Proposition~\ref{prop:energies}. Apply Proposition~\ref{prop:pairwise}(c),(d).
\end{proof}
What creates the harmonic component is the antisymmetry of the comparison, $\psi(x,y)+\psi(y,x)\equiv-1$; the
deterministic protocol adds to it the resolution of ties, which is why the factor vanishes at $K=2$, where every
randomly resolved tie is a coin flip with no residual interaction.

\section{The continuum game and the proof of Theorem~\ref{thm:spectrum}}\label{sec:cont}

\subsection{Setting}
Let $N$ players choose $s_i\in[0,1]$ i.i.d.\ uniformly,
$\pos_i(s)=\#\{j\ne i:s_j>s_i\}$ and $u_i=f(\pos_i)$. Flows are
functions $x_i(t\to t';s_{-i})$, antisymmetric in $(t,t')$, with
$\|x\|^2=\sum_i\E_{s_{-i}}\big[\frac12\iint x_i(t\to t';s_{-i})^2\,dt\,dt'\big]$;
gradients are $(\delta\phi)_i=\phi(t',s_{-i})-\phi(t,s_{-i})$, game
flows are $u_i(t',s_{-i})-u_i(t,s_{-i})$, and $P$ is the orthogonal
projection onto the closure of $\operatorname{im}\delta$. For
$v,w\in L^2([0,1]^N)$ one has
$\langle\delta_iv,\delta_iw\rangle_i=\E[(v-A_iv)(w-A_iw)]$ because
$\frac12\iint(v(t')-v(t))(w(t')-w(t))\,dt\,dt'=\Cov_{t}(v,w)$.

\begin{propS}[Hodge projection is the inverse number operator]\label{prop:Ninv}
With $g=\sum_i(u_i-A_iu_i)$, $Pu=\delta\phi^\star$ with
$\phi^\star=\numop^{-1}g$, and
\begin{equation}\label{eq:contP}
\|Pu\|^2=\sum_{m\ge1}\frac{\|g_m\|^2}m
=\sum_{T\ne\emptyset}\frac1{|T|}\Big\|\sum_{i\in T}[u_i]_T\Big\|^2,\qquad
\|u\|^2=\sum_i\|u_i-A_iu_i\|^2=\sum_i\sum_{T\ni i}\|[u_i]_T\|^2 .
\end{equation}
\end{propS}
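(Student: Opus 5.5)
The plan mirrors the discrete proof of Lemma~\ref{lem:numop}, carried out in $L^2([0,1]^N)$. First I identify the adjoint of $\delta$. Using the polarisation identity stated just before the proposition, $\langle\delta_iv,\delta_iw\rangle_i=\E[(v-A_iv)(w-A_iw)]$, summed over $i$, one gets
\[
\langle\delta v,\delta w\rangle=\sum_i\E[(I-A_i)v\,(I-A_i)w]=\E[v\,\numop w].
\]
The last step holds because the $I-A_i$ are orthogonal projections. So $\delta^*\delta=\numop$ on $L^2$, with no factor $K$ now. Likewise, for the game flow $u$ and any $w$,
\[
\langle u,\delta w\rangle=\sum_i\E[(u_i-A_iu_i)(w-A_iw)]=\E[g\,w],
\]
since $u_i-A_iu_i$ is already centred in $s_i$. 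Hence $\delta^*u=g$.

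Next I solve the normal equation $\delta^*\delta\phi=\delta^*u$, that is, $\numop\phi=g$. The Hoeffding decomposition $g=\sum_Tg_T$ is valid in $L^2$ of the product measure, with orthogonal components on which $\numop$ acts as $|T|$. Since $g$ has mean zero ($A_i$ preserves means), $g_\emptyset=0$. Then $\phi^\star:=\sum_{T\ne\emptyset}g_T/|T|$ converges in $L^2$ and has norm at most $\|g\|$, because $\numop^{-1}$ is bounded by $1$ on $\one^\perp$. We also need $\phi^\star$ to lie in the domain of $\delta$. Here $\|\delta\phi^\star\|^2=\E[\phi^\star\numop\phi^\star]=\sum_T\|g_T\|^2/|T|<\infty$, so $\delta\phi^\star$ is a genuine gradient.

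To identify it as the projection, I check that $u-\delta\phi^\star$ is orthogonal to every gradient $\delta w$ with $w\in L^2$:
\[
\langle u-\delta\phi^\star,\delta w\rangle=\E[gw]-\E[\numop\phi^\star\,w]=0.
\]
Orthogonality to the closure of $\operatorname{im}\delta$ then follows by continuity, so $Pu=\delta\phi^\star$. The main technical point is making $\delta^*\delta=\numop$ rigorous on unbounded function spaces. In fact $\numop$ is bounded, with norm at most $N$, so every manipulation above is between bounded operators and the closure causes no trouble.

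For the energy identities I compute
\[
\|Pu\|^2=\E[\phi^\star\numop\phi^\star]=\sum_{T\ne\emptyset}\|g_T\|^2/|T|=\sum_m\|g_m\|^2/m.
\]
It remains to check that $g_T=\sum_{i\in T}[u_i]_T$. The operator $I-A_i$ kills the components $[u_i]_T$ with $i\notin T$ and fixes those with $i\in T$, so
\[
u_i-A_iu_i=\sum_{T\ni i}[u_i]_T,
\]
and collecting terms by $T$ gives the claim. Finally, the same polarisation identity with $v=w=u_i$ gives the second formula,
\[
\|u\|^2=\sum_i\|u_i-A_iu_i\|^2=\sum_i\sum_{T\ni i}\|[u_i]_T\|^2,
\]
where the last equality is Hoeffding orthogonality.
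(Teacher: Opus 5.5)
Your proof is correct and is essentially the paper's argument. The paper writes $\|u-\delta\phi\|^2=\sum_i\|(I-A_i)(u_i-\phi)\|^2$, takes the normal equation $\numop\phi=g$, inverts $\numop$ on the Hoeffding components and gets the energy identities by Parseval, which matches your $\delta^*\delta=\numop$, $\delta^*u=g$ computation; your explicit orthogonality check and the remark that $\numop$ (hence $\delta$) is bounded on $L^2$ just spell out the closure step the paper leaves implicit.
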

\begin{proof}
$\|u-\delta\phi\|^2=\sum_i\|(I-A_i)(u_i-\phi)\|^2$ is minimised at the
solution of $\sum_i(I-A_i)\phi=\sum_i(I-A_i)u_i$, i.e.\ $\numop\phi=g$;
$\numop$ is invertible on mean-zero functions (it multiplies the order-$m$
component by~$m$), and
$\|\delta\phi^\star\|^2=\langle\numop\phi^\star,\phi^\star\rangle=\langle g,\numop^{-1}g\rangle$.
The second expression for $\|u\|^2$ is Parseval for the Hoeffding
decomposition, since $u_i-A_iu_i$ consists of the components with
$T\ni i$.
\end{proof}

\subsection{The two energies (\texorpdfstring{eq.~\eqref{eq:contquad}}{the two quadratic forms} of the main text)}

\begin{lemmaS}[Dirichlet variance identity]\label{lem:dir}
$\Var(u_i)=\Var_{k\sim U\{0,\dots,N-1\}}f(k)$ and
$\Var(\bar u_i)=\frac1{N+1}\Var(u_i)$; hence
$\|u\|^2=\frac{N^2}{N+1}\Var_kf(k)$.
\end{lemmaS}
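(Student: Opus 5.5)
The plan is to compute both variances directly from the order statistics of the i.i.d.\ uniform priorities, and then assemble $\|u\|^2$ from Proposition~\ref{prop:Ninv}. No Hoeffding machinery is needed beyond the first-order split $u_i=\bar u_i+(u_i-A_iu_i)$.

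\emph{Step 1: $\Var(u_i)$.} Since $s_0,\dots,s_{N-1}$ are i.i.d.\ continuous, ties have probability zero. By exchangeability, the rank $\pos_i=\#\{j\ne i:s_j>s_i\}$ is uniform on $\{0,\dots,N-1\}$. Hence $u_i=f(\pos_i)$ has the law of $f(k)$ with $k\sim U\{0,\dots,N-1\}$, and $\Var(u_i)=\Var_kf(k)$.

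\emph{Step 2: $\Var(\bar u_i)$ via spacings.} Fix $s_{-i}$ and let $V_{(1)}>\dots>V_{(N-1)}$ be the opponents' levels in decreasing order, with $V_{(0)}:=1$ and $V_{(N)}:=0$. Then $\pos_i=k$ exactly when $s_i$ lies in the gap $(V_{(k+1)},V_{(k)})$. Averaging over $s_i$ therefore gives
\[
\bar u_i=A_iu_i=\sum_{k=0}^{N-1}f(k)\,\Gamma_k,
\]
where $\Gamma_k=V_{(k)}-V_{(k+1)}$ are the $N$ spacings of $N-1$ i.i.d.\ uniforms. I will invoke the classical fact that $(\Gamma_0,\dots,\Gamma_{N-1})\sim\mathrm{Dirichlet}(1,\dots,1)$. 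Its covariance matrix is
\[
\Cov(\Gamma)=\frac1{N(N+1)}\Big(I-\tfrac1N\one\one^\top\Big),
\]
as a check of the diagonal entries $\frac{N-1}{N^2(N+1)}$ and the off-diagonal entries $-\frac1{N^2(N+1)}$ confirms. Consequently
\[
\Var(\bar u_i)=f^\top\Cov(\Gamma)f=\frac1{N(N+1)}\sum_k\big(f(k)-\bar f\big)^2=\frac{\Var_kf(k)}{N+1}=\frac{\Var(u_i)}{N+1}.
\]

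\emph{Step 3: assembly.} Because $A_i$ is an orthogonal projection, $\|u_i-A_iu_i\|^2=\Var(u_i)-\Var(\bar u_i)=\frac N{N+1}\Var_kf(k)$. Summing over the $N$ agents and using \eqref{eq:contP} yields
\[
\|u\|^2=\frac{N^2}{N+1}\Var_kf(k).
\]

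The only nonroutine point is the identification of $\bar u_i$ with a Dirichlet-weighted average of the values $f(k)$. This is where care is needed with the indexing of the gaps, so that the top gap corresponds to rank $0$ and the bottom gap to rank $N-1$. The orientation is in any case immaterial, because the Dirichlet covariance is permutation-invariant.
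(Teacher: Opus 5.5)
Your proof is correct and follows the paper's argument step for step: the uniform rank gives $\Var(u_i)$, the representation $\bar u_i=\sum_k f(k)\Gamma_k$ with $\mathrm{Dirichlet}(1,\dots,1)$ spacings gives $\Var(\bar u_i)$, and $\|u\|^2=\sum_i[\Var u_i-\Var\bar u_i]$ assembles the result. The only cosmetic difference is that you write the Dirichlet second moments as the covariance matrix $\frac1{N(N+1)}\bigl(I-\frac1N\one\one^\top\bigr)$, whereas the paper expands $\E\bar u_i^2$ directly from $\E\Gamma_k\Gamma_l=\frac{1+\delta_{kl}}{N(N+1)}$.
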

\begin{proof}
$\pos_i$ is uniform on $\{0,\dots,N-1\}$ (positions form a uniformly
random permutation). Given $s_{-i}$ with order statistics
$z_1>\dots>z_{N-1}$ and spacings $\varsigma_k=z_k-z_{k+1}$ ($z_0=1$, $z_N=0$),
$\pos_i=k$ iff $z_{k+1}<s_i<z_k$, so $\bar u_i=\sum_kf(k)\varsigma_k$. The
spacings are Dirichlet$(1,\dots,1)$ with $N$ cells:
$\E\varsigma_k=\frac1N$, $\E\varsigma_k\varsigma_l=\frac{1+\delta_{kl}}{N(N+1)}$.
Hence $\E\bar u_i^2=\frac1{N(N+1)}\big[(\sum_kf_k)^2+\sum_kf_k^2\big]$ and
$\Var\bar u_i=\frac1{N+1}\big[\frac1N\sum f_k^2-\frac1{N^2}(\sum f_k)^2\big]
=\frac1{N+1}\Var_kf$. Finally
$\|u\|^2=\sum_i[\Var u_i-\Var\bar u_i]=N\cdot\frac{N}{N+1}\Var_kf$.
\end{proof}

\begin{lemmaS}[Bernoulli-sum decomposition]\label{lem:bern}
Let $B_t=\#\{j\ne i:s_j>t\}$ and $\kappa(t,t')=\min(t,t')-tt'$. For
$T\not\ni i$ with $|T|=m$,
\[
[\bar u_i]_T=\int_0^1\prod_{j\in T}\big(\one[s_j>t]-(1-t)\big)\,d_m[f](1-t)\,dt,\qquad
d_m[f](q)=\E\big[\Delta^mf(\Bin(N-1-m,q))\big],
\]
and this expression does not depend on $i\notin T$. Consequently
\begin{equation}\label{eq:gmJm}
\|g_m\|^2=(N-m)^2J_m,\qquad
J_m:=\binom Nm\iint\kappa(q,q')^m\,d_m[f](q)\,d_m[f](q')\,dq\,dq',
\end{equation}
and $\|Pu\|^2=\sum_m\frac{(N-m)^2}mJ_m$, $\|u\|^2=N\sum_m(N-m)J_m$.
\end{lemmaS}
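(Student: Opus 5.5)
The plan is to compute $[\bar u_i]_T$ at a fixed threshold $t$ and then integrate, working throughout with the identity $\bar u_i(s_{-i})=A_iu_i=\int_0^1 f(B_t)\,\mathrm dt$ recorded in Step~1 of Appendix~\ref{app:A-spec}. For a fixed $t$, $B_t=\sum_{j\ne i}\theta_j$ with $\theta_j=\one[s_j>t]$ independent Bernoulli$(1-t)$ variables. The Hoeffding projections are bounded linear operators on $L^2([0,1]^{N-1})$ and the integrand is bounded, so by Fubini they commute with $\int\mathrm dt$, and it suffices to decompose $f(B_t)$ for each $t$.

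For that I would reuse the mechanism of Lemma~\ref{lem:struct}. Any function of a $\{0,1\}$ variable satisfies $\zeta(\theta_j)=\zeta(0)+\theta_j\Delta\zeta(0)$, so $(I-A_j)\zeta(\theta_j)=(\theta_j-(1-t))\Delta\zeta(0)$. Applying $\prod_{j\in T}(I-A_j)\prod_{j\notin T\cup\{i\}}A_j$ to $f(B_t)$, the factors $A_j$ with $j\notin T$ average out the remaining $N-1-m$ indicators, which leaves a $\Bin(N-1-m,1-t)$ variable. Each factor $I-A_j$ with $j\in T$ then produces one centred indicator and one forward difference. The result is the displayed formula with $d_m[f](1-t)$. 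Only $|T|$ and $N$ enter this expression, not $i$, so it is the same for every $i\notin T$.

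Next comes $g_T$. Since $\sum_i u_i$ is constant, the continuum version of Lemma~\ref{lem:cons} gives $g_T=\sum_{i\in T}[u_i]_T=-\sum_{i\notin T}[\bar u_i]_T$. By the $i$-independence just shown, this equals $-(N-m)$ times the common component. For its squared norm I expand the product and use independence across $j\in T$. Each factor contributes
\[
\E\big[(\one[s>t]-(1-t))(\one[s>t']-(1-t'))\big]=(1-\max(t,t'))-(1-t)(1-t')=\kappa(t,t').
\]
This yields $\iint\kappa(t,t')^m d_m[f](1-t)d_m[f](1-t')\,\mathrm dt\,\mathrm dt'$. The substitution $q=1-t$ leaves $\kappa$ invariant, and summing over the $\binom Nm$ sets $T$ gives $\|g_m\|^2=(N-m)^2J_m$. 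Proposition~\ref{prop:Ninv} then gives $\|Pu\|^2=\sum_m(N-m)^2J_m/m$.

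For the total energy, by Parseval I would write $\sum_{T\ni i}\|[u_i]_T\|^2=\Var u_i-\Var\bar u_i$. On the other side, $\Var\bar u_i=\sum_{T\not\ni i}\|[\bar u_i]_T\|^2=\sum_m\binom{N-1}m J_m/\binom Nm=\sum_m\frac{N-m}N J_m$. Lemma~\ref{lem:dir} gives $\Var u_i=(N+1)\Var\bar u_i$, so each player contributes $N\Var\bar u_i=\sum_m(N-m)J_m$, and summing over the $N$ players gives $\|u\|^2=N\sum_m(N-m)J_m$. I expect the main obstacle to be the bookkeeping in the iterated-difference step. One must check that after averaging, the Binomial has exactly $N-1-m$ trials with success probability $1-t$, so that the argument of $d_m$ is $1-t$ and not $t$. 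The interchange of integral and projections is routine by comparison.
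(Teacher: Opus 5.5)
Your proposal is correct and follows the paper's proof essentially step for step. You use the threshold representation $\bar u_i=\int_0^1 f(B_t)\,\mathrm dt$, obtain $d_m[f](1-t)$ from the per-coordinate forward-difference mechanism, derive $g_T=-(N-m)[\bar u_i]_T$ from conservation, use the covariance $\kappa$ in each coordinate, and get the total energy from Lemma~\ref{lem:dir} together with $\binom{N-1}{m}/\binom Nm=(N-m)/N$. Your only addition is the explicit Fubini justification for commuting the Hoeffding projections with the $t$-integral, which the paper leaves implicit.
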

\begin{proof}
$\bar u_i(s_{-i})=\int_0^1f(B_t)\,dt$, since $\pos_i=B_{s_i}$ and $s_i$ is
uniform. For fixed $t$, $B_t$ is a sum of i.i.d.\ Bernoulli$(1-t)$
variables $\xi_j=\one[s_j>t]$; for a function $F$ of such a sum, the
Hoeffding component on $T$ is
$\prod_{j\in T}(\xi_j-(1-t))\cdot\E[\Delta^{|T|}F(B^{(T^c)}_t)]$
(expand $F(\xi_T+B^{(T^c)})$ in the Walsh basis
$\prod_{j\in T}(\xi_j-(1-t))$; each centred factor acts as a forward
difference). Integrating over $t$ gives the formula, whose right-hand
side involves $i$ only through the size $N-1-m$ of the complement.
By Lemma~\ref{lem:cons}, $g_T=-\sum_{i\notin T}[\bar u_i]_T=-(N-m)\times$ the
common expression; its squared norm is computed using
$\E\big[(\one[s>t]-(1-t))(\one[s>t']-(1-t'))\big]=\kappa(t,t')$ for each of
the $m$ coordinates, which gives \eqref{eq:gmJm} after the substitution
$q=1-t$, under which $\kappa$ is invariant. The formula for
$\|Pu\|^2$ is \eqref{eq:contP}. For $\|u\|^2$: the same computation gives
$\Var(\bar u_i)=\sum_m\binom{N-1}m\iint\kappa^md_md_m$, and Lemma~\ref{lem:dir}
yields $\|u\|^2=N\big[\Var u_i-\Var\bar u_i\big]=N^2\Var\bar u_i
=N^2\sum_m\binom{N-1}m\iint\kappa^md_md_m=N\sum_m(N-m)J_m$, using
$N\binom{N-1}m=(N-m)\binom Nm$.
\end{proof}

\begin{proof}[Proof of Corollary~\ref{cor:Qinf}, with the equality case]
Since $\rho^{(\infty)}_\pot(f)=\|Pu\|^2/\|u\|^2$, it suffices to show that, by Lemma~\ref{lem:bern},
\[
\frac{N-1}N\|u\|^2-\|Pu\|^2=\sum_m(N-m)J_m\Big[(N-1)-\frac{N-m}m\Big]
=\sum_{m\ge1}(N-m)J_m\frac{N(m-1)}m\ \ge0,
\]
since $J_m\ge0$ ($\kappa^m$ is a positive semidefinite kernel). Equality holds iff $J_m=0$ for all
$m\ge2$, i.e.\ $d_m[f]\equiv0$; as $q\mapsto\E[\Delta^mf(\Bin(n,q))]$ is the
Bernstein polynomial of $\Delta^mf$, this holds iff $\Delta^mf\equiv0$
for all $m\ge2$, i.e.\ iff $f$ is affine.
\end{proof}

\subsection{Proof of Theorem~\ref{thm:spectrum}}

\begin{mainthm}[\ref{thm:spectrum} of the main text]
Let $t_p(k)=\sum_{r=0}^p(-1)^r\binom pr\binom{p+r}r\frac{\ff kr}{\ff{(N-1)}r}$, $p=1,\dots,N-1$, be the discrete
Chebyshev polynomials, and $J_m(f,g)=\binom Nm\iint\kappa^m\,d_m[f](q)\,d_m[g](q')\,dq\,dq'$ the order-$m$ form, so that
$\mathcal E_\pot(f,g):=\sum_m\frac{(N-m)^2}mJ_m(f,g)$ and $\mathcal E_{\rm tot}(f,g):=N\sum_m(N-m)J_m(f,g)$ polarise
$\|Pu\|^2$ and $\|u\|^2$.
\begin{enumerate}
\item[(a)] For every $m$ and all $p\ne p'$, $J_m(t_p,t_{p'})=0$; moreover $J_m(t_p,t_p)=0$ for $m>p$ and, for $m\le p$,
\begin{equation}\label{eq:JdiagS}
J_m(t_p,t_p)=\binom Nm\frac{m\,(p+m)!}{p(p+1)(2p+1)\,(p-m)!\,\big[\ff{(N-1)}m\big]^2}.
\end{equation}
\item[(b)] $\displaystyle\lambda_p=\frac{\mathcal E_\pot(t_p,t_p)}{\mathcal E_{\rm tot}(t_p,t_p)}
=\frac{(N-p)(N+p+1)}{N\,p(p+1)}\Big[1-\frac{\binom Np}{\binom{N+p+1}p}\Big]$, and $\lambda_{p+1}<\lambda_p$ for
$1\le p\le N-2$.
\end{enumerate}
\end{mainthm}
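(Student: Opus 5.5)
The plan follows the main-text sketch in Appendix~\ref{app:A}, written out in full with the identities of the continuum setting. Throughout, $n=N-1$. The \emph{first step} converts $d_m[t_p]$ into a Jacobi polynomial. Since $\E[\ff{\Bin(n,q)}r]=\ff nr q^r$, the Bernstein polynomial of $t_p$ is $\mathsf b_{t_p}(q)=\sum_r(-1)^r\binom pr\binom{p+r}r q^r$. This is the shifted Legendre polynomial $P_p(1-2q)$ (hypergeometric form ${}_2F_1(-p,p+1;1;q)$). The Bernstein derivative identity $\frac{d^m}{dq^m}\E f(\Bin(n,q))=\ff nm\E\Delta^m f(\Bin(n-m,q))$ gives $d_m[f]=\mathsf b_f^{(m)}/\ff nm$. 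The standard relation $\frac{d^m}{dx^m}P_p=\frac{(p+m)!}{2^mp!}P^{(m,m)}_{p-m}$ together with the chain rule factor $(-2)^m$ then yields \eqref{eq:A-dm}, and $d_m[t_p]=0$ for $m>p$ by degree.

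The \emph{second step} is the kernel identity \eqref{eq:A-Keig}, which I expect to be the main obstacle. Splitting $\int_0^1\kappa(q,q')^m g(q')\,dq'$ at $q'=q$ gives $(1-q)^m\int_0^q q'^m g+q^m\int_q^1(1-q')^m g$. For polynomial $g$ of degree $j$ each piece is a polynomial. Evaluating the $m$ leading vanishing orders at $q=0$ and $q=1$ shows the result is divisible by $q^m(1-q)^m$, with quotient $q_j$ of degree at most $j$. Because $\mathsf K_m$ is self-adjoint in $L^2$, we have $\int\mathsf w_m q_j r=\int p_j^{(m)}\mathsf K_m r/\mathsf w_m\cdot\mathsf w_m=\langle p^{(m)}_j,q_r\rangle_{\mathsf w_m}$ for $\deg r<j$, and this vanishes since $\deg q_r<j$. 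Hence $q_j\propto p_j^{(m)}$. To fix the constant, I would compare coefficients at $q\to0$: the leading term of $\mathsf K_m g$ is $q^m\int_0^1(1-q')^m g(q')\,dq'$ plus lower-order corrections. With $p^{(m)}_j(0)=\binom{j+m}j$ and the Rodrigues formula, integrating by parts $j$ times should produce $m/((j+m)(j+m+1))$. As a check, $j=0$, $m=1$ gives $\int\kappa=q(1-q)/2$, which matches. If the $q\to0$ bookkeeping becomes messy, the fallback is to match leading coefficients in $q$ instead.

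The \emph{third step} is part (a). Insert the first two steps into $J_m(t_p,t_{p'})=\binom Nm\int d_m[t_p]\,\mathsf K_m d_m[t_{p'}]$. Jacobi orthogonality with weight $\mathsf w_m$ kills the terms with $p\neq p'$. For the diagonal, use the norm $\int_0^1\mathsf w_m(p^{(m)}_{p-m})^2=(p!)^2/[(2p+1)(p+m)!(p-m)!]$, obtained from the $[-1,1]$ norm after the change of variables. Multiplying by $c_{p,m}^2$ and by $m/(p(p+1))$ gives \eqref{eq:JdiagS}. Because every $J_m$ is diagonal in the basis $\{t_p\}$, so is any combination $\sum_m w_mJ_m$.

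The \emph{fourth step} is part (b). By \eqref{eq:contquad}, $\mathcal E_\pot(t_p,t_p)=\sum_{m\le p}\frac{(N-m)^2}{m}J_m(t_p,t_p)$. After cancellation, the summand is proportional to $(N-m)!(p+m)!/[m!(p-m)!]$. The Chu--Vandermonde evaluation gives the full sum from $m=0$; I then subtract the $m=0$ term. The total form is easier to get from Lemma~\ref{lem:dir}: $\mathcal E_{\rm tot}(t_p,t_p)=\frac{N^2}{N+1}\Var_k t_p=\frac N{N+1}\sum_k t_p(k)^2$, using the known Chebyshev norm. Taking the ratio gives \eqref{eq:lambda}. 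Monotonicity is elementary: writing $N\lambda_p=D_p(1-\varepsilon_p)$, the inequality $\lambda_{p+1}<\lambda_p$ reduces to $(1-\varepsilon_p)/\varepsilon_p>p(p+1)(N-p-1)/(N(N+1))$. This follows from the product bound $\varepsilon_p^{-1}\ge1+p(p+1)/N$. That bound itself comes from writing $\varepsilon_p^{-1}=\prod_{i<p}(1+\frac{p+1}{N-i})$ and dropping all higher-order terms of the expansion.
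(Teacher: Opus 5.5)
Your proposal follows the paper's proof essentially step for step: the Bernstein derivative identity and the binomial map $t_p\mapsto P_p(1-2q)$, the eigen-relation for $\mathsf K_m$ via self-adjointness plus the $q\to0$ asymptotics and Rodrigues' formula, Jacobi orthogonality and norm for (a), Chu--Vandermonde with the Hahn norm for (b), and the product bound on $\varepsilon_p^{-1}$ for monotonicity. One step in the kernel identity is asserted without an argument, and the rest of the proof depends on it.

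After splitting at $q'=q$, each of the two pieces $(1-q)^m\int_0^q q'^m g$ and $q^m\int_q^1(1-q')^m g$ has degree $2m+j+1$ in $q$. Divisibility by $\mathsf w_m$ therefore only gives $\deg q_j\le j+1$, not $\le j$. With that weaker bound the self-adjointness step fails: for $\deg r<j$ you would only know $\deg q_r\le j$, and then $\langle p^{(m)}_j,q_r\rangle_{\mathsf w_m}$ need not vanish.

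You must check that the top coefficients cancel. For $g(q')=q'^b$, the first piece has leading coefficient $(-1)^m/(m+b+1)$ and the second has $-(-1)^m/(m+b+1)$. Hence $\mathsf K_m$ maps polynomials of degree $\le b$ into $\mathsf w_m$ times polynomials of degree $\le b$; this is exactly Lemma~\ref{lem:Km}(a) in the paper.

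A minor point on the Chu--Vandermonde step: writing the sum as $p!\,(N-p)!\sum_{m=0}^p\binom{p+m}{m}\binom{N-m}{p-m}=p!\,(N-p)!\binom{N+p+1}{p}$ avoids the negative-integer lower parameter in ${}_2F_1(-p,p+1;-N;1)$, which the paper has to justify in a footnote.
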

Throughout, $q=1-t$, $\kappa=\min(q,q')-qq'$ and $d_m[f](q)=\E[\Delta^mf(\Bin(N-1-m,q))]$ as in Lemma~\ref{lem:bern};
$n:=N-1$; $\tilde P_p(q):=P_p(1-2q)=\sum_{r=0}^p(-1)^r\binom pr\binom{p+r}r q^r$ is the shifted Legendre polynomial,
$\tilde P^{(m)}_j(q):=P^{(m,m)}_j(1-2q)$ the shifted Jacobi (Gegenbauer) polynomial, and $\mathsf w_m(q):=[q(1-q)]^m$.

\paragraph{(i) From Chebyshev to Legendre and Jacobi polynomials.}
Let $\mathsf b_f(q):=\E f(\Bin(n,q))$ be the Bernstein polynomial of $f$. For $f=\ff kr$ one has
$\E[\ff{\Bin(n,q)}r]=\ff nr\,q^r$ and $\Delta^m\ff kr=\ff rm\,\ff k{r-m}$, hence
$d_m[\ff kr](q)=\ff rm\,\ff{(n-m)}{r-m}\,q^{r-m}=\frac1{\ff nm}\frac{d^m}{dq^m}\big(\ff nr\,q^r\big)$, because
$\ff nr=\ff nm\,\ff{(n-m)}{r-m}$. By linearity,
\begin{equation}\label{eq:bernder}
d_m[f]=\frac{1}{\ff nm}\,\mathsf b_f^{(m)}\qquad\text{for every }f,
\end{equation}
which is the classical derivative formula for Bernstein polynomials. Moreover
$\mathsf b_{t_p}(q)=\sum_r(-1)^r\binom pr\binom{p+r}rq^r=\tilde P_p(q)$: the binomial law maps the discrete Chebyshev
polynomials to the Legendre polynomials. Since $\frac{d^m}{dx^m}P_p(x)=\frac{(p+m)!}{2^m\,p!}P^{(m,m)}_{p-m}(x)$ and
$\frac d{dq}=-2\frac d{dx}$ for $x=1-2q$,
\begin{equation}\label{eq:dmtp}
d_m[t_p](q)=c_{p,m}\,\tilde P^{(m)}_{p-m}(q),\qquad c_{p,m}:=\frac{(-1)^m(p+m)!}{p!\,\ff nm}\quad(m\le p),
\end{equation}
and $d_m[t_p]=0$ for $m>p$.

\paragraph{(ii) The kernel $\kappa^m$ carries the Jacobi weight.}
Let $\mathsf K_mg(q):=\int_0^1\kappa(q,q')^mg(q')\,dq'$.
\begin{lemmaS}\label{lem:Km}
{\rm(a)} If $g$ is a polynomial of degree at most $b$, then $\mathsf K_mg=\mathsf w_m\,R$ with $R$ a polynomial of
degree at most $b$. {\rm(b)} For every $j\ge0$,
\[
\mathsf K_m\tilde P^{(m)}_j=\mathsf k_{m,j}\,\mathsf w_m\,\tilde P^{(m)}_j,\qquad\mathsf k_{m,j}=\frac{m}{(j+m)(j+m+1)} .
\]
\end{lemmaS}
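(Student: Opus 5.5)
For (a) I would split the integral at $q'=q$. On $q'<q$ the kernel is $\kappa(q,q')=q'(1-q)$, and on $q'>q$ it is $q(1-q')$. Hence
\[
\mathsf K_mg(q)=(1-q)^m\int_0^q q'^m g(q')\,dq'+q^m\int_q^1(1-q')^m g(q')\,dq'.
\]
For polynomial $g$ of degree at most $b$, the first integral is $q^{m+1}$ times a polynomial of degree at most $b$, since it vanishes to order $m+1$ at $0$. Likewise, the second integral is $(1-q)^{m+1}$ times a polynomial of degree at most $b$. Both terms therefore carry the factor $\mathsf w_m=q^m(1-q)^m$, and the cofactor is $R(q)=q\,A(q)+(1-q)\,B(q)$ with $\deg A,\deg B\le b$. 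This cofactor a priori has degree $b+1$. The top-degree coefficients should cancel: this follows from a leading-order comparison, or more cleanly by noting that $\mathsf K_m g(q)/\mathsf w_m(q)$ is bounded by $O(1+|q|^b)$ after analytic continuation of the two Beta-type integrals. I would verify the cancellation explicitly on monomials $g=q'^r$, where both integrals are incomplete Beta polynomials whose leading terms in $q$ cancel against each other. This degree count is the only delicate point of (a).

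For (b) I would use self-adjointness. $\mathsf K_m$ is symmetric in $L^2[0,1]$, so for polynomials $g,h$ we have $\langle \mathsf w_m R_g,h\rangle=\langle g,\mathsf w_m R_h\rangle$. Write $\mathsf K_m\tilde P^{(m)}_j=\mathsf w_mR_j$ with $\deg R_j\le j$ by (a), and take any polynomial $h$ of degree less than $j$. Then
\[
\int\mathsf w_mR_jh=\int \tilde P^{(m)}_j\,\mathsf w_mR_h=0,
\]
because $\deg R_h<j$ and $\tilde P^{(m)}_j$ is $\mathsf w_m$-orthogonal to lower degrees. So $R_j$ is $\mathsf w_m$-orthogonal to all polynomials of degree less than $j$ and has degree at most $j$. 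It is therefore a multiple $\mathsf k_{m,j}\tilde P^{(m)}_j$.

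To identify the constant, I would compare behaviour as $q\to0$. Use the Rodrigues formula $\mathsf w_m\tilde P^{(m)}_j\propto\frac{d^j}{dq^j}\mathsf w_{m+j}$ and integrate by parts $j$ times in $\mathsf k_{m,j}\int \mathsf w_m(\tilde P^{(m)}_j)^2=\int\!\!\int\kappa^m\tilde P^{(m)}_j\tilde P^{(m)}_j$. Alternatively, and more cheaply, evaluate the leading coefficient of $R_j$ at $q\to0$. Near $q=0$ the first term of the split is $O(q^{m+1})$. The second term gives $q^m\int_0^1(1-q')^m\tilde P^{(m)}_j(q')\,dq'+O(q^{m+1})$, so
\[
\mathsf k_{m,j}\,\tilde P^{(m)}_j(0)=\int_0^1(1-q')^m\tilde P^{(m)}_j(q')\,dq'.
\]
Here $\tilde P^{(m)}_j(0)=P^{(m,m)}_j(1)=\binom{j+m}{j}$. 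The right side is a Jacobi integral against weight $(1-q')^m$, which the Rodrigues formula integrates by parts into a Beta function. The expected result is $\binom{j+m}{j}\cdot\frac{m}{(j+m)(j+m+1)}$. If $\tilde P^{(m)}_j(0)$ happens to vanish in some convention, I would instead compare the coefficient of $q^m\cdot q^j$.

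The main obstacle is this last evaluation: computing $\int_0^1(1-q)^mP^{(m,m)}_j(1-2q)\,dq$ in closed form. I would expand $P^{(m,m)}_j$ in the hypergeometric form around $x=1$, $\sum_k\binom{j+m}{j-k}\binom{j+2m+k}{k}\big(\tfrac{x-1}{2}\big)^k/\binom{?}{}$ (i.e.\ ${}_2F_1(-j,j+2m+1;m+1;q)$), integrate term by term to get Beta values, and sum the resulting ${}_3F_2$ by Saalschütz/Chu–Vandermonde to reach $\frac{m}{(j+m)(j+m+1)}$. I would sanity-check the result at $j=0$, where $\int_0^1\kappa(q,q')^m\,dq'=q^m(1-q)^m\cdot\frac{1}{m+1}$, matching $\frac{m}{m(m+1)}$.
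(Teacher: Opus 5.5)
Your proposal is correct and follows the paper's proof essentially step for step: for (a) you split $\kappa^m$ at $q'=q$ and check on monomials that the $q^{b+1}$ coefficient cancels; for (b) you use self-adjointness plus $\mathsf w_m$-orthogonality to get the eigenfunction property, and the $q\to0$ comparison with $\tilde P^{(m)}_j(0)=\binom{j+m}{j}$ to fix the constant. The step you flag as the main obstacle takes two lines along your first route, which is the paper's: with $\mathsf w_m\tilde P^{(m)}_j=\frac1{j!}\frac{d^j}{ds^j}\bigl[s^{j+m}(1-s)^{j+m}\bigr]$, $j$ integrations by parts give $\int_0^1(1-s)^m\tilde P^{(m)}_j\,ds=\frac{m(m+1)\cdots(m+j-1)}{j!}\int_0^1(1-s)^{j+m}\,ds=\binom{m+j-1}{j}\big/(j+m+1)$, so no ${}_3F_2$ summation is needed (your hypergeometric route also works, since the resulting ${}_3F_2(-j,j+2m+1,1;m+1,m+2;1)$ is balanced and Saalsch\"utz applies).
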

\begin{proof}
(a) It suffices to take $g=q^b$. Since $\kappa(q,s)=s(1-q)$ for $s\le q$ and $q(1-s)$ for $s\ge q$,
\[
\mathsf K_mg(q)=(1-q)^m\frac{q^{m+b+1}}{m+b+1}+q^m\,G(q),\qquad G(q):=\int_q^1(1-s)^ms^b\,ds .
\]
$G$ is a polynomial of degree $m+b+1$ with $G^{(i)}(1)=0$ for $i\le m$, so $G=(1-q)^{m+1}G_1$ with $\deg G_1=b$; the
leading coefficient of $G$ is $-(-1)^m/(m+b+1)$, hence that of $G_1$ is $1/(m+b+1)$. Thus
$\mathsf K_mg=\mathsf w_m\big[\frac{q^{b+1}}{m+b+1}+(1-q)G_1\big]$, and in the bracket the coefficients of $q^{b+1}$
cancel.

(b) $\mathsf K_m$ is self-adjoint on $L^2(0,1)$. Write $\mathsf K_m\tilde P^{(m)}_j=\mathsf w_mR$ with $\deg R\le j$. For
a polynomial $S$ of degree $<j$, part (a) gives $\mathsf K_mS=\mathsf w_mR_S$ with $\deg R_S<j$, so
$\int S\,\mathsf w_m R=\int(\mathsf K_mS)\tilde P^{(m)}_j=\int\mathsf w_mR_S\,\tilde P^{(m)}_j=0$, because the
$\tilde P^{(m)}_j$ are orthogonal for the weight $\mathsf w_m$. Hence $R$ is orthogonal, for that weight, to all
polynomials of lower degree, i.e.\ $R=\mathsf k\,\tilde P^{(m)}_j$. To find $\mathsf k$ let $q\to0$: the first term of the
display above is $O(q^{m+1})$, so
$\mathsf K_m\tilde P^{(m)}_j(q)=q^m\int_0^1(1-s)^m\tilde P^{(m)}_j(s)\,ds+O(q^{m+1})$, while
$\mathsf w_m\tilde P^{(m)}_j=q^m\tilde P^{(m)}_j(0)+O(q^{m+1})$ with $\tilde P^{(m)}_j(0)=P^{(m,m)}_j(1)=\binom{j+m}j$.
The Rodrigues formula reads $\mathsf w_m\tilde P^{(m)}_j=\frac1{j!}\frac{d^j}{ds^j}\big[s^{j+m}(1-s)^{j+m}\big]$, and
$j$ integrations by parts (all boundary terms vanish) give
\begin{align*}
\int_0^1(1-s)^m\tilde P^{(m)}_j\,ds&=\frac1{j!}\int_0^1s^{-m}\frac{d^j}{ds^j}\big[s^{j+m}(1-s)^{j+m}\big]ds\\
&=\frac{m(m+1)\cdots(m+j-1)}{j!}\int_0^1(1-s)^{j+m}ds=\frac{\binom{m+j-1}j}{j+m+1}.
\end{align*}
Dividing by $\binom{j+m}j$ gives $\mathsf k_{m,j}=\frac{m}{(j+m)(j+m+1)}$.
\end{proof}

\paragraph{(iii) Every order form is diagonal; the potential form.}
By \eqref{eq:dmtp} and Lemma~\ref{lem:Km}(b), for $m\le\min(p,p')$,
\[
J_m(t_p,t_{p'})=\binom Nm c_{p,m}c_{p',m}\,\mathsf k_{m,p'-m}\int_0^1\mathsf w_m\,\tilde P^{(m)}_{p-m}\tilde P^{(m)}_{p'-m}\,dq,
\]
which vanishes for $p\ne p'$ by Jacobi orthogonality; this is (a), the case $m>\min(p,p')$ being trivial. For $p=p'$,
$\mathsf k_{m,p-m}=\frac m{p(p+1)}$ and the Jacobi norm
$\int_0^1\mathsf w_m\big(\tilde P^{(m)}_{p-m}\big)^2dq=\frac{(p!)^2}{(2p+1)\,(p+m)!\,(p-m)!}$ give \eqref{eq:JdiagS}.
In particular the $t_p$ diagonalise $\sum_mw_mJ_m$ for any weights $w_m$, hence both $\mathcal E_\pot$ and
$\mathcal E_{\rm tot}$ (for the latter this is also clear from
$\mathcal E_{\rm tot}(f,f')=\frac N{N+1}\sum_k(f_k-\bar f)(f'_k-\bar f')$, Lemma~\ref{lem:dir}). Finally, with
$\binom Nm(N-m)^2/[\ff nm]^2=N\,(N-m)!/[m!\,(N-1)!]$,
\[
\mathcal E_\pot(t_p,t_p)=\sum_{m=1}^p\frac{(N-m)^2}mJ_m(t_p,t_p)
=\frac{N}{(N-1)!\,p(p+1)(2p+1)}\sum_{m=1}^p\frac{(N-m)!\,(p+m)!}{m!\,(p-m)!}.
\]
In the same way $N(N-m)J_m(t_p,t_p)\propto(N-m-1)!\,(p+m)!/[(m-1)!\,(p-m)!]$, which is the distribution $\eta_{p,m}$ of
the energy of $t_p$ over the orders $m=1,\dots,p$; the higher the order, the smaller its weight $\frac{N-m}{mN}$ in $\lambda_p$.

\paragraph{(iv) Normalisation and closed form.}
The Hahn-polynomial norm (parameters $(0,0)$) gives
$\sum_{k=0}^{N-1}t_p(k)^2=\frac{(N+p)!\,(N-1-p)!}{(2p+1)\,(N-1)!^2}$, so
$\mathcal E_{\rm tot}(t_p,t_p)=\frac N{N+1}\cdot\frac{(N+p)!(N-1-p)!}{(2p+1)(N-1)!^2}$ and
\[
\lambda_p=\frac{(N+1)!}{N\,p(p+1)\,(N-p-1)!\,(N+p)!}\sum_{m=1}^p\frac{(p+m)!\,(N-m)!}{m!\,(p-m)!}.
\]
Adding the $m=0$ term $N!$, the sum becomes
\[
N!\;{}_2F_1(-p,\,p+1;\,-N;\,1)=\frac{(N+p+1)!\,(N-p)!}{(N+1)!}
\]
by the Chu--Vandermonde identity,\footnote{The lower parameter $-N$ is a
nonpositive integer, so ${}_2F_1$ is meant here as the terminating series
determined by the upper parameter $-p$. This is legitimate: the series
stops after $p+1\le N$ terms, and $(-N)_m\ne0$ for $0\le m\le p\le N-1$,
so every term is finite. Chu--Vandermonde in the form
${}_2F_1(-p,b;c;1)=\frac{(c-b)_p}{(c)_p}$ is an identity between rational
functions of $c$, and $(c)_p$ does not vanish at $c=-N$ when $p\le N$; it
may therefore be evaluated at $c=-N$. Explicitly,
$(-N)_p=(-1)^p\frac{N!}{(N-p)!}$ and $(-N-p-1)_p=(-1)^p\frac{(N+p+1)!}{(N+1)!}$.}
which yields the stated closed form
with
\[
\varepsilon_p:=\frac{\binom Np}{\binom{N+p+1}p}=\prod_{i=0}^{p-1}\frac{N-i}{N+p+1-i}.
\]
For $p=1$, $\varepsilon_1=\frac N{N+2}$ and $\lambda_1=\frac{N-1}N$.

\paragraph{(v) Monotonicity.}
Put $D_p=\frac{N(N+1)}{p(p+1)}-1$, so $N\lambda_p=D_p(1-\varepsilon_p)$,
$D_p-D_{p+1}=\frac{2N(N+1)}{p(p+1)(p+2)}$ and
$\varepsilon_{p+1}=\varepsilon_p\frac{N-p}{N+p+2}$. Then $\lambda_p>\lambda_{p+1}$ reads
$D_p(1-\varepsilon_p)>D_{p+1}\big(1-\varepsilon_p\frac{N-p}{N+p+2}\big)$; since
$D_p-D_{p+1}\frac{N-p}{N+p+2}=(D_p-D_{p+1})+D_{p+1}\frac{2(p+1)}{N+p+2}$, this is
$(D_p-D_{p+1})(1-\varepsilon_p)>\varepsilon_p\,D_{p+1}\frac{2(p+1)}{N+p+2}$, i.e., with
$D_{p+1}=\frac{(N-p-1)(N+p+2)}{(p+1)(p+2)}$,
\[
\frac{1-\varepsilon_p}{\varepsilon_p}>\frac{2(p+1)D_{p+1}}{(D_p-D_{p+1})(N+p+2)}=\frac{p(p+1)(N-p-1)}{N(N+1)} .
\]
Now $\varepsilon_p^{-1}=\prod_{i=0}^{p-1}\frac{N+p+1-i}{N-i}=\prod_{i=0}^{p-1}\Big(1+\frac{p+1}{N-i}\Big)\ge1+\frac{p(p+1)}N$,
so the left-hand side is at least $\frac{p(p+1)}N$, which exceeds the right-hand side because $N-p-1<N+1$. \qed

\begin{remarkS}
Every step is audited in exact arithmetic by \texttt{audits/check\_spectrum.py} (\eqref{eq:bernder}, \eqref{eq:dmtp},
Lemma~\ref{lem:Km}, \eqref{eq:JdiagS} against direct evaluation of the double integrals, and the closed forms of
$\lambda_p$ and of the total form for $3\le N\le9$).
\end{remarkS}

\section{Priced priorities}\label{sec:priced}
This appendix proves the statement made after Corollary~\ref{cor:rigid} of the main text: pricing can change
behaviour drastically while leaving $\Eharm$ fixed, and neither energy predicts the existence of pure equilibria. The
protocol is the index protocol and $C$ is an arbitrary conflict graph.

\begin{propS}[priced priorities]\label{prop:priced-SM}
Let $u_i(s)=-\pos_C(i,s)-c\,s_i$ with $c>0$, and $b_i:=\#\{j\in\calN(i):j<i\}$.
\begin{enumerate}
\item[(a)] $\Eharm$ is as in Proposition~\ref{prop:energies} and $\Epot=c_K\sum_i(d_i-cK)^2$. On a $d$-regular graph
the game is purely harmonic at $c=d/K$, where marginal price equals marginal expected benefit, and uniform mixing is
then a Nash equilibrium.
\item[(b)] If $0<c<1$ and $E\neq\emptyset$, there is no pure Nash equilibrium.
\item[(c)] The all-zero profile is a Nash equilibrium iff $c\ge\max_ib_i$.
\end{enumerate}
\end{propS}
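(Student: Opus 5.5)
Part (a) is an application of Proposition~\ref{prop:pairwise} and Corollary~\ref{cor:rigid}; parts (b) and (c) are a direct analysis of unilateral deviations, using that $\pos_C(i,\cdot)$ is non-increasing in $s_i$ (proof of Proposition~\ref{prop:fip}).

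\emph{Part (a).} The priced game is the linear game $\calG(C,K)$ plus the separable own term $\gamma_i(x)=-cx$. By Corollary~\ref{cor:rigid} this leaves $\omega^\pm_{ij}$, and hence $\Eharm$, unchanged. By the proof of Proposition~\ref{prop:energies} the own effect becomes $\sa_i(x)=(d_i/K-c)(x-\bar s)$. Its squared norm is $(d_i-cK)^2(K^2-1)/(12K^2)$. Inserting this into \eqref{eq:polyenergies}, with $\omega^+\equiv0$, gives $\Epot=c_K\sum_i(d_i-cK)^2$.

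On a $d$-regular graph with $c=d/K$, every $\sa_i$ vanishes, so $\Epot=0$. For the mixed-equilibrium claim, let every opponent mix uniformly. Then agent $i$'s expected payoff at level $x$ is $\gamma_i(x)+\sum_j\E_y\psi_{ij}(x,y)=\sa_i(x)+\mathrm{const}$, which is constant. So $i$ is indifferent among all levels, and uniform mixing is a best response.

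\emph{Part (c).} At the all-zero profile, agent $i$ is preceded exactly by its $b_i$ lower-index neighbours. Any raise $\ell\ge1$ overtakes all of them at cost $c\ell$, so the best deviation is $\ell=1$, with net gain $b_i-c$. Lowering is impossible. Hence the all-zero profile is an equilibrium iff $c\ge b_i$ for all $i$.

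\emph{Part (b).} I would characterise local optimality by one-step deviations. Take agent $i$ at level $x$.
\begin{itemize}
\item Raising $x\to x+1$ overtakes each neighbour $j$ with $s_j=x$, $j<i$, and each with $s_j=x+1$, $j>i$. Since each overtake is worth $1>c$, in an equilibrium no such neighbour may exist whenever $x<K-1$.
\item Lowering $x\to x-1$ loses the neighbours $j$ with $s_j=x$, $j>i$, and those with $s_j=x-1$, $j<i$. Since it saves $c>0$, in an equilibrium every agent with $x>0$ must have at least one such neighbour.
\end{itemize}
By (c), $c<1\le\max_ib_i$ whenever $E\neq\emptyset$, so the all-zero profile is not an equilibrium. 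The general case should follow from an extremal choice. I would first take the non-isolated agents at the maximal occupied level $L$, and among them the one of smallest index. I would then chase the ``lowering loses'' witness downward through levels and indices, aiming to reach an agent that violates the raising condition. The raising condition forbids a lower-index tied neighbour, or a higher-index neighbour exactly one level up.

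The main obstacle is this chase. The witness relations alternate between the two cases (same level with higher index, or one level lower with lower index). I must make sure the chain is strictly monotone in a suitable order, for example the lexicographic order on $(s_i,-i)$, so that it terminates. It should terminate either at an agent violating the raising condition, or at level $0$ with a lower-index neighbour, which again violates raising because $c<1$.
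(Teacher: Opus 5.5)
Parts (a) and (c) are correct and follow the paper's argument. In (a) you insert the own effect $(d_i/K-c)(x-\bar s)$ into Proposition~\ref{prop:pairwise}(c), and against uniform opponents only $\sa_i$ survives. In (c) the best raise from the all-zero profile is one level. Your two one-step conditions in (b) are also right. The gap is that (b) is never finished: you stop at the extremal chase, call it ``the main obstacle'', and only guess how it terminates. As written, no contradiction is derived.

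The missing observation is that a lowering witness already violates the raising condition itself, with a single exception. Let $i$ sit at level $x\ge1$ and let $j$ be a neighbour it loses by lowering one level. There are two cases.
\begin{itemize}
\item If $s_j=x-1$ and $j<i$, then $j$ is at a level $\le K-2$ and $i$ is a higher-index neighbour exactly one level above it. So raising one level lets $j$ overtake $i$, giving $W_j\ge1>c$, a contradiction.
\item If $s_j=x$ and $j>i$, then $i$ is a tied lower-index neighbour of $j$. So $j$ gains by raising, unless $x=K-1$.
\end{itemize}
Hence the only case without an immediate contradiction is a same-level witness at the top level $K-1$.

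That case is removed by choosing $m$ to be the \emph{largest}-index agent at the highest occupied level. This level is $\ge1$ once the all-zero profile is excluded by (c), and it contains no isolated agent, since such agents must sit at level $0$. A same-level witness of $m$ would have a larger index, so $m$'s witness lies one level lower with a smaller index, and that witness can overtake $m$. This is the paper's two-line argument.

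Your choice of the \emph{smallest} index is exactly the one that allows same-level witnesses, which is why a chase looked necessary. The chase does close: along same-level steps at level $K-1$ the index strictly increases, so after finitely many steps a lower-level step must occur, and its witness violates raising. But the contradiction appears at the first step that is not a same-level step at level $K-1$, not by running down to level $0$ as you anticipate. This has to be stated and checked for (b) to be proved.
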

\begin{proof}
(a) Pricing replaces the own effect $\mathsf a_i(x)=d_i(x-\bar s)/K$ of Proposition~\ref{prop:energies} by
$(d_i/K-c)(x-\bar s)$ and nothing else; apply Proposition~\ref{prop:pairwise}(c). If all other agents mix uniformly,
$\E[u_i\mid s_i=x]=\mathrm{const}+x(d_i/K-c)$, constant in $x$ at $c=d/K$.

(b) Let $s$ be a pure equilibrium. For an agent with $s_i\ge1$ let $L_i\in\mathbb{Z}_{\ge0}$ be the number of positions
lost by lowering one level; no profitable deviation means $L_i\ge c>0$, so $L_i\ge1$. For an agent with $s_i\le K-2$
let $W_i$ be the number of neighbours overtaken by raising one level; then $W_i\le c<1$, so $W_i=0$. In particular
isolated agents sit at level $0$. If all agents are at level $0$, take an edge $\{i,j\}$ with $i<j$: raising $j$ to
level $1$ overtakes $i$, so $W_j\ge1$, a contradiction. Otherwise let $s_{\max}\ge1$ be the highest occupied level and
$m$ the largest-index agent at it. Since $L_m\ge1$, some neighbour $j$ is served before $m$ after $m$ lowers but not
before; this requires either $s_j=s_{\max}$ with $j>m$, excluded by the choice of $m$, or $s_j=s_{\max}-1$ with $j<m$.
In the latter case $j$, raising one level, ties $m$ and is favoured by the protocol, so $W_j\ge1$, a contradiction.

(c) At $s=0$ every agent is tied with all neighbours, so $u_i=-b_i$. Raising to level $t\ge1$ puts $i$ ahead of every
neighbour at payoff $-ct$, best at $t=1$. The deviation is strictly profitable iff $c<b_i$.
\end{proof}

Thus pure equilibria exist at $c=0$, vanish on all of $(0,1)$, and reappear for larger $c$, while $\Eharm$ does not
move. Part~(b) uses the index protocol: on a triangle with the cyclic protocol and $K=2$, the profile in which everybody
bids $1$ is an equilibrium for every $c<1$.

\paragraph{Exact computation on the intersection.} Scanning $c$ over $91$ values in $[0,4.5]$ on the eight-vehicle
intersection ($K=3$), with the methods of Section~\ref{sec:numerics} of the main text, the potential share follows
(a) to machine precision, from $4/5$ at $c=0$ through exactly $0$ at $c=d/K=4/3$ back to $0.96$ at $c=4.5$, while
the projected harmonic energy is constant to $10^{-14}$ and the entropy production at $\beta\max|\Delta u|=0.02$ stays
within $3\times10^{-5}$ of the prediction of Proposition~\ref{prop:learning}. Pure equilibria number three at $c=0$,
zero on $(0,1)$ as in (b), and one from $c=\max_ib_i=4$ on as in (c); in between they exist only at $c=1$ and $c=3$,
a feature of this graph (a star whose centre has the largest index has one for every $c\in(1,\max_ib_i)$).

\section{Reproducibility and independent audit}\label{app:repro}
The proof of Theorem~\ref{thm:finiteK} is computer-assisted at three places only, namely the root isolation of
Step~\ref{step:K4}, the symbolic identities in $N$ of Step~\ref{step:K3}, and the certificates of
Step~\ref{step:certs}. Every other step, and all of Appendices~\ref{sec:pattern}--\ref{sec:priced}, is proved by
hand. Table~\ref{tab:scripts} lists the scripts of those three steps; they use exact arithmetic only (Python
\texttt{fractions} or SymPy), and \texttt{run\_checks.sh} runs them all in well under a minute. The floating-point
computations of Section~\ref{sec:numerics} of the main text and of Appendix~\ref{sec:priced} enter no proof and are
kept outside the package.

\begin{table}[H]\centering\small
\caption{Scripts on which the proof of Theorem~\ref{thm:finiteK} depends, and the two independent checks.}\label{tab:scripts}
\begin{tabular}{@{}>{\raggedright\arraybackslash}p{0.33\linewidth}>{\raggedright\arraybackslash}p{0.09\linewidth}>{\raggedright\arraybackslash}p{0.52\linewidth}@{}}\toprule
script & step & establishes\\\midrule
\texttt{verify\_certificates.py} + \texttt{certificates/} ($25$ files) & \ref{step:certs} & the pairs of Table~\ref{tab:cov}: rebuilds each certified matrix from Theorem~\ref{thm:SOS} and Step~\ref{step:bond}, checks $LDL^\top=A$ with positive pivots and the kernel; standard library only, imports nothing\\
\texttt{audit\_K4.py} & \ref{step:K4} & the cubic of Lemma~\ref{lem:K4root}, exact root counting for $\varrho\ge\varrho_0$, and (T) for all bonds iff $N\ge10$ (Theorem~\ref{thm:K4}); also audits the hand-proved closed forms of Steps~\ref{step:Sqlow}--\ref{step:perron}\\
\texttt{k3\_representation\_proof.py}, \texttt{s1\_closed.py}, \texttt{check\_blocks.py}, \texttt{verify\_M\_symbolic.py} & \ref{step:K3} & Lemmas~\ref{lem:K3rep}, \ref{lem:K3blocks} and~\ref{lem:K3M}: the block representation, the closed forms of its entries, the row sums $\mathsf M\one$, and the certificates for $N=17,18$\\
\texttt{k2\_witness.py} & \ref{step:assembly} & the $K=2$, $N=7$ witness from the definition of the Hodge projection; $Q_2\one=0$; the sign change between $N=6$ and $N=7$\\
\texttt{end\_to\_end.py} & -- & Theorem~\ref{thm:finiteK} for $K=3$, $N\le7$; $K=4$, $N\le5$; $K=5$, $N\le4$; $K=6,7$, $N=3$, using no step of the proof (see below)\\\bottomrule
\end{tabular}
\end{table}

\paragraph{An end-to-end check that bypasses the proof.} \texttt{end\_to\_end.py} builds
$Q_K=\frac{N-1}NM_{\rm tot}-M_\pot$ directly from \eqref{eq:rayleigh} (all $K^N$ profiles, the flow of every
increment indicator, and $M_\pot$ from the Hoeffding decomposition of the divergence, Lemma~\ref{lem:lap}) in exact
rational arithmetic, and certifies $Q_K\one=0$ and $Q_K+\one\one^\top=LDL^\top$ with positive pivots. This is
Theorem~\ref{thm:finiteK} for that pair, obtained without the order decomposition, the bond variables or the
sum-of-squares identity; it is feasible while $K^N$ is a few thousand, and at $K=2$ it reports positivity for $N\le6$
and failure at $N=7$. Any other pair: \texttt{python3 end\_to\_end.py K,N}. The package also contains the certificate
generator (\texttt{bond\_certificate.py}, \texttt{full\_space\_certificate.py}, \texttt{make\_certificates.py}), which
the checker does not import, and \texttt{audits/check\_spectrum.py}, a symbolic audit of the hand proof of
Theorem~\ref{thm:spectrum}.

\end{document}